\newtheorem{theorem}{Theorem}
\newtheorem{lemma}[theorem]{Lemma} 
\newtheorem{proposition}[theorem]{Proposition} 
\newtheorem{corollary}[theorem]{Corollary}
\newtheorem{claim}{Claim}
\theoremstyle{definition}
\newtheorem{remark}{Remark}
\def\R{{\bf \mbox{I\hspace{-.17em}R}}}
\title{Inheritance of Convexity for the $\mathcal{P}_{\min}$-Restricted Game}
\author{A. Skoda\thanks{
Corresponding author.
Universit\'e de Paris I, Centre d'Economie de la Sorbonne,
106-112 Bd de l'H\^opital, 75013 Paris, France.
E-mail: {\tt alexandre.skoda@univ-paris1.fr}} }
\begin{document}

\maketitle

\begin{abstract}
We consider restricted games on weighted  graphs associated with minimum partitions.
We replace in the classical definition of Myerson restricted game the connected components of any subgraph
by the sub-components corresponding to a minimum partition.
This minimum partition $\mathcal{P}_{\min}$ is induced by the deletion of the minimum weight edges.
We provide
a characterization of the graphs satisfying
inheritance of convexity from the underlying game to 
the restricted game associated with $\mathcal{P}_{\min}$.
Moreover,
we prove that these graphs can be recognized in polynomial time.
\end{abstract}


\textbf{Keywords}:
cooperative game, convexity, graph-restricted game, graph partitions.

\textbf{AMS Classification}:
91A12, 91A43, 90C27, 05C75.

\section{Introduction}
\label{SectionIntroduction}
A cooperative game is a pair $(N,v)$
where $N$ is a finite set of players
and
$v: 2^N \rightarrow \R $ is a set function
assigning a worth to each coalition of players
with $v(\emptyset) = 0$.
For any coalition $A \subseteq N$,
$v(A)$ represents the worth that players in $A$ can generate by cooperation.
However,
in many situations,
the cooperation of players may be restricted by some communication or social structures.
Then, the worths of coalitions have to be modified 
to take these restrictions into account,
leading to the introduction of restricted games.
Lots of restricted games considered in the literature 
can be described by the $\mathcal{P}$-restricted game introduced by \citet{Skoda201702}.
$\mathcal{P}$ is a correspondence associating with any subset $A$ of $N$
a partition $\mathcal{P}(A)$ of $A$.
The partition restricted game $(N, \overline{v})$
associated with $\mathcal{P}$,
called $\mathcal{P}$-restricted game,
is defined by:
\begin{equation}
\overline{v} (A) = \sum_{F \in \mathcal{P}(A)} v(F),  \; \; \textrm{for all}\; A \subseteq N.
\end{equation}
This partition restricted game appears
in many different works with specific correspondences.
A founding example is the graph-restricted game
introduced by \cite{Myerson77} for communication games.
Communication games are cooperative games $(N, v)$ defined on the set of vertices $N$ of an undirected graph
$G = (N,E)$.
Assuming that
the members of a given coalition can cooperate
if and only if they are connected in $G$,
the Myerson graph-restricted game
$(N, v^{M})$ is
defined by
$v^{M} (A) = \sum_{F \in \mathcal{P}_{M}(A)} v(F)$
for all $A \subseteq N$,
where $\mathcal{P}_M (A)$ is the partition of $A$ into connected components.
Many other correspondences have been considered
to define restricted games
(see, e.g.,
\citet{AlgabaBilbaoLopez2001,Bilbao2000,Bilbao2003,Faigle89,GrabischSkoda2012,Grabisch2013}).
For a given correspondence,
a classical problem is to study the inheritance of convexity
from the initial game $(N,v)$ to the restricted game $(N,\overline{v})$.
Inheritance of convexity is of particular interest
as it implies that good properties are inherited,
for instance the non-emptiness of the core,
and that the Shapley value is in the core.
\citet{Skoda201702} established
an abstract characterization of inheritance of convexity
for an arbitrary correspondence~$\mathcal{P}$.
This characterization can be used to derive
several well-known results for inheritance of convexity with specific correspondences,
in particular
the characterization of inheritance of convexity for Myerson restricted game
established by \cite{NouwelandandBorm1991}.
Of course,
due to its generality
the characterization given by~\citet{Skoda201702}
does not give straightforward
insights into the precise structure of a given correspondence.
More direct characterizations have to be found to check
inheritance of convexity in practice.
In this paper,
we present a characterization
of inheritance of convexity for the correspondence $\mathcal{P}_{\min}$
introduced by~\citet{GrabischSkoda2012}
for communication games on weighted graphs.
The correspondence $\mathcal{P}_{\min}$
is defined on the set $N$ of nodes of
a weighted graph $G=(N, E, w)$
where $w$ is a weight function defined on the set $E$ of edges of $G$.
For a subset $ A \subseteq  N$,
$\mathcal{P}_{\min}(A)$ corresponds to the set of connected components of the subgraph $(A,E(A) \setminus \Sigma(A))$
where $\Sigma(A)$ is the set of minimum weight edges in the subgraph $G_A = (A,E(A))$.
Then,
the $\mathcal{P}_{\min}$-restricted game $(N,\overline{v})$
is defined  by:
\begin{equation}
\overline{v} (A) = \sum_{F \in \mathcal{P}_{\min}(A)} v(F),  \; \; \textrm{for all}\; A \subseteq N.
\end{equation}
Compared to the initial game $(N,v)$,
the $\mathcal{P}_{\min}$-restricted game $(N,\overline{v})$
conforms to the common conception that
members of a coalition have to be connected to cooperate
but also takes into account
the weights of the links between players and therefore differents aspects of cooperation restrictions.
Assuming that the edge-weights reflect the strengths of relationships between players,
$\mathcal{P}_{\min}(A)$ gives a partition of a coalition $A$
into connected coalitions where players are in privileged relationships
(with respect to the minimum relationship strength in $G_{A}$).
\citet{GrabischSkoda2012} first established
that there is always inheritance of superadditivity
from $(N,v)$ to $(N, \overline{v})$ for the correspondence $\mathcal{P}_{\min}$.
In contrast,
they observed that
inheritance of convexity
requires very restrictive conditions on the underlying graph and its edge-weights
giving simple counter-examples to inheritance of convexity 
with graphs with only two or three different edge-weights.
\citet{GrabischSkoda2012} established three necessary conditions
on the underlying weighted graph
but they also pointed out that these conditions are not sufficient 
and that contradictions to inheritance of convexity
are easily obtained with non-connected coalitions.
Following alternative definitions of convexity
in combinatorial optimization
and game theory
when restricted families of subsets
(not necessarily closed under union and intersection)
are considered
(see, \emph{e.g.},
\citet{EdmondsGiles77,Faigle89,Fujishige2005}),
\citet{GrabischSkoda2012} introduced
the $\mathcal{F}$-convexity by restricting convexity to the family $\mathcal{F}$
of connected subsets of $G$.
\citet{Skoda2017} characterized
inheritance of $\mathcal{F}$-convexity for $\mathcal{P}_{\min}$
by five necessary and sufficient conditions on the edge-weights
of specific subgraphs of the underlying graph $G$.
Of course,
the study of inheritance of $\mathcal{F}$-convexity
is a first key step to obtain a characterization of graphs satisfying
inheritance of classical convexity with $\mathcal{P}_{\min}$.
Inheritance of $\mathcal{F}$-convexity is also interesting in itself
as it corresponds to the common restriction to connected subsets
for communication games.
Moreover,
it is satisfied for a much larger family of graphs
than inheritance of convexity.
In particular,
inheritance of $\mathcal{F}$-convexity
allows an arbitrary number of edge-weights,
in contrast to inheritance of convexity
as observed in the present paper.
\citet{Skoda2017} also highlighted
that Myerson restricted game
can be obtained as a restriction
of the $\mathcal{P}_{\min}$-restricted game
associated with graphs with only two different edge-weights,
and proved that inheritance of convexity for Myerson restricted game is equivalent
to inheritance of $\mathcal{F}$-convexity
for the $\mathcal{P}_{\min}$-restricted game
associated with these specific weighted graphs.

In the present paper,
we consider inheritance of classical convexity for the correspondence $\mathcal{P}_{\min}$.
As convexity implies $\mathcal{F}$-convexity,
the conditions established by~\citet{Skoda2017} are necessary.
Let us recall that these last conditions are also sufficient
for inheritance of $\mathcal{F}$-convexity,
but we will only use their necessity throughout the paper.
Now dealing with disconnected subsets of $N$,
we establish supplementary necessary conditions.
As it was foreseeable by taking into account examples given by~\citet{GrabischSkoda2012} and~\citet{Skoda2017},
we get very strong restrictions on edge-weights and on the combinatorial structure of the underlying graph.
These supplementary conditions are much more straightforward than the conditions
established by~\citet{Skoda2017} for inheritance of $\mathcal{F}$-convexity.
In particular,
we obtain that edge-weights can have at most three different values
(Proposition~\ref{propAtMost3DifferentEdgeWeights})
and that many cycles have to be complete or dominated in some sense by two specific vertices.
The constraint on the number of edge-weight values implies that the family of weighted graphs
satisfying inheritance of convexity is
drastically smaller than the family of weighted graphs satisfying inheritance of 
$\mathcal{F}$-convexity.
Moreover,
edges have 
precise positions
according to their weights.
For example,
in the case of three different values $\sigma_1< \sigma_2 < \sigma_3$,
there exists only one edge $e_1$ of minimum weight $\sigma_1$
and all edges of weight $\sigma_2$ are incident to the same end-vertex of $e_1$.
Using these supplementary conditions,
we obtain simple necessary and sufficient conditions.
We give a complete characterization of the connected weighted graphs
satisfying inheritance of convexity with $\mathcal{P}_{\min}$
in Theorems~\ref{lemInheritanceOfConvexityForPminInheritanceOfConvexityForPMG1Cycle-Complete}, \ref{PropInheritanceOfConvexityForPminOnGIffCofGWithWeightwEitherCompleteOrAllVerticesLinkedToTheSameEndVertexOfe1} and~\ref{PropInheritanceOfConvexityForPminOnGIffCofGWithWeightw2EitherCompleteOrAllVerticesLinkedToTheSameEndVertexOfe12}.
Though these graphs are very particular,
they seem quite interesting.
For instance,
when there are only two values and at least two minimum weight edges,
we obtain weighted graphs similar to the ones defined by~\citet{Skoda2017}
relating Myerson restricted game to the $\mathcal{P}_{\min}$-restricted game.
Moreover,
we prove that
these graphs can be recognized in polynomial time.
Theorems~\ref{lemInheritanceOfConvexityForPminInheritanceOfConvexityForPMG1Cycle-Complete}, \ref{PropInheritanceOfConvexityForPminOnGIffCofGWithWeightwEitherCompleteOrAllVerticesLinkedToTheSameEndVertexOfe1} and~\ref{PropInheritanceOfConvexityForPminOnGIffCofGWithWeightw2EitherCompleteOrAllVerticesLinkedToTheSameEndVertexOfe12}
also imply that
inheritance of convexity 
and inheritance of convexity restricted to unanimity games are equivalent for $\mathcal{P}_{\min}$.
This last result was already observed
as a consequence of the general characterization established by~\citet{Skoda201702}
for arbitrary correspondences.

The article is organized as follows.
In Section~\ref{SectionPreliminaryDefinitions},
we give preliminary definitions and
results established by~\citet{GrabischSkoda2012}
and \citet{Skoda201702}.
In particular,
we recall the definition of convexity, $\mathcal{F}$-convexity
and general conditions on a correspondence to have inheritance of superadditivity,
convexity or $\mathcal{F}$-convexity.
In Section~\ref{SectionNecessaryCondF-ConvPmin},
we recall necessary conditions
on a weighted graph established by~\citet{Skoda2017}
to ensure inheritance of $\mathcal{F}$-convexity
with the correspondence $\mathcal{P}_{\min}$.
In Section~\ref{SectionInheritanceOfConvexityWithPmin},
we establish new, very restrictive conditions necessary
to ensure inheritance of classical convexity with $\mathcal{P}_{\min}$.
Section~\ref{SectionWeighthedgraphsforwhichInheritanceofconvexity} contains the main results.
We first provide characterizations of connected weighted graphs
satisfying inheritance of convexity with $\mathcal{P}_{\min}$.
Then,
the case of disconnected graphs is considered. 
Finally,
we prove that it can be decided in polynomial time 
whether a graph satisfies one of the previous characterizations.

\section{Preliminary definitions and results}
\label{SectionPreliminaryDefinitions}

Let $N$ be a given set with $|N| = n$.
We denote by $2^{N}$ the set of all subsets of $N$.
A game $(N,v)$ is
\textit{zero-normalized}\label{definitionZeroNormalizedGame}
if $v(\lbrace i \rbrace) = 0$ for all $i \in N$.
Throughout this paper,
we consider only zero-normalized games.
A game $(N,v)$ is 
\textit{superadditive}
if, for all $A,B \in 2^{N}$ such that $A \cap B = \emptyset$, $v(A \cup B) \geq v(A) + v(B)$.
For any given subset $\emptyset \not= S \subseteq N$,
the unanimity game $(N, u_{S})$ is defined by
\begin{equation}
u_{S}(A) =
\left \lbrace
\begin{array}{cl}
1 &  \textrm{ if } A \supseteq S,\\
0 & \textrm{ otherwise.}
\end{array}
\right.
\end{equation}
We note that $u_{S}$ is superadditive for all $S \not= \emptyset$.

Let us consider a game $(N,v)$.
For arbitrary subsets $A$ and $B$ of $N$,
we define the value
\[
\Delta v(A,B):=v(A\cup B)+v(A\cap B)-v(A)-v(B).
\]
A game $(N,v)$ is \textit{convex}
if its characteristic function $v$ is supermodular,
\emph{i.e.},
$\Delta v(A,B)\geq 0$
for all $A,B \in 2^{N}$.
We note that $u_{S}$ is supermodular for all $S \not= \emptyset$.
Let $\mathcal{F}$
be a
\emph{weakly union-closed family}\footnote{
$\mathcal{F}$ 
is weakly union-closed if $A \cup B \in \mathcal{F}$ for all $A$, $B \in \mathcal{F}$
such that $A \cap B \not= \emptyset $ \citep{FaigleGrabisch2010}.
Weakly union-closed families were introduced
and analysed by \citet{Algaba98}
(see also \citet{AlgabaBilbaoBormLopez2000})
and called union stable systems.
}
of subsets of $N$
such that $\emptyset \notin \mathcal{F}$.
A game $v$ on $2^{N}$ is said to be \emph{$\mathcal{F}$-convex}
if $\Delta v(A,B) \geq 0$,
for all $A,B \in \mathcal{F}$
such that $A \cap B \in \mathcal{F}$.
Let us note that
a game $(N,v)$ is convex if and only if
it is superadditive and $\mathcal{F}$-convex
with $\mathcal{F} = 2^{N} \setminus \lbrace \emptyset \rbrace$.

For a given graph $G = (N, E)$,
we say that a subset
$A \subseteq N$ is connected
if the induced graph $G_{A} = (A,E(A))$
is connected.

A \textit{correspondence} $f$ with domain $X$
and range $Y$
is a map that associates to every element $x \in X$
a subset $f(x)$ of $Y$,
\emph{i.e.},
a map from $X$ to $2^Y$.
Throughout this paper,
we consider correspondences $\mathcal{P}$
with domain and range $2^N$,
such that for every subset $\emptyset \not= A \subseteq N$,
the family $\mathcal{P}(A)$ of subsets of $N$ corresponds to a partition of $A$.
We set $\mathcal{P}(\emptyset) = \lbrace \emptyset \rbrace$.
For a given correspondence $\mathcal{P}$ on $2^N$
and subsets $A \subseteq B \subseteq N$,
we denote by $\mathcal{P}(B)_{|A}$
the \textit{restriction} of the partition $\mathcal{P}(B)$ to $A$.
More precisely,
if $\mathcal{P}(B) = \lbrace B_1, B_2, \ldots, B_p \rbrace$,
then
$\mathcal{P}(B)_{|A} =
\lbrace B_i \cap A \mid i = 1, \ldots, p, B_i \cap A \not= \emptyset\rbrace$.

For two given subsets $A$ and $B$ of $N$,
$\mathcal{P}(A)$ is a \textit{refinement} of $\mathcal{P}(B)$
if every block of $\mathcal{P}(A)$ is a subset of some block of $\mathcal{P}(B)$.

We recall the following results established by~\citet{GrabischSkoda2012}.

\begin{theorem}
\label{thLGNEPNFaSNuSsFaSNuSin}
Let $N$ be an arbitrary set
and $\mathcal{P}$ a correspondence on $2^N$.
The following conditions are equivalent:
\begin{enumerate}[1)]
\item
The $\mathcal{P}$-restricted game $(N,\overline{u_{S}})$ is superadditive
for all $\emptyset \not= S \subseteq N$.
\item
\label{Claim3ThLGNEPNFaSNuSsFaSNuSin}
$\mathcal{P}(A)$ is a refinement of $\mathcal{P}(B)_{|A}$
for all subsets $A \subseteq B \subseteq N$.
\item
The $\mathcal{P}$-restricted game $(N,\overline{v})$ is superadditive
for all superadditive game $(N,v)$.
\end{enumerate}
\end{theorem}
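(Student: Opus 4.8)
The plan is to prove the cycle of implications $(3) \Rightarrow (1) \Rightarrow (2) \Rightarrow (3)$. The implication $(3) \Rightarrow (1)$ is immediate: for every $\emptyset \neq S \subseteq N$ the unanimity game $u_{S}$ is superadditive (as already noted), so applying (3) to $v = u_{S}$ gives superadditivity of $(N,\overline{u_{S}})$. The heart of the argument is the passage $(1) \Rightarrow (2)$, and the key technical fact there is that each test game $\overline{u_{S}}$ takes values only in $\lbrace 0,1 \rbrace$.

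For $(1) \Rightarrow (2)$: since $\mathcal{P}(A)$ is a partition of $A$ and $S \neq \emptyset$, at most one block of $\mathcal{P}(A)$ can contain $S$; hence $\overline{u_{S}}(A) = \sum_{F \in \mathcal{P}(A)} u_{S}(F)$ equals $1$ if some block of $\mathcal{P}(A)$ contains $S$ and equals $0$ otherwise. Now fix $A \subseteq B \subseteq N$ and write $B$ as the disjoint union $A \cup (B \setminus A)$. Superadditivity of $\overline{u_{S}}$ yields $\overline{u_{S}}(B) \geq \overline{u_{S}}(A) + \overline{u_{S}}(B \setminus A) \geq \overline{u_{S}}(A)$, so if some block of $\mathcal{P}(A)$ contains $S$, then some block of $\mathcal{P}(B)$ contains $S$. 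Applying this with $S$ equal to an arbitrary block $A_{i}$ of $\mathcal{P}(A)$ (which is itself a block of $\mathcal{P}(A)$ containing $S$), we obtain a block $B_{j}$ of $\mathcal{P}(B)$ with $A_{i} \subseteq B_{j}$; since $A_{i} \subseteq A$, this says $A_{i} \subseteq A \cap B_{j}$, a block of $\mathcal{P}(B)_{|A}$. Thus $\mathcal{P}(A)$ is a refinement of $\mathcal{P}(B)_{|A}$.

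For $(2) \Rightarrow (3)$: let $(N,v)$ be superadditive and let $C, D \subseteq N$ be disjoint; set $B := C \cup D$. Fix a block $B_{i}$ of $\mathcal{P}(B)$. By (2) applied to $C \subseteq B$, every block of $\mathcal{P}(C)$ is contained in a unique block of $\mathcal{P}(B)$, so the blocks of $\mathcal{P}(C)$ contained in $B_{i}$ form a partition of $B_{i} \cap C$; likewise the blocks of $\mathcal{P}(D)$ contained in $B_{i}$ partition $B_{i} \cap D$. As $B_{i} \subseteq C \cup D$, these two collections together partition $B_{i}$, and iterating superadditivity of $v$ over this finite partition gives $v(B_{i}) \geq \sum_{F \in \mathcal{P}(C),\ F \subseteq B_{i}} v(F) + \sum_{F \in \mathcal{P}(D),\ F \subseteq B_{i}} v(F)$. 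Summing over all blocks $B_{i}$ of $\mathcal{P}(B)$ and regrouping — each block of $\mathcal{P}(C)$, resp. of $\mathcal{P}(D)$, being counted exactly once since the $B_{i}$ are pairwise disjoint — yields $\overline{v}(B) = \sum_{B_{i} \in \mathcal{P}(B)} v(B_{i}) \geq \overline{v}(C) + \overline{v}(D)$, which is superadditivity of $(N,\overline{v})$.

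The main obstacle is the step $(1) \Rightarrow (2)$: one must recognize that superadditivity need only be tested on unanimity games, which is exactly what the $\lbrace 0,1 \rbrace$-valuedness of $\overline{u_{S}}$, together with the freedom to take $S$ to be any block of $\mathcal{P}(A)$, makes possible. A secondary point requiring care in $(2) \Rightarrow (3)$ is verifying that the blocks of $\mathcal{P}(C)$ and of $\mathcal{P}(D)$ lying inside a fixed block $B_{i}$ of $\mathcal{P}(B)$ tile $B_{i}$ exactly, with no overlap and nothing left over; this uses both the refinement property (2) and the fact that $B_{i} \subseteq C \cup D$.
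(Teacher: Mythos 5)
Your proof is correct. Note that the paper itself does not prove this theorem: it is recalled from Grabisch and Skoda (2012) without proof, so there is no in-paper argument to compare against. Your cycle $(3)\Rightarrow(1)\Rightarrow(2)\Rightarrow(3)$ is the natural route, and the two key observations — that $\overline{u_{S}}$ is $\lbrace 0,1\rbrace$-valued with $\overline{u_{S}}(A)=1$ exactly when a block of $\mathcal{P}(A)$ contains $S$, so that taking $S$ equal to a block of $\mathcal{P}(A)$ extracts the refinement property, and that the blocks of $\mathcal{P}(C)$ and $\mathcal{P}(D)$ lying inside a fixed block $B_{i}$ of $\mathcal{P}(C\cup D)$ tile $B_{i}$ — are both justified cleanly (the only unremarked degeneracy is $A=\emptyset$, where the refinement claim is vacuous under the convention $\mathcal{P}(\emptyset)=\lbrace\emptyset\rbrace$).
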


As 
$\mathcal{P}_{\min}(A)$ is a refinement of $\mathcal{P}_{\min}(B)_{|A}$
for all subsets $A \subseteq B \subseteq N$,
Theorem~\ref{thLGNEPNFaSNuSsFaSNuSin} implies the following result.

\begin{corollary}
\label{corIGNEiafttsihfv}
Let $G=(N,E,w)$ be an arbitrary weighted graph.
The $\mathcal{P}_{\min}$-restricted game $(N,\overline{v})$ is superadditive
for every superadditive game $(N,v)$.
\end{corollary}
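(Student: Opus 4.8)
The plan is to obtain the corollary as a direct application of Theorem~\ref{thLGNEPNFaSNuSsFaSNuSin}. That theorem states the equivalence, for a correspondence $\mathcal{P}$ on $2^N$, of three conditions; condition~3) is precisely inheritance of superadditivity (the conclusion we want), and condition~2) is the purely combinatorial statement that $\mathcal{P}(A)$ is a refinement of $\mathcal{P}(B)_{|A}$ for all $A \subseteq B \subseteq N$. So it suffices to check that the correspondence $\mathcal{P}_{\min}$ satisfies this refinement property, and then invoke the implication 2)$\Rightarrow$3).

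To verify the refinement property I would argue vertex by vertex. Fix $A \subseteq B \subseteq N$ and two vertices $i,j \in A$ lying in the same block of $\mathcal{P}_{\min}(A)$; the goal is to show they lie in the same block of $\mathcal{P}_{\min}(B)_{|A}$. By definition of $\mathcal{P}_{\min}(A)$ there is a path from $i$ to $j$ in the subgraph $(A, E(A)\setminus\Sigma(A))$, so every edge $e$ of this path belongs to $E(A)$ and satisfies $w(e)\neq \min_{f\in E(A)} w(f)$, hence $w(e) > \min_{f\in E(A)} w(f)$. The single observation that drives the proof is that $E(A)\subseteq E(B)$, so $\min_{f\in E(B)} w(f) \le \min_{f\in E(A)} w(f)$; therefore $w(e) > \min_{f\in E(B)} w(f)$ for each edge $e$ of the path, i.e. $e\notin\Sigma(B)$. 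Since also $e\in E(A)\subseteq E(B)$, the whole path survives in $(B, E(B)\setminus\Sigma(B))$, so $i$ and $j$ lie in the same connected component of that subgraph, i.e. in the same block of $\mathcal{P}_{\min}(B)$, and, being both in $A$, in the same block of the restriction $\mathcal{P}_{\min}(B)_{|A}$. As the blocks of $\mathcal{P}_{\min}(A)$ are exactly the path-connectedness classes in $(A, E(A)\setminus\Sigma(A))$, this shows each block of $\mathcal{P}_{\min}(A)$ is contained in a block of $\mathcal{P}_{\min}(B)_{|A}$, which is condition~2).

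I do not expect any genuine obstacle: the argument reduces to one remark about minima over nested edge sets together with the elementary fact that a path is preserved under deletion of a set of edges it avoids. The only points deserving a moment's care are the direction of the inequality (the minimum over the larger edge set $E(B)$ is the smaller one), the passage from "$e$ is not a minimum-weight edge of $G_A$" to "$e$ is not a minimum-weight edge of $G_B$" (the strict inequality is exactly what keeps the path intact), and the degenerate case $E(A)=\emptyset$, in which $\mathcal{P}_{\min}(A)$ is the partition of $A$ into singletons and trivially refines $\mathcal{P}_{\min}(B)_{|A}$. Once the refinement property is established, Theorem~\ref{thLGNEPNFaSNuSsFaSNuSin} yields the claim.
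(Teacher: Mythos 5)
Your proposal is correct and follows exactly the route the paper takes: the paper derives the corollary by asserting that $\mathcal{P}_{\min}(A)$ refines $\mathcal{P}_{\min}(B)_{|A}$ for all $A \subseteq B \subseteq N$ and invoking Theorem~\ref{thLGNEPNFaSNuSsFaSNuSin}. You merely spell out the (easy) verification of the refinement property, which the paper leaves implicit, and your argument for it is sound.
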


\begin{theorem}
\label{thLGNuPNFNFSNuSSNNuSFABFABF}
Let $N$ be an arbitrary set and 
$\mathcal{P}$ a correspondence on $2^N$.
Let $\mathcal{F}$ be a weakly union-closed family of subsets of $N$
with $\emptyset \notin \mathcal{F}$.
If the $\mathcal{P}$-restricted game $(N,\overline{u_{S}})$ is superadditive
for all $\emptyset \not= S \subseteq N$,
then the following conditions are equivalent:
\begin{enumerate}[1)]
\item
The $\mathcal{P}$-restricted game $(N, \overline{u_{S}})$ is $\mathcal{F}$-convex
for all $\emptyset \not= S \subseteq N $,
\item
For all $i \in N$,
for all $ A \subseteq B \subseteq N \setminus \lbrace i \rbrace$
such that $A$, $B$, and $A \cup \lbrace i \rbrace$ are in $\mathcal{F}$,
and for all $A' \in \mathcal{P}(A \cup \lbrace i \rbrace)_{|A}$,
$\mathcal{P}(A)_{|A'} = \mathcal{P}(B)_{|A'}$.
\end{enumerate}
\end{theorem}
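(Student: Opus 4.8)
The plan is to rewrite $\mathcal{F}$-convexity of the games $\overline{u_S}$ in a purely combinatorial way and then prove the two implications separately. By the standing hypothesis and Theorem~\ref{thLGNEPNFaSNuSsFaSNuSin}, $\mathcal{P}(X)$ refines $\mathcal{P}(Y)_{|X}$ whenever $X\subseteq Y$. Since $\mathcal{P}(X)$ is a partition of $X$ and $u_S(F)=1$ exactly when $F\supseteq S$, we get $\overline{u_S}(X)=1$ if some block of $\mathcal{P}(X)$ contains $S$, and $\overline{u_S}(X)=0$ otherwise; the refinement property then makes $\overline{u_S}$ nondecreasing, and for $|S|\geq 2$ gives $\overline{u_S}(X)=\min_{\{j,k\}\subseteq S}\overline{u_{\{j,k\}}}(X)$ (a set lies in one block iff each of its pairs does). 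Using monotonicity one sees that $\Delta\overline{u_S}(A,B)\geq 0$ automatically if $\overline{u_S}(A)=0$ or $\overline{u_S}(B)=0$, while if $\overline{u_S}(A)=\overline{u_S}(B)=1$ then $\overline{u_S}(A\cup B)=1$ and $\Delta\overline{u_S}(A,B)=\overline{u_S}(A\cap B)-1$. Hence condition~1) is equivalent to the statement $(1')$: \emph{for all $j,k\in N$ and all $A,B\in\mathcal{F}$ with $A\cap B\in\mathcal{F}$, if $j$ and $k$ lie in a common block of $\mathcal{P}(A)$ and in a common block of $\mathcal{P}(B)$, then they lie in a common block of $\mathcal{P}(A\cap B)$.}

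For $1)\Rightarrow 2)$, fix $i$, sets $A\subseteq B\subseteq N\setminus\{i\}$ with $A,B,A\cup\{i\}\in\mathcal{F}$, and a block $A'\in\mathcal{P}(A\cup\{i\})_{|A}$. Since $A\subseteq B$, $\mathcal{P}(A)$ refines $\mathcal{P}(B)_{|A}$, hence $\mathcal{P}(A)_{|A'}$ refines $\mathcal{P}(B)_{|A'}$; it remains to prove the converse refinement, i.e. that any $x,y\in A'$ in one block of $\mathcal{P}(B)$ are in one block of $\mathcal{P}(A)$. As $A'$ is the trace on $A$ of a block of $\mathcal{P}(A\cup\{i\})$, $x$ and $y$ also share a block of $\mathcal{P}(A\cup\{i\})$. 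Apply $\mathcal{F}$-convexity of $\overline{u_{\{x,y\}}}$ to the pair $(A\cup\{i\},\,B)$, whose intersection is $A$ (because $A\subseteq B$, $i\notin B$) and whose three sets $A\cup\{i\},B,A$ all lie in $\mathcal{F}$: here $\overline{u_{\{x,y\}}}(A\cup\{i\})=\overline{u_{\{x,y\}}}(B)=1$ and, the union of these two sets being $B\cup\{i\}$, also $\overline{u_{\{x,y\}}}(B\cup\{i\})=1$ by monotonicity, so $\Delta\overline{u_{\{x,y\}}}(A\cup\{i\},B)\geq 0$ forces $\overline{u_{\{x,y\}}}(A)=1$, that is, $x,y$ share a block of $\mathcal{P}(A)$.

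For $2)\Rightarrow 1)$ we prove $(1')$. \emph{Base case $|B\setminus(A\cap B)|\leq 1$} (and symmetrically for $A$): write $C=A\cap B$ and $B=C\cup\{b\}$ with $b\notin C$ (so $b\notin A$), and apply condition~2) with the roles of $A,B,i$ played by $C,A,b$; this yields $\mathcal{P}(C)_{|C'}=\mathcal{P}(A)_{|C'}$ for every block $C'\in\mathcal{P}(B)_{|C}$, and choosing $C'$ to be the block of $\mathcal{P}(B)_{|C}$ containing $j,k$ shows that $j,k$ share a block of $\mathcal{P}(C)$ whenever they share a block of $\mathcal{P}(A)$. \emph{General case}: assume $(1')$ fails and take a counterexample with $|A|+|B|$ minimal; by the reduction above we may assume $S=\{j,k\}$, and by the base case $|A\setminus C|\geq 2$ and $|B\setminus C|\geq 2$. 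Remove from $B$ a vertex $b\in B\setminus C$ with $B':=B\setminus\{b\}\in\mathcal{F}$; then $b\notin A$, $A\cap B'=C$, and $A\cup B'\in\mathcal{F}$ by weak union-closedness (as $C\neq\emptyset$). By minimality $(A,B')$ is not a counterexample, so $\overline{u_{\{j,k\}}}(A)=1$ and $\overline{u_{\{j,k\}}}(C)=0$ force $\overline{u_{\{j,k\}}}(B')=0$, i.e. $j,k$ lie in distinct blocks of $\mathcal{P}(B')$. Applying condition~2) with the roles of $A,B,i$ played by $B',A\cup B',b$ gives $\mathcal{P}(B')_{|C'}=\mathcal{P}(A\cup B')_{|C'}$ for $C'$ the block of $\mathcal{P}(B)_{|B'}$ containing $j,k$; hence $j,k$ lie in distinct blocks of $\mathcal{P}(A\cup B')$. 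But $A\subseteq A\cup B'$ and $\overline{u_{\{j,k\}}}(A)=1$, so the refinement property puts $j,k$ in a common block of $\mathcal{P}(A\cup B')$ --- a contradiction, which finishes $2)\Rightarrow 1)$.

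I expect the main difficulty to be the $2)\Rightarrow 1)$ direction, and within it the inductive step: one needs to peel a vertex $b$ off the larger set $B$ so that $B\setminus\{b\}$ is still in $\mathcal{F}$ while $A\cap(B\setminus\{b\})=A\cap B$ and $A\cup(B\setminus\{b\})\in\mathcal{F}$. This is where the structure of $\mathcal{F}$ is used; for the family of connected subsets of $G$ --- the case of interest here --- such a $b$ always exists: take a non-cut-vertex of $G_B$ lying outside the connected set $A\cap B$, which exists because contracting $A\cap B$ inside $G_B$ leaves a connected graph on at least two vertices. The remaining verifications --- reducing to two-element $S$, setting up the minimal counterexample, and checking that the two applications of condition~2) meet its hypotheses --- are then routine, as is the trivial boundary case $A\subseteq B$ or $B\subseteq A$, where $\Delta\overline{u_S}(A,B)=0$.
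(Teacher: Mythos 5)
The paper states this theorem without proof (it is recalled from Grabisch and Skoda, 2012), so there is no in-paper argument to compare against; I assess your proposal on its own terms. Your reformulation of condition~1) as the pair condition $(1')$, your direction $1)\Rightarrow 2)$ via $\Delta\overline{u_{\{x,y\}}}(A\cup\{i\},B)\geq 0$ with $(A\cup\{i\})\cap B=A$, and the base case of $2)\Rightarrow 1)$ are all correct and complete.

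The step you flag yourself in the inductive part of $2)\Rightarrow 1)$ --- finding $b\in B\setminus(A\cap B)$ with $B\setminus\{b\}\in\mathcal{F}$ --- is a genuine gap for the theorem as literally stated, and it cannot be repaired by a cleverer argument: the implication $2)\Rightarrow 1)$ is false for arbitrary weakly union-closed families. Take $N=\{1,\dots,6\}$ and $\mathcal{F}=\{\{1,2\},\{1,2,3,4\},\{1,2,5,6\},N\}$, which is weakly union-closed with $\emptyset\notin\mathcal{F}$, and define $\mathcal{P}(X)=\{X\}$ when $X$ contains $\{1,2,3,4\}$ or $\{1,2,5,6\}$ and $\mathcal{P}(X)$ equal to the singleton partition otherwise. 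The refinement property of Theorem~\ref{thLGNEPNFaSNuSsFaSNuSin} holds, so every $\overline{u_S}$ is superadditive; condition~2) is vacuously true because no two members of $\mathcal{F}$ differ by exactly one element; yet $\Delta\overline{u_{\{1,2\}}}(\{1,2,3,4\},\{1,2,5,6\})=1+0-1-1=-1$, so condition~1) fails. Hence the equivalence needs an additional hypothesis on $\mathcal{F}$ --- precisely the one your induction consumes, namely that any $B\in\mathcal{F}$ can be reached from any $C\in\mathcal{F}$ with $C\subseteq B$ by adding one element at a time inside $\mathcal{F}$. That property does hold for the two families this paper actually applies the theorem to, $2^N\setminus\{\emptyset\}$ and the connected subsets of a graph (your non-cut-vertex argument after contracting $A\cap B$ is correct, and $A\cup B'\in\mathcal{F}$ then follows from weak union-closedness), and with it made explicit your induction closes. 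So: your proof is sound for every use of the theorem in this paper, but it does not --- and nothing can --- establish the statement in the stated generality; you should either add the augmentation hypothesis on $\mathcal{F}$ or restrict to the families of interest.
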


We also recall the following lemmas proved by \citet{GrabischSkoda2012}.
We include the proofs
as these two results are extensively used throughout the paper.

\begin{lemma}
\label{lemmavB+i=1pvBiA}
Let us consider $A, B \subseteq N$
and a partition $\lbrace B_{1}, B_{2}, \ldots, B_{p} \rbrace$ of $B$.
Let $\mathcal{F}$ be a weakly union-closed family of subsets of $N$
with $\emptyset \notin \mathcal{F}$.
If $A, B_{i}$, and $ A\cap B_{i} \in \mathcal{F}$
for all $i \in \lbrace 1, \ldots, p \rbrace$,
then for every $\mathcal{F}$-convex game $(N,v)$
we have
\begin{equation}\label{eqlemsupermodularpropvAB+i=1pvABi}
v(A \cup B) + \sum_{i=1}^{p} v(A \cap B_{i}) \geq v(A) + \sum_{i=1}^{p} v(B_{i}).
\end{equation}
\end{lemma}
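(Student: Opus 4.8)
The plan is to prove \eqref{eqlemsupermodularpropvAB+i=1pvABi} by induction on the number $p$ of blocks of the partition, peeling off one block $B_i$ at a time and invoking $\mathcal{F}$-convexity of $v$ for the pair consisting of that block and the union of $A$ with the blocks already processed. The one point that requires care is checking that these intermediate unions actually belong to $\mathcal{F}$, so that $\mathcal{F}$-convexity may legitimately be applied to them; this is where the weak union-closedness of $\mathcal{F}$ enters, and it is the main (though small) obstacle.

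First I would record the following preliminary observation. Since $\emptyset \notin \mathcal{F}$ while $A \cap B_i \in \mathcal{F}$ for every $i$, each intersection $A \cap B_i$ is nonempty. As the blocks $B_1, \dots, B_p$ are pairwise disjoint, I claim that every partial union $C_k := A \cup B_1 \cup \cdots \cup B_k$ lies in $\mathcal{F}$, which I prove by an auxiliary induction on $k$: we have $C_0 = A \in \mathcal{F}$, and if $C_{k-1} \in \mathcal{F}$ then $C_{k-1} \cap B_k = A \cap B_k \neq \emptyset$ (because $B_k$ is disjoint from $B_1, \dots, B_{k-1}$), so weak union-closedness of $\mathcal{F}$ yields $C_k = C_{k-1} \cup B_k \in \mathcal{F}$. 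This is the only use of the weak union-closedness hypothesis.

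For the base case $p = 1$ we have $B = B_1$, and the asserted inequality is precisely $\Delta v(A, B_1) \geq 0$, which holds by $\mathcal{F}$-convexity since $A$, $B_1$ and $A \cap B_1$ all belong to $\mathcal{F}$. For the inductive step, assume the statement for $p-1$ blocks and put $B' = B_1 \cup \cdots \cup B_{p-1}$. Applying the induction hypothesis to $A$ together with the partition $\{B_1, \dots, B_{p-1}\}$ of $B'$ (the required memberships $A, B_i, A \cap B_i \in \mathcal{F}$ for $i \leq p-1$ are part of the hypothesis) gives
\[
v(A \cup B') + \sum_{i=1}^{p-1} v(A \cap B_i) \geq v(A) + \sum_{i=1}^{p-1} v(B_i).
\]
Next I apply $\mathcal{F}$-convexity to the pair $A \cup B'$ and $B_p$: by the preliminary observation $A \cup B' \in \mathcal{F}$, we have $B_p \in \mathcal{F}$ by hypothesis, and $(A \cup B') \cap B_p = A \cap B_p \in \mathcal{F}$ since $B_p$ is disjoint from $B'$; using $(A \cup B') \cup B_p = A \cup B$ this yields
\[
v(A \cup B) + v(A \cap B_p) \geq v(A \cup B') + v(B_p).
\]
Adding the last two displayed inequalities and cancelling the common term $v(A \cup B')$ produces exactly \eqref{eqlemsupermodularpropvAB+i=1pvABi}, which completes the induction. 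As indicated, the only nontrivial ingredient beyond elementary telescoping is the membership $A \cup B' \in \mathcal{F}$ needed to invoke $\mathcal{F}$-convexity in the second step, which the preliminary observation secures.
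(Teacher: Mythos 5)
Your proof is correct and follows essentially the same induction as the paper: peel off one block, apply $\mathcal{F}$-convexity to $A\cup B'$ and $B_p$, and add the two inequalities. Your preliminary observation that $A\cup B'\in\mathcal{F}$ (using $\emptyset\notin\mathcal{F}$ to get $A\cap B_k\neq\emptyset$ and then weak union-closedness) makes explicit a membership check the paper leaves implicit, which is a welcome addition rather than a deviation.
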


\begin{proof}
We prove the result by induction.
(\ref{eqlemsupermodularpropvAB+i=1pvABi}) is obviously satisfied for $p=1$.
Let us assume it is satisfied for $p$
and let us consider a partition $\lbrace  B_{1}, B_{2}, \ldots, B_{p}, B_{p+1} \rbrace$ of $B$.
We set $B' = B_{1} \cup B_{2} \cup \ldots \cup B_{p}$.
The $\mathcal{F}$-convexity of $v$
applied to $A \cup B'$ and $B_{p+1}$
provides the following inequality:
\begin{equation}
\label{eqvAB'Bp+1+vAB'Bp+1>=vAB4+vBp+1}
v((A \cup B') \cup B_{p+1}) + v((A \cup B')
\cap B_{p+1}) \geq v(A \cup B') + v(B_{p+1}).
\end{equation}
By induction (\ref{eqlemsupermodularpropvAB+i=1pvABi}) is valid for $B'$:
\begin{equation}
\label{eqvAB'+i=1pvABi>=vA+i=1pvBi}
v(A \cup B') + \sum_{i=1}^{p} v(A \cap B_{i})
\geq v(A) + \sum_{i=1}^{p}v(B_{i}).
\end{equation}
Adding (\ref{eqvAB'Bp+1+vAB'Bp+1>=vAB4+vBp+1}) and (\ref{eqvAB'+i=1pvABi>=vA+i=1pvBi})
we obtain the result for $p+1$.
\end{proof}

\begin{lemma}
\label{lemPNABNPAPBAvNvB-vB>=vA-vA}
Let us consider a correspondence $\mathcal{P}$ on $2^N$
and subsets $A \subseteq B \subseteq N$
such that $\mathcal{P}(A) = \mathcal{P}(B)_{|A}$.
Let $\mathcal{F}$ be a weakly union-closed family of subsets of $N$
with $\emptyset \notin \mathcal{F}$.
If $A \in \mathcal{F}$
and if all elements of $\mathcal{P}(A)$ and $\mathcal{P}(B)$ are in $\mathcal{F}$,
then for every $\mathcal{F}$-convex game $(N,v)$
we have
\begin{equation}
\label{eqlemvB-vB>=vA-vA}
v(B) - \overline{v}(B) \geq v(A) - \overline{v}(A).
\end{equation}
\end{lemma}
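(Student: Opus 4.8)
The plan is to rewrite inequality~(\ref{eqlemvB-vB>=vA-vA}) as a single supermodularity‑type inequality and to read it off from Lemma~\ref{lemmavB+i=1pvBiA}. Write $\mathcal{P}(B)=\lbrace B_1,\ldots,B_p\rbrace$, so that $\overline{v}(B)=\sum_{i=1}^{p}v(B_i)$ by definition. Since $A\subseteq B$ and $\mathcal{P}(A)=\mathcal{P}(B)_{|A}$, the blocks of $\mathcal{P}(A)$ are exactly the nonempty sets among $A\cap B_1,\ldots,A\cap B_p$, and they partition $A$; as $v(\emptyset)=0$ this gives $\overline{v}(A)=\sum_{i=1}^{p}v(A\cap B_i)$. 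Because $A\cup B=B$, inequality~(\ref{eqlemvB-vB>=vA-vA}) is then equivalent to
\[
v(A\cup B)+\sum_{i=1}^{p}v(A\cap B_i)\ \geq\ v(A)+\sum_{i=1}^{p}v(B_i),
\]
which is precisely the conclusion of Lemma~\ref{lemmavB+i=1pvBiA} for the sets $A$, $B$ and the partition $\lbrace B_1,\ldots,B_p\rbrace$ of $B$.

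So the argument reduces to verifying the hypotheses of Lemma~\ref{lemmavB+i=1pvBiA}. We have $A\in\mathcal{F}$ by assumption, each $B_i\in\mathcal{F}$ because it is a block of $\mathcal{P}(B)$ and all such blocks lie in $\mathcal{F}$, and each \emph{nonempty} $A\cap B_i$ lies in $\mathcal{F}$ because it is a block of $\mathcal{P}(A)=\mathcal{P}(B)_{|A}$. The one delicate case is a block $B_i$ with $A\cap B_i=\emptyset$, for which $A\cap B_i\notin\mathcal{F}$, so Lemma~\ref{lemmavB+i=1pvBiA} cannot be invoked with the whole partition at once. One handles this by splitting off those blocks: reindex so that $A\cap B_i\neq\emptyset$ for $i\le q$ and $A\cap B_i=\emptyset$ for $i>q$, and put $B'=\bigcup_{i\le q}B_i\supseteq A$. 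Then Lemma~\ref{lemmavB+i=1pvBiA} applies to $A$, $B'$ and the partition $\lbrace B_1,\ldots,B_q\rbrace$ of $B'$ and yields $v(B')+\overline{v}(A)\geq v(A)+\sum_{i\le q}v(B_i)$; since $B$ is the disjoint union of $B'$ with $B_{q+1},\ldots,B_p$, superadditivity of $v$ gives $v(B)\geq v(B')+\sum_{i>q}v(B_i)$, and adding the two inequalities (using $\overline{v}(B)=\sum_{i=1}^{p}v(B_i)$) gives the claim.

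I expect the second paragraph to be the crux. The proof of Lemma~\ref{lemmavB+i=1pvBiA} crucially needs every intersection $A\cap B_i$ to be nonempty so that the intermediate unions $A\cup B_1\cup\cdots\cup B_k$ stay inside the weakly union‑closed family $\mathcal{F}$ and $\mathcal{F}$‑convexity can be applied step by step; this is exactly what forces the separate treatment of the blocks of $\mathcal{P}(B)$ disjoint from $A$, which is recovered from superadditivity of $v$ — a property available in all the intended applications, where the underlying game is convex and hence superadditive. When every block of $\mathcal{P}(B)$ meets $A$ the complication disappears and inequality~(\ref{eqlemvB-vB>=vA-vA}) follows in one line directly from Lemma~\ref{lemmavB+i=1pvBiA}.
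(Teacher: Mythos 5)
Your first paragraph is exactly the paper's proof: the paper disposes of the lemma in one line by writing $\mathcal{P}(B)=\lbrace B_1,\ldots,B_p\rbrace$, observing that $\mathcal{P}(A)=\lbrace B_i\cap A \mid B_i\cap A\neq\emptyset\rbrace$, and invoking Lemma~\ref{lemmavB+i=1pvBiA}. So on the main line of argument you and the paper coincide, and that part is correct.

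The issue you isolate in your second paragraph is a genuine lacuna that the paper's one-liner glosses over: Lemma~\ref{lemmavB+i=1pvBiA} requires $A\cap B_i\in\mathcal{F}$ for \emph{every} $i$, which fails (since $\emptyset\notin\mathcal{F}$) whenever some block of $\mathcal{P}(B)$ is disjoint from $A$, and nothing in the hypotheses of Lemma~\ref{lemPNABNPAPBAvNvB-vB>=vA-vA} excludes that situation. Be aware, though, that your repair imports superadditivity of $v$, which is not among the stated hypotheses, and this is not a cosmetic point: without it the inequality can actually fail in that edge case. Take $N=\lbrace 1,2\rbrace$, $\mathcal{F}=\lbrace \lbrace 1\rbrace,\lbrace 2\rbrace,\lbrace 1,2\rbrace\rbrace$, $v$ zero-normalized with $v(\lbrace 1,2\rbrace)=-1$ (one checks $\Delta v(A,B)=0$ for every admissible pair, so $v$ is $\mathcal{F}$-convex), $\mathcal{P}(\lbrace 1,2\rbrace)=\lbrace\lbrace 1\rbrace,\lbrace 2\rbrace\rbrace$, $A=\lbrace 1\rbrace$, $B=\lbrace 1,2\rbrace$; then all hypotheses of the lemma hold but $v(B)-\overline{v}(B)=-1<0=v(A)-\overline{v}(A)$. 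So what you have proved is the lemma under the additional assumption that $v$ is superadditive (or that every block of $\mathcal{P}(B)$ meets $A$), and some such assumption is in fact needed. Since every application of the lemma in the paper is to a convex zero-normalized game, hence superadditive, your strengthened version is exactly what is used downstream; your argument for it (splitting off the blocks disjoint from $A$ and recovering them by superadditivity) is correct.
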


\begin{proof}
If $\mathcal{P}(B) = \lbrace B_{1}, B_{2}, \ldots, B_{p} \rbrace$,
then $\mathcal{P}(A) = \lbrace B_i \cap A \mid i = 1, \ldots, p, B_i \cap A \not= \emptyset \rbrace$
and Lemma~\ref{lemmavB+i=1pvBiA} implies (\ref{eqlemvB-vB>=vA-vA}).
\end{proof}

\citet{Skoda201702} established that  inheritance of convexity 
and inheritance of convexity restricted to unanimity games
are equivalent for $\mathcal{P}_M$ and $\mathcal{P}_{\min}$.

\begin{theorem}
\label{theoremEquivalenceInheritanceofConvexityForUnanimityGamesP_MAndP_min}
Let $G = (N,E)$
(resp. $G=(N,E,w)$)
be a graph
(resp. weighted graph).
The following statements are equivalent:
\begin{enumerate}[1)]
\item
The Myerson restricted game $(N, u_S^{M})$
(resp. $\mathcal{P}_{\min}$-restricted game $(N, \overline{u_S})$)
is convex
for all $\emptyset \not= S \subseteq N$.
\item
There is inheritance of convexity for $\mathcal{P}_M$
(resp. $\mathcal{P}_{\min}$).
\end{enumerate}
\end{theorem}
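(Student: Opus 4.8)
The plan is to prove the two implications separately; $2)\Rightarrow 1)$ is immediate and $1)\Rightarrow 2)$ carries all the content. For $2)\Rightarrow 1)$: by the remarks in Section~\ref{SectionPreliminaryDefinitions} every unanimity game $u_S$ ($S\neq\emptyset$) is superadditive and supermodular, hence convex, so if the $\mathcal{P}_{\min}$-restricted game (resp.\ the Myerson restricted game) of every convex game is convex, then in particular $(N,\overline{u_S})$ (resp.\ $(N,u_S^{M})$) is convex for every $S$.

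For $1)\Rightarrow 2)$ I would first extract a purely combinatorial consequence of hypothesis~$1)$. Both correspondences inherit superadditivity: for $\mathcal{P}_{\min}$ this is Corollary~\ref{corIGNEiafttsihfv}, and for $\mathcal{P}_M$ the partition into connected components satisfies condition~\ref{Claim3ThLGNEPNFaSNuSsFaSNuSin} of Theorem~\ref{thLGNEPNFaSNuSsFaSNuSin}, which gives the same conclusion. Now apply Theorem~\ref{thLGNuPNFNFSNuSSNNuSFABFABF} with the (trivially weakly union-closed) family $\mathcal{F}=2^{N}\setminus\{\emptyset\}$; for this $\mathcal{F}$, $\mathcal{F}$-convexity coincides with convexity, so hypothesis~$1)$ yields statement~$1)$ of Theorem~\ref{thLGNuPNFNFSNuSSNNuSFABFABF}, hence its statement~$2)$, namely the condition
\[
(\star)\qquad \mathcal{P}(A)_{|A'}=\mathcal{P}(B)_{|A'}\ \text{ for all }i\in N,\ A\subseteq B\subseteq N\setminus\{i\},\ A'\in\mathcal{P}(A\cup\{i\})_{|A}.
\]
It then remains to show that inheritance of superadditivity together with $(\star)$ forces inheritance of convexity for $\mathcal{P}_M$ (resp.\ $\mathcal{P}_{\min}$).

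For this second stage, fix a convex game $v$. By the marginal-contribution characterization of supermodularity it suffices to show that for every $i\in N$ and every $A\subseteq B\subseteq N\setminus\{i\}$,
\[
\overline{v}(A\cup\{i\})-\overline{v}(A)\ \leq\ \overline{v}(B\cup\{i\})-\overline{v}(B),
\]
the case $A=\emptyset$ being covered by the superadditivity of $\overline{v}$, which is inherited. Write $\mathcal{P}(A\cup\{i\})=\{K_0,K_1,\dots,K_r\}$ with $i\in K_0$, so that $\mathcal{P}(A\cup\{i\})_{|A}=\{K_0\setminus\{i\},K_1,\dots,K_r\}$ (discard $K_0\setminus\{i\}$ if empty). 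Using the refinement property, each of $\overline{v}(A),\overline{v}(A\cup\{i\}),\overline{v}(B),\overline{v}(B\cup\{i\})$ can be written as a sum of $v$ over the blocks of the relevant partition, and these sums reorganised according to the blocks $K_0\setminus\{i\},K_1,\dots,K_r$. Condition $(\star)$ makes $\mathcal{P}(A)$ and $\mathcal{P}(B)$ agree on each $K_l$ and on $K_0\setminus\{i\}$, so the contributions of all blocks that are not affected by the insertion of $i$ cancel in the difference $[\overline{v}(B\cup\{i\})-\overline{v}(B)]-[\overline{v}(A\cup\{i\})-\overline{v}(A)]$. What remains is a combination of inequalities of the shape $v(K)+\sum_j v(K\cap F_j)\geq v(K')+\sum_j v(F_j)$, each of which is exactly the conclusion of Lemma~\ref{lemmavB+i=1pvBiA} applied with $\mathcal{F}=2^{N}\setminus\{\emptyset\}$ (all intersection hypotheses being vacuous); Lemma~\ref{lemPNABNPAPBAvNvB-vB>=vA-vA} can be used to package these. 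Summing yields the required inequality.

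The main obstacle is this last piece of bookkeeping. The delicate point is that a block of $\mathcal{P}(B)$ may straddle several blocks $K_l$ of $\mathcal{P}(A\cup\{i\})$, so the cancellation produced by $(\star)$ has to be organised carefully; a natural way to do this is to reduce first to the case $B=A\cup\{j\}$ by a telescoping argument over the vertices of $B\setminus A$, which is legitimate because $(\star)$ holds for every pair $A\subseteq B$. It is precisely here that one uses, beyond $(\star)$ and inheritance of superadditivity, that the blocks produced by $\mathcal{P}_M$ (resp.\ $\mathcal{P}_{\min}$) are connected subsets and interact well with restriction to a block of a finer partition; this is what makes condition $(\star)$ sufficient for these two correspondences and completes the equivalence $1)\Leftrightarrow 2)$.
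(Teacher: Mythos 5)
The paper does not actually prove this statement: it is imported from \citet{Skoda201702}, where it follows from an abstract characterization of inheritance of convexity for arbitrary correspondences (for $\mathcal{P}_{\min}$ the present paper also reobtains it indirectly, via the Remark after Theorem~\ref{PropInheritanceOfConvexityForPminOnGIffCofGWithWeightw2EitherCompleteOrAllVerticesLinkedToTheSameEndVertexOfe12}, since all its necessary conditions are derived from unanimity games alone while sufficiency is proved for all convex games). So your proposal must stand on its own. The parts that do stand are: the direction $2)\Rightarrow 1)$; the observation that both correspondences inherit superadditivity; the application of Theorem~\ref{thLGNuPNFNFSNuSSNNuSFABFABF} with $\mathcal{F}=2^{N}\setminus\{\emptyset\}$ to extract the combinatorial condition $(\star)$ from hypothesis~$1)$ (the paper itself uses this instance, e.g.\ in the proof of Lemma~\ref{LemP_MinMax(w1,w2)<w(e)e'InELinkingeTo2}); and the reduction to one-point extensions $B=A\cup\{j\}$.

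The gap is the final stage --- showing that $(\star)$ together with inherited superadditivity forces $\overline{v}(B\cup\{i\})-\overline{v}(B)\geq\overline{v}(A\cup\{i\})-\overline{v}(A)$ --- and this is where the entire content of the theorem lives. Your sketch does not go through as written. First, ``reorganising'' $\overline{v}(B)$ and $\overline{v}(B\cup\{i\})$ according to the blocks $K_0\setminus\{i\},K_1,\dots,K_r$ of $\mathcal{P}(A\cup\{i\})$ is not meaningful: the blocks of $\mathcal{P}(B)$ and $\mathcal{P}(B\cup\{i\})$ are subsets of $B$ and are not covered by the $K_l$. Second, the claim that ``the contributions of all blocks not affected by the insertion of $i$ cancel'' presupposes that every block of $\mathcal{P}(X\cup\{i\})$ not containing $i$ is a block of $\mathcal{P}(X)$, so that the marginal contribution reduces to $v(X_0)-\sum_k v(X_{0,k})$ for the single new block $X_0\ni i$. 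That holds for $\mathcal{P}_M$, and with it one can indeed conclude: $(\star)$ applied to $A'=A_0\setminus\{i\}$ shows each block of $\mathcal{P}_M(B)$ meets $A_0\setminus\{i\}$ in at most one block of $\mathcal{P}_M(A)$, and Lemma~\ref{lemmavB+i=1pvBiA} applied to $A_0$ and the partition of $B_0$ finishes. But it fails for $\mathcal{P}_{\min}$: adjoining $i$ can lower the minimum weight $\sigma(\cdot)$, reinstate previously deleted edges, and merge blocks of $\mathcal{P}_{\min}(B)$ none of which contains $i$, so the terms you expect to cancel do not cancel and the marginal contribution does not have the assumed form. Your closing appeal to the blocks ``interacting well with restriction'' is precisely the missing argument rather than a justification of it; as written, the proposal establishes the easy direction and the reduction to $(\star)$, but not the sufficiency of $(\star)$, which is the theorem.
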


Finally,
we recall a characterization of inheritance of convexity
for Myerson's correspondence $\mathcal{P}_{M}$.
A graph $G = (N,E)$ is \emph{cycle-complete}
if for any cycle $C = \lbrace v_1, e_1, v_2, e_2, \ldots , e_m, v_1 \rbrace$
in $G$ the subset $\lbrace v_1, v_2, \ldots, v_m \rbrace \subseteq N$
of vertices of $C$ induces a complete subgraph in $G$.

\begin{theorem}
\label{NouwelandandBorm1991}
\citep{NouwelandandBorm1991}.
Let $G = (N,E)$ be a graph.
There is inheritance of convexity for $\mathcal{P}_{M}$
if and only if $G$ is cycle-complete.
\end{theorem}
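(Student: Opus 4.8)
The plan is to convert the statement, via the results recalled above, into a purely combinatorial property of the connected components of the induced subgraphs of $G$, and then to check that this property coincides with cycle-completeness. First, for $A\subseteq B$ every two vertices joined in $G_A$ are joined in $G_B$, so $\mathcal{P}_M(A)$ always refines $\mathcal{P}_M(B)_{|A}$; hence by Theorem~\ref{thLGNEPNFaSNuSsFaSNuSin} the Myerson restricted game $u_S^{M}$ (that is, $\overline{u_S}$ for $\mathcal{P}=\mathcal{P}_M$) is superadditive for every $\emptyset\neq S\subseteq N$. Since a game is convex iff it is superadditive and $\mathcal{F}$-convex for the weakly union-closed family $\mathcal{F}=2^{N}\setminus\{\emptyset\}$, Theorems~\ref{theoremEquivalenceInheritanceofConvexityForUnanimityGamesP_MAndP_min} and~\ref{thLGNuPNFNFSNuSSNNuSFABFABF} then yield that inheritance of convexity for $\mathcal{P}_M$ is equivalent to: $\mathcal{P}_M(A)_{|A'}=\mathcal{P}_M(B)_{|A'}$ for all $i\in N$, all $A\subseteq B\subseteq N\setminus\{i\}$ and all $A'\in\mathcal{P}_M(A\cup\{i\})_{|A}$. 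Unwinding this, $A'$ is either a component of $G_A$ with no neighbour of $i$ (and the equality is then automatic, since $A'$ is connected and so lies in one component of $G_B$), or it is the union $D_1\cup\dots\cup D_k$ of the components $D_1,\dots,D_k$ of $G_A$ having a neighbour of $i$, in which case $\mathcal{P}_M(A)_{|A'}=\{D_1,\dots,D_k\}$ whereas $\mathcal{P}_M(B)_{|A'}$ is obtained from it by merging those $D_j$ that lie in a common component of $G_B$. So inheritance of convexity for $\mathcal{P}_M$ is equivalent to the condition $(\star)$: for every $i\in N$ and all $A\subseteq B\subseteq N\setminus\{i\}$, no two distinct components of $G_A$ both having a neighbour of $i$ lie in the same component of $G_B$.

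Next I would prove that cycle-completeness implies $(\star)$. Assume $G$ is cycle-complete and that $(\star)$ fails: there are $i\in N$, $A\subseteq B\subseteq N\setminus\{i\}$, distinct components $D\neq D'$ of $G_A$ each with a neighbour of $i$, and a component of $G_B$ containing $D\cup D'$. Choose $x\in D$, $y\in D'$ with $\{i,x\},\{i,y\}\in E$ and a simple path $\Pi$ from $x$ to $y$ in $G_B$; as $i\notin B$, the closed walk $i,x,\Pi,y,i$ is a cycle $C^{*}$ of $G$ with $x,y\in V(C^{*})$. Cycle-completeness forces $V(C^{*})$ to induce a complete subgraph, so $\{x,y\}\in E$, and hence, since $x,y\in A$, $\{x,y\}\in E(A)$, contradicting that $D$ and $D'$ are distinct components of $G_A$. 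So $(\star)$ holds, i.e. there is inheritance of convexity.

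For the converse I would show that if $G$ is not cycle-complete then $(\star)$ fails. The crux is a ``shortening'' claim: some cycle of $G$ carries two non-adjacent vertices at cyclic distance $2$. Among all pairs $(C,\{u,w\})$ with $C$ a cycle of $G$ and $u,w$ non-adjacent vertices of $C$, choose one minimizing the length $k$ of the shorter arc of $C$ from $u$ to $w$; such a pair exists because $G$ is not cycle-complete, and $|V(C)|\geq 4$ for any such pair (a triangle has a complete vertex set). Then $k\geq 2$; and if $k\geq 3$, writing that arc as $u=x_0,x_1,\dots,x_k=w$, we must have $\{u,x_2\}\in E$ (otherwise the pair $(C,\{u,x_2\})$, whose corresponding arc has length $2<k$, would contradict minimality), and then deleting $x_1$ and adding the edge $\{u,x_2\}$ produces a cycle $C'$ through $u$ and $w$ whose shorter arc from $u$ to $w$ has length $k-1$, again contradicting minimality. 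Hence $k=2$; writing the short arc as $u,i,w$ and setting $A=\{u,w\}$, $B=V(C)\setminus\{i\}$, the singletons $\{u\}$ and $\{w\}$ are distinct components of $G_A$, each adjacent to $i$, which lie in the same component of $G_B$ (joined by the other, longer arc of $C$, which does not pass through $i$). So $(\star)$ fails and there is no inheritance of convexity; concretely, the convex unanimity game $u_{\{u,w\}}$ satisfies $\Delta u_{\{u,w\}}^{M}\bigl(\{u,i,w\},\,V(C)\setminus\{i\}\bigr)=-1<0$, so its $\mathcal{P}_M$-restricted game is not convex.

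The step I expect to be the real obstacle is the shortening claim. One would like to reduce a non-cycle-complete graph to a chordless cycle of length at least $4$, but no such cycle need exist (for instance $K_4$ minus an edge is not cycle-complete yet has no induced cycle of length $\geq 4$), so the failure of completeness must instead be pushed onto a cycle with a ``gap of two'', which is precisely what the minimality argument achieves. Everything else is bookkeeping: reading off condition $(\star)$ from Theorems~\ref{thLGNEPNFaSNuSsFaSNuSin},~\ref{theoremEquivalenceInheritanceofConvexityForUnanimityGamesP_MAndP_min} and~\ref{thLGNuPNFNFSNuSSNNuSFABFABF}, and the numerical verification in the converse.
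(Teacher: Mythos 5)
The paper does not prove this statement at all: Theorem~\ref{NouwelandandBorm1991} is recalled as a known result and attributed to \citet{NouwelandandBorm1991}, so there is no internal proof to compare yours against. Judged on its own, your argument is correct and complete. The reduction to condition $(\star)$ via Theorems~\ref{thLGNEPNFaSNuSsFaSNuSin}, \ref{theoremEquivalenceInheritanceofConvexityForUnanimityGamesP_MAndP_min} and~\ref{thLGNuPNFNFSNuSSNNuSFABFABF} is legitimate (superadditivity of $u_S^M$ holds since $\mathcal{P}_M(A)$ always refines $\mathcal{P}_M(B)_{|A}$, and the unwinding of the blocks of $\mathcal{P}_M(A\cup\{i\})_{|A}$ into the two cases is accurate); the direction ``cycle-complete $\Rightarrow (\star)$'' is a clean application of cycle-completeness to the cycle $i,x,\Pi,y,i$, which is indeed simple because $i\notin B$; and the shortening argument in the converse correctly handles the point you flag yourself, namely that one cannot hope for a chordless cycle of length $\ge 4$ (as $K_4$ minus an edge shows) but only for a cycle carrying two non-adjacent vertices at arc-distance $2$, which the minimality argument delivers (note $|V(C)|\ge 4$ guarantees the modified cycle $C'$ still has at least three vertices). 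The resulting counterexample $\Delta u_{\{u,w\}}^{M}(\{u,i,w\},V(C)\setminus\{i\})=-1$ is computed correctly. What your route buys is a self-contained derivation of the van den Nouweland--Borm characterization from the general correspondence machinery that the paper already recalls, rather than an appeal to the original reference.
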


\section{
Inheritance of $\mathcal{F}$-convexity
}
\label{SectionNecessaryCondF-ConvPmin}

Let $G = (N,E,w)$ be a connected weighted graph
and let 
$\mathcal{F}$ be the family of connected subsets of $N$.
In this section we recall
necessary conditions on the weight vector $w$
established by~\citet{Skoda2017} 
for inheritance of $\mathcal{F}$-convexity
from the original communication game $(N,v)$ 
to the $\mathcal{P}_{\min}$-restricted game $(N,\overline{v})$.
We assume that all weights are strictly positive
and denote by $w_{k}$ or $w_{ij}$
the weight of an edge $e_{k} = \lbrace i,j \rbrace$ in $E$.

A star $S_k$ corresponds to a tree
with one internal vertex and $k$ leaves.
We consider a star $S_{3}$
with vertices ${1, 2, 3, 4}$ and edges $e_{1} = \lbrace 1, 2 \rbrace$,
$e_{2} = \lbrace 1, 3 \rbrace$ and $e_{3} = \lbrace 1, 4 \rbrace$.

\vspace{\baselineskip}

\begin{mdframed}
\textbf{Star Condition.}
\it
For every star in $G$ of type $S_{3}$,
the edge-weights
satisfy
\[
 w_{1} \leq w_{2} = w_{3},
\]
after renumbering the edges if necessary.
\end{mdframed}

\vspace{\baselineskip}
\begin{mdframed}
\textbf{Path Condition.}
\it
For every elementary path
$\gamma = \lbrace 1, e_{1}, 2, e_{2}, 3, \ldots,  m,\\ e_{m}, m+1 \rbrace$
in $G$
and for all $i,j,k$
such that $1 \leq i < j < k \leq m$,
the edge-weights satisfy
\[
w_{j} \leq \max (w_{i}, w_{k}).
\]
\end{mdframed}

\vspace{\baselineskip}
For a given cycle
$C = \lbrace 1, e_{1}, 2, e_{2}, \ldots, m, e_{m}, 1\rbrace$ with $m \geq 3$,
we denote by $E(C)$ the set of edges $\lbrace e_{1}, e_{2}, \ldots, e_{m}\rbrace$ of $C$
and by $\hat{E}(C)$ the set composed of $E(C)$ and of the chords of $C$ in $G$.
\\

\begin{mdframed}
\textbf{Cycle Condition.}
\it
For every simple cycle
$C = \lbrace 1, e_{1}, 2, e_{2}, \ldots, m, e_{m}, 1\rbrace$ in $G$ with $m \geq 3$,
the edge-weights satisfy
\[
w_{1} \leq w_{2} \leq w_{3} = \cdots = w_{m} = \hat{M},
\]
after renumbering the edges if necessary,
where $\hat{M} = \max_{e \in \hat{E}(C)} w(e)$.
Moreover,
$w(e) = w_{2}$ for all chord incident to $2$,
and $w(e) = \hat{M}$ for all $e \in \hat{E}(C)$ non-incident to $2$.
\end{mdframed}

\vspace{\baselineskip}
For a given cycle $C$,
an edge $e$ in $\hat{E}(C)$ is a \emph{maximum weight edge} of~$C$
if $w(e) = \max_{e \in \hat{E}(C)} w(e)$.
Otherwise,
$e$ is a non-maximum weight edge of~$C$.
Moreover,
we call maximum
(resp. non-maximum)
weight chord of $C$
a maximum
(resp. non-maximum)
weight edge in $\hat{E}(C) \setminus E(C)$.

\vspace{\baselineskip}
\begin{mdframed}
\textbf{Pan Condition.}
\it
For all connected subgraphs of $G$
corresponding to the union of a simple cycle $C = \lbrace e_{1}, e_{2}, \ldots, e_{m} \rbrace$
with $m \geq 3$,
and an elementary path $P$ such that there is an edge $e$ in $P$
with $w(e) \leq \min_{1 \leq k \leq m} w_{k}$
and $|V(C) \cap V(P)| = 1$,
the edge-weights satisfy
\begin{enumerate}[(a)]
\item
\label{eqPanConditionEither}
either $w_{1} = w_{2} = w_{3} = \cdots = w_{m}= \hat{M}$,
\item
\label{eqPanConditionOr}
or $w_{1} = w_{2} < w_{3} = \cdots = w_{m} = \hat{M}$,
\end{enumerate}
\noindent
where $\hat{M} = \max_{e \in \hat{E}(C)} w(e)$.
If Condition (\ref{eqPanConditionOr}) is satisfied,
then $V(C) \cap V(P) = \lbrace 2 \rbrace$,
and if moreover $w(e) < w_{1}$,
then $\lbrace 1, 3 \rbrace$ is a maximum weight chord of $C$.
\end{mdframed}

\vspace{\baselineskip}
Two cycles are said \textit{adjacent} if they share at least one common edge.

\vspace{\baselineskip}
\begin{mdframed}
\textbf{Adjacent Cycles Condition.}
\it
For all pairs $\lbrace C,C' \rbrace$ of adjacent simple cycles in $G$
such that
\begin{enumerate}[(a)]
\item
\label{enumPropV(C)-V(C)'nonempty}
$V(C) \setminus V(C') \not= \emptyset$ and $V(C') \setminus V(C) \not= \emptyset$,
\item
\label{enumPropCatmost1non-maxweightchord}
$C$ has at most one non-maximum weight chord,
\item
\label{enumPropCC'nomaxweightchord}
$C$ and $C'$ have no maximum weight chord,
\item
\label{enumPropCandC'HaveNoCommonChord}
$C$ and $C'$ have no common chord.
\end{enumerate}
\noindent
$C$ and $C'$ cannot have two common non-maximum weight edges.
Moreover,
$C$ and $C'$ have a unique common non-maximum weight edge $e_{1}$ if and only if 
there are  non-maximum weight edges $e_{2} \in E(C) \setminus E(C')$ and $e_{2}' \in E(C') \setminus E(C)$
such that $e_{1}, e_{2}, e_{2}'$ are adjacent and
\begin{itemize}
\item $w_{1} = w_{2} = w_{2}'$ if $|E(C)| \geq 4$ and $|E(C')| \geq 4$,
\item $w_{1} = w_{2} \geq w_{2}'$ or $w_{1} = w_{2}' \geq w_{2}$
if $|E(C)| = 3$ or $|E(C')| = 3$.
\end{itemize}
\end{mdframed}

\begin{proposition}
\label{corPathCond}
Let $\mathcal{F}$ be the family of connected subsets of $N$.
If for all $\emptyset \not= S \subseteq N$,
the $\mathcal{P}_{\min}$-restricted game $(N, \overline{u_{S}})$ is $\mathcal{F}$-convex,
then the Star, Path, Cycle, Pan and Adjacent cycles conditions are satisfied.
\end{proposition}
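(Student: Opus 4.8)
\textbf{Proof strategy for Proposition~\ref{corPathCond}.}

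The plan is to prove each of the five conditions by contraposition: assuming a given condition fails, I would exhibit a specific non-empty subset $S \subseteq N$, together with subsets $A \subseteq B \subseteq N$, for which the $\mathcal{P}_{\min}$-restricted game $(N, \overline{u_S})$ violates $\mathcal{F}$-convexity. The natural tool is Theorem~\ref{thLGNuPNFNFSNuSSNNuSFABFABF}: since inheritance of superadditivity always holds for $\mathcal{P}_{\min}$ (Corollary~\ref{corIGNEiafttsihfv}), $\mathcal{F}$-convexity of all $(N,\overline{u_S})$ is equivalent to the combinatorial condition that for every $i$, every $A \subseteq B \subseteq N \setminus \{i\}$ with $A$, $B$, $A \cup \{i\}$ connected, and every block $A' \in \mathcal{P}_{\min}(A \cup \{i\})_{|A}$, we have $\mathcal{P}_{\min}(A)_{|A'} = \mathcal{P}_{\min}(B)_{|A'}$. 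So for each failed condition I would construct $i$, $A$, $B$ making this block-equality fail. Concretely, when the condition involves a star, path, cycle, pan, or pair of adjacent cycles with an offending edge-weight configuration, I would take $B$ to be (roughly) the vertex set of that subgraph, $A$ a suitable connected subset of $B$, and $i$ the vertex whose addition to $A$ changes which edges are minimal, thereby changing the induced partition in a way that $B$ does not replicate.

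The key mechanism to exploit is that $\mathcal{P}_{\min}(A)$ is obtained by deleting the minimum-weight edges of $G_A$: adding a single vertex $i$ of low incident edge-weight to $A$ can lower the minimum weight in the subgraph, so that edges which were minimal (hence deleted) in $G_A$ are no longer minimal in $G_{A \cup \{i\}}$, and conversely. This is exactly the phenomenon behind the counter-examples of~\citet{GrabischSkoda2012}. For the Star condition, if some $S_3$ has $w_1 \le w_2 < w_3$ or the two larger weights unequal, I would let $A = \{1,3,4\}$ or a similar triple, add the leaf on the cheap edge, and check that the partition of $\{3,4\}$ differs from what a larger $B$ produces. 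For the Path condition, if $w_j > \max(w_i, w_k)$ on some elementary path, the middle edge $e_j$ becomes the unique maximum along the subpath; choosing $A$, $B$ to be nested subpaths and $i$ an endpoint, deletion of minimal edges yields different separations of $A$. The Cycle, Pan, and Adjacent Cycles conditions follow the same template but with more elaborate subgraphs; in each case the edge whose weight violates the stated inequality is the source of the discrepancy.

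The main obstacle will be the bookkeeping in the Pan and Adjacent Cycles cases: these conditions have several sub-clauses (the dichotomy (a)/(b), the location of maximum-weight chords, the constraints depending on whether $|E(C)| = 3$ or $\ge 4$), and for each way the condition can fail one must produce a tailored triple $(i, A, B)$ and verify the block-equality fails. I expect the cleanest route is to reduce as much as possible to the already-established simpler conditions — for instance, a pan or adjacent-cycles violation often forces a star or path violation on an induced subgraph, so that only the genuinely new failure modes need a fresh construction. Since this proposition only asserts necessity and the full necessary-and-sufficient characterization of inheritance of $\mathcal{F}$-convexity is due to~\citet{Skoda2017}, I would in fact cite that work: the five conditions are precisely (part of) Skoda's characterization, and their necessity is established there. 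Thus the proof reduces to observing that each condition appears among the necessary conditions proved by~\citet{Skoda2017}, together with the reduction that $\mathcal{F}$-convexity of $(N,\overline{u_S})$ for all $S$ is equivalent, via Theorem~\ref{theoremEquivalenceInheritanceofConvexityForUnanimityGamesP_MAndP_min} specialized to $\mathcal{F}$-convexity arguments, to inheritance of $\mathcal{F}$-convexity.
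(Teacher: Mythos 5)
Your proposal lands in the right place: the paper gives no proof of Proposition~\ref{corPathCond} at all, but simply recalls it as a result established by \citet{Skoda2017} (the section is explicitly introduced as recalling necessary conditions from that work), and your final paragraph correctly identifies that citing \citet{Skoda2017} is the intended justification. Your preliminary sketch of a from-scratch contrapositive argument via Theorem~\ref{thLGNuPNFNFSNuSSNNuSFABFABF} is consistent with how the paper proves its own analogous necessity lemmas (e.g.\ Lemma~\ref{LemP_MinMax(w1,w2)<w(e)e'InELinkingeTo2}), though your passing appeal to Theorem~\ref{theoremEquivalenceInheritanceofConvexityForUnanimityGamesP_MAndP_min} is not needed here and that theorem concerns classical convexity rather than $\mathcal{F}$-convexity.
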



Let us note that Proposition~\ref{corPathCond} only requires inheritance of $\mathcal{F}$-convexity
for the unanimity games
to obtain the necessity of the five previous conditions.
\citet{Skoda2017} also proved that these necessary conditions are sufficient
for inheritance  of $\mathcal{F}$-convexity
if we consider superadditive games.

\begin{theorem}
\label{thPathBranchCyclePanAdjacentCond}
Let $\mathcal{F}$ be the family of connected subsets of $N$.
The $\mathcal{P}_{\min}$-restricted game $(N, \overline{v})$ is $\mathcal{F}$-convex
for every superadditive and $\mathcal{F}$-convex game $(N,v)$
if and only if
the Star, Path, Cycle, Pan, and Adjacent cycles conditions are satisfied.
\end{theorem}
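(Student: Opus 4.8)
The plan is to prove the two implications separately. The necessity direction is already essentially available: if the $\mathcal{P}_{\min}$-restricted game $(N,\overline v)$ is $\mathcal{F}$-convex for every superadditive and $\mathcal{F}$-convex game, then in particular it is $\mathcal{F}$-convex for every unanimity game $u_S$ (since each $u_S$ is both superadditive and supermodular, hence $\mathcal{F}$-convex), so Proposition~\ref{corPathCond} immediately yields the Star, Path, Cycle, Pan and Adjacent cycles conditions. So the whole content lies in the sufficiency direction.

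For sufficiency, I would work through the abstract criterion of Theorem~\ref{thLGNuPNFNFSNuSSNNuSFABFABF} combined with Lemma~\ref{lemPNABNPAPBAvNvB-vB>=vA-vA}. Since $\mathcal{P}_{\min}$ always gives inheritance of superadditivity (Corollary~\ref{corIGNEiafttsihfv}), and since by Theorem~\ref{theoremEquivalenceInheritanceofConvexityForUnanimityGamesP_MAndP_min}-type reasoning (here one wants the $\mathcal{F}$-convex analogue: $\mathcal{F}$-convexity needs only to be checked in the "one added vertex" form), it suffices to verify condition~2) of Theorem~\ref{thLGNuPNFNFSNuSSNNuSFABFABF}: for every $i\in N$ and all connected $A\subseteq B\subseteq N\setminus\{i\}$ with $A\cup\{i\}$ connected, and for every block $A'\in\mathcal{P}_{\min}(A\cup\{i\})_{|A}$, one has $\mathcal{P}_{\min}(A)_{|A'}=\mathcal{P}_{\min}(B)_{|A'}$. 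The strategy is: fix such $i,A,B,A'$; let $\sigma=\min_{e\in E(A)}w(e)$, $\sigma''=\min_{e\in E(B)}w(e)$, and analyze how the minimum-weight edges of $G_A$, $G_{A\cup\{i\}}$ and $G_B$ interact. The block $A'$ is a connected component of $A$ after deleting $\Sigma(A\cup\{i\})\cap E(A)$; one must show that deleting the minimum edges of $G_A$ inside $A'$ produces the same partition of $A'$ as deleting the minimum edges of $G_B$ inside $A'$. This is where the five structural conditions get used: the Path and Cycle conditions control how edge-weights along paths and cycles in $B$ can "drop" when we pass from $A$ to $B$, the Star condition handles local branching at a vertex, and the Pan and Adjacent cycles conditions handle the configurations (a cycle plus a pendant low-weight edge, and two glued cycles) that are exactly the obstructions to the partition of $A'$ being stable. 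Concretely, a discrepancy $\mathcal{P}_{\min}(A)_{|A'}\neq\mathcal{P}_{\min}(B)_{|A'}$ forces the existence of a path or cycle in $B$ whose edge-weights violate one of the conditions; I would enumerate the ways such a discrepancy can arise (new minimum edge of $G_B$ lying in $A'$; minimum of $G_B$ strictly below $\sigma$ so that $A'$ fails to split in $G_B$; etc.) and in each case exhibit the forbidden subgraph.

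The main obstacle, as usual for these "$\mathcal{P}_{\min}$" arguments, will be the bookkeeping of cases in the sufficiency proof: one has three nested coalitions $A'\subseteq A\subseteq B$ (plus the auxiliary $A\cup\{i\}$) and for each the relevant minimum-weight threshold can be equal, or strictly smaller, giving a branching case analysis, and within each branch one must locate a path/cycle/pan/adjacent-cycle witness and check it against the precise inequalities in the five conditions (including the subtle clauses about chords and about which end-vertex an edge is incident to). I expect the cleanest route is to first reduce to the case where $B=A\cup\{j\}$ for a single extra vertex $j$ — arguing by induction on $|B\setminus A|$ using Lemma~\ref{lemPNABNPAPBAvNvB-vB>=vA-vA} to telescope — and then the one-vertex-at-a-time analysis becomes a finite check: adding $j$ to $A$ either leaves $\Sigma$ alone, or lowers the minimum (new pendant or chord edge of small weight), or creates a new cycle, and in each subcase the Cycle/Pan/Star/Adjacent conditions are exactly tailored to guarantee that no block of $A'$ is split or merged differently. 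Assembling these local verifications into the global statement via Theorem~\ref{thLGNuPNFNFSNuSSNNuSFABFABF} and Lemma~\ref{lemmavB+i=1pvBiA} then completes the proof.
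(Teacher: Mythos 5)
First, a point of context: the paper does not prove Theorem~\ref{thPathBranchCyclePanAdjacentCond} at all. It is recalled from \citet{Skoda2017}, and the text explicitly says that only the necessity of the five conditions is used in the sequel. So there is no in-paper proof to compare against, and your proposal has to stand on its own. Your necessity argument is fine and matches the intended route: $u_S$ is superadditive and supermodular, hence $\mathcal{F}$-convex, so the hypothesis applies to $u_S$ and Proposition~\ref{corPathCond} yields the five conditions.

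The sufficiency direction, however, has a genuine gap at its very first step. You claim it ``suffices to verify condition~2) of Theorem~\ref{thLGNuPNFNFSNuSSNNuSFABFABF}''. By that theorem, condition~2) is equivalent only to the $\mathcal{F}$-convexity of the restricted \emph{unanimity} games $(N,\overline{u_S})$; it says nothing on its own about $\overline{v}$ for an arbitrary superadditive $\mathcal{F}$-convex $v$. The passage from unanimity games to general games is precisely the nontrivial content: the map $v\mapsto\overline{v}$ is linear, so $\overline{v}=\sum_S\lambda_S\overline{u_S}$ with $\lambda_S$ the Harsanyi dividends, but these dividends can be negative for a superadditive $\mathcal{F}$-convex game, so $\mathcal{F}$-convexity of each $\overline{u_S}$ does not transfer to $\overline{v}$ by linearity. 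The only unanimity-to-general equivalence available in the paper is Theorem~\ref{theoremEquivalenceInheritanceofConvexityForUnanimityGamesP_MAndP_min}, which concerns classical convexity, is specific to $\mathcal{P}_M$ and $\mathcal{P}_{\min}$, and is itself a recalled result; its ``$\mathcal{F}$-convex analogue'' is essentially the theorem you are trying to prove (necessity for unanimity games being already known), so invoking it here is circular. A correct sufficiency proof must establish $\overline{v}(B\cup\lbrace i\rbrace)-\overline{v}(B)\geq\overline{v}(A\cup\lbrace i\rbrace)-\overline{v}(A)$ directly for connected $A\subseteq B\subseteq N\setminus\lbrace i\rbrace$ with $A\cup\lbrace i\rbrace$ connected, via a case analysis on how the minimum-weight thresholds of $G_A$, $G_B$, $G_{A\cup\lbrace i\rbrace}$, $G_{B\cup\lbrace i\rbrace}$ compare, using Lemmas~\ref{lemmavB+i=1pvBiA} and~\ref{lemPNABNPAPBAvNvB-vB>=vA-vA} together with the superadditivity of $v$ --- the fact that the statement requires superadditivity and not merely $\mathcal{F}$-convexity of $v$ is itself a signal that the general case does not reduce formally to the unanimity case. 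Your description of how the five conditions exclude the bad configurations is the right intuition for that case analysis, but as written the argument never gets past the unanimity games.
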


\section{Inheritance of convexity}
\label{SectionInheritanceOfConvexityWithPmin}
We consider in this section inheritance of convexity.
As convexity implies
superadditivity and
$\mathcal{F}$-convexity
(where $\mathcal{F}$ is the family of connected subsets of $N$),
the conditions stated in Section~\ref{SectionNecessaryCondF-ConvPmin} are necessary.
We now have to deal with disconnected subsets of $N$.
We establish supplementary necessary conditions
implying strong restrictions on edge-weights.
In particular,
we obtain that edge-weights can have at most three different values.
We first need the following lemmas.

\begin{lemma}
\label{LemP_MinMax(w1,w2)<w(e)e'InELinkingeTo2}
Let us assume that for all $\emptyset \not= S \subseteq N$
the $\mathcal{P}_{\min}$-restricted game $(N,\overline{u_{S}})$ is convex.
Let $e_{1} = \lbrace 1,2 \rbrace$ and $e_{2} = \lbrace 2,3 \rbrace$
be two adjacent edges, and $e$ be an edge such that
\begin{equation}\label{eqLemMax(w1,w2)<w(e)}
\max (w_{1}, w_{2}) < w(e).
\end{equation}
Then,
there exists an edge $e' \in E$ linking $e$ to vertex $2$.
Moreover,
if $e' \not= e_{1}$
and $e_{2}$,
then
$w(e') = \max(w_{1}, w_{2})$ if $w_{1} \not= w_{2}$
and $w(e') \leq w_{1} = w_{2}$ otherwise.
\end{lemma}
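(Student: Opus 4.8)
The plan is to obtain the existence statement from the partition-theoretic characterization of inheritance of convexity for unanimity games, and the weight constraint from the Star condition. Recall that $2^{N}\setminus\lbrace\emptyset\rbrace$ is weakly union-closed, that $\overline{u_{S}}$ is automatically superadditive for all $S$ by Corollary~\ref{corIGNEiafttsihfv}, and that a game is convex iff it is superadditive and $\mathcal{F}$-convex for $\mathcal{F}=2^{N}\setminus\lbrace\emptyset\rbrace$. Hence the hypothesis of the lemma is equivalent, via Theorem~\ref{thLGNuPNFNFSNuSSNNuSFABFABF} applied with this $\mathcal{F}$, to the following property, which I will call $(\ast)$: for every $i\in N$, every $A\subseteq B\subseteq N\setminus\lbrace i\rbrace$, and every $A'\in\mathcal{P}_{\min}(A\cup\lbrace i\rbrace)_{|A}$, one has $\mathcal{P}_{\min}(A)_{|A'}=\mathcal{P}_{\min}(B)_{|A'}$.

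First I would clear away the degenerate positions of $e$. Since $w(e)>\max(w_{1},w_{2})$, the edge $e$ differs from $e_{1}$ and $e_{2}$. If $e$ were incident to $2$, then $e_{1},e_{2},e$ would form a star of type $S_{3}$ centered at $2$ with weights $w_{1},w_{2},w(e)$, and $w(e)$ being strictly the largest contradicts the Star condition (valid by Proposition~\ref{corPathCond}, since convexity implies $\mathcal{F}$-convexity); so $e$ is not incident to $2$. If $e$ is incident to $1$ (resp. to $3$), then $e_{1}$ (resp. $e_{2}$) already links $e$ to $2$ and the ``moreover'' clause is void. So I may assume $e=\lbrace a,b\rbrace$ with $1,2,3,a,b$ pairwise distinct.

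For existence, I argue by contradiction: suppose no edge of $E$ joins $2$ to $a$ or to $b$, and apply $(\ast)$ with $i=1$, $A=\lbrace 2,a,b\rbrace$ and $B=\lbrace 2,3,a,b\rbrace$. The single fact doing all the work is: for any $X\supseteq\lbrace a,b\rbrace$ whose induced subgraph $G_{X}$ contains an edge of weight $<w(e)$, the edge $e$ is not of minimum weight in $G_{X}$, hence survives in $(X,E(X)\setminus\Sigma(X))$, so $a$ and $b$ lie in one block of $\mathcal{P}_{\min}(X)$. This applies to $X=A\cup\lbrace 1\rbrace$ (which contains $e_{1}$) and to $X=B$ (which contains $e_{2}$). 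On the other hand, under the contradiction hypothesis $e$ is the only edge of $G_{A}$, so $\mathcal{P}_{\min}(A)=\lbrace\lbrace 2\rbrace,\lbrace a\rbrace,\lbrace b\rbrace\rbrace$ and $a,b$ are separated. Let $A'$ be the block of $\mathcal{P}_{\min}(A\cup\lbrace 1\rbrace)_{|A}$ containing $a$; then $\lbrace a,b\rbrace\subseteq A'$, the restriction $\mathcal{P}_{\min}(A)_{|A'}$ separates $a$ from $b$, while $\mathcal{P}_{\min}(B)_{|A'}$ keeps them together, contradicting $(\ast)$. Hence some edge $e'\in E$ joins $2$ to $a$ or to $b$.

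For the ``moreover'' part, write $e'=\lbrace 2,c\rbrace$ with $c\in\lbrace a,b\rbrace$ and assume $e'\notin\lbrace e_{1},e_{2}\rbrace$, i.e. $c\notin\lbrace 1,3\rbrace$. Then $1,3,c$ are three distinct neighbours of $2$, so $e_{1},e_{2},e'$ form a star of type $S_{3}$ centered at $2$, and the Star condition forces, after renumbering, the smallest of $w_{1},w_{2},w(e')$ to be at most the other two, which are equal. If $w_{1}\neq w_{2}$, the equal pair must contain $w(e')$ and be the maximal one, giving $w(e')=\max(w_{1},w_{2})$; if $w_{1}=w_{2}$, then $w(e')$ cannot exceed this common value, giving $w(e')\leq w_{1}=w_{2}$. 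The main obstacle is the existence part: one must make sure that the block $A'$ is well defined and that the two restricted partitions really differ on $\lbrace a,b\rbrace$ whatever the other edges among $\lbrace 1,3,a,b\rbrace$ may be. This is precisely why the argument is designed to hinge only on whether the single edge $e$ is deleted, which is insensitive to those extra edges.
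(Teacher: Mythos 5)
Your proof is correct and follows essentially the same route as the paper's: the same reduction via Theorem~\ref{thLGNuPNFNFSNuSSNNuSFABFABF} with $\mathcal{F}=2^{N}\setminus\lbrace\emptyset\rbrace$, the same three-vertex set $A=\lbrace 2,a,b\rbrace$ with the singleton partition contrasted against the surviving edge $e$ in $A\cup\lbrace i\rbrace$ and $B$ (you take $i=1$, $B=A\cup\lbrace 3\rbrace$ where the paper takes $i=3$, $B=A\cup\lbrace 1\rbrace$, an immaterial symmetry), and the same application of the Star condition to $\lbrace e_{1},e_{2},e'\rbrace$ for the weight claim. Your explicit handling of the degenerate positions of $e$ is a minor tidiness improvement but not a different argument.
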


We represent in Figure~\ref{figLemma,wMax(w1,w2)<w(e),wMax(w1,w2)<w(e),wMax(w1,w2)<w(e)3}
the possible situations corresponding 
to Lemma~\ref{LemP_MinMax(w1,w2)<w(e)e'InELinkingeTo2}.
\begin{figure}[!h]
\centering
\subfloat{
\begin{pspicture}(0,-.3)(0,.6)
\tiny
\begin{psmatrix}[mnode=circle,colsep=.7,rowsep=.4]
{}	& {}\\
{$1$} 	& {$2$}	& {$3$}
\psset{arrows=-, shortput=nab,labelsep={.15}}
\tiny
\ncline{1,1}{1,2}^{$e$}
\ncline{1,2}{2,2}^{$e'$}
\ncline{2,1}{2,2}_{$e_{1}$}
\ncline{2,2}{2,3}_{$e_{2}$}
\end{psmatrix}
\end{pspicture}
}
\subfloat{
\begin{pspicture}(-1,-.3)(1,.6)
\tiny
\begin{psmatrix}[mnode=circle,colsep=.7,rowsep=.4]
{}\\
{$1$} 	& {$2$}	& {$3$}
\psset{arrows=-, shortput=nab,labelsep={.15}}
\tiny
\ncline{1,1}{2,1}^{$e$}
\ncline{2,1}{2,2}_{$e_{1}=e'$}
\ncline{2,2}{2,3}_{$e_{2}$}
\end{psmatrix}
\end{pspicture}
}
\subfloat{
\begin{pspicture}(0,-.3)(0,.6)
\tiny
\begin{psmatrix}[mnode=circle,colsep=.7,rowsep=.4]
{$1$} 	& {$2$}	& {$3$}
\psset{arrows=-, shortput=nab,labelsep={.15}}
\tiny
\ncarc[arcangle=60]{1,1}{1,3}^{$e$}
\ncline{1,1}{1,2}_{$e_{1}=e'$}
\ncline{1,2}{1,3}_{$e_{2}$}
\end{psmatrix}
\end{pspicture}
}
\caption{$e'$ linking $e$ to vertex $2$.}
\label{figLemma,wMax(w1,w2)<w(e),wMax(w1,w2)<w(e),wMax(w1,w2)<w(e)3}
\end{figure}
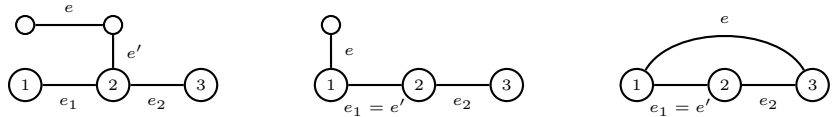

\begin{proof}
We set $e = \lbrace j, k \rbrace$.
Star condition implies $j\not= 2$ and $k\not=2$
(otherwise it contradicts (\ref{eqLemMax(w1,w2)<w(e)})).
By contradiction,
let us assume that there is no edge linking $e$ to $2$.
We can assume $w_{1} \leq w_{2} < w(e)$.
Let us consider $i=3$,
$A_{1}= \lbrace 2 \rbrace$, $A_{2} = \lbrace j,k \rbrace$,
$A = A_{1} \cup A_{2}$, and $B = A \cup \lbrace 1 \rbrace$
as represented in Figure~\ref{figLemmaProofwMax(w1,w2)<w(e)}.
\begin{figure}[!h]
\centering
\begin{pspicture}(0,-.4)(0,1.6)
\tiny
\begin{psmatrix}[mnode=circle,colsep=.7,rowsep=.5]
{$j$}	& {$k$}\\
{$1$} 	& {$2$}	& {$3$}
\psset{arrows=-, shortput=nab,labelsep={.1}}
\tiny
\ncline{1,1}{1,2}^{$e$}
\ncline[linestyle=dashed]{1,1}{2,1}
\ncline[linestyle=dashed]{1,2}{2,1}
\ncline[linestyle=dashed]{1,1}{2,3}
\ncline[linestyle=dashed]{1,2}{2,3}
\ncline{2,1}{2,2}_{$e_{1}$}
\ncline{2,2}{2,3}_{$e_{2}$}
\end{psmatrix}
\normalsize
\uput[0](-1.2,1){\textcolor{cyan}{$A_{2}$}}
\pspolygon[linecolor=cyan,linearc=.4,linewidth=.02](-2.9,.6)(-2.9,1.4)(-1.05,1.4)(-1.05,.6)
\uput[0](-1.5,-.45){\textcolor{cyan}{$A_{1}$}}
\pspolygon[linecolor=cyan,linearc=.3,linewidth=.02](-1.7,-.3)(-1.7,.3)(-1.1,.3)(-1.1,-.3)
\uput[0](-.1,0){\textcolor{blue}{$\scriptstyle{=i}$}}
\uput[0](-3.6,.55){\textcolor{cyan}{$B$}}
\pspolygon[linecolor=cyan,linearc=.4,linewidth=.02](-3,-.7)(-3,1.5)(-.5,1.5)(-.5,-.7)
\end{pspicture}
\caption{$w_{1} \leq w_{2} < w(e)$.}
\label{figLemmaProofwMax(w1,w2)<w(e)}
\end{figure}
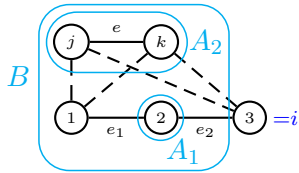
As no edge links $e$ to $2$,
we have $\mathcal{P}_{\min}(A) = \lbrace \lbrace 2 \rbrace, \lbrace j \rbrace, \lbrace k \rbrace \rbrace$.
Let us note that, as $w_{2} < w(e)$
(resp. $w_{1}< w(e)$),
there is a component $A'$
(resp. $B'$)
of $\mathcal{P}_{\min}(A \cup \lbrace i \rbrace)$
(resp. $\mathcal{P}_{\min}(B)$)
containing $A_{2}$.
Then,
$\mathcal{P}_{\min}(B)_{|A' \cap A} \not= \mathcal{P}_{\min}(A)_{|A' \cap A}$
as $\mathcal{P}_{\min}(A)$ corresponds to a singleton partition
but $B{'} \cap A'$ contains $A_{2}$.
It contradicts Theorem~\ref{thLGNuPNFNFSNuSSNNuSFABFABF}
applied with $\mathcal{F} = 2^N \setminus \lbrace \emptyset \rbrace$.
Therefore,
there exists an edge $e'$ linking $e$ to $2$.
Finally,
if $e' \not= e_{1}$ and $e_{2}$,
then Star condition applied to $\lbrace e_{1}, e_{2}, e' \rbrace$ implies
the result.
\end{proof}


\begin{lemma}
\label{lemMax(w1,w2)<Min(w(e),w(e')}
Let us assume that
for all $\emptyset \not= S \subseteq N$
the $\mathcal{P}_{\min}$-restricted game $(N,\overline{u_{S}})$ is convex.
Let $e_{1} = \lbrace 1,2 \rbrace$ and $e_{2} = \lbrace 2,3 \rbrace$
be two adjacent edges and let $e$ and $e'$ be two edges in $E$ such that
\begin{equation}
\max (w_{1}, w_{2}) < \min (w(e),w(e')).
\end{equation}
Then,
$w(e) = w(e')$.
\end{lemma}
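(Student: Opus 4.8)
The plan is to apply Lemma~\ref{LemP_MinMax(w1,w2)<w(e)e'InELinkingeTo2} twice, once to $e$ and once to $e'$, and then push a little further. First I would note that, possibly after renumbering so that $w_1 \le w_2$, the hypothesis gives $\max(w_1,w_2) = w_2 < \min(w(e),w(e'))$, so Lemma~\ref{LemP_MinMax(w1,w2)<w(e)e'InELinkingeTo2} applies to both $e$ and $e'$: there is an edge $f$ linking $e$ to vertex $2$ and an edge $f'$ linking $e'$ to vertex $2$, and unless $f \in \{e_1,e_2\}$ (resp.\ $f' \in \{e_1,e_2\}$), one has $w(f) = w_2$ if $w_1 \ne w_2$ and $w(f)\le w_1=w_2$ otherwise (same for $f'$). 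In every case, $w(f) \le w_2 < \min(w(e),w(e'))$ and likewise for $f'$: indeed if $f = e_1$ then $w(f) = w_1 \le w_2$, if $f = e_2$ then $w(f)=w_2$, and otherwise the bound above gives $w(f)\le w_2$.

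The idea is now to build a configuration where $e$ and $e'$ play symmetric roles as ``high-weight'' edges both connected (through low-weight edges) to a common vertex, and then use the convexity obstruction (Theorem~\ref{thLGNuPNFNFSNuSSNNuSFABFABF} with $\mathcal{F}=2^N\setminus\{\emptyset\}$) exactly as in the proof of Lemma~\ref{LemP_MinMax(w1,w2)<w(e)e'InELinkingeTo2}. Concretely, suppose for contradiction $w(e) \ne w(e')$, say $w(e) < w(e')$. Writing $e = \{j,k\}$ and $e' = \{p,q\}$, I would take a small set $A$ built from the endpoints of $e$ and $e'$ together with vertex $2$ (and whatever intermediate vertices the linking edges $f,f'$ require), set $B = A \cup \{1\}$, and choose the extra player $i$ outside so that adding $i$ to $A$ merges exactly the part of $A$ around $e$ but not around $e'$ (this is possible because $w(e) < w(e')$: one can arrange the minimum weight in $G_A$ to sit between $w(e)$ and $w(e')$, so that $e$ survives but not $e'$, while in $G_{A\cup\{i\}}$ or $G_B$ the minimum shifts). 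Then $\mathcal{P}_{\min}(A)$ and $\mathcal{P}_{\min}(B)$ restricted to the relevant block differ, contradicting condition~2) of Theorem~\ref{thLGNuPNFNFSNuSSNNuSFABFABF}, hence $w(e) = w(e')$.

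The main obstacle I expect is the bookkeeping of the minimum-weight edges in the various induced subgraphs: one must ensure that the low-weight linking edges $f, f'$ (and the edges $e_1, e_2$) do not accidentally change which edges are minimal in $G_A$, $G_{A\cup\{i\}}$, and $G_B$, so that $\mathcal{P}_{\min}$ behaves as intended on those sets. This is delicate because $f$ or $f'$ could coincide with $e_1$ or $e_2$, or there could be additional chords among the chosen vertices; handling these cases may require splitting into subcases according to whether $w_1 = w_2$ and according to which of $\{e_1,e_2\}$ the linking edges equal. Once the right small configuration is pinned down, the contradiction with Theorem~\ref{thLGNuPNFNFSNuSSNNuSFABFABF} is immediate, just as in the preceding lemma. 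A cleaner alternative, which I would try first, is to avoid reconstructing the configuration from scratch and instead apply Lemma~\ref{LemP_MinMax(w1,w2)<w(e)e'InELinkingeTo2} to the pair $(e, f')$ or to the two adjacent edges obtained from the linking edges at vertex $2$, using that $\max$ of their weights is still $< \min(w(e),w(e'))$, and iterate until $e$ and $e'$ are forced to be incident to a common vertex with a controllable weight comparison; then the Star or Path condition finishes the argument.
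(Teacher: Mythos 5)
Your opening step is exactly right and matches the paper: apply Lemma~\ref{LemP_MinMax(w1,w2)<w(e)e'InELinkingeTo2} to $e$ and to $e'$ to get linking edges $f=e_2'$ and $f'=e_2''$ at vertex $2$ with $w(f),w(f')\le\max(w_1,w_2)<\min(w(e),w(e'))$. But from there the proposal is a plan with a genuine gap, not a proof. Your first route (choosing $A$, $B=A\cup\{1\}$ and $i$ so that the $\mathcal{P}_{\min}$ partitions clash) is never carried out; you yourself flag that the ``bookkeeping of the minimum-weight edges'' is the obstacle, and that bookkeeping is precisely where the whole argument lives. As stated, it is not clear such a configuration exists: in any induced subgraph containing both $e$ and $e'$ together with the low-weight linking edges, the minimum weight sits on the linking edges, so both $e$ and $e'$ survive deletion and no partition discrepancy is produced; you would have to isolate $e'$ from all low-weight edges, which is exactly what the presence of $f'$ prevents.

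Your ``cleaner alternative'' is essentially the paper's route, but the claim that ``the Star or Path condition finishes the argument'' fails in the critical case. After a second application of Lemma~\ref{LemP_MinMax(w1,w2)<w(e)e'InELinkingeTo2} (to $e'$ and the adjacent pair $\{e_2',e\}$, using $\max(w(e_2'),w(e))=w(e)<w(e')$) one obtains an edge $e''$ linking $e'$ to an end-vertex $2'$ of $e$. If $e''$ joins end-vertices of $e$ and $e'$, the Star condition at each end does give $w(e'')=w(e)$ and $w(e'')=w(e')$, hence the contradiction. But if $e''$ shares with $e'$ its other end-vertex $1'$, the Star condition at $2'$ only yields $w(e'')=w(e)<w(e')$, and at $1'$ and $2''$ there are just two adjacent edges, so neither Star nor Path gives anything further (the Path inequality $w(e'')\le\max(w(e_2'),w(e'))$ is satisfied). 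The paper closes this case by observing that $\{2,e_2',2',e'',1',e',2'',e_2'',2\}$ is a $4$-cycle with three non-maximum weight edges $e_2'$, $e_2''$, $e''$, contradicting the Cycle condition. Without invoking the Cycle condition on this cycle (and without first disposing of the degenerate case $2'=2''$, where $e_2'=e_2''$ and the Star condition applied to $\{e_2',e,e'\}$ already gives the contradiction), your iteration does not terminate in a configuration the Star or Path conditions can handle.
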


\begin{proof}
We can assume $w_{1} \leq w_{2}$.
By contradiction,
let us assume $w(e) < w(e')$.
Applying Lemma~\ref{LemP_MinMax(w1,w2)<w(e)e'InELinkingeTo2} to $e$ (resp. $e'$),
there exists an edge $e_{2}'$ (resp. $e_{2}''$) linking $e$ (resp. $e'$) to $2$
such that $w_{2}' \leq \max (w_{1}, w_{2}) < w(e)$
(resp. $w_{2}'' \leq \max (w_{1}, w_{2}) < w(e')$)
($e_{2}'$ (resp. $e_{2}''$) may coincide with $e_{1}$ or $e_{2}$).
We set $e_{2}' = \lbrace 2, 2' \rbrace$ (resp. $e_{2}'' = \lbrace 2, 2'' \rbrace$)
where $2'$ (resp. $2''$) is an end-vertex of $e$ (resp. $e'$)
as represented in Figure~\ref{figLemmaMax(w1,w2)<Min(w(e),w(e'))}.
\begin{figure}[!h]
\centering
\begin{pspicture}(0,0)(0,1.4)
\tiny
\begin{psmatrix}[mnode=circle,colsep=.2,rowsep=.3]
{}	& & {$2'$} & & {$2''$} & &  {}\\
& {$1$}  & 	& {$2$}	& & {$3$}
\psset{arrows=-, shortput=nab,labelsep={.05}}
\tiny
\ncline{1,1}{1,3}^{$e$}
\ncline{1,5}{1,7}^{$e'$}
\ncline{1,3}{2,4}_{$e_{2}'$}
\ncline{2,4}{1,5}_{$e_{2}''$}
\ncline{2,2}{2,4}_{$e_{1}$}
\ncline{2,4}{2,6}_{$e_{2}$}
\end{psmatrix}
\end{pspicture}
\caption{$w_{2}' \leq \max (w_{1}, w_{2}) < w(e)$ and
$w_{2}'' \leq \max (w_{1}, w_{2}) < w(e')$.}
\label{figLemmaMax(w1,w2)<Min(w(e),w(e'))}
\end{figure}
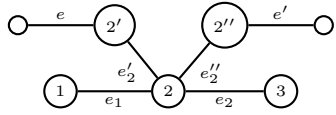
If $2' = 2''$,
then $e_{2}' = e_{2}''$
and as $w_{2}' < w(e) < w(e')$ it contradicts
the Star condition applied to $\lbrace e_{2}', e, e' \rbrace$.
Otherwise,
as $w_{2}' < w(e) < w(e')$,
Lemma~\ref{LemP_MinMax(w1,w2)<w(e)e'InELinkingeTo2}
applied to $e'$ and the pair of adjacent edges $\lbrace e_{2}', e \rbrace$
implies the existence of an edge $e'' \in E$ linking $e'$ to $2'$
($e''$ can coincide with $e$).
Let us first assume $e'' = \lbrace 2', 2'' \rbrace$
as represented in Figure~\ref{figLemmaMax(w1,w2)<Min(w(e),w(e'))2}.
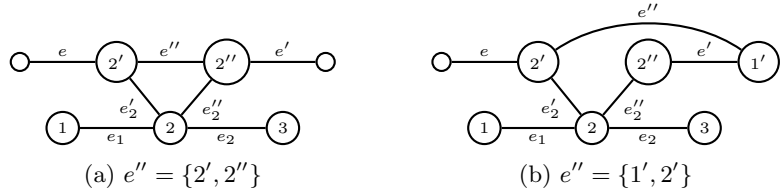
\begin{figure}[!h]
\centering
\subfloat[$e'' = \lbrace 2', 2'' \rbrace$]{
\begin{pspicture}(-.5,-.3)(.5,1.4)
\tiny
\begin{psmatrix}[mnode=circle,colsep=.2,rowsep=.3]
{}	& & {$2'$} & & {$2''$} & &  {}\\
& {$1$}  & 	& {$2$}	& & {$3$}
\psset{arrows=-, shortput=nab,labelsep={.05}}
\tiny
\ncline{1,1}{1,3}^{$e$}
\ncline{1,5}{1,7}^{$e'$}
\ncline{1,3}{1,5}^{$e''$}
\ncline{1,3}{2,4}_{$e_{2}'$}
\ncline{2,4}{1,5}_{$e_{2}''$}
\ncline{2,2}{2,4}_{$e_{1}$}
\ncline{2,4}{2,6}_{$e_{2}$}
\end{psmatrix}
\end{pspicture}
\label{figLemmaMax(w1,w2)<Min(w(e),w(e'))2}
}
\subfloat[$e'' = \lbrace 1', 2' \rbrace$]{
\begin{pspicture}(-.5,-.3)(.5,1.4)
\tiny
\begin{psmatrix}[mnode=circle,colsep=.2,rowsep=.3]
{}	& & {$2'$} & & {$2''$} & &  {$1'$}\\
& {$1$}  & 	& {$2$}	& & {$3$}
\psset{arrows=-, shortput=nab,labelsep={.05}}
\tiny
\ncline{1,1}{1,3}^{$e$}
\ncline{1,5}{1,7}^{$e'$}
\ncarc[arcangle=35]{1,3}{1,7}^{$e''$}
\ncline{1,3}{2,4}_{$e_{2}'$}
\ncline{2,4}{1,5}_{$e_{2}''$}
\ncline{2,2}{2,4}_{$e_{1}$}
\ncline{2,4}{2,6}_{$e_{2}$}
\end{psmatrix}
\end{pspicture}
\label{figLemmaMax(w1,w2)<Min(w(e),w(e'))3}
}
\caption{$e''$ linking $e'$ to $2'$.}
\end{figure}
As $w_{2}' < w(e)$ (resp. $w_{2}'' < w(e')$),
Star condition applied to $\lbrace e_{2}', e, e'' \rbrace$
(resp. $\lbrace e_{2}'', e', e'' \rbrace$)
implies $w(e'') = w(e)$ (resp. $w(e'') = w(e')$)
and then $w(e) = w(e')$,
a contradiction.
Let us now assume $e' = \lbrace 1', 2'' \rbrace$ and $e'' = \lbrace 1', 2' \rbrace$
as represented in Figure~\ref{figLemmaMax(w1,w2)<Min(w(e),w(e'))3}.
Then,
there is a cycle $C = \lbrace 2, e_{2}', 2', e'', 1', e', 2'', e_{2}'', 2 \rbrace$.
As $w_{2}' < w(e)$, Star condition applied to $\lbrace e_{2}', e, e'' \rbrace$
implies $w(e'') = w(e)$.
As $w(e) < w(e')$, we get $w(e'') < w(e')$.
Therefore,
$C$ has three non-maximum weight edges $e_{2}'$, $e_{2}''$, $e''$,
contradicting the Cycle condition.
\end{proof}

For a given weighted graph $G=(N,E,w)$,
let $\lbrace \sigma_{1}, \sigma_{2}, \ldots, \sigma_{k} \rbrace$
be the set of its edge-weights
such that $\sigma_{1} < \sigma_{2}< \ldots< \sigma_{k}$ with $1 \leq k \leq |E|$.
We denote by $E_{i}$ the set of edges in $E$ with weight $\sigma_{i}$,
and by $N_{i}$ the set of end-vertices of edges in $E_{i}$.

\begin{lemma}
\label{lem1)w1=sigma1(G)w2=Sigma2(G)2)GN1Connected}
Let $G= (N,E,w)$ be a connected weighted graph
satisfying the Path condition
and with at least two different edge-weights
$\sigma_{1}$, $\sigma_{2}$.
Then
\begin{enumerate}
\item
\label{item1lem1)w1=sigma1(G)w2=Sigma2(G)2)GN1Connected}
There exists a pair of adjacent edges $\lbrace e_{1}, e_{2} \rbrace$
with $w_{1} = \sigma_{1}$ and $w_{2} = \sigma_{2}$.
\item
$G_{N_{1}}$ is connected.
\end{enumerate}
\end{lemma}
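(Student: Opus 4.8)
The plan is to handle both parts with a single construction: pad a shortest connecting path by one extremal edge at each end and read the Path condition between the two extreme indices. Throughout, write $N_1$ (resp. $N_2$) for the set of end-vertices of the $\sigma_1$-edges (resp. $\sigma_2$-edges), and for an edge $e$ write $V(e)$ for its pair of end-vertices.

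\textbf{Part 1.} Since $G$ is connected and both $E_1$ and $E_2$ are nonempty, I would pick a path $P$ in $G$ with the fewest edges joining $N_1$ to $N_2$; say $P$ runs from $u\in N_1$ to $v\in N_2$. Minimality forces $P$ to meet $N_1$ only at $u$ and $N_2$ only at $v$. If $P$ has no edge, then $u=v$ is an end-vertex of both an $E_1$-edge $e_1$ and an $E_2$-edge $e_2$, and $\{e_1,e_2\}$ (with $w_1=\sigma_1$, $w_2=\sigma_2$) is the pair we want. Suppose instead $P$ has $\ell\ge 1$ edges; pick $e_1=\{u,u'\}\in E_1$ and $e_2=\{v,v'\}\in E_2$ and let $\gamma$ be the walk $u'-e_1-u-P-v-e_2-v'$. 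I would first check that $\gamma$ is an elementary path: $\ell\ge 1$ together with minimality gives $V(e_1)\cap V(e_2)=\emptyset$, while $u'\in N_1$ and $v'\in N_2$ lie outside $V(P)$ because $P$ meets $N_1$ (resp. $N_2$) only at $u$ (resp. $v$). Next, no edge of $P$ can lie in $E_1\cup E_2$: an $E_1$-edge of $P$ would have both end-vertices in $N_1$, yet $P$ meets $N_1$ only at $u$, and likewise for $E_2$ at $v$. Hence, if $P$ is nonempty, every edge of $P$ has weight $\notin\{\sigma_1,\sigma_2\}$ and therefore weight $>\sigma_2$. Now $\gamma$ has $m:=\ell+2\ge 3$ edges, its first edge has weight $\sigma_1$ and its last weight $\sigma_2$, so the Path condition applied with $(i,j,k)=(1,2,m)$ gives $w_2\le\max(\sigma_1,\sigma_2)=\sigma_2$, contradicting that the second edge of $\gamma$ belongs to $P$. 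Therefore $P$ has no edge, which proves Part~1.

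\textbf{Part 2.} Recall $G_{N_1}=(N_1,E(N_1))$, and note that every $E_1$-edge already belongs to $E(N_1)$; hence it suffices to show that the end-vertices of any two $\sigma_1$-edges $e$ and $f$ lie in one component of $G_{N_1}$, since every vertex of $N_1$ is an end-vertex of some $E_1$-edge. I would run the same construction with both extremal edges chosen in $E_1$: let $Q$ be a shortest path in $G$ from $V(e)$ to $V(f)$. If $Q$ is empty, $e$ and $f$ share a vertex and we are done. Otherwise write $Q$ from $u\in V(e)$ to $v\in V(f)$, set $e=\{u,u'\}$, $f=\{v,v'\}$, and form the elementary path $\gamma=u'-e-u-Q-v-f-v'$ exactly as before (elementariness checked the same way). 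Applying the Path condition with $i=1$, $k=m:=|E(\gamma)|\ge 3$ and $j$ ranging over $2,\dots,m-1$ yields $w_j\le\max(\sigma_1,\sigma_1)=\sigma_1$ for every internal edge of $\gamma$, hence $w_j=\sigma_1$ because $\sigma_1$ is the minimum weight. Thus every edge of $Q$ lies in $E_1$, so $Q$ is a path inside $G_{N_1}$, and its ends $u\in V(e)$, $v\in V(f)$ — together with $u'$ and $v'$ via the edges $e,f\in E(N_1)$ — lie in a single component of $G_{N_1}$. Hence $G_{N_1}$ is connected.

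The real content is the single observation that, after padding a minimal connecting path with one extremal edge at each end, the Path condition read between the two extreme indices caps \emph{all} intermediate edge-weights at once by $\max$ of the two extremal weights. I expect the only delicate point to be verifying that the padded path $\gamma$ is elementary; this is precisely where minimality of $P$ (resp. $Q$) is used — it guarantees the two padding vertices are fresh and, in Part~1, that the two extremal edges are disjoint.
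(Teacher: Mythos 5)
Your proof is correct and follows essentially the same route as the paper: Part~2 is exactly the paper's argument (pad a shortest connecting path with the two $\sigma_1$-edges and let the Path condition force every intermediate edge into $E_1$), and Part~1 spells out, via the same padding device, the contradiction with the Path condition that the paper asserts in one line. The extra care you take with elementariness of the padded path is a welcome detail the paper omits.
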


\begin{proof}
\begin{enumerate}
\item
If there is no such pair,
then it contradicts the Path condition.
\item
Let $C'$, $C''$ be two distinct connected components of $G_{N_{1}}$.
By definition of $N_{1}$, $C'$ (resp. $C''$) contains at least one edge $e'$ (resp. $e''$)
of weight $\sigma_{1}$.
Let $\gamma$ be a shortest path in $G$ linking $e'$ to $e''$.
Path condition applied to $\gamma' = e' \cup \gamma \cup e''$
implies $w(e) = \sigma_{1}$
for all edge $e$ in $\gamma$.
Then,
$\gamma'$ is a path in $G_{N_{1}}$ linking $C'$ to $C''$,
a contradiction.
\qedhere
\end{enumerate}
\end{proof}

The following proposition is a direct consequence of Proposition~\ref{corPathCond}
and Lemmas~\ref{lemMax(w1,w2)<Min(w(e),w(e')}
and~\ref{lem1)w1=sigma1(G)w2=Sigma2(G)2)GN1Connected}.

\begin{proposition}
\label{propAtMost3DifferentEdgeWeights}
Let $G= (N,E,w)$ be a connected weighted graph.
Let us assume that
for all $\emptyset \not= S \subseteq N$
the $\mathcal{P}_{\min}$-restricted game $(N,\overline{u_{S}})$ is convex.
Then,
the edge-weights have at most three different values
$\sigma_{1} < \sigma_{2} < \sigma_{3}$.
Moreover,
if $|E_{1}| \geq 2$,
then they have at most two different values 
$\sigma_{1} < \sigma_{2}$.
\end{proposition}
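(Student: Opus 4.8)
The plan is to derive the bound on the number of edge-weights by combining the two lemmas just proved with the Path condition, which is available since convexity implies $\mathcal{F}$-convexity and hence (Proposition~\ref{corPathCond}) the Path condition holds. First I would argue that whenever $G$ has at least two distinct edge-weights $\sigma_{1} < \sigma_{2}$, Lemma~\ref{lem1)w1=sigma1(G)w2=Sigma2(G)2)GN1Connected} furnishes a pair of adjacent edges $e_{1}, e_{2}$ with $w_{1} = \sigma_{1}$ and $w_{2} = \sigma_{2}$. Then, for any third value $\sigma_{3} > \sigma_{2}$, pick two edges $e, e'$ with $w(e) = \sigma_{2}'$ and $w(e') = \sigma_{3}'$ realizing two distinct values both strictly larger than $\sigma_{2} = \max(w_{1}, w_{2})$; Lemma~\ref{lemMax(w1,w2)<Min(w(e),w(e')} forces $w(e) = w(e')$, a contradiction. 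Hence there can be at most one edge-weight value strictly above $\sigma_{2}$, giving at most three distinct values $\sigma_{1} < \sigma_{2} < \sigma_{3}$ in total.

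For the second assertion, suppose $|E_{1}| \geq 2$, i.e.\ there are at least two minimum-weight edges. The goal is to show no value above $\sigma_{2}$ can occur, i.e.\ $k \leq 2$. I would use that, since $G_{N_{1}}$ is connected (part~2 of Lemma~\ref{lem1)w1=sigma1(G)w2=Sigma2(G)2)GN1Connected}) and contains at least two edges, there are two adjacent edges $e_{1}', e_{1}''$ both of weight $\sigma_{1}$; so $\max(w(e_{1}'), w(e_{1}'')) = \sigma_{1}$. Now if some edge $e$ has weight $\sigma_{j} \geq \sigma_{3}$, then any edge $e'$ of weight $\sigma_{2}$ satisfies $\sigma_{1} = \max(w(e_{1}'), w(e_{1}'')) < \min(\sigma_{2}, \sigma_{j}) = \min(w(e'), w(e))$, so Lemma~\ref{lemMax(w1,w2)<Min(w(e),w(e')} applied to the adjacent pair $\{e_{1}', e_{1}''\}$ and the edges $e, e'$ forces $w(e) = w(e')$, contradicting $\sigma_{j} > \sigma_{2}$. (If there is no edge of weight $\sigma_{2}$ at all then trivially there are at most two values.) Therefore $k \leq 2$ when $|E_{1}| \geq 2$.

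The main obstacle is making sure the hypotheses of Lemma~\ref{lemMax(w1,w2)<Min(w(e),w(e')} are genuinely met in each application — that is, exhibiting an \emph{adjacent} pair of edges $e_{1}, e_{2}$ whose maximum weight is strictly below the two larger weights one wants to compare. For the first part this is handled by part~1 of Lemma~\ref{lem1)w1=sigma1(G)w2=Sigma2(G)2)GN1Connected}; for the second part it is handled by connectedness of $G_{N_{1}}$ together with $|E_{1}| \geq 2$, which guarantees two adjacent minimum-weight edges. One should also take a moment to treat degenerate cases cleanly (only one edge-weight value; absence of an edge of weight exactly $\sigma_{2}$), but these are immediate. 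Everything else is a direct substitution into the two lemmas and the Path condition coming from Proposition~\ref{corPathCond}.
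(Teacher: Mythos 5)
Your argument is correct and is exactly the route the paper takes: the paper gives no separate proof, stating only that the proposition is a direct consequence of Proposition~\ref{corPathCond} and Lemmas~\ref{lemMax(w1,w2)<Min(w(e),w(e')} and~\ref{lem1)w1=sigma1(G)w2=Sigma2(G)2)GN1Connected}, and you have filled in precisely that deduction. The only point to phrase carefully is that the two adjacent edges of weight $\sigma_{1}$ in your second part come not from connectedness of $G_{N_{1}}$ alone but from applying the Path condition to a shortest path joining two edges of $E_{1}$ (which forces all intermediate edges to have weight $\sigma_{1}$), exactly as in the proof of part~2 of Lemma~\ref{lem1)w1=sigma1(G)w2=Sigma2(G)2)GN1Connected}.
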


Of course,
Proposition~\ref{propAtMost3DifferentEdgeWeights} implies that
if the edge-weights have three different values,
then there is only one edge with minimum weight $\sigma_{1}$.
We will now establish necessary conditions on adjacency and incidence of
edges in $E_1$, $E_2$, $E_3$.
We first need the two following lemmas. 

\begin{lemma}
\label{lemw1<w2<w3e_{1}e_{2}IncidentTove3Adjacenttoe1ore2ButNotIncidentTov}
Let $G= (N,E,w)$ be a connected weighted graph.
Let us assume that
for all $\emptyset \not= S \subseteq N$,
the
$\mathcal{P}_{\min}$-restricted game $(N,\overline{u_{S}})$ is convex
and  that the edge-weights have exactly three different values
$\sigma_{1} < \sigma_{2} < \sigma_{3}$.
Then,
there exist three edges $e_{1}, e_{2}, e_{3}$
with respective weights $\sigma_{1}$, $\sigma_{2}$, $\sigma_{3}$
such that $e_{1}$ and $e_{2}$ are incident to a vertex $v$
and $e_{3}$ is adjacent to $e_{1}$ or $e_{2}$ but not incident to $v$.
\end{lemma}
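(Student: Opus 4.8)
The plan is to exploit Proposition~\ref{propAtMost3DifferentEdgeWeights} (which guarantees there is a \emph{unique} edge of weight $\sigma_1$ when three values occur) together with the Path condition and Lemma~\ref{lem1)w1=sigma1(G)w2=Sigma2(G)2)GN1Connected} item~\ref{item1lem1)w1=sigma1(G)w2=Sigma2(G)2)GN1Connected}. Since $\sigma_1 < \sigma_2$ are two edge-weights of $G$, Lemma~\ref{lem1)w1=sigma1(G)w2=Sigma2(G)2)GN1Connected} already produces a pair of adjacent edges $\{e_1,e_2\}$ with $w_1=\sigma_1$, $w_2=\sigma_2$; write $v$ for their common vertex and $e_1=\{1,2\}$, $e_2=\{2,3\}$ with $v=2$. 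It remains to find an edge $e_3$ of weight $\sigma_3$ adjacent to $e_1$ or $e_2$ but not incident to $v=2$. First I would pick \emph{any} edge $f$ of weight $\sigma_3$; such an edge exists since $\sigma_3$ is an edge-weight. Since $\max(w_1,w_2)=\sigma_2<\sigma_3=w(f)$, Lemma~\ref{LemP_MinMax(w1,w2)<w(e)e'InELinkingeTo2} applies to the pair $\{e_1,e_2\}$ and the edge $f$: there is an edge $e'$ linking $f$ to vertex $2$, and if $e'\notin\{e_1,e_2\}$ then $w(e')=\max(w_1,w_2)=\sigma_2$ (the case $w_1=w_2$ being excluded here since $\sigma_1<\sigma_2$).

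The key case analysis is on the position of $f$ relative to $2$. If $f$ itself is incident to $2$, then since $w(f)=\sigma_3>\sigma_2=w_2>w_1=w_1$, the Star condition applied to the star formed by $e_1$, $e_2$, $f$ at vertex $2$ is violated (the Star condition forces the largest weight among any three edges at a common vertex to be attained twice); this contradiction shows $f$ is not incident to $2$. So $f=\{j,k\}$ with $j,k\neq 2$, and the linking edge $e'$ from Lemma~\ref{LemP_MinMax(w1,w2)<w(e)e'InELinkingeTo2} joins $2$ to an end-vertex of $f$, say $e'=\{2,j\}$. Now $\{e',f\}$ is a pair of adjacent edges: $e'$ has weight $\le\sigma_2$ (either $e'\in\{e_1,e_2\}$ or $w(e')=\sigma_2$, in both cases $w(e')\le\sigma_2$), and $w(f)=\sigma_3$. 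Then $f$ is an edge of weight $\sigma_3$ adjacent to $e'$ at the vertex $j$. The remaining task is to exhibit the triple with the required adjacency/incidence pattern.

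To finish, I would look at the triangle-like configuration on $\{2,j,k\}$: we have $e'=\{2,j\}$ and $f=\{j,k\}$, with $f$ not incident to $2$. If $e_1$ or $e_2$ happens to be incident to $j$, we are essentially done by renaming; otherwise I would argue as follows. The vertex $j$ is incident to $e'$ (weight $\le\sigma_2$) and to $f$ (weight $\sigma_3$). Since $\sigma_1<\sigma_2<\sigma_3$ and there is a \emph{unique} edge of weight $\sigma_1$ by Proposition~\ref{propAtMost3DifferentEdgeWeights}, at most one of $e_1,e_2,f$ can be the $\sigma_1$-edge, so $e_1$ is the unique $\sigma_1$-edge and is incident to $2$; thus $e_1$ is not incident to $j$ unless $j=1$. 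Treat $j=1$ directly (then $e'=\{2,1\}=e_1$, and $f=\{1,k\}$ is adjacent to $e_1$ at vertex $1\neq 2$, and adjacent to $e_2$ only through $2$, so the triple $(e_1,e_2,f)$ works with $v=2$); and when $j\neq 1$, use the edge $e'$ of weight $\le\sigma_2$ together with $e_1$ of weight $\sigma_1$: by the Path condition on a shortest path from $e_1$ to $e'$ inside $G_{N_1\cup\{$end-vertices of $e'\}}$ one shows $e'$ is adjacent (possibly equal) to a $\sigma_1$- or $\sigma_2$-edge at a vertex distinct from $2$, and that edge, paired with $f$, yields the required triple. The main obstacle is precisely this last bookkeeping: ensuring that after invoking Lemma~\ref{LemP_MinMax(w1,w2)<w(e)e'InELinkingeTo2} one can always relabel the resulting edges so that two of them share a vertex $v$ with weights $\sigma_1,\sigma_2$ and the third, of weight $\sigma_3$, is adjacent to one of them \emph{away} from $v$; the uniqueness of the $\sigma_1$-edge and repeated use of the Star and Path conditions are the tools that force this.
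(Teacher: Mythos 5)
Your overall route is the same as the paper's: Lemma~\ref{lem1)w1=sigma1(G)w2=Sigma2(G)2)GN1Connected} gives the adjacent pair $e_{1}=\lbrace 1,2\rbrace$, $e_{2}=\lbrace 2,3\rbrace$ with weights $\sigma_{1},\sigma_{2}$, and Lemma~\ref{LemP_MinMax(w1,w2)<w(e)e'InELinkingeTo2} applied to an arbitrary edge $e_{3}$ of weight $\sigma_{3}$ produces a linking edge $e'=\lbrace 2,j\rbrace$ with $j$ an end-vertex of $e_{3}$ and $w(e')=\max(w_{1},w_{2})=\sigma_{2}$ whenever $e'\notin\lbrace e_{1},e_{2}\rbrace$. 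Up to that point everything is fine, including the observation that $e_{3}$ cannot be incident to $2$ and the direct treatment of the cases $e'=e_{1}$ and $e'=e_{2}$.

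The problem is your handling of the remaining case $e'\notin\lbrace e_{1},e_{2}\rbrace$. There is nothing left to prove there: $e_{1}$ and $e'$ are both incident to $v=2$ with weights $\sigma_{1}$ and $\sigma_{2}$, and $e_{3}=\lbrace j,k\rbrace$ is adjacent to $e'$ at $j\neq 2$, so the triple $(e_{1},e',e_{3})$ already satisfies the conclusion --- the lemma only asks for the existence of \emph{some} such triple, not for $e_{3}$ to be adjacent to the originally chosen $e_{2}$. This is exactly the paper's one-line ``substitute $e'$ for $e_{2}$'' step. Instead you launch a Path-condition argument along a shortest path from $e_{1}$ to $e'$ in order to find ``a $\sigma_{1}$- or $\sigma_{2}$-edge adjacent to $e'$ at a vertex distinct from $2$,'' which is both unnecessary and off-target: such an edge paired with $e_{3}$ need not yield two edges of weights $\sigma_{1}$ and $\sigma_{2}$ meeting at a common vertex with $e_{3}$ hanging off one of them away from that vertex, so the required configuration is not actually exhibited. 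As written, the main case is therefore not closed; replacing that final paragraph by the substitution above completes the argument and makes it coincide with the paper's proof.
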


Three edges $e_{1}$, $e_{2}$, $e_{3}$ satisfying
Lemma~\ref{lemw1<w2<w3e_{1}e_{2}IncidentTove3Adjacenttoe1ore2ButNotIncidentTov}
correspond to three possible situations
represented in Figure~\ref{figleme_{1}e_{2}IncidentTove3Adjacenttoe1ore2ButNotIncidentTov}.

\begin{figure}[!h]
\centering
\subfloat[]{
\begin{pspicture}(0,-.3)(0,.5)
\tiny
\begin{psmatrix}[mnode=circle,colsep=0.5,rowsep=0.3]
{$1$}	& {$2$}	& {$3$} &  {$4$}
\psset{arrows=-, shortput=nab,labelsep={0.1}}
\tiny
\ncline{1,1}{1,2}_{$e_{1}$}
\ncline{1,2}{1,3}_{$e_{2}$}
\ncline{1,3}{1,4}_{$e_{3}$}
\normalsize
\end{psmatrix}
\end{pspicture}
}
\subfloat[]{
\begin{pspicture}(-1,-.3)(1,.5)
\tiny
\begin{psmatrix}[mnode=circle,colsep=0.3,rowsep=0]
	&	{$3$}\\
{$1$}	& 		&  {$2$} 
\psset{arrows=-, shortput=nab,labelsep={0.1}}
\tiny
\ncline{2,1}{2,3}_{$e_{1}$}
\ncline{2,1}{1,2}^{$e_{3}$}
\ncline{1,2}{2,3}^{$e_{2}$}
\normalsize
\end{psmatrix}
\end{pspicture}
}
\subfloat[]{
\begin{pspicture}(0,-.3)(0,.5)
\tiny
\begin{psmatrix}[mnode=circle,colsep=0.5,rowsep=0.3]
{$4$}	& {$1$}	& {$2$} &  {$3$}
\psset{arrows=-, shortput=nab,labelsep={0.1}}
\tiny
\ncline{1,1}{1,2}_{$e_{3}$}
\ncline{1,2}{1,3}_{$e_{1}$}
\ncline{1,3}{1,4}_{$e_{2}$}
\normalsize
\end{psmatrix}
\end{pspicture}
}
\caption{$w_{1} = \sigma_{1} < w_{2} = \sigma_{2} < w_{3} = \sigma_{3}$.}
\label{figleme_{1}e_{2}IncidentTove3Adjacenttoe1ore2ButNotIncidentTov}
\end{figure}
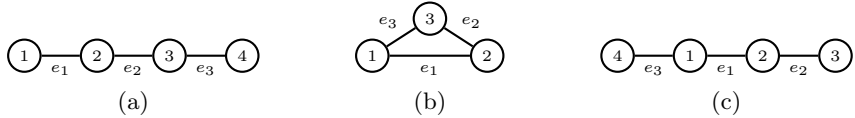

\begin{proof}
Claim~\ref{item1lem1)w1=sigma1(G)w2=Sigma2(G)2)GN1Connected} of Lemma~\ref{lem1)w1=sigma1(G)w2=Sigma2(G)2)GN1Connected}
implies the existence of $e_{1}$ and $e_{2}$.
Proposition~\ref{propAtMost3DifferentEdgeWeights} implies the uniqueness of $e_{1}$.
Let us assume $e_{1} = \lbrace 1, 2 \rbrace$
and $e_{2} = \lbrace 2, 3 \rbrace$,
and let $e_{3}$ be an edge of weight $\sigma_{3}$.
By Lemma~\ref{LemP_MinMax(w1,w2)<w(e)e'InELinkingeTo2}
there exists an edge $e'$ linking $e_{3}$ to $2$
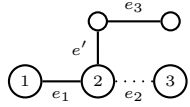
\begin{figure}[!h]
\centering
\begin{pspicture}(0,-.2)(0,1.2)
\tiny
\begin{psmatrix}[mnode=circle,colsep=0.5,rowsep=0.4]
			&	{}		&	{}\\
{$1$}	& {$2$}	& {$3$}
\psset{arrows=-, shortput=nab,labelsep={0.1}}
\tiny
\ncline{2,1}{2,2}_{$e_{1}$}
\ncline[linestyle=dotted]{2,2}{2,3}_{$e_{2}$}
\ncline{1,2}{2,2}_{$e'$}
\ncline{1,2}{1,3}^{$e_{3}$}
\normalsize
\end{psmatrix}
\end{pspicture}
\caption{
$e'$ linking $e_{3}$ to $2$ with
$w_{1} = \sigma_{1} < w_{2} = \sigma_{2} < w_{3} = \sigma_{3}$.}
\label{figProofLeme_{1}e_{2}IncidentTove3Adjacenttoe1ore2ButNotIncidentTov}
\end{figure}
and if $e' \not= e_{1}$ and $e' \not= e_{2}$,
then $w(e') = \max(w_{1}, w_{2}) = \sigma_{2}$.
In this last case,
we can substitute $e'$ for $e_{2}$
as represented in Figure~\ref{figProofLeme_{1}e_{2}IncidentTove3Adjacenttoe1ore2ButNotIncidentTov},
and then the $3$-tuple $\lbrace e_{1}, e', e_{3} \rbrace$ satisfies the conclusion of the lemma.
\end{proof}

\begin{lemma}
\label{propMax(w1,w2)<=w3=w4=...=wm}
Let us assume that
for all $\emptyset \not= S \subseteq N$,
the
$\mathcal{P}_{\min}$-restricted game $(N,\overline{u_{S}})$ is convex.
Then,
for all elementary path
$\gamma = \lbrace 1, e_{1}, 2, e_{2}, \ldots, m, e_{m},\\ m+1 \rbrace$
with $w_{1} < w_{m}$, we have
\begin{equation}
\label{eqPropMax(w_{1},w_{2})<=w(3)=W(4)=...=w(m)}
\max (w_{1}, w_{2}) \leq w_{3} = w_{4} = \cdots = w_{m}.
\end{equation}
\end{lemma}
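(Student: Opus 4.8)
The plan is to prove the chain of equalities \eqref{eqPropMax(w_{1},w_{2})<=w(3)=W(4)=...=w(m)} by combining the Path condition with the two-edge lemmas, using an induction on the length $m$ of the path. First I would dispose of the trivial cases $m \leq 2$ (where the statement is vacuous or immediate) and $m = 3$, where we must show $\max(w_1,w_2) \le w_3$ given $w_1 < w_3$; here the Path condition applied to $\gamma$ gives $w_2 \le \max(w_1,w_3) = w_3$ when $w_1 < w_3$, and combining with $w_1 < w_3$ yields $\max(w_1,w_2) \le w_3$ directly. The substantive content is therefore the simultaneous equality $w_3 = w_4 = \cdots = w_m$ together with the bound by $\max(w_1,w_2)$ for longer paths.

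For the general case, the key observation is that $e_1$ and $e_2$ form a pair of adjacent edges, so Lemma~\ref{lemMax(w1,w2)<Min(w(e),w(e')} applies to any two edges $e, e'$ with $\max(w_1,w_2) < \min(w(e), w(e'))$, forcing $w(e) = w(e')$. The plan is to first show $w_j \ge \max(w_1,w_2)$ for each $3 \le j \le m$: if some $w_j < \max(w_1,w_2) \le w_1$ were strictly below... actually since $w_1 < w_m$, apply the Path condition to the triple of indices; if $j < m$ we get $w_j \le \max(w_1, w_m)$, and more carefully, using the Path condition on sub-triples $(1, j, m)$ and the hypothesis $w_1 < w_m$, one pins down that no interior edge can dip below $\max(w_1,w_2)$ without creating a Path-condition violation on an appropriate triple. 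Once every interior weight $w_3, \ldots, w_m$ is known to be $\ge \max(w_1,w_2)$, and since $w_2 \ge \max(w_1, w_2)$ is false in general — I mean: since $w_1 < w_m$ forces $w_m > \max(w_1, w_2)$ in the nontrivial case where $w_2 \le w_m$, I would then apply Lemma~\ref{lemMax(w1,w2)<Min(w(e),w(e')} pairwise to $e_3, e_4, \ldots, e_m$, each of which now strictly exceeds $\max(w_1, w_2)$ (handling separately the boundary case where some $w_j$ equals $\max(w_1,w_2)$, which by the Path condition and $w_1 < w_m$ can only happen at the end near index $m$ — but $w_m > w_1$, so actually one argues $w_j > \max(w_1,w_2)$ strictly for all interior $j$ whenever $w_1 < w_m$ and $m \ge 4$). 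This gives $w_3 = w_4 = \cdots = w_m$ all at once from the pairwise equalities, and each equals a common value $\ge \max(w_1, w_2)$, completing the proof.

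The main obstacle I anticipate is the careful bookkeeping at the \emph{endpoints} of the chain: establishing strict inequality $w_j > \max(w_1,w_2)$ (as opposed to $\ge$) for the interior edges, which is exactly what is needed to invoke Lemma~\ref{lemMax(w1,w2)<Min(w(e),w(e')}. The subtle point is handling the index $j = m$ and $j = 3$ where the Path condition gives weaker information, and the case distinction according to whether $w_1 \le w_2$ or $w_2 < w_1$, since $\max(w_1, w_2)$ behaves differently. One must verify that if, say, $w_3 = \max(w_1,w_2)$ exactly, then either the Path condition on $(1,3,m)$ is violated (because $w_1 < w_m$ would force things) or the situation collapses to a shorter path handled by induction. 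I would organize this as: first prove all interior weights are $\ge \max(w_1,w_2)$ using only the Path condition; then argue that equality $w_j = \max(w_1,w_2)$ for an interior $j$ is incompatible with $w_1 < w_m$ via the Path condition applied to the triple straddling $j$; hence strictness holds and Lemma~\ref{lemMax(w1,w2)<Min(w(e),w(e')} finishes the job.
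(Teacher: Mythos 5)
Your reduction of the case $m=3$ to the Path condition is fine, but the core of your strategy for $m\ge 4$ has a genuine gap, and it is located exactly where you anticipated trouble. You propose to show $w_j > \max(w_1,w_2)$ strictly for all interior $j$ and then apply Lemma~\ref{lemMax(w1,w2)<Min(w(e),w(e')} pairwise. That strict inequality is simply false in general: the conclusion (\ref{eqPropMax(w_{1},w_{2})<=w(3)=W(4)=...=w(m)}) itself allows $w_3=\cdots=w_m=\max(w_1,w_2)$, and this is realized by admissible graphs --- e.g.\ a path with weights $\sigma_1<\sigma_2=\sigma_2=\sigma_2$ satisfies inheritance of convexity (Theorem~\ref{PropInheritanceOfConvexityForPminOnGIffCofGWithWeightwEitherCompleteOrAllVerticesLinkedToTheSameEndVertexOfe1}) and has $w_1<w_m$ with $w_3=\max(w_1,w_2)$. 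In that boundary case Lemma~\ref{lemMax(w1,w2)<Min(w(e),w(e')} is not applicable and your argument produces nothing. Moreover, your preliminary step --- deriving $w_j\ge\max(w_1,w_2)$ for interior $j$ ``using only the Path condition'' --- is not achievable either: the Path condition only bounds interior weights from \emph{above}. The weight vector $(w_1,w_2,w_3,w_4)=(2,1,1,3)$ satisfies every instance of the Path condition on the path yet has $w_3<\max(w_1,w_2)$; excluding such configurations genuinely requires the convexity hypothesis beyond the Path condition.

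The missing ingredient is Lemma~\ref{LemP_MinMax(w1,w2)<w(e)e'InELinkingeTo2}. The paper argues by contradiction from the top of the chain: if $w_{m-1}<w_m$, then the Path condition gives $w_2\le\max(w_1,w_{m-1})<w_m$, hence $\max(w_1,w_2)<w_m$, and Lemma~\ref{LemP_MinMax(w1,w2)<w(e)e'InELinkingeTo2} produces an edge $e$ of weight at most $\max(w_1,w_2)<w_m$ linking $e_m$ to vertex $2$. If $e=\lbrace 2,m\rbrace$ the Star condition on $\lbrace e_{m-1},e_m,e\rbrace$ forces $w_{m-1}=w_m$; if $e=\lbrace 2,m+1\rbrace$ the resulting cycle has three non-maximum weight edges ($e_2$, $e_{m-1}$, $e$), contradicting the Cycle condition. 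Hence $w_{m-1}=w_m>w_1$, and one iterates on the shortened path. This downward iteration never needs the strict separation of the interior weights from $\max(w_1,w_2)$, which is why it succeeds where the pairwise application of Lemma~\ref{lemMax(w1,w2)<Min(w(e),w(e')} cannot. To repair your proof you would need to replace both the lower-bound step and the strictness step by an argument of this kind.
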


\begin{proof}
The Path condition implies
$w_{j} \leq \max (w_{1}, w_{m}) = w_{m}$ for all $j$, $1 \leq j \leq m$.
Then,
(\ref{eqPropMax(w_{1},w_{2})<=w(3)=W(4)=...=w(m)}) is satisfied if $m=3$.
Let us assume $m \geq 4$
and
by contra\-diction $w_{m-1} < w_{m}$.
The Path condition implies $w_{2} \leq \max (w_{1}, w_{m-1}) < w_{m}$.
Therefore,
$\max(w_{1}, w_{2}) < w_{m}$.
Then,
by Lemma~\ref{LemP_MinMax(w1,w2)<w(e)e'InELinkingeTo2},
there exists an edge $e$ linking $e_{m}$ to $2$
with $w(e) \leq \max (w_{1}, w_{2})$.
Hence,
we have $w(e) < w_{m}$.
Let us first assume $e = \lbrace 2, m \rbrace$
as represented in Figure~\ref{figProofPrope=(2,m)}.
\begin{figure}[!h]
\centering
\subfloat[]{
\begin{pspicture}(0,-.3)(0,.8)
\tiny
\begin{psmatrix}[mnode=circle,colsep=.5,rowsep=.6]
{$1$}	& {$2$} & {$3$} & {$m$} &  {$\scriptscriptstyle{m+1}$}
\psset{arrows=-, shortput=nab,labelsep={.1}}
\tiny
\ncline{1,1}{1,2}_{$e_{1}$}
\ncline{1,2}{1,3}_{$e_{2}$}
\ncline[linestyle=dashed]{1,3}{1,4}
\ncline{1,4}{1,5}_{$e_{m}$}
\ncarc[arcangle=40]{1,2}{1,4}^{$e$}
\end{psmatrix}
\end{pspicture}
\label{figProofPrope=(2,m)}
}
\hspace{.6cm}
\subfloat[]{
\begin{pspicture}(0,-.3)(0,.8)
\tiny
\begin{psmatrix}[mnode=circle,colsep=.5,rowsep=.6]
{$1$}	& {$2$} & {$3$} & {$m$} &  {$\scriptscriptstyle{m+1}$}
\psset{arrows=-, shortput=nab,labelsep={.1}}
\tiny
\ncline{1,1}{1,2}_{$e_{1}$}
\ncline{1,2}{1,3}_{$e_{2}$}
\ncline[linestyle=dashed]{1,3}{1,4}
\ncline{1,4}{1,5}_{$e_{m}$}
\ncarc[arcangle=30]{1,2}{1,5}^{$e$}
\end{psmatrix}
\end{pspicture}
\label{figProofPrope=(2,m+1)}
}
\caption{$e = \lbrace 2, m \rbrace$ or $e = \lbrace 2, m+1 \rbrace$.}
\end{figure}
Then,
Star condition applied to $\lbrace e_{m-1}, e_{m}, e \rbrace$ implies
$w_{m-1} = w_{m}$, a contradiction.
Let us now assume $e = \lbrace 2, m+1 \rbrace$
as represented in Figure~\ref{figProofPrope=(2,m+1)}.
Then,
the cycle $C = \lbrace 2, e_{2}, 3 \ldots, m, e_{m}, m+1, e, 2 \rbrace$
contains at least three non-maximum weight edges ($e_{2}$, $e_{m-1}$, $e$),
contradicting the Cycle condition.
Hence,
we have $w_{m-1} = w_{m}$
and therefore $w_{m-1} > w_{1}$.
We can iterate to get
(\ref{eqPropMax(w_{1},w_{2})<=w(3)=W(4)=...=w(m)}).
\end{proof}


\begin{proposition}
\label{propEitherOnlyOneEdgeOfWeightw1OrAllEdgesOFWeightw1AreIncidentToSameVertex}
Let $G= (N,E,w)$ be a connected weighted graph.
Let us assume that
for all $\emptyset \not= S \subseteq N$,
the $\mathcal{P}_{\min}$-restricted game $(N,\overline{u_{S}})$ is convex
and that the edge-weights have exactly two different values
$\sigma_{1} < \sigma_{2}$.
Then,
either
$|E_{1}| = 1$
or all edges
in $E_{1}$
are incident to the same vertex $v$
and no edge
in $E_{2}$
is incident to $v$.
\end{proposition}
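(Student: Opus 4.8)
We assume there are exactly two edge-weights $\sigma_1<\sigma_2$ and that $|E_1|\geq 2$; the goal is to show all edges of $E_1$ share a common end-vertex $v$ and that no edge of $E_2$ is incident to $v$. The natural starting point is Lemma~\ref{lem1)w1=sigma1(G)w2=Sigma2(G)2)GN1Connected}: the Path condition holds (it is necessary by Proposition~\ref{corPathCond}), so $G_{N_1}$ is connected, and there is an adjacent pair $\{e_1,e_2\}$ with $w_1=\sigma_1$, $w_2=\sigma_2$. The first claim to prove is that $G_{N_1}$ contains no two adjacent edges of weight $\sigma_1$ that are not both incident to one common vertex --- in other words, $E_1$ induces a star. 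Suppose instead that $e=\{a,b\}$ and $e'=\{b,c\}$ are in $E_1$ with $a,b,c$ distinct, and consider whether $a$ and $c$ can be joined. If $\{a,c\}\in E$ it has weight $\sigma_2$ (only two values), and then the triangle $\{a,b,c\}$ has two minimum-weight edges, which I expect to contradict either the Cycle condition (which forces at most one non-maximum weight edge, here at most one edge of weight $<\sigma_2$) directly, or a convexity argument via Theorem~\ref{thLGNuPNFNFSNuSSNNuSFABFABF} applied to a coalition containing $\{a,c\}$ plus $b$. If $\{a,c\}\notin E$, I would instead take the pair of adjacent edges $\{e,e'\}$ (both weight $\sigma_1$) together with any edge $e''$ of weight $\sigma_2$ --- one exists since $k=2$ --- and apply Lemma~\ref{LemP_MinMax(w1,w2)<w(e)e'InELinkingeTo2}, noting $\max(w(e),w(e'))=\sigma_1<\sigma_2=w(e'')$, to force an edge linking $e''$ to vertex $b$ of weight $\leq\sigma_1$, hence of weight $\sigma_1$; iterating this along $G_{N_1}$ (using its connectivity) will pin all of $E_1$ to a single vertex.

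**Identifying the common vertex and excluding $E_2$.** Once $E_1$ is known to be a star centered at some vertex $v$ (this uses $|E_1|\geq 2$ so that $v$ is uniquely determined as the shared end-vertex), it remains to show no edge of weight $\sigma_2$ is incident to $v$. Suppose $e_2=\{v,c\}\in E_2$. Pick an edge $e_1=\{v,u\}\in E_1$ (exists since $|E_1|\geq 1$); then $\{e_1,e_2\}$ is an adjacent pair at $v$ with $w_1=\sigma_1<\sigma_2=w_2$. Take a second edge $e_1'=\{v,u'\}\in E_1$ with $u'\neq u$ (exists since $|E_1|\geq 2$). Now $e_1'$ is an edge adjacent to $e_1$ (at $v$) with $w(e_1')=\sigma_1$. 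The configuration $e_1=\{v,u\}$, $e_2=\{v,c\}$ together with $e_1'$ should be attacked with the Star condition applied to $\{e_1,e_1',e_2\}$: a star of type $S_3$ at $v$ with two legs of weight $\sigma_1$ and one of weight $\sigma_2$ violates $w_1\leq w_2=w_3$ (we would need the two equal weights to be the larger pair, but $\sigma_1<\sigma_2$ gives $\sigma_1=\sigma_1<\sigma_2$, i.e. the unequal one is the smaller --- that is consistent!). So the bare Star condition does \emph{not} immediately kill this; instead I expect to need a convexity argument. Consider the coalition $B=\{v,u,u',c\}$ with the edges $e_1,e_1',e_2$ present. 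Then $\mathcal{P}_{\min}(B)$ deletes the two $\sigma_1$-edges $e_1,e_1'$, giving components $\{u\},\{u'\},\{v,c\}$ (if $\{u,c\},\{u',c\},\{u,u'\}$ are absent or have weight $\sigma_2$). Comparing $\mathcal{P}_{\min}$ on $B$ and on a subcoalition such as $A=\{v,u,u'\}$ (where deleting $e_1,e_1'$ isolates all three vertices) and invoking Theorem~\ref{thLGNuPNFNFSNuSSNNuSFABFABF} with $\mathcal{F}=2^N\setminus\{\emptyset\}$ --- exactly the technique used in the proof of Lemma~\ref{LemP_MinMax(w1,w2)<w(e)e'InELinkingeTo2} --- should produce the required contradiction, since on the subcoalition $\{v,c\}$ splits off but on the larger one it need not, or vice versa.

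**Where the difficulty lies.** The main obstacle is the second part: excluding edges of weight $\sigma_2$ at $v$ is not a direct consequence of any single one of the five named conditions, so I expect to need a genuine convexity/partition-refinement argument of the kind in Lemma~\ref{LemP_MinMax(w1,w2)<w(e)e'InELinkingeTo2}'s proof, with a carefully chosen pair $A\subseteq B$ and a carefully chosen transfer vertex $i$, where the presence of the second $\sigma_1$-edge at $v$ (available because $|E_1|\geq 2$) is what makes the minimum edge-deletion behave differently on $A$ and on $B$. A secondary subtlety is handling chords: when I build cycles (triangles or longer) out of $E_1$-edges and connecting $E_2$-edges, I must make sure the Cycle condition is being applied to a \emph{simple} cycle and track which edges are chords, since the Cycle condition's conclusion ($w_1\leq w_2\leq w_3=\cdots=w_m=\hat M$ with at most the first two below $\hat M$) is what rules out a cycle having two edges of weight $\sigma_1$ whenever it also has an edge of weight $\sigma_2$. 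Organizing the case analysis so that every potential edge among $\{u,u',c,\dots\}$ is either forced to exist (and then creates a forbidden cycle) or forced absent (and then makes $\mathcal{P}_{\min}$ split, creating a forbidden refinement failure) is the bookkeeping core of the proof.
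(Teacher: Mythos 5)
There is a genuine gap, and in fact two concrete errors that derail the plan. First, your reading of the Star condition in the second part is wrong: a star $S_3$ with two legs of weight $\sigma_1$ and one of weight $\sigma_2>\sigma_1$ \emph{does} violate the condition $w_1\leq w_2=w_3$, because the renumbering must make the two \emph{equal} weights the two largest ones, whereas here the unique maximum $\sigma_2$ is attained only once. The argument you dismiss as ``consistent'' is precisely the one-line argument the paper uses twice: once to show that no $E_1$-edge other than $e_1$ can be incident to the vertex $2$ carrying the adjacent pair $\lbrace e_1,e_2\rbrace$ with $w_1=\sigma_1<w_2=\sigma_2$, and once at the end to exclude $\sigma_2$-edges at the common vertex $v$. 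Your substitute convexity argument with $A=\lbrace v,u,u'\rbrace$, $B=\lbrace v,u,u',c\rbrace$ is therefore unnecessary, and as stated it is not even a legal instance of Theorem~\ref{thLGNuPNFNFSNuSSNNuSFABFABF} (you would need $B\subseteq N\setminus\lbrace i\rbrace$ for the chosen transfer vertex $i$, which your sets do not satisfy).

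Second, the first part of your plan tries to derive a contradiction from the mere existence of two adjacent $E_1$-edges $\lbrace a,b\rbrace,\lbrace b,c\rbrace$, but that configuration is exactly what the proposition \emph{allows} (a star of $\sigma_1$-edges centred at $b$). In particular a triangle with weights $\sigma_1,\sigma_1,\sigma_2$ satisfies the Cycle condition, which permits \emph{two} non-maximum weight edges ($w_1\leq w_2\leq w_3=\cdots=\hat M$), not one as you assert; such triangles occur in Figure~\ref{figPropInheritanceOfConvexityForPminInheritanceOfConvexityForPMG1Cycle-Complete}, a graph that does satisfy inheritance of convexity. So no contradiction is available there, and your ``iterate along $G_{N_1}$'' step is not an argument. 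What is actually needed, and what the paper does, is: fix an adjacent pair $e_1=\lbrace 1,2\rbrace\in E_1$, $e_2=\lbrace 2,3\rbrace\in E_2$; for any other $e\in E_1$ not incident to $1$, rule out incidence to $2$ by the Star condition, take a shortest path $\gamma$ from $e$ to $\lbrace 1,2\rbrace$ (all of whose edges have weight $\sigma_1$ by the Path condition), conclude that $\gamma$ must land on vertex $1$, and then apply Lemma~\ref{propMax(w1,w2)<=w3=w4=...=wm} to the elementary path $\lbrace e\rbrace\cup\gamma\cup\lbrace e_1,e_2\rbrace$ to force $\sigma_1=\sigma_2$, a contradiction. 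That lemma is the key tool for identifying the common vertex, and it is absent from your proposal.
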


Graphs corresponding to the two possible situations
described in Proposition~\ref{propEitherOnlyOneEdgeOfWeightw1OrAllEdgesOFWeightw1AreIncidentToSameVertex}
are represented in Figure~\ref{figPropEitherOnlyOneEdgeOfWeightw1OrAllEdgesOFWeightw1AreIncidentToSameVertex}.
\begin{figure}[!h]
\centering
\subfloat{
\begin{pspicture}(-.5,0)(.5,.8)
\tiny
\begin{psmatrix}[mnode=circle,colsep=0.5,rowsep=0.4]
	{} &			 &			 &	{}\\
 {} & {$1$} & {$2$} & {}
\psset{arrows=-, shortput=nab,labelsep={0.05}}
\tiny
\ncline{1,1}{2,1}_{$\sigma_{2}$}
\ncline{2,1}{2,2}_{$\sigma_{2}$}
\ncline{2,2}{2,3}_{$\sigma_{1}$}^{$e_{1}$}
\ncline{2,3}{2,4}_{$\sigma_{2}$}
\ncline{1,1}{2,2}^{$\sigma_{2}$}
\ncline{2,3}{1,4}_{$\sigma_{2}$}
\normalsize
\end{psmatrix}
\end{pspicture}
}
\subfloat{
\begin{pspicture}(-.5,0)(.5,.8)
\tiny
\begin{psmatrix}[mnode=circle,colsep=0.5,rowsep=0.4]
{} 	& {} 		& {}\\
{}	& {$v$}	& {} 
\psset{arrows=-, shortput=nab,labelsep={0.05}}
\tiny
\ncline{1,1}{2,1}_{$\sigma_{2}$}
\ncline{1,1}{2,2}^{$\sigma_{1}$}
\ncline{1,2}{2,2}^{$\sigma_{1}$}
\ncline{1,2}{1,3}^{$\sigma_{2}$}
\ncline{1,3}{2,3}^{$\sigma_{2}$}
\ncline{2,2}{2,3}_{$\sigma_{1}$}
\ncline{2,1}{2,2}_{$\sigma_{1}$}
\end{psmatrix}
\end{pspicture}
}
\caption{Edge-weights with only two values $\sigma_{1} < \sigma_{2}$.}
\label{figPropEitherOnlyOneEdgeOfWeightw1OrAllEdgesOFWeightw1AreIncidentToSameVertex}
\end{figure}
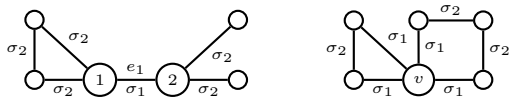

\begin{proof}
As $G$ is connected there is at least one pair of adjacent edges
$e_{1} = \lbrace 1, 2 \rbrace$,
and $e_{2} = \lbrace 2, 3 \rbrace$
with weights $\sigma_{1} < \sigma_{2}$.
Let $e$ be an edge in $E_{1} \setminus \lbrace e_{1} \rbrace$
non-incident to $1$.
Then,
$e$ cannot be incident to $2$ otherwise it contradicts the Star condition.
As $G$ is connected,
there exists a shortest path $\gamma$
linking $e$ to $1$ or $2$.
All edges in $\gamma$ have weight $\sigma_{1}$
by the Path condition
applied to $\lbrace e \rbrace \cup \gamma \cup \lbrace e_{1} \rbrace$.
Let $e'$ be the edge of $\gamma$ incident to $1$ or $2$.
As $w(e') = \sigma_{1}$,
the first part of the proof implies that $e'$ is necessarily incident to $1$.
Then,
Lemma~\ref{propMax(w1,w2)<=w3=w4=...=wm}
applied to
$\gamma' = \lbrace e \rbrace \cup \gamma \cup \lbrace e_{1}, e_{2}\rbrace$
as represented in Figure~\ref{figProofAllEdgesOfWeightw2AreIncidentToSameEnd-VertexOfe4-2}
implies $\sigma_{1} = \sigma_{2}$,
a contradiction.
\begin{figure}[!h]
\centering
\begin{pspicture}(0,0)(0,1.2)
\tiny
\begin{psmatrix}[mnode=circle,colsep=0.5,rowsep=0.5]
{}		&		{}\\
{$1$}	& {$2$}	& {$3$}
\psset{arrows=-, shortput=nab,labelsep={0.06}}
\tiny
\ncline{1,1}{2,1}^{$\sigma_{1}$}_{$e'$}
\ncline{1,1}{1,2}^{$e$}_{$\sigma_{1}$}
\ncline{2,1}{2,2}_{$e_{1}$}^{$\sigma_{1}$}
\ncline{2,2}{2,3}_{$e_{2}$}^{$\sigma_{2}$}
\normalsize
\end{psmatrix}
\end{pspicture}
\caption{$\gamma' = \lbrace e \rbrace \cup \gamma \cup \lbrace e_{1}, e_{2} \rbrace$.}
\label{figProofAllEdgesOfWeightw2AreIncidentToSameEnd-VertexOfe4-2}
\end{figure}
Therefore,
$e$ is incident to $1$,
and any edge of weight $\sigma_{2}$ incident to $1$ would contradict the Star condition.
\end{proof}

\begin{proposition}
\label{lemAllEdgesOfWeightw2AreIncidentToSameEndVertexofe1}
Let $G= (N,E,w)$ be a connected weighted graph.
Let us assume that
for all $\emptyset \not= S \subseteq N$,
the $\mathcal{P}_{\min}$-restricted game $(N,\overline{u_{S}})$ is convex
and that the edge-weights have exactly three different values
$\sigma_{1} < \sigma_{2} < \sigma_{3}$.
Then,
there is only one edge $e_{1}$
in $E_{1}$,
every edge in $E_{2}$
is incident to the same end-vertex $v$ of $e_{1}$,
and every edge in $E_{3}$
is non-incident to $v$ but linked to $v$
by $e_{1}$ or by an edge
in $E_{2}$.
\end{proposition}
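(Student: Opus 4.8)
The plan is to reduce everything to two elementary obstructions forbidden by the conditions of Section~\ref{SectionNecessaryCondF-ConvPmin}: a star of type $S_3$ whose three edge-weights cannot be renumbered so that $w_1 \le w_2 = w_3$ (for instance weights $\{\sigma_1,\sigma_2,\sigma_3\}$ or $\{\sigma_2,\sigma_2,\sigma_3\}$), which is impossible by the Star condition; and a simple cycle whose heaviest edge is strictly heavier than every other edge of the cycle, which is impossible by the Cycle condition (that condition forces the two heaviest cycle-edges to be equal). Each step of the proof will terminate by exhibiting one of these.

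First I would fix the configuration. By Proposition~\ref{propAtMost3DifferentEdgeWeights} we have $|E_1| = 1$; write $e_1 = \{1,2\}$, use Claim~\ref{item1lem1)w1=sigma1(G)w2=Sigma2(G)2)GN1Connected} of Lemma~\ref{lem1)w1=sigma1(G)w2=Sigma2(G)2)GN1Connected} to pick an edge $e_2 = \{2,3\} \in E_2$ adjacent to $e_1$, and set $v := 2$. The two assertions about $E_3$ then follow at once: were some $g \in E_3$ incident to $v$, the edges $e_1, e_2, g$ would form a genuine $S_3$ at $v$ with pairwise distinct weights $\sigma_1 < \sigma_2 < \sigma_3$, contradicting the Star condition; and Lemma~\ref{LemP_MinMax(w1,w2)<w(e)e'InELinkingeTo2}, applied to the adjacent pair $\{e_1,e_2\}$ and the edge $g$ (legitimate since $\max(\sigma_1,\sigma_2) = \sigma_2 < \sigma_3$), produces an edge $e'$ linking $g$ to $v$ which is $e_1$, $e_2$, or else has weight $\max(\sigma_1,\sigma_2) = \sigma_2$ and hence lies in $E_2$.

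The heart of the argument is to show that every $f = \{a,b\} \in E_2$ is incident to $v$. Assume not, so $2 \notin \{a,b\}$. My plan is to produce from $f$ a vertex $w \neq 2$ carrying an edge $h' = \{w,2\}$ of weight $\le \sigma_2$ together with a second adjacent edge $h$ of weight exactly $\sigma_2$. If $1 \in \{a,b\}$, take $w := 1$, $h := f$, $h' := e_1$. Otherwise $\{a,b\}\cap\{1,2\} = \emptyset$; let $\gamma$ be a shortest path from an end-vertex of $f$ to $\{1,2\}$, prepend $f$, append $e_1$, reverse the path, and apply Lemma~\ref{propMax(w1,w2)<=w3=w4=...=wm}: since the resulting elementary path has first edge $e_1$ of weight $\sigma_1$ and last edge $f$ of weight $\sigma_2 > \sigma_1$, and since $|E_1| = 1$, every edge of $\gamma$ must have weight $\sigma_2$. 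If $\gamma$ reaches $1$ we are back in the previous case; if it reaches $2$, take $w$ to be its last vertex before $2$, $h'$ the last edge of $\gamma$, and $h$ the edge of $\gamma$ preceding $h'$ (or $f$ itself when $\gamma$ has length one). Now apply Lemma~\ref{LemP_MinMax(w1,w2)<w(e)e'InELinkingeTo2} to the adjacent pair $\{h,h'\}$ and any $g \in E_3$ (which exists, and which by the assertions above avoids $v$ and is linked to $v$): this shows $g$ is also linked to $w$. If $g$ is incident to $w$, then $g,h,h'$ form a forbidden $S_3$ at $w$. Otherwise $g$ is linked to $w$ by an edge $\epsilon$ of weight $\sigma_2$ and to $v$ by an edge $\epsilon'$ of weight $\le \sigma_2$, with $\epsilon'$ of weight $\sigma_1$ only if $\epsilon' = e_1$, that is only if $g$ is incident to $1$; if $\epsilon$ and $\epsilon'$ reach $g$ at its same end-vertex we obtain a forbidden $S_3$ there, and if they reach $g$ at its two different end-vertices we obtain a simple $4$-cycle through $w$ and $v$ whose unique heaviest edge is $g$. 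In every branch a contradiction, which proves the claim.

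The main obstacle, and the only part requiring genuine care, is this last step: one must check in each branch that the star produced really is an $S_3$ with three distinct leaves and that the cycle produced really is simple, and must dispose of the small degeneracies — $\gamma$ of length one, and the cases where an end-vertex of $g$ equals $1$ or $3$ (so that one of the linking edges is $e_1$ or $e_2$ rather than a fresh $\sigma_2$-edge). I expect that carrying along the auxiliary fact that \emph{every $\sigma_3$-edge is linked to every vertex incident to two adjacent $\sigma_2$-edges} (an immediate consequence of Lemma~\ref{LemP_MinMax(w1,w2)<w(e)e'InELinkingeTo2}) keeps this case analysis routine.
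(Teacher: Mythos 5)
Your proof is correct, and its overall skeleton (uniqueness of $e_{1}$ from Proposition~\ref{propAtMost3DifferentEdgeWeights}, the Star-condition argument showing no $\sigma_{3}$-edge is incident to $v$, and the reduction of a stray $\sigma_{2}$-edge to a path of $\sigma_{2}$-edges reaching $\lbrace 1,2 \rbrace$ via Lemma~\ref{propMax(w1,w2)<=w3=w4=...=wm}) matches the paper's. Where you genuinely diverge is the final contradiction. The paper first invokes Lemma~\ref{lemw1<w2<w3e_{1}e_{2}IncidentTove3Adjacenttoe1ore2ButNotIncidentTov} to position a $\sigma_{3}$-edge $e_{3}$ adjacent to the pair $\lbrace e_{1}, e_{2} \rbrace$, shows that the $\sigma_{2}$-path from the offending edge must land on vertex $2$, and then applies Lemma~\ref{propMax(w1,w2)<=w3=w4=...=wm} a second time to the path extended through $e_{3}$, forcing $\sigma_{2}=\sigma_{3}$ or $\sigma_{1}=\sigma_{3}$. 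You never need that positional information: instead you apply Lemma~\ref{LemP_MinMax(w1,w2)<w(e)e'InELinkingeTo2} a second time at the auxiliary vertex $w$ to link an \emph{arbitrary} $\sigma_{3}$-edge to $w$, and close with either the Star condition or the Cycle condition on a $4$-cycle through $w$ and $2$; I checked that in each branch the star really has three distinct leaves and the cycle really has four pairwise distinct vertices, so the case analysis is exhaustive and sound. The paper's route buys a single reusable lemma application and avoids your same-endpoint/different-endpoint split; yours buys independence from Lemma~\ref{lemw1<w2<w3e_{1}e_{2}IncidentTove3Adjacenttoe1ore2ButNotIncidentTov} and its three-configuration analysis. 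One caution on your preamble: the claim that the Cycle condition forbids any simple cycle with a strictly unique heaviest edge is false for triangles, since for $m=3$ the condition only reads $w_{1}\leq w_{2}\leq w_{3}=\hat{M}$; this is harmless here because the only cycle you construct has length $4$, but the blanket statement should be restricted to $m\geq 4$.
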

We give in Figure~\ref{figAllEdgesOfWeightw2AreIncidentToSameEnd-VertexOfe1}
an example of a graph
satisfying the conditions of Proposition~\ref{lemAllEdgesOfWeightw2AreIncidentToSameEndVertexofe1}.
\begin{figure}[!h]
\centering
\begin{pspicture}(0,0)(0,1.3)
\tiny
\begin{psmatrix}[mnode=circle,colsep=0.7,rowsep=0.1]
			&			&	 {} & {}\\
{$1$}	& {$2$}	& {$3$}\\
			&			&	 {}
\psset{arrows=-, shortput=nab,labelsep={0.05}}
\tiny
\ncline{2,1}{2,2}_{$\sigma_{1}$}
\ncline{2,2}{2,3}_{$\sigma_{2}$}
\ncline{1,3}{2,2}_{$\sigma_{2}$}
\ncline{1,3}{1,4}^{$\sigma_{3}$}
\ncline{2,2}{3,3}_{$\sigma_{2}$}
\normalsize
\end{psmatrix}
\end{pspicture}
\caption{$\sigma_{1} < \sigma_{2} < \sigma_{3}$.}
\label{figAllEdgesOfWeightw2AreIncidentToSameEnd-VertexOfe1}
\end{figure}
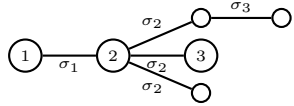
\begin{proof}
$|E_{1}| = 1$ by Proposition~\ref{propAtMost3DifferentEdgeWeights}.
By Lemma~\ref{lemw1<w2<w3e_{1}e_{2}IncidentTove3Adjacenttoe1ore2ButNotIncidentTov}
there exist $e_{1}, e_{2}, e_{3}$ in $E$
with $w_{i} = \sigma_{i}$ for $i \in \lbrace 1,2,3 \rbrace$
and such that one of the three situations
represented in Figure~\ref{figleme_{1}e_{2}IncidentTove3Adjacenttoe1ore2ButNotIncidentTov}
holds.
Let $e$ be an edge in $E_{2} \setminus \lbrace e_{2} \rbrace$
non-incident to $2$.
Let us first assume $e_{3} = \lbrace 3,4 \rbrace $
(Case a in Figure~\ref{figleme_{1}e_{2}IncidentTove3Adjacenttoe1ore2ButNotIncidentTov}).
Then,
$e$ cannot be incident to $3$
(resp. $4$)
otherwise $\lbrace e_{2}, e_{3},e \rbrace$
contradicts the Star
(resp. Path)
condition.
Finally,
$e$ cannot be incident to $1$
otherwise Lemma~\ref{propMax(w1,w2)<=w3=w4=...=wm}
applied to $\lbrace e, e_{1}, e_{2}, e_{3} \rbrace$
implies $\sigma_{2} = \sigma_{3}$, a contradiction.
Let us now assume $e_{3} = \lbrace 3, 1 \rbrace$
(Case b in Figure~\ref{figleme_{1}e_{2}IncidentTove3Adjacenttoe1ore2ButNotIncidentTov}).
Then,
$e$ cannot be incident to $1$ (resp. $3$)
otherwise $\lbrace e, e_{1}, e_{3} \rbrace$ (resp. $\lbrace e, e_{2}, e_{3} \rbrace$) contradicts the Star condition.
Finally,
if $e_{3} = \lbrace 4, 1 \rbrace$
(Case c in Figure~\ref{figleme_{1}e_{2}IncidentTove3Adjacenttoe1ore2ButNotIncidentTov}),
then we can establish as before with $e_{3} = \lbrace 3,4 \rbrace $
that $e$ cannot be incident to $1,3$, and~$4$.
As $G$ is connected,
there exists a shortest path $\gamma$ linking $e$ to $1, 2$ or~$3$.
The Path condition
applied to $ \lbrace e \rbrace \cup \gamma \cup \lbrace e_{1} \rbrace$
or $\lbrace e \rbrace \cup \gamma \cup \lbrace e_{2} \rbrace$
implies that all edges in $\gamma$ have weight $\sigma_{1}$ or $\sigma_{2}$.
As $e_{1}$ is the unique edge with weight $\sigma_{1}$,
we get $w(e) = \sigma_{2}$ for any edge $e$ in $\gamma$.
Let $e'$ be the edge of $\gamma$ incident to $1$, $2$, or~$3$.
As $w(e') = \sigma_{2}$,
the first part of the proof implies that 
$e'$ is necessarily incident to $2$.
If $e_{3} = \lbrace 3,4 \rbrace$
(resp.  $e_{3} = \lbrace 3,1 \rbrace$ or $\lbrace 4,1 \rbrace$),
we consider $\gamma' = \lbrace e \rbrace \cup \gamma \cup \lbrace e_{2}, e_{3} \rbrace$
(resp. $\gamma' = \lbrace e \rbrace \cup \gamma \cup \lbrace e_{1}, e_{3} \rbrace$)
as represented in Figure~\ref{figProofAllEdgesOfWeightw2AreIncidentToSameEnd-VertexOfe4}
(resp. Figure~\ref{figProofAllEdgesOfWeightw2AreIncidentToSameEnd-VertexOfe16})
with $\gamma$ reduced to $e'$.
Then,
Lemma~\ref{propMax(w1,w2)<=w3=w4=...=wm}
applied to $\gamma'$
implies $\sigma_{2} = \sigma_{3}$
(resp. $\sigma_{1} = \sigma_{3}$),
a contradiction.
\begin{figure}[!h]
\centering
\begin{pspicture}(0,-.1)(0,1.3)
\tiny
\begin{psmatrix}[mnode=circle,colsep=0.5,rowsep=0.5]
& {}		&		{}\\
{$1$}	& {$2$}	& {$3$} &  {$4$}
\psset{arrows=-, shortput=nab,labelsep={0.06}}
\tiny
\ncline{1,2}{2,2}^{$\sigma_{2}$}_{$e'$}
\ncline{1,2}{1,3}^{$e$}_{$\sigma_{2}$}
\ncline[linecolor=gray]{2,1}{2,2}_{$e_{1}$}^{$\sigma_{1}$}
\ncline{2,2}{2,3}_{$e_{2}$}^{$\sigma_{2}$}
\ncline{2,3}{2,4}_{$e_{3}$}^{$\sigma_{3}$}
\normalsize
\end{psmatrix}
\end{pspicture}
\caption{$\gamma' = \lbrace e \rbrace \cup \gamma \cup \lbrace e_{2}, e_{3} \rbrace$.}
\label{figProofAllEdgesOfWeightw2AreIncidentToSameEnd-VertexOfe4}
\end{figure}
\begin{figure}[!h]
\centering
\begin{minipage}{4cm}
\begin{pspicture}(0,-.1)(0,.9)
\tiny
\begin{psmatrix}[mnode=circle,colsep=0.5,rowsep=0.4]
			& {$3$}	& 	  	& {}\\
{$1$}	&  			& {$2$}	& {}
\psset{arrows=-, shortput=nab,labelsep={0.05}}
\tiny
\ncline{2,3}{2,4}^{$\sigma_{2}$}_{$e'$}
\ncline{2,4}{1,4}_{$e$}^{$\sigma_{2}$}
\ncline{2,1}{2,3}_{$e_{1}$}^{$\sigma_{1}$}
\ncline[linecolor=gray]{1,2}{2,3}^{$e_{2}$}_{$\sigma_{2}$}
\ncline{1,2}{2,1}_{$e_{3}$}^{$\sigma_{3}$}
\normalsize
\end{psmatrix}
\end{pspicture}
\end{minipage}
\begin{minipage}{3.5cm}
\begin{pspicture}(0,-.1)(0,.9)
\tiny
\begin{psmatrix}[mnode=circle,colsep=0.5,rowsep=0.5]
& {} & {}\\
{$4$}	& {$1$}	& {$2$} & {$3$}
\psset{arrows=-, shortput=nab,labelsep={0.05}}
\tiny
\ncline{1,2}{1,3}_{$\sigma_{2}$}^{$e$}
\ncline{1,3}{2,3}^{$e'$}_{$\sigma_{2}$}
\ncline{2,2}{2,3}_{$e_{1}$}^{$\sigma_{1}$}
\ncline{2,1}{2,2}_{$e_{3}$}^{$\sigma_{3}$}
\ncline[linecolor=gray]{2,3}{2,4}_{$e_{2}$}^{$\sigma_{2}$}
\normalsize
\end{psmatrix}
\end{pspicture}
\end{minipage}
\caption{$\gamma' = \lbrace e \rbrace \cup \gamma \cup \lbrace e_{1}, e_{3} \rbrace$.}
\label{figProofAllEdgesOfWeightw2AreIncidentToSameEnd-VertexOfe16}
\end{figure}

Finally,
if an edge in $E_{3}$ is incident to $v$,
then it contradicts the Star condition.
Then,
Lemma~\ref{LemP_MinMax(w1,w2)<w(e)e'InELinkingeTo2} implies the result.
\end{proof}

A \textit{chordless cycle} in $G$
is an induced cycle,
\emph{i.e.},
a cycle corresponding to an induced subgraph of $G$.

\begin{remark}
By Proposition~\ref{lemAllEdgesOfWeightw2AreIncidentToSameEndVertexofe1}
any cycle containing $e_{1}$
has length at most $4$,
and there are only two possible chordless cycles containing $e_1$
represented in Figure~\ref{figSupplementaryNecessaryConditionsForCyclesm=3or42}.
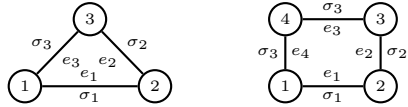
\begin{figure}[!h]
\centering
\subfloat{
\begin{pspicture}(-.5,-.1)(.5,1)
\tiny
\begin{psmatrix}[mnode=circle,colsep=0.4,rowsep=0.4]
&	{$3$}\\
{$1$}	& 	&	{$2$}
\psset{arrows=-, shortput=nab,labelsep={0.05}}
\tiny
\ncline{2,1}{2,3}^{$e_{1}$}_{$\sigma_{1}$}
\ncline{2,3}{1,2}^{$e_{2}$}_{$\sigma_{2}$}
\ncline{2,1}{1,2}_{$e_{3}$}^{$\sigma_{3}$}
\end{psmatrix}
\end{pspicture}
}
\subfloat{
\begin{pspicture}(-.5,-.1)(.5,1)
\tiny
\begin{psmatrix}[mnode=circle,colsep=0.4,rowsep=0.4]
{$4$} &	  & {$3$}\\
{$1$}	& 	&	{$2$}
\psset{arrows=-, shortput=nab,labelsep={0.05}}
\tiny
\ncline{2,1}{2,3}^{$e_{1}$}_{$\sigma_{1}$}
\ncline{2,3}{1,3}^{$e_{2}$}_{$\sigma_{2}$}
\ncline{1,1}{1,3}_{$e_{3}$}^{$\sigma_{3}$}
\ncline{1,1}{2,1}_{$\sigma_{3}$}^{$e_{4}$}
\normalsize
\end{psmatrix}
\end{pspicture}
}
\caption{$\tilde{C}_{3}$ and $\tilde{C}_{4}$ containing the edge of weight $\sigma_1$.}
\label{figSupplementaryNecessaryConditionsForCyclesm=3or42}
\end{figure}
Moreover,
by the Adjacent cycles condition
such a chordless cycle
is necessarily unique
(the existence of two such chordless cycles would imply $\sigma_1 = \sigma_2$).
\end{remark}

We end this section
with supplementary necessary conditions
corresponding to refinements of the Pan condition.
A cycle $C$ is
\emph{constant} if all edges in $E(C)$ have the same weight,
and non-constant otherwise.
\vspace{\baselineskip}

\begin{mdframed}
\textbf{Non-Constant Cycle Refined Pan Condition.}
\it
For all connected subgraphs corresponding to the union of a non-constant simple cycle
$C_m = \lbrace 1, e_{1}, 2, e_{2}, \ldots, m, e_{m}, 1 \rbrace$
with $m \geq 3$
and an elementary path $P$
containing an edge $e$ with $w(e) < \min_{1 \leq j \leq m} w_{j}$,
$e$ is incident to $2$ but  not a chord of $C_m$,
$C_m$ is a complete cycle,
and the edge-weights satisfy
\begin{equation}
\label{eqNonConstantCycleRefinedPanCondition}
w(e) < w_{1} = w_{2} < w_{3} = \cdots = w_{m}
= \hat{M} = \max_{e \in \hat{E}(C_m)} w(e).
\end{equation}
\end{mdframed}

\vspace{\baselineskip}
If $P$ is reduced to $e$,
then a pan satisfying the previous condition is represented
in Figure~\ref{figStrongPanConditionForNonConstantCycle}.
\begin{figure}[!h]
\centering
\begin{pspicture}(0,-.1)(0,1.8)
\tiny
\begin{psmatrix}[mnode=circle,colsep=0.2,rowsep=0.1]
	& { }	&  	&  {3} \\
{ } 	& 	&	& 	&  {$2$}  & &  {$\scriptstyle{m+1}$}\\
	& {$m$} 	& 	& {$1$}
\psset{arrows=-, shortput=nab,labelsep={0.05}}
\tiny
\ncline{3,2}{3,4}_{$\hat{M}$}
\ncline{2,1}{3,2}_{$\hat{M}$}
\ncline{2,1}{1,2}^{$\hat{M}$}
\ncline{3,4}{2,5}_{$w_{1}$}
\ncline{2,5}{1,4}_{$w_{2}$}
\ncline{1,2}{1,4}^{$\hat{M}$}
\ncline{1,2}{3,2}
\ncline{1,2}{3,4}
\ncline{1,2}{2,5}
\ncline{1,4}{2,1}
\ncline{1,4}{3,2}
\ncline{1,4}{3,4}
\ncline{2,1}{2,5}
\ncline{2,1}{3,4}
\ncline{3,2}{2,5}
\ncline{2,5}{2,7}_{$w(e)$}
\normalsize
\end{psmatrix}
\end{pspicture}
\caption{$w(e) < w_{1} = w_{2} < w_{3} = \cdots = w_{m} = \hat{M}$.}
\label{figStrongPanConditionForNonConstantCycle}
\end{figure}

\begin{proposition}
\label{lemStrongPanCondForNonConstantCycle}
If for all $\emptyset \not= S \subseteq N$
the $\mathcal{P}_{\min}$-restricted game $(N,\overline{u_{S}})$ is convex,
then the Non-Constant Cycle Refined Pan Condition is satisfied.
\end{proposition}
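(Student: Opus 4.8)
The plan is to derive the condition from convexity of all $\overline{u_{S}}$ in three steps, using the Star, Path, Cycle and Pan conditions (available through Proposition~\ref{corPathCond}, since convexity of $\overline{u_{S}}$ implies $\mathcal{F}$-convexity), Lemma~\ref{propMax(w1,w2)<=w3=w4=...=wm}, and one direct supermodularity computation. Fix such a configuration; by the pan hypothesis $C_m$ and $P$ share a single vertex. Applying the Pan condition to $C_m$ and $P$ and discarding the constant alternative (as $C_m$ is non-constant) gives, after renumbering, $w_1=w_2<w_3=\cdots=w_m=\hat M=\max_{\hat E(C_m)}w$, identifies the shared vertex as the apex $2$ (so $V(C_m)\cap V(P)=\{2\}$), and—since $w(e)<\min_j w_j=w_1$—yields that $\{1,3\}$ is a maximum-weight chord, i.e.\ $\{1,3\}\in E$ with $w_{13}=\hat M$. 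This proves~(\ref{eqNonConstantCycleRefinedPanCondition}); and since $V(P)\cap V(C_m)=\{2\}$, no edge of $P$ joins two vertices of $C_m$, so $e$ is not a chord of $C_m$.

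Next I would locate $e$ on $P$. Write $P=\{2=v_0,f_1,v_1,f_2,\dots\}$ and suppose $e=f_j$ with $j\ge2$. Reading $P$ from the far endpoint of $e$ back to $2$ and then appending the edges $e_1=\{1,2\}$ and $\{1,3\}$ produces an elementary path $\gamma$ of length $j+2\ge4$ whose first edge is $e$ and whose last edge is $\{1,3\}$. Since $w(e)<\hat M=w_{13}$, Lemma~\ref{propMax(w1,w2)<=w3=w4=...=wm} forces every edge of $\gamma$ from the third on to have weight $\hat M$; but $e_1$ occupies position $j+1\ge3$ on $\gamma$ and $w(e_1)=w_1<\hat M$, a contradiction. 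Hence $j=1$, i.e.\ $e=\{2,v_1\}$ with $v_1\notin V(C_m)$, so $e$ is incident to $2$.

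It remains to prove $C_m$ is complete, which I would do by induction on $m$ ($m=3$: a triangle is complete). Let $m\ge4$. By Proposition~\ref{propAtMost3DifferentEdgeWeights}, $G$ has exactly the three edge-weights $w(e)<\sigma<\hat M$ and $e$ is its unique edge of minimum weight; hence for every $X$ with $\{2,v_1\}\subseteq X$, $\mathcal{P}_{\min}(X)$ is the partition of $X$ into the connected components of $G_X-e$. Suppose a potential chord $\{2,k\}$ ($4\le k\le m$) is absent, and let $A$, $B'$ be the two $C_m$-arcs joining $2$ to $k$, each together with $v_1$, so $A\cap B'=\{2,k,v_1\}$ and $A\cup B'=V(C_m)\cup\{v_1\}$. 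In $G_A-e$, $G_{B'}-e$ and $G_{A\cup B'}-e$ the vertices $2$ and $k$ remain connected (through the respective arc, resp.\ through $C_m$), while in $G_{A\cap B'}-e$ vertex $2$ is isolated from $k$; so with $S=\{2,k\}$, $\Delta\overline{u_S}(A,B')=1+0-1-1=-1<0$, contradicting convexity. Thus every chord at $2$ is present, of weight $\sigma$ by the Cycle condition. Any remaining missing chord $\{a,b\}$ with $a,b\ne2$ is a chord of a strictly shorter non-constant cycle $\hat C$ obtained by closing a sub-arc of $C_m$ with one of the chords at $2$ (of the form $2,3,\dots,c,\{c,2\},2$ or $1,2,c,c+1,\dots,m,1$), at whose apex $e$ is again a light pendant; by the induction hypothesis $\hat C$ is complete, so $\{a,b\}\in E$. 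Hence $C_m$ is complete.

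The points I expect to be delicate are: legitimizing the application of the Pan condition, which presupposes that $C_m$ and $P$ meet in a single vertex, and hence possibly a preliminary reduction to that situation; and verifying that the two families of shorter non-constant sub-cycles built from the chords at $2$ genuinely exhaust all chords of $C_m$ not incident to $2$, so that the induction closes (the rest, including robustness of the supermodularity step to extra edges at $v_1$, follows cleanly once $e$ is known to be the unique minimum-weight edge of $G$).
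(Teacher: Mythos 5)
Your argument is correct in substance, but the route you take for completeness of $C_m$ is genuinely different from the paper's, and the one step you yourself flagged does need repair. The Non-Constant Cycle Refined Pan Condition does \emph{not} assume $|V(C_m)\cap V(P)|=1$, so the Pan condition cannot be invoked directly; moreover your justification that $e$ is not a chord (``since $V(P)\cap V(C_m)=\lbrace 2\rbrace$'') is circular, because reducing to a path meeting $C_m$ in a single vertex already presupposes that $e$ is not a chord. The paper fixes this in the right order: the Cycle condition forces every edge of $\hat E(C_m)$ to have weight at least $w_1>w(e)$, so $e$ is not a chord; one then replaces $P$ by the minimal subpath $P'$ containing $e$ and meeting $V(C_m)$, which meets it in exactly one vertex, and applies the Pan condition to $C_m$ and $P'$. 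With that two-line fix, your derivation of the weight structure, of the chord $\lbrace 1,3\rbrace$, and of the localization of $e$ at the apex via Lemma~\ref{propMax(w1,w2)<=w3=w4=...=wm} all go through (the paper's localization of $e$ is the same idea with a slightly different auxiliary path).

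For completeness of $C_m$ you diverge from the paper. The paper first reduces the existence of the chords $\lbrace 2,k\rbrace$ to the case $m=4$ via Lemma~\ref{LemP_MinMax(w1,w2)<w(e)e'InELinkingeTo2} and then runs a case analysis on $\mathcal{P}_{\min}(A)$, $\mathcal{P}_{\min}(A\cup\lbrace i\rbrace)$, $\mathcal{P}_{\min}(B)$ against Theorem~\ref{thLGNuPNFNFSNuSSNNuSFABFABF}; for chords not incident to $2$ it applies the Pan condition to the sub-cycle $\lbrace 2,e_j',j,\dots,k,e_k',2\rbrace$ with the pendant $e$, whose last clause delivers $\lbrace j,k\rbrace$ as a maximum-weight chord with no induction. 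Your alternative---using Proposition~\ref{propAtMost3DifferentEdgeWeights} to make $e$ the unique minimum-weight edge, so that $\mathcal{P}_{\min}(X)$ is the component partition of $G_X-e$ whenever $\lbrace 2,v_1\rbrace\subseteq X$, and then exhibiting $\Delta\overline{u_{S}}(A,B')=-1$ for $S=\lbrace 2,k\rbrace$ and the two arcs augmented by $v_1$---is valid and arguably cleaner for the chords at $2$, being robust to extra edges at $v_1$ since only connectivity after deleting $e$ matters. Your induction for the remaining chords also closes: chords $\lbrace a,b\rbrace$ with $3\le a<b\le m$ sit in the shorter non-constant cycle on $\lbrace 2,\dots,b\rbrace$ closed by $\lbrace 2,b\rbrace$, chords $\lbrace 1,b\rbrace$ with $4\le b\le m-1$ sit in the shorter cycle on $\lbrace 1,2,b,\dots,m\rbrace$ closed by $\lbrace 2,b\rbrace$, and the only chord missed by both families, $\lbrace 1,3\rbrace$, is already supplied by the Pan condition. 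The trade-off is that your route leans on the global structure of $G$ (uniqueness of the minimum edge), whereas the paper's argument stays local to the pan configuration, and its use of the Pan condition for non-incident chords is shorter than your induction.
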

\begin{proof}
By the Cycle condition any chord of $C_m$ has weight $w_{2}$ or $\hat{M}$.
Therefore,
$e$ cannot be a chord of $C_m$.
Let us assume $e$ non-incident to $C_m$.
Let $P'$ be the shortest path induced by $P$ linking $e$ to $C_m$.
As $C_m$ is a non-constant cycle,
the Pan condition applied to $C_{m}$ and $P'$ implies (\ref{eqNonConstantCycleRefinedPanCondition})
and $V(C_m) \cap V(P') = \lbrace 2 \rbrace$.
If $e$ is not incident to $2$,
then Lemma~\ref{propMax(w1,w2)<=w3=w4=...=wm}
implies that $C_m$ is a constant cycle, a contradiction.
Let us set $e = \lbrace 2, m+1 \rbrace$.

Let us first prove $e_{j}' = \lbrace 2, j \rbrace \in \hat E(C_m)$ for all $j$, $4 \leq j \leq m$.
By Lemma~\ref{LemP_MinMax(w1,w2)<w(e)e'InELinkingeTo2},
it is sufficient to prove the existence of such a chord for $m=4$.
Indeed,
if $e_{j}' \notin \hat E(C_m)$ for a given index $j$,
then Lemma~\ref{LemP_MinMax(w1,w2)<w(e)e'InELinkingeTo2}
applied to $e_{j-1}$ (resp. $e_{j}$)
and to the pair of adjacent edges $\lbrace e_{1},e_{2} \rbrace$
implies that $e_{j-1}'$ (resp. $e_{j+1}'$) exists in $\hat E(C_m)$.
Then,
$\lbrace 2, e_{j-1}', j-1, e_{j-1}, j, e_{j}, j+1, e_{j+1}', 2 \rbrace$
defines a cycle of length $4$
as represented in Figure~\ref{figProofStrongPanConditionForNonConstantCyclem=4}.
\begin{figure}[!h]
\centering
\begin{pspicture}(0,-.1)(0,2.8)
\tiny
\begin{psmatrix}[mnode=circle,colsep=0.3,rowsep=0.2]
	& {$\scriptstyle{j-1}$}	&  	&  {3} \\
{$j$} 	& 	&	& 	&  {$2$} &  & {$\scriptstyle{m+1}$}\\
	& {$\scriptstyle{j+1}$} 	& 	& {$1$}
\psset{arrows=-, shortput=nab,labelsep={0.05}}
\tiny
\ncline[linecolor=gray]{3,2}{3,4}_{$e_{m}$}
\ncline{2,1}{3,2}_{$e_{j}$}
\ncline{2,1}{1,2}^{$e_{j-1}$}
\ncline[linecolor=gray]{3,4}{2,5}_{$e_{1}$}
\ncline[linecolor=gray]{2,5}{1,4}_{$e_{2}$}
\ncline[linecolor=gray]{1,2}{1,4}^{$e_{3}$}
\ncline[linestyle=dotted]{2,1}{2,5}^{$e_{j}'$}
\psset{labelsep={-0.02}}
\ncline{3,2}{2,5}^{$e_{j+1}'$}
\psset{labelsep={-0.06}}
\ncline{1,2}{2,5}^{$e_{j-1}'$}
\psset{labelsep={0.02}}
\ncline{2,5}{2,7}_{$w(e)$}
\end{psmatrix}
\end{pspicture}
\caption{$w(e) < w_{1} = w_{2} < w_{3} = \cdots = w_{m} =
\hat{M}
$}.
\label{figProofStrongPanConditionForNonConstantCyclem=4}
\end{figure}
By contradiction let us assume $e_{4}' \notin \hat E(C_{4})$.
By the Pan condition,
$\lbrace 1, 3\rbrace $ is a maximum weight chord of $C_{4}$.
Let us consider $i = 3$,
$A_{1} = \lbrace 2, 5 \rbrace$, $A_{2}= \lbrace 4 \rbrace$,
$A = A_{1} \cup A_{2}$, and $B = A \cup \lbrace 1 \rbrace$ 
as represented in Figure~\ref{figProofStrongPanConditionForNonConstantCycle}.
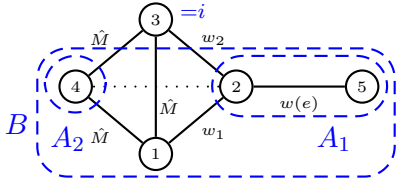
\begin{figure}[!h]
\centering
\begin{pspicture}(0,-.2)(0,2)
\tiny
\begin{psmatrix}[mnode=circle,colsep=0.6,rowsep=0.4]
	& {$3$} \\
{$4$} 	& 	&	 {$2$}  & &  {$5$}\\
	& {$1$}
\psset{arrows=-, shortput=nab,labelsep={0.1}}
\tiny
\ncline{1,2}{2,3}^{$w_{2}$}
\ncline{3,2}{2,3}_{$w_{1}$}
\ncline{2,1}{3,2}_{$\hat{M}$}
\ncline{2,1}{1,2}^{$\hat{M}$}
\ncline{1,2}{3,2}
\naput[npos=0.7,labelsep={0.01}]{$\hat{M}$}
\ncline[linestyle=dotted]{2,1}{2,3}
\ncline{2,3}{2,5}_{$w(e)$}
\end{psmatrix}
\normalsize
\uput[0](-1,.2){\textcolor{blue}{$A_{1}$}}
\pspolygon[framearc=1,linestyle=dashed,linecolor=blue,linearc=.4](-2.2,.5)(-2.2,1.3)(.1,1.3)(.1,.5)
\uput[0](-4.5,.2){\textcolor{blue}{$A_{2}$}}
\pspolygon[framearc=1,linestyle=dashed,linecolor=blue,linearc=.4](-4.4,.55)(-4.4,1.3)(-3.6,1.3)(-3.6,.55)
\uput[0](-2.8,1.9){\textcolor{blue}{$\scriptstyle{=i}$}}
\uput[0](-5.1,.4){\textcolor{blue}{$B$}}
\pspolygon[framearc=1,linestyle=dashed,linecolor=blue,linearc=.4](-4.5,-.3)(-4.5,1.4)(.2,1.4)(.2,-.3)
\end{pspicture}
\caption{$w(e) < w_{1} = w_{2} < w_{3} = w_{4} =
\hat{M}
$.}
\label{figProofStrongPanConditionForNonConstantCycle}
\end{figure}
If $\lbrace 4, 5 \rbrace \in E$,
then the Cycle condition applied to
$\lbrace 1,e_1,2,e,5,\lbrace 5,4\rbrace, 4,e_4,1 \rbrace$
implies $w(\lbrace 4,5\rbrace)=\hat{M}> w(e)$.
Hence,
we have either $\mathcal{P}_{\min}(A) = \lbrace \lbrace 2 \rbrace, \lbrace 4 , 5 \rbrace \rbrace$
or $\mathcal{P}_{\min}(A) = \lbrace \lbrace 2 \rbrace, \lbrace 4 \rbrace, \lbrace 5 \rbrace \rbrace$.
Moreover,
$\mathcal{P}_{\min}(A \cup \lbrace i \rbrace) = \lbrace \lbrace 2, 3, 4 \rbrace, \lbrace 5 \rbrace \rbrace$
or $\lbrace A \cup \lbrace i \rbrace \rbrace$,
and $\mathcal{P}_{\min}(B) = \lbrace \lbrace 1, 2, 4 \rbrace, \lbrace 5 \rbrace \rbrace$
or $\lbrace B \rbrace$.
If $\mathcal{P}_{\min}(A \cup \lbrace i \rbrace) = \lbrace \lbrace 2, 3, 4 \rbrace, \lbrace 5 \rbrace \rbrace$,
then taking $A' = \lbrace 2,3,4 \rbrace$
we get $\mathcal{P}_{\min}(A)_{|A'} = \lbrace \lbrace 2 \rbrace, \lbrace 4 \rbrace \rbrace
\not= \lbrace 2, 4 \rbrace = \mathcal{P}_{\min}(B)_{|A'}$
contradicting Theorem~\ref{thLGNuPNFNFSNuSSNNuSFABFABF}.
Otherwise,
taking $A' = A \cup \lbrace i \rbrace$,
we get
$\mathcal{P}_{\min}(A)_{|A'} =
\lbrace \lbrace 2 \rbrace, \lbrace 4 \rbrace,\lbrace 5 \rbrace \rbrace$
or
$\lbrace \lbrace 2 \rbrace, \lbrace 4, 5 \rbrace \rbrace$
and
$\mathcal{P}_{\min}(B)_{|A'} = \lbrace \lbrace 2, 4 \rbrace, \lbrace 5 \rbrace \rbrace$
or
$\lbrace \lbrace 2, 4, 5 \rbrace \rbrace$.
Therefore,
we always have
$\mathcal{P}_{\min}(A)_{|A'} \not= \mathcal{P}_{\min}(B)_{|A'}$
and it contradicts Theorem~\ref{thLGNuPNFNFSNuSSNNuSFABFABF}.

Let us now prove $\lbrace j, k \rbrace \in \hat E(C_{m})$ for all pairs of vertices $j$, $k$,
with $3 \leq j \leq m-1$ and $k = 1$ or $j+2 \leq k \leq m$.
We have $\lbrace 2, j \rbrace$ and $\lbrace 2, k \rbrace$ in $\hat E(C_{m})$.
\begin{figure}[!h]
\centering
\begin{pspicture}(0,-.2)(0,2.3)
\tiny
\begin{psmatrix}[mnode=circle,colsep=0.4,rowsep=0.4]
	& {$j$}	&  	&  {3} \\
{ } 	& 	&	& 	&  {$2$}  & &  { }\\
	& {$k$} 	& 	& {$1$}
\psset{arrows=-, shortput=nab,labelsep={0.1}}
\tiny
\ncline[linecolor=gray]{3,2}{3,4}_{$\hat{M}$}
\ncline{2,1}{3,2}_{$\hat{M}$}
\ncline{2,1}{1,2}^{$\hat{M}$}
\ncline[linecolor=gray]{3,4}{2,5}_{$w_{1}$}
\ncline[linecolor=gray]{2,5}{1,4}_{$w_{2}$}
\ncline[linecolor=gray]{1,2}{1,4}^{$\hat{M}$}
\ncline[linestyle=dotted]{1,2}{3,2}
\psset{labelsep={-0.02}}
\ncline{1,2}{2,5}_{$e_{j}'$}
\ncline{3,2}{2,5}^{$e_{k}'$}
\psset{labelsep={0.1}}
\ncline[linecolor=gray]{2,1}{2,5}
\ncline{2,5}{2,7}_{$w(e)$}
\normalsize
\end{psmatrix}
\end{pspicture}
\caption{$\tilde{C}_{m} = \lbrace 2, e_{j}', j, e_{j}, j+1, \ldots,
k, e_{k}', 2 \rbrace$ and $w_{j}' = w_{k}' = w_{1} = w_{2}$.}
\label{figProofStrongPanConditionForNonConstantCycle2}
\end{figure}
Then,
the Pan condition applied to
$\tilde{C}_{m} = \lbrace 2, e_{j}', j, e_{j}, j+1, \ldots, k, e_{k}', 2 \rbrace$ and $e$
as represented in Figure~\ref{figProofStrongPanConditionForNonConstantCycle2} implies that
$\lbrace j, k \rbrace$ is a maximum weight chord of $\tilde{C}_{m}$.
\end{proof}


We finally establish
necessary conditions on constant cycles and pans associated with constant cycles.

\begin{proposition}
\label{lemStrongPanConditionForCyclesWithConstantWeights}
Let us assume that
for all $\emptyset \not= S \subseteq N$,
the
$\mathcal{P}_{\min}$-restricted game $(N,\overline{u_{S}})$ is convex
and that the edge-weights have at most three different values
$\sigma_{1} < \sigma_{2} \leq \sigma_{3}$.
Then
\begin{enumerate}
\item
\label{itemLemmeE12w2=w3w2}
If $|E_{1}| \geq 2$ (then $\sigma_{2} = \sigma_{3}$),
then every cycle with constant weight $\sigma_{2}$ is complete.
\item
\label{itemLemmeE1=1E&=e1w2e1}
If $|E_{1}| = 1$ with $E_{1} = \lbrace e_{1} \rbrace$
and if there exists a cycle $C$ with constant weight $\sigma_{2}$,
then there are only two different edge-weights ($\sigma_{2} = \sigma_{3}$).
Moreover,
if $C$ is not incident to $e_{1}$
and not linked to $e_{1}$ by an edge,
then $C$ is complete.
\item
\label{itemLemmeE1=1E1=e1=12Cw2w322jECjVC}
If $|E_{1}| = 1$ with $E_{1} = \lbrace e_{1} \rbrace$ and $e_{1} = \lbrace 1, 2 \rbrace$,
then for every cycle $C$ with constant weight $\sigma_{2}$ or $\sigma_{3}$ 
and incident to $1$ (resp. $2$),
$\lbrace 1, j \rbrace \in E$ for all $j \in V(C)\setminus\lbrace 1\rbrace$
(resp. $\lbrace 2, j \rbrace \in E$ for all $j \in V(C)\setminus\lbrace 2\rbrace$).
\item
\label{itemLemmeE1=1E1=e1e1=12Cw2w32e122'w2w3}
If $|E_{1}| = 1$ with $E_{1} = \lbrace e_{1} \rbrace$ and $e_{1} = \lbrace 1, 2 \rbrace$,
then for every cycle $C$ with constant weight $\sigma_{2}$
or $\sigma_{3}$
and not adjacent to $e_{1}$
but linked to $e_{1}$ by an edge $e = \lbrace 2, k \rbrace$
(of weight $\sigma_{2}$)
with $k \in V(C)$,
one of the following conditions is satisfied:
\begin{enumerate}
\item
\label{itemItemLemme1jEjVC}
$\lbrace 1, j \rbrace \in E$ for all $j \in V(C)$.
\item
\label{itemItemLemme2jEjVC} 
$\lbrace 2, j \rbrace \in E$ for all $j \in V(C)$.
\item
\label{itemItemLemme2jEjVCkC}
There is no edge $\lbrace 2, j \rbrace$ in $E$
with $j \in V(C) \setminus \lbrace k \rbrace$
and $C$ is complete.
\end{enumerate}
\item
\label{itemLemmew_1<w_2<w_3e1=1,2w_1w_22C_me_11V(C_m)m=3C_3e_1}
Let us assume that the edge-weights have three different values $\sigma_{1} < \sigma_{2} < \sigma_{3}$.
Let $e_{1} = \lbrace 1, 2 \rbrace$ be the unique edge in $E_{1}$
and let us assume all edges in $E_{2}$ incident to~$2$.
Then,
every cycle $C_{m}$
with $e_{1} \notin E(C_{m})$ is complete
and $e_{1} \notin \hat{E}(C_{m})$.
\end{enumerate}
\end{proposition}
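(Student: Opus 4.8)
The plan is to prove all five statements by contradiction, the common engine being Theorem~\ref{thLGNuPNFNFSNuSSNNuSFABFABF} applied with $\mathcal{F} = 2^{N} \setminus \lbrace \emptyset \rbrace$; this is legitimate since the hypothesis makes every $(N,\overline{u_{S}})$ convex, hence superadditive (so the theorem applies) and $\mathcal{F}$-convex for this $\mathcal{F}$. Throughout I would also use the Star, Path, Cycle, Pan and Adjacent cycles conditions (Proposition~\ref{corPathCond}), the Non-Constant Cycle Refined Pan Condition (Proposition~\ref{lemStrongPanCondForNonConstantCycle}), and Propositions~\ref{propAtMost3DifferentEdgeWeights}, \ref{propEitherOnlyOneEdgeOfWeightw1OrAllEdgesOFWeightw1AreIncidentToSameVertex}, \ref{lemAllEdgesOfWeightw2AreIncidentToSameEndVertexofe1}. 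Two preliminary remarks simplify matters. First, the ``$\sigma_{2}=\sigma_{3}$'' clauses of items~\ref{itemLemmeE12w2=w3w2} and~\ref{itemLemmeE1=1E&=e1w2e1}: if there were three values, Proposition~\ref{lemAllEdgesOfWeightw2AreIncidentToSameEndVertexofe1} would force every edge of $E_{2}$ to be incident to a single vertex, and a cycle cannot have all its edges incident to one vertex, so no constant $\sigma_{2}$-cycle could exist; hence whenever a constant $\sigma_{2}$-cycle is present there are only two values. Second, by the Cycle condition a constant cycle $C$ whose common weight $w$ is the largest weight occurring in $G$ has $\hat M = w$ and all its chords of weight $w$, and $e_{1}$ (of unique minimum weight when $|E_{1}|=1$) cannot be a chord of $C$; in particular at most one of $1,2$ lies on such a cycle.

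The computation underlying everything is the following: if $C$ is a constant cycle of weight $w$ equal to the largest weight in $G$, then for a vertex set $A$, $\mathcal{P}_{\min}(A)$ restricted to $V(C)\cap A$ is the singleton partition whenever $E(A)$ contains no edge of weight $<w$, whereas if $E(A)$ does contain such an edge then $\Sigma(A)$ is disjoint from $\hat E(C)$, so $G_{V(C)\cap A}$ — if connected in $C$ — lies in a single block of $\mathcal{P}_{\min}(A)$. The only edges of weight $<w$ available are $e_{1}=\lbrace 1,2\rbrace$ (when $|E_{1}|=1$), the $\sigma_{1}$-edges at the distinguished vertex $v$ (when $|E_{1}|\geq 2$; deleting those isolates $v$ since it bears no $\sigma_{2}$-edge), and, in the three-value setting of item~\ref{itemLemmew_1<w_2<w_3e1=1,2w_1w_22C_me_11V(C_m)m=3C_3e_1} with $w=\sigma_{3}$, also the $\sigma_{2}$-edges at $2$ (deleting those isolates $2$). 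Now each of items~\ref{itemLemmeE12w2=w3w2}, \ref{itemLemmeE1=1E&=e1w2e1}, \ref{itemLemmeE1=1E1=e1=12Cw2w322jECjVC}, \ref{itemLemmeE1=1E1=e1e1=12Cw2w32e122'w2w3} asserts that some vertex $x$ is adjacent to some vertex $y$ of $C$; assuming not, one reduces — by picking a closest non-edge of $G_{V(C)}$, or by walking along $C$ from the distinguished vertex — to a non-edge $\lbrace x,y\rbrace$ with a common neighbour $z\in V(C)$, and applies Theorem~\ref{thLGNuPNFNFSNuSSNNuSFABFABF} with $i=z$, $A=\lbrace 1,2\rbrace\cup\lbrace x,y\rbrace$ (whose induced subgraph contains the low edge $e_{1}$ but no edge between $x$ and $y$, so $\mathcal{P}_{\min}(A)$ separates $x$ and $y$), $A\cup\lbrace z\rbrace$ (in which the path $x$--$z$--$y$ of weight $w$ survives the deletion of $\Sigma(A\cup\lbrace z\rbrace)=\lbrace e_{1}\rbrace$, so the block $A'$ of $\mathcal{P}_{\min}(A\cup\lbrace z\rbrace)_{|A}$ contains both $x$ and $y$), and $B=(V(C)\setminus\lbrace z\rbrace)\cup\lbrace 1,2\rbrace$ (in which $V(C)\setminus\lbrace z\rbrace$ is connected of weight $w$ and lies in a single block of $\mathcal{P}_{\min}(B)$ after deletion of $e_{1}$). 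Then $\mathcal{P}_{\min}(A)_{|A'}\neq\mathcal{P}_{\min}(B)_{|A'}$, contradicting Theorem~\ref{thLGNuPNFNFSNuSSNNuSFABFABF}. For item~\ref{itemLemmeE12w2=w3w2} one replaces $\lbrace 1,2\rbrace$ by a $\sigma_{1}$-edge at $v$; in items~\ref{itemLemmeE1=1E1=e1=12Cw2w322jECjVC} and~\ref{itemLemmeE1=1E1=e1e1=12Cw2w32e122'w2w3} one must additionally track whether $2$ is joined to $V(C)$, and in item~\ref{itemLemmeE1=1E1=e1e1=12Cw2w32e122'w2w3} run the argument at each admissible position of the linking edge $e=\lbrace 2,k\rbrace$, the three alternatives being precisely the configurations in which this contradiction (and the symmetric one using $i=1$ or $i=2$ with $A=\lbrace 2,x,y\rbrace$ resp.\ $\lbrace 1,x,y\rbrace$) cannot be produced.

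For item~\ref{itemLemmew_1<w_2<w_3e1=1,2w_1w_22C_me_11V(C_m)m=3C_3e_1}, if $1,2\in V(C_{m})$ then $e_{1}$ would be a chord of $C_{m}$, hence of weight $w_{2}$ or $\hat M$ of $C_{m}$ by the Cycle condition, forcing an edge of $C_{m}$ of weight $\sigma_{1}$ distinct from $e_{1}$ — impossible; so $e_{1}\notin\hat E(C_{m})$. Using that all $\sigma_{2}$-edges are incident to $2$ and no $\sigma_{3}$-edge is (Proposition~\ref{lemAllEdgesOfWeightw2AreIncidentToSameEndVertexofe1}), a non-constant $C_{m}$ must have exactly its two edges at $2$ of weight $\sigma_{2}$ and all others of weight $\sigma_{3}=\hat M$, with $2\in V(C_{m})$ and $1\notin V(C_{m})$; then $e_{1}$ is a path attached to $C_{m}$ at $2$ with $w(e_{1})<\min_{j}w_{j}$, and the Non-Constant Cycle Refined Pan Condition gives at once that $C_{m}$ is complete. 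If $C_{m}$ is constant it has weight $\sigma_{3}$ with $2\notin V(C_{m})$ (else $2$ would be incident to a $\sigma_{3}$-edge), and $2$ has a $\sigma_{2}$-neighbour on $C_{m}$ (otherwise every edge of $C_{m}$ would be incident to $1$, impossible for $m\geq 3$); then completeness follows as in the second paragraph, using $A=\lbrace 2,x,y\rbrace$ (whose only possible edges are $\sigma_{2}$-edges at $2$, so $\mathcal{P}_{\min}(A)$ is singleton), $i=z$ a common neighbour of $x,y$ on $C_{m}$ (reactivating a $\sigma_{2}$-edge so that the $\sigma_{3}$-path $x$--$z$--$y$ survives), and $B=(V(C_{m})\setminus\lbrace z\rbrace)\cup\lbrace 1,2\rbrace$ or $\lbrace 1,2,x,y\rbrace$, whichever keeps $x$ and $y$ in one block while excluding $z$.

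The ideas are thus uniform. The main obstacle is the combinatorial bookkeeping: for every position of the missing adjacency and of the low edge one must identify exactly the edges of $G_{A}$, $G_{A\cup\lbrace i\rbrace}$, $G_{B}$, compute $\Sigma(\cdot)$ and the resulting components correctly, and guarantee that the auxiliary common neighbour $z$ exists and that the required low edge is indeed present in the relevant sets — which is where one invokes the walk along $C$, the Path and Star conditions, and the fact that a cycle through $e_{1}$ has length at most $4$ — and, in item~\ref{itemLemmeE1=1E1=e1e1=12Cw2w32e122'w2w3}, one must check that the three listed alternatives genuinely exhaust all cases.
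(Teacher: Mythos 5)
Your toolkit is the right one, and for Claims~1 and~2 your purely local argument (a non-edge $\lbrace x,y\rbrace$ at distance two in $G_{V(C)}$, a common neighbour $z$ taken as $i$, and the sets $A$, $A\cup\lbrace z\rbrace$, $B=(V(C)\setminus\lbrace z\rbrace)\cup\lbrace 1,2\rbrace$ fed into Theorem~\ref{thLGNuPNFNFSNuSSNNuSFABFABF}) is a workable alternative to the paper's route, which instead observes that $\overline{v}(A\cup\lbrace j\rbrace)=v^{M}(A)$ (resp. $\overline{v}(A\cup\lbrace 1,2\rbrace)=v^{M}(A)$), transfers convexity to the Myerson game restricted to $V(C)$, and invokes the cycle-completeness characterization of van den Nouweland and Borm wholesale. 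That reduction buys the whole cycle at once and avoids your case analysis on which of $x,y,z$ carries a $\sigma_{1}$-edge to $v$. For Claim~3 your plan coincides in spirit with the paper's (establish $\lbrace 2,4\rbrace\in E$ by a specific set construction, then iterate around the cycle).

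The genuine gaps are in Claims~4 and~5. For Claim~4 the entire content of the statement is that alternatives (a), (b), (c) exhaust all cases, and you explicitly defer exactly that point. This is where the paper does most of its work: it first applies Claim~3 to the two cycles obtained by splitting $C$ at a putative edge $\lbrace 2',j\rbrace$ to force alternative (b); then, under the residual hypothesis that $2'$ (and, when relevant, $1'$) has no other neighbour on $C$, it proves completeness in two stages --- first $\lbrace 2,j\rbrace\in\hat E(C)$ for all $j$ via the sets $A_{1}=\lbrace 4\rbrace$, $A_{2}=\lbrace 1',2',2\rbrace$, then the remaining chords $\lbrace j,k\rbrace$, where the Adjacent cycles condition must be invoked to exclude that both $\lbrace 1',j\rbrace$ and $\lbrace 1',k\rbrace$ are present (two chordless cycles sharing the $\sigma_{1}$-edge) before the final partition computation goes through. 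None of this is in your outline. For Claim~5, the constant-weight case with $1\in V(C_{m})$ defeats your engine: by Claim~3 every vertex of $C_{m}$ is joined to $1$, so $A=\lbrace 1,2,x,y\rbrace$ leaves $x$ and $y$ connected through $1$ after deleting $e_{1}$ and fails to separate them, while $A=\lbrace 2,x,y\rbrace$ only has a reactivatable $\sigma_{2}$-edge when one of $x,y,z$ is the unique vertex of $C_{m}$ linked to $2$ --- and that uniqueness is itself something to prove (two such links give two triangles sharing $e_{1}$, contradicting the Adjacent cycles condition). The paper's proof first bounds $m\leq 4$ (every $E_{3}$-edge must be incident to $1$ or linked to $2$ by an $E_{2}$-edge, and only one such link exists) and then kills $m=4$ with the bespoke sets $A=\lbrace 2,2',4'\rbrace$, $B=A\cup\lbrace 3'\rbrace$, $i=1$; without that bound your local argument cannot reach pairs of cycle vertices far from both $1$ and the linking vertex.
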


Situations corresponding to Claims~\ref{itemLemmeE1=1E1=e1=12Cw2w322jECjVC}
and \ref{itemLemmeE1=1E1=e1e1=12Cw2w32e122'w2w3}
in Proposition~\ref{lemStrongPanConditionForCyclesWithConstantWeights}
are represented in Figure~\ref{figlemStrongPanConditionForCyclesWithConstantWeights}.

\begin{figure}[!h]
\centering
\subfloat[Claim~\ref{itemLemmeE1=1E1=e1=12Cw2w322jECjVC}]{
\begin{pspicture}(0,-.7)(0,2)
\tiny
\begin{psmatrix}[mnode=circle,colsep=0.4,rowsep=0.25]
{}	&	  &  {} \\
	& 	&	& 	  {$2$} & & {$1$} \\
{}	& 	&  {}
\psset{arrows=-, shortput=nab,labelsep={0.1}}
\tiny
\ncline{3,1}{3,3}_{$\sigma_{2}$}
\ncline{3,3}{2,4}_{$\sigma_{2}$}
\ncline{2,4}{1,3}_{$\sigma_{2}$}
\ncline{1,3}{1,1}_{$\sigma_{2}$}
\ncline{1,1}{3,1}_{$\sigma_{2}$}
\ncline{3,1}{2,4}
\ncline{1,1}{2,4}
\ncline{2,4}{2,6}_{$e_{1}$}^{$\sigma_{1}$}
\normalsize
\end{psmatrix}
\end{pspicture}
}
\hspace{1cm}
\subfloat[Claim~\ref{itemItemLemme1jEjVC}]{
\begin{pspicture}(0,-.7)(0,2)
\tiny
\begin{psmatrix}[mnode=circle,colsep=0.4,rowsep=0.2]
{}	&	  &  {} \\
	& 	&	& 	  {$2'$} & & {$2$} & & {$1$} \\
{}	& 	&  {}
\psset{arrows=-, shortput=nab,labelsep={0.05}}
\tiny
\ncline{2,6}{2,8}_{$e_{1}$}^{$\sigma_{1}$}
\ncline{3,1}{3,3}_{$\sigma_{2}$}
\ncline{3,3}{2,4}_{$\sigma_{2}$}
\ncline{2,4}{1,3}_{$\sigma_{2}$}
\ncline{1,3}{1,1}_{$\sigma_{2}$}
\ncline{1,1}{3,1}_{$\sigma_{2}$}
\ncline{2,4}{2,6}_{$\sigma_{2}$}
\ncarc[arcangle=40]{1,1}{2,8}^{$\sigma_{2}$}
\ncarc[arcangle=25]{1,3}{2,8}^{$\sigma_{2}$}
\ncarc[arcangle=-25]{3,3}{2,8}_{$\sigma_{2}$}
\ncarc[arcangle=-40]{3,1}{2,8}_{$\sigma_{2}$}
\ncarc[arcangle=25]{2,4}{2,8}^{$\sigma_{2}$}
\normalsize
\end{psmatrix}
\end{pspicture}
}\\
\subfloat[Claim~\ref{itemItemLemme2jEjVC}]{
\begin{pspicture}(0,-.7)(0,2.1)
\tiny
\begin{psmatrix}[mnode=circle,colsep=0.4,rowsep=0.2]
{}	&	  &  {} \\
	& 	&	& 	  {$2'$} & & {$2$} & & {$1$} \\
{}	& 	&  {}
\psset{arrows=-, shortput=nab,labelsep={0.05}}
\tiny
\ncline{2,6}{2,8}_{$e_{1}$}^{$\sigma_{1}$}
\ncline{3,1}{3,3}_{$\sigma_{2}$}
\ncline{3,3}{2,4}_{$\sigma_{2}$}
\ncline{2,4}{1,3}_{$\sigma_{2}$}
\ncline{1,3}{1,1}_{$\sigma_{2}$}
\ncline{1,1}{3,1}_{$\sigma_{2}$}
\ncline{2,4}{2,6}_{$\sigma_{2}$}
\ncarc[arcangle=50]{1,1}{2,6}^{$\sigma_{2}$}
\ncarc[arcangle=20]{1,3}{2,6}^{$\sigma_{2}$}
\ncarc[arcangle=-20]{3,3}{2,6}_{$\sigma_{2}$}
\ncarc[arcangle=-50]{3,1}{2,6}_{$\sigma_{2}$}
\normalsize
\end{psmatrix}
\end{pspicture}
}
\hspace{1cm}
\subfloat[Claim~\ref{itemItemLemme2jEjVCkC}]{
\begin{pspicture}(0,-.7)(0,1.8)
\tiny
\begin{psmatrix}[mnode=circle,colsep=0.4,rowsep=0.2]
{}	&	  &  {} \\
	& 	&	& 	  {$2'$} & & {$2$} & & {$1$} \\
{}	& 	&  {}
\psset{arrows=-, shortput=nab,labelsep={0.1}}
\tiny
\ncline{3,1}{3,3}_{$\sigma_{2}$}
\ncline{3,3}{2,4}_{$\sigma_{2}$}
\ncline{2,4}{1,3}_{$\sigma_{2}$}
\ncline{1,3}{1,1}_{$\sigma_{2}$}
\ncline{1,1}{3,1}_{$\sigma_{2}$}
\ncarc[arcangle=35,linestyle=dotted]{2,4}{2,8}^{$\sigma_{2}$}
\ncline{1,1}{3,3}
\ncline{1,3}{3,3}
\ncline{1,3}{3,1}
\ncline{3,1}{2,4}
\ncline{1,1}{2,4}
\ncline{2,4}{2,6}^{$\sigma_{2}$}
\ncline{2,6}{2,8}_{$e_{1}$}^{$\sigma_{1}$}
\normalsize
\end{psmatrix}
\end{pspicture}
}
\caption{Situations of Claims~\ref{itemLemmeE1=1E1=e1=12Cw2w322jECjVC}
and~\ref{itemLemmeE1=1E1=e1e1=12Cw2w32e122'w2w3}
in Proposition~\ref{lemStrongPanConditionForCyclesWithConstantWeights}.}
\label{figlemStrongPanConditionForCyclesWithConstantWeights}
\end{figure}
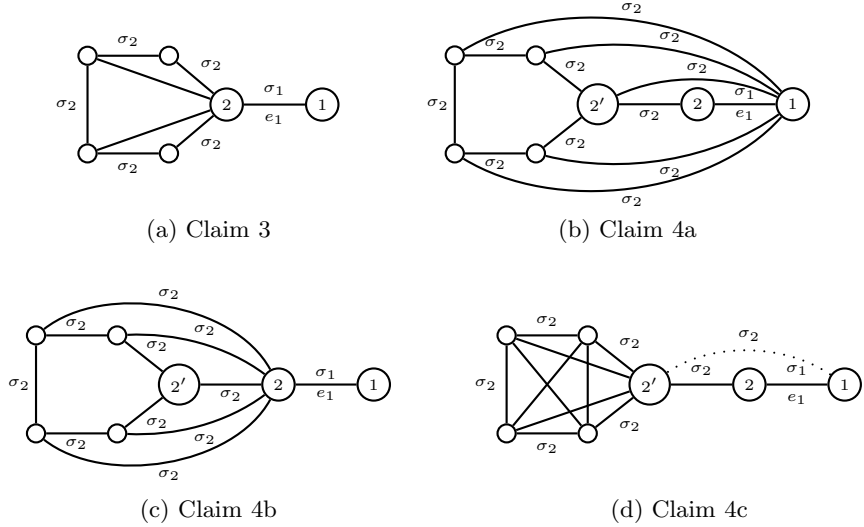

\begin{remark}
If the edge-weights have three different values $\sigma_{1} < \sigma_{2} < \sigma_{3}$,
then it follows from Claim~\ref{itemLemmew_1<w_2<w_3e1=1,2w_1w_22C_me_11V(C_m)m=3C_3e_1}
in Proposition~\ref{lemStrongPanConditionForCyclesWithConstantWeights}
that the cycle $C$ considered in Claim~\ref{itemLemmeE1=1E1=e1=12Cw2w322jECjVC}
is a triangle
(of constant weight $\sigma_{3}$).
As a triangle has no chord
Claim~\ref{itemLemmeE1=1E1=e1=12Cw2w322jECjVC} adds nothing
in the particular case of three different edge-weights.
But, $a\, priori$,
we have to keep this case
in Claim~\ref{itemLemmeE1=1E1=e1=12Cw2w322jECjVC}
to be able to prove Claim~\ref{itemLemmew_1<w_2<w_3e1=1,2w_1w_22C_me_11V(C_m)m=3C_3e_1}.
\end{remark}

\begin{proof}
\ref{itemLemmeE12w2=w3w2}.
By Proposition~\ref{propAtMost3DifferentEdgeWeights}
the edge-weights have at most two different values $\sigma_{1} < \sigma_{2}$.
By Proposition~\ref{propEitherOnlyOneEdgeOfWeightw1OrAllEdgesOFWeightw1AreIncidentToSameVertex},
all edges in $E_{1}$ are incident to the same vertex $j$
and no edge in $E_{2}$ is incident to $j$.
Let us consider a cycle $C$
with constant weight $\sigma_{2}$.
By Lemma~\ref{LemP_MinMax(w1,w2)<w(e)e'InELinkingeTo2},
every edge in $E_{2}$ is linked to $j$ by an edge in $E_{1}$
as represented in Figure~\ref{figProofLemStrongPanConditionForCyclesWithConstantWeights}.
\begin{figure}[!h]
\centering
\begin{pspicture}(0,0)(0,1.9)
\tiny
\begin{psmatrix}[mnode=circle,colsep=0.4,rowsep=0.2]
{}	&	  &  {} \\
	& 	&	& 	  {} & & {$j$} \\
{}	& 	&  {}
\psset{arrows=-, shortput=nab,labelsep={0.05}}
\tiny
\ncline{3,1}{3,3}_{$\sigma_{2}$}
\ncline{3,3}{2,4}^{$\sigma_{2}$}
\ncline{2,4}{1,3}^{$\sigma_{2}$}
\ncline{1,3}{1,1}^{$\sigma_{2}$}
\ncline{1,1}{3,1}_{$\sigma_{2}$}
\ncarc[arcangle=20]{1,3}{2,6}_{$\sigma_{1}$}
\ncarc[arcangle=40]{1,1}{2,6}^{$\sigma_{1}$}
\ncarc[arcangle=-20]{3,3}{2,6}_{$\sigma_{1}$}
\ncline{2,4}{2,6}^{$\sigma_{1}$}
\normalsize
\end{psmatrix}
\end{pspicture}
\caption{Every edge in $E_{2}$ is linked to $j$ by an edge in $E_{1}$.}
\label{figProofLemStrongPanConditionForCyclesWithConstantWeights}
\end{figure}
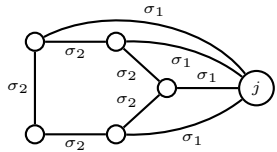
Let us consider a given game $(N,v)$,
the $\mathcal{P}_{\min}$-restricted game $(N,\overline{v})$,
and the Myerson restricted game $(N,v^{M})$.
We have
$\overline{v}(A \cup \lbrace j \rbrace) = v^{M}(A) + v({j}) = v^{M}(A)$
for all $A \subseteq N \setminus \lbrace j \rbrace$.
Hence,
for $i \in V(C)$ and $A \subseteq B \subseteq V(C) \setminus \lbrace i,j \rbrace$
the inequality
\begin{equation}
\label{equSBji-uSBj>=uSAji-uSAj}
\overline{v}(B \cup \lbrace j \rbrace \cup \lbrace i \rbrace)
- \overline{v}(B \cup \lbrace j \rbrace)
\geq
\overline{v}(A \cup \lbrace j \rbrace \cup \lbrace i \rbrace)
- \overline{v}(A \cup \lbrace j \rbrace)
\end{equation}
is equivalent to
\begin{equation}
\label{equSMBi-uSMB>=uSMAi-uSMA}
v^{M}(B \cup \lbrace i \rbrace) - v^{M}(B)
\geq
v^{M}(A \cup \lbrace i \rbrace) - v^{M}(A).
\end{equation}
As $(N,\overline{u_{S}})$ is convex,
(\ref{equSBji-uSBj>=uSAji-uSAj})
and therefore (\ref{equSMBi-uSMB>=uSMAi-uSMA}) are satisfied with $v = u_{S}$.
Then,
$u_{S}^{M}$ is convex if we restrict $G$ to $V(C)$,
and 
$C$ has to be complete
by Theorems~\ref{theoremEquivalenceInheritanceofConvexityForUnanimityGamesP_MAndP_min}
and~\ref{NouwelandandBorm1991}.\\

\noindent
\ref{itemLemmeE1=1E&=e1w2e1}.
If there are three different edge-weights,
then by Proposition~\ref{lemAllEdgesOfWeightw2AreIncidentToSameEndVertexofe1}
there is no cycle with constant weight $\sigma_{2}$.
Let us consider a cycle $C$
with constant weight $\sigma_{2}$ non-incident to $e_{1} = \lbrace 1, 2 \rbrace$
and not linked by an edge to $e_{1}$.
For any game $(N,v)$
and for $A \subseteq N \setminus \lbrace 1, 2 \rbrace$
such that there is no edge linking $A$ to $\lbrace 1, 2 \rbrace$,
we have
\begin{equation}
\label{eqv(A12)=v^M(A)+v(1)+v(2)=v^M(A)}
\overline{v}(A \cup \lbrace 1, 2 \rbrace) =
v^{M}(A) + v(\lbrace 1 \rbrace) + v(\lbrace 2 \rbrace) = 
v^{M}(A).
\end{equation}
Hence,
for $i \in V(C)$ and for $A \subseteq B \subseteq V(C) \setminus \lbrace i \rbrace$,
the subsets $A$, $B$, $A \cup \lbrace i \rbrace$,
and $B \cup \lbrace i \rbrace$
satisfy (\ref{eqv(A12)=v^M(A)+v(1)+v(2)=v^M(A)}).
Then,
the inequality
$  \overline{v}((B\cup \lbrace 1, 2 \rbrace) \cup \lbrace i \rbrace)
- \overline{v}(B\cup \lbrace 1, 2 \rbrace) 
\geq
\overline{v}((A\cup \lbrace 1, 2 \rbrace) \cup \lbrace i \rbrace)
- \overline{v}(A\cup \lbrace 1, 2 \rbrace)$
is equivalent to:
$v^{M}(B \cup \lbrace i \rbrace) - v^{M}(B)
\geq
v^{M}(A \cup \lbrace i \rbrace) - v^{M}(A)$.
Therefore,
taking $v = u_{S}$,
we can conclude as in the previous case.\\

\noindent
\ref{itemLemmeE1=1E1=e1=12Cw2w322jECjVC}.
Let us consider
$E_{1} = \lbrace e_{1}' \rbrace$
with $e_{1}' = \lbrace 1', 2 \rbrace$
and a cycle $C = \lbrace 1, e_{1}, 2, e_{2},\\ \ldots, m, e_{m}, 1 \rbrace$ incident to $2$.
We can assume w.l.o.g. that $C$ has constant weight $\sigma_{2}$.
Note that,
by the Cycle condition,
$e_{1}'$ cannot be a chord of $C$.
By contradiction let us assume $\lbrace 2, 4 \rbrace \notin E$.
Let us consider $i = 3$,
$A_{1} = \lbrace 4 \rbrace$,
$A_{2} = \lbrace 1', 2 \rbrace$,
$A = A_{1} \cup A_{2}$,
and $B = (V(C) \setminus \lbrace i \rbrace) \cup \lbrace 1'\rbrace$
as represented in Figure~\ref{figProofLemStrongPanConditionForCyclesWithConstantWeights3}.
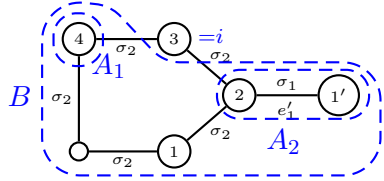
\begin{figure}[!h]
\centering
\begin{pspicture}(0,-.1)(0,1.9)
\tiny
\begin{psmatrix}[mnode=circle,colsep=0.4,rowsep=0.2]
{$4$}	&	  &  {$3$} \\
	& 	&	& 	  {$2$} & & {$1'$}\\
{}	& 	&  {$1$}
\psset{arrows=-, shortput=nab,labelsep={0.05}}
\tiny
\ncline{3,1}{3,3}_{$\sigma_{2}$}
\ncline{3,3}{2,4}_{$\sigma_{2}$}
\ncline{2,4}{1,3}_{$\sigma_{2}$}
\ncline{1,3}{1,1}^{$\sigma_{2}$}
\ncline{1,1}{3,1}_{$\sigma_{2}$}
\ncline{2,4}{2,6}_{$e_{1}'$}^{$\sigma_{1}$}
\normalsize
\pspolygon[framearc=1,linestyle=dashed,linecolor=blue,linearc=.3]
(-1.65,1.2)(-1.65,1.9)(-1,1.9)(-1,1.2)
\uput[0](.2,1.6){\textcolor{blue}{$\scriptstyle{=i}$}}
\uput[0](-1.2,1.2){\textcolor{blue}{$A_{1}$}}
\pspolygon[framearc=1,linestyle=dashed,linecolor=blue,linearc=.3]
(.5,.5)(.5,1.15)(2.5,1.15)(2.5,.5)
\uput[0](1.1,.2){\textcolor{blue}{$A_{2}$}}
\pspolygon[framearc=1,linestyle=dashed,linecolor=blue,linearc=.5]
(-1.8,-.25)(-1.8,2.1)(-.9,2)(-.2,1.25)(2.65,1.25)(2.65,-.25)
\uput[0](-2.3,.8){\textcolor{blue}{$B$}}
\end{psmatrix}
\end{pspicture}
\caption{Cycle $C$ incident to $2$.}
\label{figProofLemStrongPanConditionForCyclesWithConstantWeights3}
\end{figure}
Then,
$\mathcal{P}_{\min}(A) = \lbrace \lbrace 1' \rbrace, \lbrace 2 \rbrace, \lbrace 4 \rbrace  \rbrace$
or $\lbrace \lbrace 1',4 \rbrace, \lbrace 2 \rbrace \rbrace$,
$\mathcal{P}_{\min}(A \cup \lbrace i \rbrace) =
\lbrace \lbrace 1' \rbrace, \lbrace 2, 3, 4 \rbrace \rbrace$
or $\lbrace A \cup \lbrace i \rbrace \rbrace$,
and $\mathcal{P}_{\min}(B) = \lbrace \lbrace 1' \rbrace, V(C) \setminus \lbrace 3 \rbrace \rbrace$
or $\lbrace B \rbrace$.
Then,
for any $A' \in \mathcal{P}_{\min}(A \cup \lbrace i \rbrace)$ containing $2$,
we have
$\lbrace 2 \rbrace \in \mathcal{P}_{\min}(A)_{|A'}$
but $\lbrace 2 \rbrace \notin \mathcal{P}_{\min}(B)_{|A'}$,
contradicting Theorem~\ref{thLGNuPNFNFSNuSSNNuSFABFABF}.
Hence,
we have $e = \lbrace 2, 4 \rbrace \in E$
and the Star condition applied to $\lbrace e, e'_{1}, e_{2} \rbrace$
implies $w(e) = \sigma_{2}$.
Then,
by the same reasoning on the cycle
$\lbrace 1, e_{1}, 2, e, 4, e_{4}, \ldots, m, e_{m}, 1 \rbrace$,
we have $\lbrace 2, 5 \rbrace \in E$.
Iterating we get the result.\\

\noindent
\ref{itemLemmeE1=1E1=e1e1=12Cw2w32e122'w2w3}.
Let us consider $E_{1} = \lbrace e_{1}' \rbrace$
with $e_{1}' = \lbrace 1', 2' \rbrace$
and a cycle
$C = \lbrace 1, e_{1}, 2, e_{2},\\ \ldots, m, e_{m}, 1 \rbrace$
with constant weight $\sigma_{2}$ or $\sigma_{3}$ and not incident to $e_{1}$
but linked to $e_{1}$ by an edge $e = \lbrace 2, 2' \rbrace$ of weight $\sigma_{2}$.

If $e '= \lbrace 2', j \rbrace \in E$ for some $j \in V(C) \setminus \lbrace 2 \rbrace$,
then Claim~\ref{itemLemmeE1=1E1=e1=12Cw2w322jECjVC}
applied to the cycles
$C' = \lbrace 2', e, 2, e_{2}, 3, \ldots, e_{j-1}, j, e', 2' \rbrace$
and
$C'' = \lbrace 2', e', j, e_{j}, j+1 \ldots , e_{m}, 1, e_{1}, 2, e, 2' \rbrace$
incident to $e_{1}'$
as represented in Figure~\ref{figProofLemStrongPanConditionForCyclesWithConstantWeights4}
\begin{figure}[!h]
\centering
\begin{pspicture}(0,0)(0,2.2)
\tiny
\begin{psmatrix}[mnode=circle,colsep=0.4,rowsep=0.15]
{$j$}	&	  &  {$3$} \\
	& 	&	& 	  {$2$} & & {$2'$} & & {$1'$}\\
{$m$}	& 	&  {$1$}
\psset{arrows=-,shortput=nab,labelsep={0.05}}
\tiny
\ncline[linecolor=red]{3,1}{3,3}_{$\sigma_{2}$}^{$e_{m}$}
\ncline[linecolor=red]{3,3}{2,4}_{$\sigma_{2}$}^{$e_{1}$}
\ncline[linecolor=black,shadow=true,shadowsize=1pt,shadowangle=45,shadowcolor=blue]{2,4}{1,3}_{$\sigma_{2}$}^{$e_{2}$}
\ncline[linecolor=black,shadow=true,shadowsize=1pt,shadowangle=90,shadowcolor=blue]{1,3}{1,1}^{$\sigma_{2}$}
\ncline[linecolor=red]{1,1}{3,1}^{$\sigma_{2}$}
\ncline[linecolor=red,shadow=true,shadowsize=1pt,shadowangle=90,shadowcolor=blue]{2,4}{2,6}_{$e$}^{$\sigma_{2}$}
\ncline{2,6}{2,8}_{$e_{1}'$}^{$\sigma_{1}$}
\ncarc[linecolor=red,shadow=true,shadowsize=1pt,shadowangle=-95,shadowcolor=blue,arcangle=40]{1,1}{2,6}^{$e'$}
\normalsize
\uput[0](.8,1.3){\textcolor{blue}{$C'$}}
\uput[0](-2.1,.9){\textcolor{red}{$C''$}}
\end{psmatrix}
\end{pspicture}
\caption{$C'$ and $C''$.}
\label{figProofLemStrongPanConditionForCyclesWithConstantWeights4}
\end{figure}
implies
$\lbrace 2', k \rbrace \in E$ for all $k \in V(C)$.
Hence,
Claim~\ref{itemItemLemme2jEjVC} is satisfied.

Let us now assume $\lbrace 2', j \rbrace \notin E$
for all $j \in V(C) \setminus \lbrace 2 \rbrace$.
If $\lbrace 1', 2 \rbrace \in E$,
then we can apply the same reasoning as before
(interchanging the roles of $1'$ and $2'$).
Then,
either $\lbrace 1', k \rbrace \in E$ for all $k \in V(C)$
and Claim~\ref{itemItemLemme1jEjVC} is satisfied
or $\lbrace 1', k \rbrace \notin E$ for all $k \in V(C) \setminus \lbrace 2 \rbrace$.
Therefore,
we assume henceforth that the following condition is satisfied:
\begin{equation}
\label{eqTinewaitianew}
\begin{array}{l}
\textrm{There is no edge } \lbrace 2', l \rbrace
\textrm{ with } l \in V(C) \setminus \lbrace 2 \rbrace
\textrm{ and if } \lbrace 1', 2\rbrace \in E\\
\textrm{there is also no edge } \lbrace 1', l \rbrace
\textrm{ with } l \in V(C) \setminus \lbrace 2 \rbrace.
\end{array}
\end{equation}
We now prove that $C$ is complete,
\emph{i.e.},
that Claim~\ref{itemItemLemme2jEjVCkC} is satisfied.
By contradiction
let us assume $\lbrace 2, 4 \rbrace \notin E$
(the proof is similar to the one of Claim~\ref{itemLemmeE1=1E1=e1=12Cw2w322jECjVC}).
Let us consider $i = 3$,
$A_{1} = \lbrace 4 \rbrace$, $A_{2} = \lbrace 1', 2', 2 \rbrace$,
$A = A_{1} \cup A_{2}$, and $B = (V(C) \setminus \lbrace i \rbrace) \cup A_{2}$
as represented in Figure~\ref{figProofLemStrongPanConditionForCyclesWithConstantWeights5}.
\begin{figure}[!h]
\centering
\begin{pspicture}(0,-.1)(0,2.2)
\tiny
\begin{psmatrix}[mnode=circle,colsep=0.4,rowsep=0.3]
{$4$}	&	  &  {$3$} \\
	& 	&	& 	  {$2$} & & {$2'$} & & {$1'$}\\
{$m$}	& 	&  {$1$}
\psset{arrows=-, shortput=nab,labelsep={0.05}}
\tiny
\ncline{3,1}{3,3}_{$\sigma_{2}$}^{$e_{m}$}
\ncline{3,3}{2,4}_{$\sigma_{2}$}^{$e_{1}$}
\ncline{2,4}{1,3}_{$\sigma_{2}$}^{$e_{2}$}
\ncline{1,3}{1,1}^{$\sigma_{2}$}
\ncline{1,1}{3,1}_{$\sigma_{2}$}
\ncline{2,4}{2,6}_{$e$}^{$\sigma_{2}$}
\ncline{2,6}{2,8}_{$e_{1}'$}^{$\sigma_{1}$}
\normalsize
\pspolygon[framearc=1,linestyle=dashed,linecolor=blue,linearc=.3]
(-1.7,1.45)(-1.7,2.1)(-1,2.1)(-1,1.45)
\uput[0](.2,1.8){\textcolor{blue}{$\scriptstyle{=i}$}}
\uput[0](-1.2,1.3){\textcolor{blue}{$A_{1}$}}
\pspolygon[framearc=1,linestyle=dashed,linecolor=blue,linearc=.4]
(.45,.6)(.45,1.3)(3.85,1.3)(3.85,.6)
\uput[0](2.1,.3){\textcolor{blue}{$A_{2}$}}
\pspolygon[framearc=1,linestyle=dashed,linecolor=blue,linearc=.5]
(-1.8,-.25)(-1.8,2.3)(-.9,2.2)(-.2,1.4)(4,1.4)(4,-.25)
\uput[0](-2.3,1){\textcolor{blue}{$B$}}
\end{psmatrix}
\end{pspicture}
\caption{$C$ linked to $e_{1}'$ by an edge.}
\label{figProofLemStrongPanConditionForCyclesWithConstantWeights5}
\end{figure}
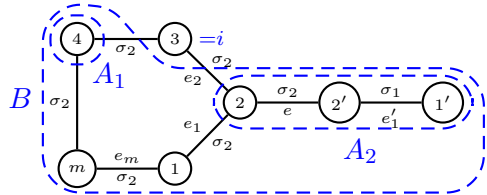
By (\ref{eqTinewaitianew}),
$\lbrace 2', 4 \rbrace \notin E$.
We consider several cases:
\begin{enumerate}
\item
\label{itemCaseIf1'2InEThen1'4NotinEThereforePminA=1'22'4}
If $\lbrace 1', 2 \rbrace \in E$,
then $\lbrace 1', 4 \rbrace \notin E$ by (\ref{eqTinewaitianew})
and therefore $\mathcal{P}_{\min}(A) = \lbrace \lbrace 1', 2, 2' \rbrace, \lbrace 4 \rbrace \rbrace$.
\item
\label{itemCaseIf1'2NotInEAnd1'4InEThenPminA=1'4,22'}
If $\lbrace 1', 2 \rbrace \notin E$ and $\lbrace 1', 4 \rbrace \in E$,
then $\mathcal{P}_{\min}(A) = \lbrace \lbrace 1', 4 \rbrace, \lbrace 2, 2' \rbrace \rbrace$.
\item
\label{itemCaseIf1'2NotInEAnd1'4NotInEThenPminA=1',22',4}
If $\lbrace 1', 2 \rbrace \notin E$ and $\lbrace 1', 4 \rbrace \notin E$,
then $\mathcal{P}_{\min}(A) =
 \lbrace \lbrace 1' \rbrace, \lbrace 2, 2' \rbrace, \lbrace 4 \rbrace \rbrace$.
\end{enumerate}
In every case $\mathcal{P}_{\min}(A \cup \lbrace i \rbrace) = \lbrace A \cup \lbrace i \rbrace \rbrace$
or $\lbrace \lbrace 1' \rbrace, A \cup \lbrace i \rbrace \setminus \lbrace 1' \rbrace \rbrace$
and $\mathcal{P}_{\min}(B) = \lbrace B \rbrace$
or $\lbrace \lbrace 1' \rbrace, B \setminus \lbrace 1' \rbrace \rbrace$.
Therefore,
taking $A' = A \cup \lbrace i \rbrace$
or $A \cup \lbrace i \rbrace \setminus \lbrace 1' \rbrace$,
we have
either $\mathcal{P}_{\min}(B)_{|A'} = \lbrace A \rbrace$
or $\lbrace \lbrace 1' \rbrace, A \setminus \lbrace 1' \rbrace \rbrace$
or $\lbrace A \setminus \lbrace 1' \rbrace \rbrace$.
As neither $A$ nor $A \setminus \lbrace 1' \rbrace$ is in $\mathcal{P}_{\min}(A)$,
we get $\mathcal{P}_{\min}(A)_{|A'} \not= \mathcal{P}_{\min}(B)_{|A'}$,
contradicting Theorem~\ref{thLGNuPNFNFSNuSSNNuSFABFABF}.
Hence,
$\lbrace 2, 4 \rbrace \in E$.
Then,
iterating as in the proof of Claim~\ref{itemLemmeE1=1E1=e1=12Cw2w322jECjVC}
we get $\lbrace 2, j \rbrace \in \hat E(C)$ for all $j \in V(C)$.
\label{PreviousProofOfClaim5InLemma}
Let us now assume $\lbrace j, k \rbrace \notin E$
for two vertices $j$ and $k$ with $3 \leq j \leq m-1$ and $k=1$ or $j+2 \leq k \leq m$.
Let us consider the edges $e'_j:=\lbrace 2 ,j\rbrace$ and $e'_k:=\lbrace 2 ,k\rbrace$ in $\hat E(C)$.
If $\lbrace 1',j\rbrace\in E$ and $\lbrace 1',k\rbrace\in E$,
then
we obtain two adjacent chordless cycles
$\tilde{C_j} = \lbrace 2, e_{j}', j,\lbrace j,1'\rbrace , 1', e_{1}',
2', e, 2 \rbrace$
and $\tilde{C_k} = \lbrace 2, e_{k}', k,\lbrace k,1'\rbrace , 1', e_{1}',
2', e, 2 \rbrace$
(by (\ref{eqTinewaitianew}) $\lbrace 2', j\rbrace$, $\lbrace 2', k\rbrace$, $\lbrace 1', 2\rbrace$ are not in $E$)
with a common edge $e'_1$ in  $E_1$
contradicting
the Adjacent cycles condition.
Hence,
we can assume that at most one of the edges
$\lbrace 1',j\rbrace$ or $\lbrace 1',k\rbrace$ is in $E$.
We now consider the cycle 
$\tilde{C} = \lbrace 2, e_{j}', j, e_{j}, j+1, \ldots, k, e_{k}', 2 \rbrace$
and $i=2$,
$A_{1} = \lbrace j \rbrace$, $A_{2} = \lbrace k \rbrace$, $A_{3} = \lbrace 1', 2' \rbrace$,
$A = A_{1} \cup A_{2} \cup A_{3}$,
$B_{1} = V(\tilde{C}) \setminus \lbrace 2 \rbrace$,
$B_{2} = A_{3}$,
and $B = B_{1}\cup B_{2}$
as represented in Figure~\ref{figProofLemStrongPanConditionForCyclesWithConstantWeights611}.
\begin{figure}[!h]
\centering
\begin{pspicture}(0,-.3)(0,3)
\tiny
\begin{psmatrix}[mnode=circle,colsep=0.5,rowsep=0.5]
	& {$j$}	&  	&  {3} \\
{ } 	& 	&	& 	&  {$2$}  & &  {$2'$} & & {$1'$}\\
	& {$k$} 	& 	& {$1$}
\psset{arrows=-, shortput=nab,labelsep={0.02}}
\tiny
\ncarc[arcangle=34,linestyle=dashed]{1,2}{2,9}
\ncline[linecolor=gray]{3,2}{3,4}_{$\sigma_{2}$}
\ncline{2,1}{3,2}_{$\sigma_{2}$}
\ncline{2,1}{1,2}^{$\sigma_{2}$}
\ncline[linecolor=gray]{3,4}{2,5}_{$\sigma_{2}$}^{$e_{1}$}
\ncline[linecolor=gray]{2,5}{1,4}_{$\sigma_{2}$}^{$e_{2}$}
\ncline[linecolor=gray]{1,2}{1,4}^{$\sigma_{2}$}
\psset{labelsep={-0.02}}
\ncline{1,2}{2,5}_{$e_{j}'$}
\ncline{3,2}{2,5}^{$e_{k}'$}
\psset{labelsep={0.1}}
\ncline[linecolor=gray]{2,1}{2,5}
\ncline{2,5}{2,7}_{$e$}^{$\sigma_{2}$}
\ncline{2,7}{2,9}_{$e_{1}'$}^{$\sigma_{1}$}
\normalsize
\uput[0](1,1){\textcolor{blue}{$\scriptstyle{=i}$}}
\pspolygon[framearc=1,linestyle=dashed,linecolor=blue,linearc=.4](-1.9,1.8)(-1.9,2.6)(-1.1,2.6)(-1.1,1.8)
\uput[0](-1.6,1.6){\textcolor{blue}{$A_{1}$}}
\pspolygon[framearc=1,linestyle=dashed,linecolor=blue,linearc=.4](-1.9,-.3)(-1.9,0.45)(-1.1,0.45)(-1.1,-.3)
\uput[0](-1.6,.7){\textcolor{blue}{$A_{2}$}}
\pspolygon[framearc=1,linestyle=dashed,linecolor=blue,linearc=.4](2,.7)(2,1.6)(4.5,1.6)(4.5,.7)
\uput[0](2.5,.4){\textcolor{blue}{$B_{2} = A_{3}$}}
\pspolygon[framearc=1,linestyle=dashed,linecolor=blue,linearc=.5]
(-3,1.1)(-1,3.8)(-1,-1.4)
\uput[0](-3.4,.8){\textcolor{blue}{$B_{1}$}}
\end{psmatrix}
\end{pspicture}
\caption{$\tilde{C} = \lbrace 2, e_{j}', j, e_{j}, j+1, \ldots,
k, e_{k}', 2 \rbrace$ and $\lbrace 1', j \rbrace \in E$.}
\label{figProofLemStrongPanConditionForCyclesWithConstantWeights611}
\end{figure}
To obtain
$\mathcal{P}_{\min}(A)$, $\mathcal{P}_{\min}(A\cup\lbrace i\rbrace)$
or $\mathcal{P}_{\min}(B)$
we only have to delete the edge $e'_1=\lbrace 1' , 2' \rbrace$ of weight $\sigma_1$.
As
$\lbrace 1',j\rbrace\notin E$
or $\lbrace 1',k\rbrace\notin E$, $\lbrace j ,k \rbrace$
cannot be a subset of any component of $\mathcal{P}_{\min}(A)$.
Therefore,
we can only have
$\mathcal{P}_{\min}(A)=\lbrace \lbrace 1' \rbrace, \lbrace 2' \rbrace, \lbrace j \rbrace, \lbrace k \rbrace \rbrace$
or $\lbrace \lbrace 1',j \rbrace, \lbrace 2' \rbrace, \lbrace k \rbrace \rbrace$
or $\lbrace \lbrace 1',k \rbrace, \lbrace 2' \rbrace,  \lbrace j \rbrace \rbrace$.
As $\lbrace 2,j \rbrace$ and $\lbrace 2,k \rbrace$ are in $\hat E(C)$
and as $i=2$,
$j$ and $k$ are connected in
$G_{A \cup \lbrace i \rbrace\setminus\lbrace 1' , 2' \rbrace}$.
Hence,
there exists $A'\in\mathcal{P}_{\min}(A\cup\lbrace i\rbrace)$
with $\lbrace j,k \rbrace\subseteq A'$.
As $j$ and $k$ are connected in
$G_{B \setminus\lbrace 1' , 2' \rbrace}$
there exists $B'\in\mathcal{P}_{\min}(B)$
with $\lbrace j,k \rbrace\subseteq B'$.
Hence,
$\lbrace j,k \rbrace\subseteq  (B'\cap A')\in\mathcal{P}_{\min}(B)_{|A'}$.
But $\lbrace j,k \rbrace$ cannot be a subset of any component of $\mathcal{P}_{\min}(A)$.
Hence,
$\mathcal{P}_{\min}(A)_{|A'}\neq\mathcal{P}_{\min}(B)_{|A'}$
contradicting Theorem~\ref{thLGNuPNFNFSNuSSNNuSFABFABF}.
Therefore,
$\lbrace j,k \rbrace\in \hat E (C)$.\\

\noindent
\ref{itemLemmew_1<w_2<w_3e1=1,2w_1w_22C_me_11V(C_m)m=3C_3e_1}.
By Proposition~\ref{lemAllEdgesOfWeightw2AreIncidentToSameEndVertexofe1},
there is a unique edge $e_{1} = \lbrace 1, 2 \rbrace$ in $E_{1}$,
all edges in $E_{2}$ are incident to the same end-vertex $2$ of $e_{1}$,
and all edges in $E_{3}$ are
linked to $2$ by $e_{1}$ or by an edge in $E_{2}$.
As $e_{1}\notin E(C_{m})$, an edge in $E(C_{m})$ has weight $\sigma_{2}$ or $\sigma_{3}$.

Let us assume $C_{m}$ non-constant.
As edges in $E_{2}$ are incident to $2$,
$2 \in V(C_{m})$ and $e_{1}$ is adjacent to $C_{m}$.
By Proposition~\ref{lemStrongPanCondForNonConstantCycle},
$e_{1} \notin \hat{E}(C_{m})$ and $C_{m}$ is complete.

Let us now assume $C_{m}$ constant.
As all edges in $E_{2}$ are incident to $2$ they cannot form a cycle,
therefore $E(C_{m}) \subseteq E_{3}$.
Then,
$2 \notin V(C_{m})$ and $e_{1} \notin \hat{E}(C_{m})$.
Let us assume $1 \notin V(C_{m})$.
Then,
an edge $e$ in $E(C_{m})$ cannot be linked to vertex $2$ by $e_{1}$,
therefore $e$ is linked to $2$ by an edge in $E_{2}$.
As $m \geq 3$ there exist at least two vertices $i$ and $j$ in $V(C_{m})$
such that $\lbrace 2, i \rbrace$ and $\lbrace 2, j \rbrace$ are in $E$.
$C_{m}$ gives two obvious paths  
$\gamma$ and $\gamma'$ linking $i$ and $j$.
Let us consider the cycles
$C_{m}' = \lbrace 2, i \rbrace \cup \gamma \cup \lbrace j, 2 \rbrace$
and
$C_{m}'' = \lbrace 2, i \rbrace \cup \gamma' \cup \lbrace j, 2 \rbrace$
as represented
in Figure~\ref{figItemProofLemEitherCmIsCompleteOrm=4C4HasConstantWeightw31'inV(C4)2'}.
\begin{figure}[!h]
\centering
\begin{pspicture}(0,-.2)(0,1.6)
\tiny
\begin{psmatrix}[mnode=circle,colsep=0.4,rowsep=0.1]
{}	&	  &  {$i$} \\
	& 	&	& 	  {} & & {$2$} & & {$1$} \\
{$j$}	& 	&  {}
\psset{arrows=-, shortput=nab,labelsep={0.05}}
\tiny
\ncline{2,6}{2,8}_{$e_{1}$}^{$\sigma_{1}$}
\ncline[linecolor=cyan]{3,1}{3,3}^{$\sigma_{3}$}
\ncline[linecolor=cyan]{3,3}{2,4}_{$\sigma_{3}$}
\ncline[linecolor=cyan]{2,4}{1,3}_{$\sigma_{3}$}
\ncline[linecolor=blue]{1,3}{1,1}_{$\sigma_{3}$}
\ncline[linecolor=blue]{1,1}{3,1}_{$\sigma_{3}$}
\ncarc[arcangle=20]{1,3}{2,6}^{$\sigma_{2}$}
\ncarc[arcangle=-30]{3,1}{2,6}_{$\sigma_{2}$}
\end{psmatrix}
\normalsize
\uput[0](-4.7,.9){\textcolor{blue}{$\gamma$}}
\uput[0](-3.6,.7){\textcolor{cyan}{$\gamma'$}}
\uput[0](-4.3,.6){$C_{m}$}
\end{pspicture}
\caption{$C_{m}' = \lbrace 2, i \rbrace \cup \gamma \cup \lbrace j, 2 \rbrace$
and $C_{m}'' = \lbrace 2, i \rbrace \cup \gamma' \cup \lbrace j, 2 \rbrace$.}
\label{figItemProofLemEitherCmIsCompleteOrm=4C4HasConstantWeightw31'inV(C4)2'}
\end{figure}
By Case~\ref{itemLemmeE12w2=w3w2}
(or Proposition~\ref{lemStrongPanCondForNonConstantCycle})
$C_{m}'$ and $C_{m}''$ are complete.
This implies $\lbrace i, j \rbrace \in E$
and $\lbrace 2, k \rbrace \in E$ for all $k \in V(C_{m})$.
Hence,
for any pair of vertices $i, j$ in $V(C_{m})$
the previous reasoning is valid and
implies $\lbrace i, j \rbrace \in E$.
Therefore,
$C_{m}$ is complete.
Let us now assume $1 \in V(C_{m})$.
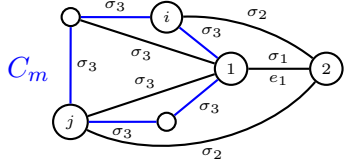
\begin{figure}[!h]
\centering
\begin{pspicture}(0,-.3)(0,1.8)
\tiny
\begin{psmatrix}[mnode=circle,colsep=0.4,rowsep=0.2]
{}	&	  &  {$i$} \\
	& 	&	& 	  {$1$} & & {$2$}\\
{$j$}	& 	&  {}
\psset{arrows=-, shortput=nab,labelsep={0.05}}
\tiny
\ncline{2,4}{2,6}_{$e_{1}$}^{$\sigma_{1}$}
\ncline[linecolor=blue]{3,1}{3,3}_{$\sigma_{3}$}
\ncline[linecolor=blue]{3,3}{2,4}_{$\sigma_{3}$}
\ncline[linecolor=blue]{2,4}{1,3}_{$\sigma_{3}$}
\ncline[linecolor=blue]{1,3}{1,1}_{$\sigma_{3}$}
\ncline[linecolor=blue]{1,1}{3,1}^{$\sigma_{3}$}
\ncline{1,1}{2,4}_{$\sigma_{3}$}
\ncline{3,1}{2,4}^{$\sigma_{3}$}
\ncarc[arcangle=20]{1,3}{2,6}^{$\sigma_{2}$}
\ncarc[arcangle=-38]{3,1}{2,6}_{$\sigma_{2}$}
\end{psmatrix}
\normalsize
\uput[0](-4.6,.7){\textcolor{blue}{$C_{m}$}}
\end{pspicture}
\caption{$C_{m}$ with $m = 5$.}
\label{figItemProofLemEitherCmIsCompleteOrm=4C4HasConstantWeightw31'inV(C4)3}
\end{figure}
Claim~\ref{itemLemmeE1=1E1=e1=12Cw2w322jECjVC}
implies $\lbrace 1, i \rbrace \in \hat{E}(C_{m})$ for all $i \in V(C_{m})$.
As $m \geq 3$,
there is at least one edge linking $2$ to $V(C_{m}) \setminus \lbrace 1 \rbrace$.
If there exist two edges $\lbrace 2, i \rbrace$ and $\lbrace 2, j \rbrace$
with $i$ and $j$ in $V(C_{m}) \setminus \lbrace 1 \rbrace$
as represented
in Figure~\ref{figItemProofLemEitherCmIsCompleteOrm=4C4HasConstantWeightw31'inV(C4)3},
then
$e_{1}$ is a common edge
for the triangles defined by $\lbrace 1,2,i \rbrace$ and $\lbrace 1,2,j \rbrace$,
contradicting the Adjacent cycles condition.
Hence,
there is exactly one edge $e$ linking $2$ to $V(C_{m}) \setminus \lbrace 1 \rbrace$.
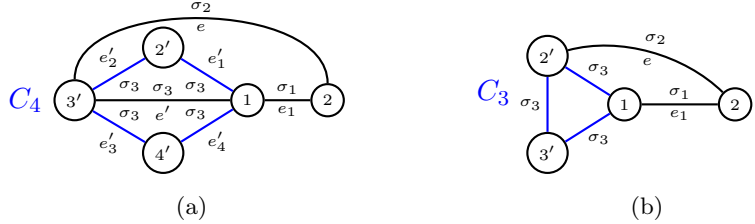
\begin{figure}[!h]
\centering
\subfloat[]{
\begin{pspicture}(-.5,-.4)(.3,2.2)
\tiny
\begin{psmatrix}[mnode=circle,colsep=.6,rowsep=0.1]
		&	{$2'$}\\
{$3'$}	& 	&	{$1$} & {$2$}\\
		& {$4'$}
\psset{arrows=-, shortput=nab,labelsep={0.05}}
\tiny
\ncline{2,3}{2,4}^{$\sigma_{1}$}_{$e_{1}$}
\ncline[linecolor=blue]{3,2}{2,3}^{$\sigma_{3}$}_{$e'_{4}$}
\ncline[linecolor=blue]{2,3}{1,2}^{$\sigma_{3}$}_{$e'_{1}$}
\ncline{2,1}{2,3}^{$\sigma_{3}$}_{$e'$}
\ncline[linecolor=blue]{2,1}{1,2}_{$\sigma_{3}$}^{$e'_{2}$}
\ncline[linecolor=blue]{2,1}{3,2}^{$\sigma_{3}$}_{$e'_{3}$}
\ncarc[arcangle=90]{2,1}{2,4}^{$\sigma_{2}$}_{$e$}
\end{psmatrix}
\normalsize
\uput[0](-4.6,.7){\textcolor{blue}{$C_{4}$}}
\end{pspicture}
\label{figLemCmCyclee1NotInCmThenCmIsCompleteIf1InV(Cm)Thenm=3-Figurea}
}
\hspace{1.4cm}
\subfloat[]{
\begin{pspicture}(-.3,-.4)(.5,2.2)
\tiny
\begin{psmatrix}[mnode=circle,colsep=.5,rowsep=0.1]
{$2'$}\\
					&	{$1$} & & {$2$}\\
{$3'$}
\psset{arrows=-, shortput=nab,labelsep={0.05}}
\tiny
\ncline[linecolor=blue]{1,1}{3,1}_{$\sigma_{3}$}
\ncline[linecolor=blue]{1,1}{2,2}^{$\sigma_{3}$}
\ncline[linecolor=blue]{3,1}{2,2}_{$\sigma_{3}$}
\ncline{2,2}{2,4}_{$e_{1}$}^{$\sigma_{1}$}
\ncarc[arcangle=30]{1,1}{2,4}^{$\sigma_{2}$}_{$e$}
\end{psmatrix}
\normalsize
\uput[0](-3.8,.8){\textcolor{blue}{$C_{3}$}}
\end{pspicture}
\label{figLemCmCyclee1NotInCmThenCmIsCompleteIf1InV(Cm)Thenm=3-Figureb}
}
\caption{$C_{4}$
and $\tilde{C}_{3} = \lbrace 1, e_{1}, 2, e, 3', e', 1 \rbrace$.
$C_{3}$ and $\tilde{C}_{3} = \lbrace 1, e_{1}, 2, e, 2', e_{1}', 1 \rbrace$.}
\label{figLemCmCyclee1NotInCmThenCmIsCompleteIf1InV(Cm)Thenm=3}
\end{figure}
If $m \geq 5$,
then there is at least one edge in $E(C_m)$ neither incident to $1$
nor linked to~$2$,
a contradiction.
If $m = 4$,
then we necessarily have $e = \lbrace 2, 3' \rbrace$
as represented in Figure~\ref{figLemCmCyclee1NotInCmThenCmIsCompleteIf1InV(Cm)Thenm=3-Figurea},
otherwise we get the same contradiction. 
Let us denote by $e'_{1} = \lbrace 1, 2' \rbrace$,
$e'_{2} = \lbrace 2', 3' \rbrace$,
$e'_{3} = \lbrace 3', 4' \rbrace$,
and $e'_{4} = \lbrace 4', 1 \rbrace$
the edges in $E(C_{4})$
and by $e' = \lbrace 1, 3' \rbrace$ the chord of $C_{4}$ incident to~$1$.
If $\lbrace 2',4' \rbrace \in E$,
we are done.
So,
let us assume $\lbrace 2', 4' \rbrace \notin E$.
Let us consider $i = 1$,
$A = \lbrace 2, 2', 4' \rbrace$,
and $B = A \cup \lbrace 3' \rbrace$.
Then,
$\mathcal{P}_{\min}(A) = \lbrace \lbrace 2 \rbrace, \lbrace 2' \rbrace, \lbrace 4' \rbrace \rbrace$,
$\mathcal{P}_{\min}(A \cup \lbrace i \rbrace) =
\lbrace \lbrace 2 \rbrace, \lbrace 1, 2', 4' \rbrace \rbrace$,
and $\mathcal{P}_{\min}(B) = \lbrace \lbrace 2 \rbrace, \lbrace 2', 3', 4' \rbrace \rbrace$.
Taking $A' = \lbrace 1, 2', 4' \rbrace$ we get
$\mathcal{P}_{\min}(A)_{|A'} =
\lbrace \lbrace 2' \rbrace, \lbrace 4' \rbrace \rbrace
\not= \lbrace 2', 4' \rbrace = \mathcal{P}_{\min}(B)_{A'}$
and it contradicts Theorem~\ref{thLGNuPNFNFSNuSSNNuSFABFABF}.
Hence,
$m = 3$
as represented in Figure~\ref{figLemCmCyclee1NotInCmThenCmIsCompleteIf1InV(Cm)Thenm=3-Figureb}.
\end{proof}

\section{Graphs satisfying inheritance of convexity}
\label{SectionWeighthedgraphsforwhichInheritanceofconvexity}

We provide characterizations of weighted graphs
satisfying inheritance of convexity with $\mathcal{P}_{\min}$.
We start with connected weighted graphs.

\subsection{Connected graphs with two edge-weights}

\begin{theorem}
\label{lemInheritanceOfConvexityForPminInheritanceOfConvexityForPMG1Cycle-Complete}
Let $G=(N,E,w)$ be a connected weighted graph.
Let us assume that the edge-weights have only two different values $\sigma_{1} < \sigma_{2}$
and $|E_{1}| \geq 2$.
Then,
there is inheritance of convexity for $\mathcal{P}_{\min}$
if and only if
\begin{enumerate}
\item
\label{itemLemInheritanceOfConvexityForPMOnG1=(N,E2)}
All edges in $E_{1}$ are incident to the same vertex $1$
and all edges in $E_{2}$ are linked to~$1$ by an edge in $E_{1}$.
\item
\label{itemLemInheritanceOfConvexityForPMOnG1=(N,E3)}
One of the  following two equivalent conditions is satisfied:
\begin{enumerate}
\item
\label{itemLemInheritanceOfConvexityForPMOnG1=(N,E3)a}
There is inheritance of convexity for $\mathcal{P}_{M}$
on the subgraph $G_{1} = (N, E \setminus E_{1})$.
\item
\label{itemLemInheritanceOfConvexityForPMOnG1=(N,E3)b}
$G_{1} = (N, E \setminus E_{1})$ is cycle-complete.
\end{enumerate}
\end{enumerate}
\end{theorem}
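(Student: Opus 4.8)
The plan is to reduce the statement, via Theorem~\ref{theoremEquivalenceInheritanceofConvexityForUnanimityGamesP_MAndP_min}, to showing that $(N,\overline{u_S})$ is convex (\emph{i.e.}, supermodular) for every $\emptyset\neq S\subseteq N$ exactly when \ref{itemLemInheritanceOfConvexityForPMOnG1=(N,E2)} and \ref{itemLemInheritanceOfConvexityForPMOnG1=(N,E3)} hold. Since there are two weights, $E\setminus E_1=E_2$, so $G_1=(N,E_2)$; the equivalence of \ref{itemLemInheritanceOfConvexityForPMOnG1=(N,E3)a} and \ref{itemLemInheritanceOfConvexityForPMOnG1=(N,E3)b} is Theorem~\ref{NouwelandandBorm1991} applied to $G_1$, so throughout I will use the cycle-completeness formulation~\ref{itemLemInheritanceOfConvexityForPMOnG1=(N,E3)b}. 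I write $(G_1)_A$ for the subgraph of $G_1$ induced on $A$.

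For \emph{necessity}, I would assume $(N,\overline{u_S})$ convex for all $S$. Since $|E_1|\geq 2$, Proposition~\ref{propEitherOnlyOneEdgeOfWeightw1OrAllEdgesOFWeightw1AreIncidentToSameVertex} gives a vertex, call it $1$, incident to every edge of $E_1$, with no edge of $E_2$ incident to $1$; let $N_1$ be the set of the other endpoints of the edges of $E_1$. For any $e\in E_2$, picking two distinct edges of $E_1$ (both incident to $1$, hence adjacent, of common weight $\sigma_1<\sigma_2=w(e)$) and applying Lemma~\ref{LemP_MinMax(w1,w2)<w(e)e'InELinkingeTo2} produces an edge linking $e$ to $1$ of weight $\leq\sigma_1$, hence an edge of $E_1$; so $e$ has an endpoint in $N_1$, which is exactly \ref{itemLemInheritanceOfConvexityForPMOnG1=(N,E2)}. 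For~\ref{itemLemInheritanceOfConvexityForPMOnG1=(N,E3)b}, I would take a cycle $C$ of $G_1$: its edges all have weight $\sigma_2$, so $C$ is a constant cycle of weight $\sigma_2$ and, by Claim~\ref{itemLemmeE12w2=w3w2} of Proposition~\ref{lemStrongPanConditionForCyclesWithConstantWeights}, $V(C)$ induces a complete subgraph of $G$; since no edge of $C$ is incident to $1$ we have $1\notin V(C)$, so every chord of $C$ in $G$ avoids $1$ and hence lies in $E_2$, making $V(C)$ complete in $G_1$. Thus $G_1$ is cycle-complete.

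For \emph{sufficiency}, assume \ref{itemLemInheritanceOfConvexityForPMOnG1=(N,E2)} and~\ref{itemLemInheritanceOfConvexityForPMOnG1=(N,E3)b}. The first step is to describe $\mathcal{P}_{\min}$: by~\ref{itemLemInheritanceOfConvexityForPMOnG1=(N,E2)}, vertex $1$ is isolated in $G_1$ and every edge of $G_1$ meets $N_1$; and for $A\subseteq N$, $G_A$ has an edge of weight $\sigma_1$ iff $1\in A$ and $A\cap N_1\neq\emptyset$, in which case $\Sigma(A)=E_1\cap E(A)$ so $\mathcal{P}_{\min}(A)$ is the set of connected components of $(G_1)_A$, while otherwise $\mathcal{P}_{\min}(A)$ is the singleton partition of $A$. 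From this I would deduce $\overline{u_{\{i\}}}=u_{\{i\}}$ (supermodular) for all $i$, and, for $|S|\geq 2$, the identity $\overline{u_S}=\min\bigl(u_S^{M},u_{\{1\}}\bigr)$, where $(N,u_S^{M})$ is the Myerson restricted game of $u_S$ associated with $G_1$: when $1\in A$ and $A\cap N_1\neq\emptyset$ both sides count, in $\{0,1\}$, the components of $(G_1)_A$ containing $S$; when $1\notin A$ both sides vanish; and when $1\in A$, $A\cap N_1=\emptyset$, the graph $(G_1)_A$ has no edge (every edge of $G_1$ meets $N_1$) so $u_S^{M}(A)=0$ and $\mathcal{P}_{\min}(A)$ is trivial. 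The point is that a component of $(G_1)_A$ containing $S$ (with $|S|\geq 2$) must carry an edge of $G_1$, hence a vertex of $N_1\cap A$, so the requirement $A\cap N_1\neq\emptyset$ is automatic; and $\{1\}$, isolated in $G_1$, can never lie in a component of size $\geq 2$, which handles $1\in S$.

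The final step is routine: $u_{\{1\}}$ and $u_S^{M}$ are monotone, $\{0,1\}$-valued and supermodular --- the latter because~\ref{itemLemInheritanceOfConvexityForPMOnG1=(N,E3)b} together with Theorem~\ref{NouwelandandBorm1991} makes the Myerson game of the convex game $u_S$ on $G_1$ convex --- and the minimum $h$ of two such functions is supermodular: $h$ is monotone, so a putative violation $\Delta h(A,B)<0$ would force $h(A)=h(B)=1$, whence both factors equal $1$ on $A$ and on $B$, hence, being $\{0,1\}$-valued and supermodular, equal $1$ on $A\cup B$ and on $A\cap B$, giving $h(A\cup B)=h(A\cap B)=1$ and $\Delta h(A,B)=0$, a contradiction. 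So every $\overline{u_S}$ is convex and Theorem~\ref{theoremEquivalenceInheritanceofConvexityForUnanimityGamesP_MAndP_min} finishes the proof. I expect the main obstacle to be the identity $\overline{u_S}=\min(u_S^{M},u_{\{1\}})$ on $G_1$: this is precisely where conditions~\ref{itemLemInheritanceOfConvexityForPMOnG1=(N,E2)} and~\ref{itemLemInheritanceOfConvexityForPMOnG1=(N,E3)b} have to be combined, since~\ref{itemLemInheritanceOfConvexityForPMOnG1=(N,E2)} both isolates vertex $1$ in $G_1$ and ensures that a coalition of size $\geq 2$ connected in $G_1$ already ``activates'' a minimum-weight edge; all the remaining ingredients come from the necessary conditions established in Section~\ref{SectionInheritanceOfConvexityWithPmin}.
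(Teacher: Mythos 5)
Your proof is correct, and while your necessity argument coincides with the paper's (both invoke Proposition~\ref{propEitherOnlyOneEdgeOfWeightw1OrAllEdgesOFWeightw1AreIncidentToSameVertex}, Lemma~\ref{LemP_MinMax(w1,w2)<w(e)e'InELinkingeTo2} and Claim~\ref{itemLemmeE12w2=w3w2} of Proposition~\ref{lemStrongPanConditionForCyclesWithConstantWeights}), your sufficiency argument takes a genuinely different route. The paper works with an arbitrary convex game $(N,v)$ and proves the supermodularity inequality $\overline{v}(B\cup\{i\})-\overline{v}(B)\geq\overline{v}(A\cup\{i\})-\overline{v}(A)$ directly, by splitting into the case $E(B)\subseteq E_2$ or $E(A\cup\{i\})\subseteq E_2$ (where both partitions are singleton partitions and superadditivity of $\overline{v}$, Corollary~\ref{corIGNEiafttsihfv}, suffices) and the case where both $E(B)$ and $E(A\cup\{i\})$ meet $E_1$ (where $1\in A$, $\overline{v}$ agrees with the Myerson game $v^M$ on $G_1$ on all four sets, and cycle-completeness of $G_1$ finishes the job). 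You instead reduce to unanimity games via Theorem~\ref{theoremEquivalenceInheritanceofConvexityForUnanimityGamesP_MAndP_min} (a legitimate prior result, so no circularity), establish the clean structural identity $\overline{u_S}=\min(u_S^{M},u_{\{1\}})$ for $|S|\geq 2$ --- whose verification is exactly where Conditions~\ref{itemLemInheritanceOfConvexityForPMOnG1=(N,E2)} and~\ref{itemLemInheritanceOfConvexityForPMOnG1=(N,E3)} enter --- and then conclude with the observation that a minimum of two monotone, $\{0,1\}$-valued, supermodular functions is supermodular; I checked this last step and your case analysis (a violation forces both games to equal $1$ on $A$ and $B$, hence on $A\cup B$ and $A\cap B$) is sound. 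What your approach buys is a closed-form description of the restricted unanimity games and a shorter, more conceptual convexity verification; what it gives up is that the paper's argument handles arbitrary convex games directly without routing through the unanimity-game equivalence, and its case structure is reused almost verbatim in the harder Theorems~\ref{PropInheritanceOfConvexityForPminOnGIffCofGWithWeightwEitherCompleteOrAllVerticesLinkedToTheSameEndVertexOfe1} and~\ref{PropInheritanceOfConvexityForPminOnGIffCofGWithWeightw2EitherCompleteOrAllVerticesLinkedToTheSameEndVertexOfe12}, where a min-of-unanimity-games identity would be much less transparent.
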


We give
in Figure~\ref{figPropInheritanceOfConvexityForPminInheritanceOfConvexityForPMG1Cycle-Complete}
an example of a graph satisfying
conditions \ref{itemLemInheritanceOfConvexityForPMOnG1=(N,E2)}
and \ref{itemLemInheritanceOfConvexityForPMOnG1=(N,E3)}
of Theorem~\ref{lemInheritanceOfConvexityForPminInheritanceOfConvexityForPMG1Cycle-Complete}.
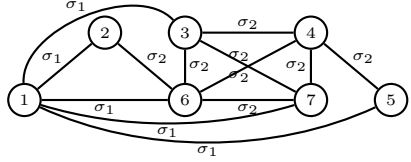
\begin{figure}[!h]
\centering
\begin{pspicture}(0,-.6)(0,1.4)
\tiny
\begin{psmatrix}[mnode=circle,colsep=.6,rowsep=0.4]
& {$2$} & {$3$} & & {$4$}\\
{$1$} &					&	{$6$} & & {$7$} & {$5$}
\psset{arrows=-, shortput=nab,labelsep={0.02}}
\tiny
\ncline{1,2}{2,1}_{$\sigma_{1}$}
\ncline{1,2}{2,3}^{$\sigma_{2}$}
\ncline{2,1}{2,3}_{$\sigma_{1}$}
\ncline{2,3}{2,5}_{$\sigma_{2}$}
\ncline{2,3}{1,3}_{$\sigma_{2}$}
\ncline{1,3}{1,5}^{$\sigma_{2}$}
\ncline{1,5}{2,5}_{$\sigma_{2}$}
\ncline{1,3}{2,5}_{$\sigma_{2}$}
\ncline{2,3}{1,5}^{$\sigma_{2}$}
\ncarc[arcangle=-17]{2,1}{2,5}_{$\sigma_{1}$}
\ncarc[arcangle=70]{2,1}{1,3}^{$\sigma_{1}$}
\ncline{1,5}{2,6}^{$\sigma_{2}$}
\ncarc[arcangle=-25]{2,1}{2,6}_{$\sigma_{1}$}
\end{psmatrix}
\normalsize
\end{pspicture}
\caption{Every edge in $E_{2}$ is linked to $1$
by an edge in $E_{1}$
and the cycle defined by $3, 4, 6, 7$ is complete.}
\label{figPropInheritanceOfConvexityForPminInheritanceOfConvexityForPMG1Cycle-Complete}
\end{figure}

\begin{proof}
Conditions~\ref{itemLemInheritanceOfConvexityForPMOnG1=(N,E3)a}
and \ref{itemLemInheritanceOfConvexityForPMOnG1=(N,E3)b} are equivalent
by Theorem~\ref{NouwelandandBorm1991}
\citep{NouwelandandBorm1991}.
By Proposition~\ref{propEitherOnlyOneEdgeOfWeightw1OrAllEdgesOFWeightw1AreIncidentToSameVertex},
Lemma~\ref{LemP_MinMax(w1,w2)<w(e)e'InELinkingeTo2}
and Proposition~\ref{lemStrongPanConditionForCyclesWithConstantWeights} 
(Claim~\ref{itemLemmeE12w2=w3w2}),
Conditions \ref{itemLemInheritanceOfConvexityForPMOnG1=(N,E2)}
and \ref{itemLemInheritanceOfConvexityForPMOnG1=(N,E3)} are necessary.
We now prove their sufficiency.
Let $(N,v)$ be a convex game.
We denote by $(N, \overline{v})$ (resp. $(N, v^{M})$)
the restricted game associated with $\mathcal{P}_{\min}$
(resp. $\mathcal{P}_{M}$)
on $G$ (resp. $G_{1}$).
Let us consider $i \in N$
and subsets $A \subseteq B \subseteq N \setminus \lbrace i \rbrace$.
We consider several cases to prove that the following inequality is satisfied:
\begin{equation}
\label{eqProofPropInheritanceOfConvexityForPminInheritanceOfConvexityForPMG1Cycle-Complete}
\overline{v}(B \cup \lbrace i \rbrace) - \overline{v}(B) \geq
\overline{v}(A \cup \lbrace i \rbrace) - \overline{v}(A).
\end{equation}

Let us first assume
$E(B) \subseteq E_{2}$
(resp. $E(A \cup \lbrace i \rbrace) \subseteq E_{2}$).
Then,
$\mathcal{P}_{\min}(A)$
and $\mathcal{P}_{\min}(B)$
(resp. $\mathcal{P}_{\min}(A \cup \lbrace i \rbrace)$)
are singleton partitions
and (\ref{eqProofPropInheritanceOfConvexityForPminInheritanceOfConvexityForPMG1Cycle-Complete})
is equivalent to
$\overline{v}(B \cup \lbrace i \rbrace) \geq
\overline{v}(A \cup \lbrace i \rbrace)$
(resp. $\overline{v}(B \cup \lbrace i \rbrace) - \overline{v}(B) \geq 0$.).
This last inequality is satisfied
as $(N, \overline{v})$ is superadditive
(cf. Corollary~\ref{corIGNEiafttsihfv}).

Let us now assume $E(B) \cap E_{1} \not= \emptyset$
and $E(A \cup \lbrace i \rbrace) \cap E_{1} \not= \emptyset$.
Then,
we also have $E(B \cup \lbrace i \rbrace) \cap E_{1} \not= \emptyset$.
By Condition~\ref{itemLemInheritanceOfConvexityForPMOnG1=(N,E2)}
any edge in $E_{1}$ is incident to $1$,
therefore
we have $1 \in B$ and $i \not= 1$
as $B \subseteq N \setminus \lbrace i \rbrace$.
Then,
as $E(A \cup \lbrace i \rbrace) \cap E_{1} \not= \emptyset$,
we necessarily have $1 \in A$.
If $E(A) \not= \emptyset$,
Condition~\ref{itemLemInheritanceOfConvexityForPMOnG1=(N,E2)}
implies $E(A) \cap E_{1} \not= \emptyset$ 
and then $\overline{v}(A) = v^{M}(A)$.
If $E(A) = \emptyset$,
then we trivially have $\overline{v}(A) = v^{M}(A)$.
Hence,
(\ref{eqProofPropInheritanceOfConvexityForPminInheritanceOfConvexityForPMG1Cycle-Complete})
is equivalent to
$
v^{M}(B \cup \lbrace i \rbrace) - v^{M}(B)
\geq v^{M}(A \cup \lbrace i \rbrace) - v^{M}(A)$,
and by Condition~\ref{itemLemInheritanceOfConvexityForPMOnG1=(N,E3)}
this last inequality is satisfied.
\end{proof}

\begin{theorem}
\label{PropInheritanceOfConvexityForPminOnGIffCofGWithWeightwEitherCompleteOrAllVerticesLinkedToTheSameEndVertexOfe1}
Let $G=(N,E,w)$ be a connected weighted graph.
Let us assume that the edge-weights have only two different values
$\sigma_{1} < \sigma_{2}$ and $|E_{1}| = 1$.
Let $e_{1} = \lbrace 1, 2 \rbrace$ be the unique edge in $E_{1}$.
Then,
there is inheritance of convexity for $\mathcal{P}_{\min}$
if and only if
\begin{enumerate}
\item
\label{itThThereExistsAtMostOneCycleCWithoutChordContaininge1}
There exists at most one chordless cycle  containing $e_1$.
\item
\label{itThCycleCOfGWithConstantEdge-Weightw2EitherCompleteOrAllVerticesLinkedToSameEndVertexOfe1}
For every cycle $C$  with constant weight $\sigma_{2}$
either $C$ is complete
or all vertices of $C$ are linked to the same end-vertex of $e_{1}$.
\end{enumerate}
\end{theorem}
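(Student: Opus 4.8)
The plan is to handle necessity and sufficiency separately, and throughout to replace ``inheritance of convexity for $\mathcal{P}_{\min}$'' by ``$(N,\overline{u_{S}})$ is convex for every $\emptyset\neq S\subseteq N$'' via Theorem~\ref{theoremEquivalenceInheritanceofConvexityForUnanimityGamesP_MAndP_min}; superadditivity of $(N,\overline{u_{S}})$ is automatic by Corollary~\ref{corIGNEiafttsihfv}. For \emph{necessity}, assume $(N,\overline{u_{S}})$ convex for all $S$. Condition~\ref{itThCycleCOfGWithConstantEdge-Weightw2EitherCompleteOrAllVerticesLinkedToSameEndVertexOfe1} I would read off Proposition~\ref{lemStrongPanConditionForCyclesWithConstantWeights} (with $\sigma_{2}=\sigma_{3}$): for a cycle $C$ of constant weight $\sigma_{2}$, if $C$ meets an end-vertex of $e_{1}$ the relevant claim gives that this end-vertex is joined to all other vertices of $C$; if $C$ meets neither $1$ nor $2$ but some vertex of $C$ is joined to one of them, the relevant claim gives that $C$ is complete or all vertices of $C$ are joined to a single end-vertex of $e_{1}$; and if $C$ is not linked to $e_{1}$ at all, $C$ is complete. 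For Condition~\ref{itThThereExistsAtMostOneCycleCWithoutChordContaininge1}, suppose $D\neq D'$ are chordless cycles both containing $e_{1}$. Two distinct chordless cycles cannot have nested or equal vertex sets, so $V(D)\setminus V(D')$ and $V(D')\setminus V(D)$ are nonempty; since $e_{1}$ is the only edge of weight $\sigma_{1}$, it is the unique common non-maximum weight edge of $D$ and $D'$, and both cycles are chord-free. The Adjacent cycles condition (available through Proposition~\ref{corPathCond}, convexity implying $\mathcal{F}$-convexity) then forces a second edge of weight $\sigma_{1}$ adjacent to $e_{1}$, contradicting $|E_{1}|=1$.

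For \emph{sufficiency}, assume Conditions~\ref{itThThereExistsAtMostOneCycleCWithoutChordContaininge1} and~\ref{itThCycleCOfGWithConstantEdge-Weightw2EitherCompleteOrAllVerticesLinkedToSameEndVertexOfe1} and set $G'=(N,E\setminus\{e_{1}\})$, whose edges all have weight $\sigma_{2}$. Since $E_{1}=\{e_{1}\}$ and $e_{1}=\{1,2\}$, the first thing to record is that, for every $Y\subseteq N$, $\mathcal{P}_{\min}(Y)$ is the partition of $Y$ into components of $G'_{Y}$ if $\{1,2\}\subseteq Y$, and the partition of $Y$ into singletons otherwise. From the hypotheses I then extract two structural facts. \emph{(i)} No cycle of constant weight $\sigma_{2}$ contains both $1$ and $2$: such a cycle would have $e_{1}$ as a chord, hence length $\geq 4$ with $1$ and $2$ joined by two internally disjoint arcs each of length $\geq 2$; reducing by chords each of the two cycles formed by such an arc together with $e_{1}$ would produce two distinct chordless cycles through $e_{1}$, against Condition~\ref{itThThereExistsAtMostOneCycleCWithoutChordContaininge1}. \emph{(ii)} Every cycle of $G'$ is either complete in $G'$ or has all of its vertices in the closed $G'$-neighbourhood of one end-vertex of $e_{1}$; this is Condition~\ref{itThCycleCOfGWithConstantEdge-Weightw2EitherCompleteOrAllVerticesLinkedToSameEndVertexOfe1} together with the case analysis of Proposition~\ref{lemStrongPanConditionForCyclesWithConstantWeights}, fact~\emph{(i)} being used to turn adjacency (resp.\ completeness) in $G$ into adjacency (resp.\ completeness) in $G'$.

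Finally I would verify condition~2) of Theorem~\ref{thLGNuPNFNFSNuSSNNuSFABFABF} with $\mathcal{F}=2^{N}\setminus\{\emptyset\}$: for all $i\in N$, $A\subseteq B\subseteq N\setminus\{i\}$ and $A'\in\mathcal{P}_{\min}(A\cup\{i\})_{|A}$, $\mathcal{P}_{\min}(A)_{|A'}=\mathcal{P}_{\min}(B)_{|A'}$. If $|A'|\geq 2$ then $\{1,2\}\subseteq A\cup\{i\}$, and if moreover $\{1,2\}\not\subseteq A$ the missing vertex is $i$, so $\{1,2\}\not\subseteq B$ (as $i\notin B$) and both $\mathcal{P}_{\min}(A)$, $\mathcal{P}_{\min}(B)$ are singleton partitions, making the equality trivial; it is also trivial when $|A'|=1$. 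So I may assume $\{1,2\}\subseteq A\subseteq B$; then the three partitions are the component partitions of $G'_{A}$, $G'_{B}$, $G'_{A\cup\{i\}}$, and it suffices to show that $u,v\in A'$ connected in $G'_{B}$ are connected in $G'_{A}$ (the converse being clear since $A\subseteq B$). If this fails, among all pairs of vertices of $A$ connected in $G'_{A\cup\{i\}}$ and in $G'_{B}$ but not in $G'_{A}$ I pick $u,v$ minimizing the sum of the lengths of shortest $u$--$v$ paths in $G'_{A\cup\{i\}}$ and in $G'_{B}$; the first must use $i$, the second a vertex of $B\setminus A$, and minimality forces these two paths to be internally disjoint (a common internal vertex lies in $A$ and yields a strictly smaller bad pair). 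Their union is then a cycle $C^{\ast}$ of $G'$ of length $\geq 4$ in which $u,v$ are non-adjacent. By fact~\emph{(ii)}, either $C^{\ast}$ is complete in $G'$, so $\{u,v\}\in E(G')$ joins $u,v$ inside $G'_{A}$; or all vertices of $C^{\ast}$, in particular $u$ and $v$, lie in the closed $G'$-neighbourhood of some $x\in\{1,2\}$, and since $x\in A$ we connect $u$ and $v$ in $G'_{A}$ by an edge or through $x$. Both cases contradict the choice of $u,v$, which finishes the argument. The step I expect to cost the most care is precisely this last one: justifying the reduction to $\{1,2\}\subseteq A$, running the minimality argument so that shared internal vertices stay inside $A$ and the two paths become internally disjoint, and checking fact~\emph{(ii)} uniformly — including the bookkeeping about whether $1$ or $2$ itself lies on $C^{\ast}$ or coincides with $u$ or $v$.
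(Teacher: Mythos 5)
Your overall strategy is sound, and on the sufficiency side it is genuinely different from the paper's. The paper proves the supermodularity inequality directly for an arbitrary convex game: when $e_1\in E(A)$ it rewrites everything in terms of the Myerson game on $G_1=(N,E\setminus E_1)$, proves from Conditions~\ref{itThThereExistsAtMostOneCycleCWithoutChordContaininge1} and~\ref{itThCycleCOfGWithConstantEdge-Weightw2EitherCompleteOrAllVerticesLinkedToSameEndVertexOfe1} a component-matching claim ($A_j\subseteq B_j$ after renumbering), and concludes via Lemma~\ref{lemPNABNPAPBAvNvB-vB>=vA-vA} and the convexity of $v$. You instead reduce to unanimity games through Theorem~\ref{theoremEquivalenceInheritanceofConvexityForUnanimityGamesP_MAndP_min} and verify the purely combinatorial condition 2) of Theorem~\ref{thLGNuPNFNFSNuSSNNuSFABFABF} with $\mathcal{F}=2^{N}\setminus\lbrace\emptyset\rbrace$; your minimal-bad-pair argument yielding two internally disjoint paths and the cycle $C^{*}$ plays the role of the paper's claim, and it is carried out correctly. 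The necessity half coincides with the paper's (Adjacent cycles condition for Condition~\ref{itThThereExistsAtMostOneCycleCWithoutChordContaininge1}, Proposition~\ref{lemStrongPanConditionForCyclesWithConstantWeights} for Condition~\ref{itThCycleCOfGWithConstantEdge-Weightw2EitherCompleteOrAllVerticesLinkedToSameEndVertexOfe1}).

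There is, however, one genuine gap, in your fact \emph{(ii)}. In the sufficiency direction you cannot invoke Proposition~\ref{lemStrongPanConditionForCyclesWithConstantWeights}: its hypothesis is exactly the convexity of all $(N,\overline{u_{S}})$, i.e.\ the conclusion you are trying to establish. And Condition~\ref{itThCycleCOfGWithConstantEdge-Weightw2EitherCompleteOrAllVerticesLinkedToSameEndVertexOfe1} together with fact \emph{(i)} does not by itself give fact \emph{(ii)}: fact \emph{(i)} only excludes constant cycles containing both $1$ and $2$, so it leaves open the case of a non-complete constant cycle $C$ with, say, $2\in V(C)$, $1\notin V(C)$, all of whose vertices are linked in $G$ to the end-vertex $1$; then vertex $2$ is linked to $1$ only by $e_{1}$, which is absent from $G'$, so $2$ does not lie in the closed $G'$-neighbourhood of $1$, and your final path $u$--$x$--$v$ can fail when $u$ or $v$ equals $2$. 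This case must be excluded separately, and it can be: if $a$ and $b$ are the two neighbours of $2$ on $C$, then $\lbrace 1,a\rbrace$ and $\lbrace 1,b\rbrace$ are edges of weight $\sigma_{2}$, so $\lbrace 1,2,a\rbrace$ and $\lbrace 1,2,b\rbrace$ induce two distinct chordless triangles containing $e_{1}$, contradicting Condition~\ref{itThThereExistsAtMostOneCycleCWithoutChordContaininge1}. (The paper needs the same observation inside its claim and disposes of it with the remark that otherwise $e_{1}$ would be a chord of a cycle.) With this patch your argument goes through.
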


We give in Figure~\ref{figPropInheritanceOfConvexityForPminOnGIffCofGWithWeightw2EitherCompleteOrAllVerticesLinkedToTheSameEndVertexOfe1}
an example of a graph satisfying conditions~\ref{itThThereExistsAtMostOneCycleCWithoutChordContaininge1}
and~\ref{itThCycleCOfGWithConstantEdge-Weightw2EitherCompleteOrAllVerticesLinkedToSameEndVertexOfe1}
of Theorem~\ref{PropInheritanceOfConvexityForPminOnGIffCofGWithWeightwEitherCompleteOrAllVerticesLinkedToTheSameEndVertexOfe1}.
\begin{figure}[!h]
\centering
\begin{pspicture}(0,-.3)(0,1.4)
\tiny
\begin{psmatrix}[mnode=circle,colsep=.7,rowsep=0.5]
{} & {} & 	&	  & {} & {}\\
{} & {} & {$1$} & {$2$} & {} & {}
\psset{arrows=-, shortput=nab,labelsep={0.02}}
\tiny
\ncline{2,3}{2,4}^{$\sigma_{1}$}_{$e_{1}$}
\ncline{2,2}{2,3}_{$\sigma_{2}$}
\ncline{1,1}{1,2}^{$\sigma_{2}$}
\ncline{1,1}{2,1}_{$\sigma_{2}$}
\ncline{1,1}{2,2}_{$\sigma_{2}$}
\ncline{1,2}{2,2}^{$\sigma_{2}$}
\ncline{2,1}{2,2}_{$\sigma_{2}$}
\ncline{2,1}{1,2}^{$\sigma_{2}$}
\ncline{1,6}{2,6}^{$\sigma_{2}$}
\ncline{2,5}{2,6}_{$\sigma_{2}$}
\ncline{1,5}{1,6}^{$\sigma_{2}$}
\ncline{1,5}{2,5}^{$\sigma_{2}$}
\ncarc[arcangle=20,linecolor=blue]{2,4}{1,5}_{$\sigma_{2}$}
\ncline[linecolor=blue]{2,4}{2,5}_{$\sigma_{2}$}
\ncarc[arcangle=80,linecolor=blue]{2,4}{1,6}^{$\sigma_{2}$}
\ncarc[arcangle=-35,linecolor=blue]{2,4}{2,6}_{$\sigma_{2}$}
\end{psmatrix}
\normalsize
\end{pspicture}
\caption{Either a constant cycle is complete or
all its vertices are linked to one end-vertex of $e_{1}$.}
\label{figPropInheritanceOfConvexityForPminOnGIffCofGWithWeightw2EitherCompleteOrAllVerticesLinkedToTheSameEndVertexOfe1}
\end{figure}
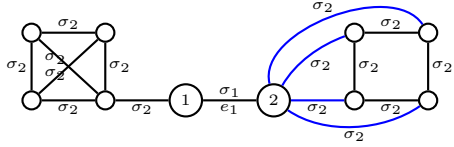

\begin{proof}
Condition~\ref{itThThereExistsAtMostOneCycleCWithoutChordContaininge1} is necessary
by the Adjacent cycles condition.
By Proposition~\ref{lemStrongPanConditionForCyclesWithConstantWeights}
(Claims~\ref{itemLemmeE1=1E&=e1w2e1}, \ref{itemLemmeE1=1E1=e1=12Cw2w322jECjVC},
and \ref{itemLemmeE1=1E1=e1e1=12Cw2w32e122'w2w3})
Condition~\ref{itThCycleCOfGWithConstantEdge-Weightw2EitherCompleteOrAllVerticesLinkedToSameEndVertexOfe1}
is also necessary.
We now prove their sufficiency.
Let $(N,v)$ be a convex game
and let us consider
$i \in N$ and subsets $A \subseteq B \subseteq N \setminus \lbrace i \rbrace$.
We consider several cases to prove that the following inequality is satisfied:
\begin{equation}
\label{eqProofv(BUi)-v(B)>=v(AUi)-v(A)}
\overline{v}(B \cup \lbrace i \rbrace) - \overline{v}(B)
\geq \overline{v}(A \cup \lbrace i \rbrace) - \overline{v}(A).
\end{equation}

Let us first assume
$E(B) \subseteq E_{2}$
(resp. $E(A \cup \lbrace i \rbrace) \subseteq E_{2}$).
Then,
we can conclude as in Case~1 in the proof
of Theorem~\ref{lemInheritanceOfConvexityForPminInheritanceOfConvexityForPMG1Cycle-Complete}.

Let us now assume $e_{1} \in E(B)$
and $e_{1} \in E(A \cup \lbrace i \rbrace)$.
Then,
we also have $e_{1} \in E(B \cup \lbrace i \rbrace)$.
If $e_{1} \notin E(A)$,
then $i = 1$ or $2$ as $e_{1} \in E(A \cup \lbrace i \rbrace)$
but it contradicts $e_{1} \in E(B)$ as $B \subseteq N \setminus \lbrace i \rbrace$.
Therefore,
we also have $e_{1} \in E(A)$ and (\ref{eqProofv(BUi)-v(B)>=v(AUi)-v(A)})
is equivalent to
$v^{M}(B \cup \lbrace i \rbrace) - v^{M}(B) \geq v^{M}(A \cup \lbrace i \rbrace) - v^{M}(A)$
where $(N,v^M)$ is
associated with $G_1=(N, E\setminus E_1)$.
Let $\mathcal{P}_{M}(A) = \lbrace A_{1}, A_{2}, \ldots, A_{p} \rbrace$
(resp.  $\mathcal{P}_{M}(B) = \lbrace B_{1}, B_{2}, \ldots, B_{q} \rbrace$)
be the partition of $A$
(resp. $B$)
into connected components in $G_1$.
If there is no link between $i$ and $A$,
then $\mathcal{P}_{M}(A \cup \lbrace i \rbrace) = \lbrace \mathcal{P}_{M}(A), \lbrace i \rbrace \rbrace$
and $v^{M}(A \cup \lbrace i \rbrace) - v^{M}(A) = v(\lbrace i \rbrace) = 0$.
Then,
(\ref{eqProofv(BUi)-v(B)>=v(AUi)-v(A)})
is equivalent to
$\overline{v}(B \cup \lbrace i \rbrace) - \overline{v}(B) \geq 0$
and this last inequality is satisfied
as $(N, \overline{v})$ is superadditive.
Otherwise,
we have
$\mathcal{P}_{M}(A \cup \lbrace i \rbrace) =
\lbrace A_{1} \cup \ldots \cup A_{r} \cup \lbrace i \rbrace, A_{r+1}, \ldots, A_{p} \rbrace$
(resp. $\mathcal{P}_{M}(B \cup \lbrace i \rbrace) =
\lbrace B_{1} \cup \ldots \cup B_{s} \cup \lbrace i \rbrace, B_{s+1}, \ldots, B_{q} \rbrace$)
with $1 \leq r \leq p$
(resp. $1 \leq s \leq q$),
after reordering if necessary.
Then,
setting $A' = A_{1} \cup \ldots \cup A_{r}$
and $B' = B_{1} \cup \ldots \cup B_{s}$,
(\ref{eqProofv(BUi)-v(B)>=v(AUi)-v(A)}) is equivalent to
\begin{equation}
\label{eqProofPropInheritanceOfConvexityForPminOnGIffCofGWithWeightw2EitherCompleteOrAllVerticesLinkedToTheSameEndVertexOfe1vB'Ui-SumvBj>=vA'Ui-SumvAj}
v(B' \cup \lbrace i \rbrace)
- \sum_{j=1}^{s} v(B_{j}) \geq
v(A' \cup \lbrace i \rbrace)
- \sum_{j=1}^{r} v(A_{j}).
\end{equation}
Let us observe that
obviously
$\mathcal{P}_{M}(A)$ is a refinement of $\mathcal{P}_{M}(B)_{|A}$.
To complete the proof we need the following claim.

\begin{claim}
\label{lemAssumptionOnCyclesWithConstantWeightW2ImpliesPM(B)_A'=PM(A)_A'}
Conditions~\ref{itThThereExistsAtMostOneCycleCWithoutChordContaininge1}
and~\ref{itThCycleCOfGWithConstantEdge-Weightw2EitherCompleteOrAllVerticesLinkedToSameEndVertexOfe1}
imply
$A_{j} \subseteq B_{j}$, for all $j$,
$1 \leq j \leq r$,
after renumbering if necessary.
\end{claim}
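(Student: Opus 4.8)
The plan is to prove the stronger statement that the refinement map is injective on the indices $\lbrace 1,\ldots,r\rbrace$. Since $\mathcal{P}_{M}(A)$ refines $\mathcal{P}_{M}(B)_{|A}$, each block $A_{l}$ is contained in a unique block $B_{\pi(l)}$ of $\mathcal{P}_{M}(B)$; moreover $A'\cup\lbrace i\rbrace=A_{1}\cup\cdots\cup A_{r}\cup\lbrace i\rbrace$ is connected in the subgraph of $G_{1}$ induced by $A\cup\lbrace i\rbrace$, hence in the one induced by $B\cup\lbrace i\rbrace$, so it lies in the component $B'\cup\lbrace i\rbrace$ of $i$; thus $A'\subseteq B'$ and $\pi(l)\le s$ for $l\le r$. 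Consequently it suffices to show that $\pi$ restricted to $\lbrace 1,\ldots,r\rbrace$ is injective: a permutation of $B_{1},\ldots,B_{s}$ then yields $A_{j}\subseteq B_{j}$ for $1\le j\le r$.

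So I would argue by contradiction: assume $A_{j}$ and $A_{j'}$ lie in a common component $B_{k}$ of $G_{1}$ restricted to $B$, for some $j\ne j'$ in $\lbrace 1,\ldots,r\rbrace$. Because the component of $i$ in $G_{1}$ restricted to $A\cup\lbrace i\rbrace$ is $A_{1}\cup\cdots\cup A_{r}\cup\lbrace i\rbrace$, while $A_{1},\ldots,A_{r}$ are pairwise non-adjacent in $G_{1}$ restricted to $A$, the vertex $i$ must have a $G_{1}$-neighbour in each of them; pick $u\in A_{j}$ and $u'\in A_{j'}$ with $\lbrace i,u\rbrace,\lbrace i,u'\rbrace\in E\setminus E_{1}$. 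Since $A_{j},A_{j'}\subseteq B_{k}$ there is a simple path $Q$ from $u$ to $u'$ in $G_{1}$ restricted to $B$, and $i\notin B$, so $C=i,\lbrace i,u\rbrace,u,Q,u',\lbrace i,u'\rbrace,i$ is a simple cycle; as the only edge-weights are $\sigma_{1}<\sigma_{2}$ and $E_{1}$ is exactly the set of $\sigma_{1}$-edges, every edge of $C$ has weight $\sigma_{2}$, i.e.\ $C$ is a constant cycle of weight $\sigma_{2}$. Also $e_{1}\in E(A)$ forces $1,2\in A\subseteq B$.

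Now I would feed $C$ into Condition~\ref{itThCycleCOfGWithConstantEdge-Weightw2EitherCompleteOrAllVerticesLinkedToSameEndVertexOfe1}. If $C$ is complete, the edge $\lbrace u,u'\rbrace$ of $G$ either has weight $\sigma_{2}$ — impossible, since then $u$ and $u'$ would lie in the same component of $G_{1}$ restricted to $A$, whereas $A_{j}\ne A_{j'}$ — or has weight $\sigma_{1}$, hence $\lbrace u,u'\rbrace=e_{1}=\lbrace 1,2\rbrace$ and $\lbrace 1,2\rbrace\subseteq A_{j}\cup A_{j'}\subseteq B_{k}$. If instead all vertices of $C$ are linked to a common end-vertex $w$ of $e_{1}$ with $w\notin V(C)$, then $\lbrace u,w\rbrace,\lbrace u',w\rbrace\in E$: if both had weight $\sigma_{2}$ then $w$ would be in the $G_{1}$-components of both $u$ and $u'$, i.e.\ $w\in A_{j}\cap A_{j'}=\emptyset$, absurd; hence (only one $\sigma_{1}$-edge) exactly one of them, say $\lbrace u,w\rbrace$, is $e_{1}$, so again $\lbrace 1,2\rbrace=\lbrace u,w\rbrace\subseteq A_{j}\cup A_{j'}\subseteq B_{k}$. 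Thus in every surviving branch $\lbrace 1,2\rbrace\subseteq B_{k}$, and we also have a short cycle through $e_{1}$ and $i$ built from $\lbrace i,u\rbrace$, $\lbrace i,u'\rbrace$ and one of $\lbrace u,w\rbrace,\lbrace u',w\rbrace$ (a triangle when $\lbrace u,u'\rbrace=e_{1}$, a $4$-cycle otherwise); deleting chords from it leaves a chordless cycle through $e_{1}$ that still contains $i$, the only triangle on $e_{1}$ inside it that could omit $i$ requiring $\lbrace u,u'\rbrace\in E$, which the previous discussion excludes in the relevant branch. On the other hand, since $1,2\in B_{k}$ there is a simple path $P$ from $1$ to $2$ in $G_{1}$ restricted to $B$; as $e_{1}\notin E(G_{1})$, adjoining $e_{1}$ to $P$ gives a cycle, and deleting chords leaves a chordless cycle through $e_{1}$ with vertex set inside $V(P)\subseteq B$, hence avoiding $i$. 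These two chordless cycles through $e_{1}$ are distinct, contradicting Condition~\ref{itThThereExistsAtMostOneCycleCWithoutChordContaininge1}. This rules out the assumption, so $\pi$ is injective on $\lbrace 1,\ldots,r\rbrace$ and the Claim follows.

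I expect the delicate point to be exactly the branch in which one of the edges joining a vertex of $C$ to an end-vertex of $e_{1}$ is $e_{1}$ itself: there the ``same component of $G_{1}|_{A}$'' argument collapses, and one cannot produce a direct contradiction but must instead exhibit a \emph{second} chordless cycle through $e_{1}$. The mechanism that saves the day is that in precisely those branches $1$ and $2$ are forced into the same component $B_{k}$ of $G_{1}$ restricted to $B$, which furnishes a chordless cycle through $e_{1}$ not using $i$, to be paired with the obvious one through $i$; some care is also needed to verify that the small cycle on $e_{1}$ extracted from the $\lbrace i,u,u',w\rbrace$ configuration genuinely retains $i$ once its chords are removed.
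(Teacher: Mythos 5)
Your proof is correct and follows essentially the same route as the paper's: assume two $A$-components sit inside one $B$-component, build the constant-weight cycle through $i$ from a connecting path, apply the completeness/linking dichotomy of Condition~\ref{itThCycleCOfGWithConstantEdge-Weightw2EitherCompleteOrAllVerticesLinkedToSameEndVertexOfe1}, and in each branch either produce a $\sigma_{2}$-edge merging the two components or invoke Condition~\ref{itThThereExistsAtMostOneCycleCWithoutChordContaininge1}. The only difference is one of detail: where the paper tersely asserts that the branches in which the merging edge would be $e_{1}$ contradict Condition~\ref{itThThereExistsAtMostOneCycleCWithoutChordContaininge1}, you make the mechanism explicit by showing $\lbrace 1,2\rbrace$ falls into the common $B$-component and exhibiting two distinct chordless cycles through $e_{1}$ (one through $i$, one inside $B$), which is a welcome elaboration rather than a different argument.
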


By contradiction,
let us assume that two components $A_{1}$ and $A_{2}$ of
$\mathcal{P}_{M}(A)_{|A'}$
are subsets of the same component $B_{1} \in \mathcal{P}_{M}(B)$,
after renumbering if necessary.
Let $\tilde{e}_{1} = \lbrace i, k_{1} \rbrace$ (resp. $\tilde{e}_{2} = \lbrace i, k_{2} \rbrace$)
be an edge linking $i$ to $A_{1}$ (resp. $A_{2}$).
As $i \notin \lbrace 1, 2 \rbrace$,
$\tilde{e}_{1}$ and $\tilde{e}_{2}$ are in $E_{2}$.
As $B_{1}$ is connected,
there exists an elementary path $\gamma$ in $\tilde{G}_{B_1}$
linking $k_{1} \in A_{1}$ to $k_{2} \in A_{2}$.
We obtain a simple cycle
$C = \lbrace i, \tilde{e}_{1}, k_{1} \rbrace \cup \gamma \cup \lbrace k_{2}, \tilde{e}_{2}, i \rbrace$ 
of constant weight $\sigma_2$.
If $C$ is complete,
then $\lbrace k_{1}, k_{2} \rbrace$ is a chord of $C$.
Condition~\ref{itThThereExistsAtMostOneCycleCWithoutChordContaininge1}
implies $\lbrace k_{1}, k_{2} \rbrace\neq e_1$. 
Then,
$\lbrace k_{1}, k_{2} \rbrace$ links $A_{1}$ to $A_{2}$ in $G_{1}$,
a contradiction.
If $C$ is not complete,
then by Condition~\ref{itThCycleCOfGWithConstantEdge-Weightw2EitherCompleteOrAllVerticesLinkedToSameEndVertexOfe1}
all vertices of $C$ are linked to the same end-vertex $v$ of $e_{1}$.
We can assume w.l.o.g. $v=1$
as represented
in Figure~\ref{figProofLemAssumptionOnCyclesWithConstantWeightW2ImpliesPM(B)_A'=PM(A)_A'}.
As $e_{1} \in E(A)$ and as $k_{1}$ and $k_{2}$ are in $A$,
we have $\lbrace 1, k_{1} \rbrace$ and $\lbrace 1, k_{2} \rbrace$ in $E(A)$.
We also have $\lbrace 1, k_{1} \rbrace\neq e_1$ and $\lbrace 1, k_{2} \rbrace\neq e_1$,
otherwise $e_{1}$ would be a chord of a cycle contradicting
Condition~\ref{itThThereExistsAtMostOneCycleCWithoutChordContaininge1}.
Then,
$A_{1}$ and $A_{2}$ are part of a connected component of $A$ in $G_{1}$,
a contradiction.
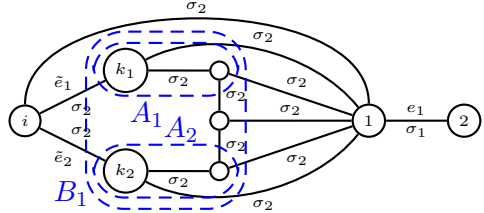
\begin{figure}[!h]
\centering
\begin{pspicture}(0,-.3)(0,2.4)
\tiny
\begin{psmatrix}[mnode=circle,colsep=.8,rowsep=0.1]
		&	{$k_{1}$} & {}\\
{$i$}	& 	&	{} &  & {$1$} & {$2$}\\
		& {$k_{2}$} & {}
\psset{arrows=-, shortput=nab,labelsep={0.05}}
\tiny
\ncline{2,1}{1,2}_{$\sigma_{2}$}^{$\tilde{e}_{1}$}
\ncline{2,1}{3,2}^{$\sigma_{2}$}_{$\tilde{e}_{2}$}
\ncline{1,2}{1,3}_{$\sigma_{2}$}
\ncline{1,3}{2,3}^{$\sigma_{2}$}
\ncline{2,3}{3,3}^{$\sigma_{2}$}
\ncline{3,2}{3,3}_{$\sigma_{2}$}
\ncline{2,3}{2,5}^{$\sigma_{2}$}
\ncline{1,3}{2,5}^{$\sigma_{2}$}
\ncline{3,3}{2,5}_{$\sigma_{2}$}
\ncline{2,5}{2,6}_{$\sigma_{1}$}^{$e_{1}$}
\ncarc[arcangle=-40]{3,2}{2,5}_{$\sigma_{2}$}
\ncarc[arcangle=40]{1,2}{2,5}^{$\sigma_{2}$}
\ncarc[arcangle=90]{2,1}{2,5}^{$\sigma_{2}$}
\end{psmatrix}
\normalsize
\pspolygon[framearc=1,linestyle=dashed,linecolor=blue,linearc=.4](-5.1,1)(-5.1,1.7)(-3.2,1.7)(-3.2,1)
\uput[0](-4.8,.75){\textcolor{blue}{$A_{1}$}}
\pspolygon[framearc=1,linestyle=dashed,linecolor=blue,linearc=.4](-5.1,-.35)(-5.1,.35)(-3.2,.35)(-3.2,-.35)
\uput[0](-4.35,.55){\textcolor{blue}{$A_{2}$}}
\pspolygon[framearc=1,linestyle=dashed,linecolor=blue,linearc=.5]
(-5.2,-.5)(-5.2,1.85)(-3.1,1.85)(-3.1,-.5)
\uput[0](-5.8,-.3){\textcolor{blue}{$B_{1}$}}
\end{pspicture}
\caption{$e_{1}$ in $E(A)$ and $k_{1}$ in $A_{1}$, $k_{2}$ in $A_{2}$.}
\label{figProofLemAssumptionOnCyclesWithConstantWeightW2ImpliesPM(B)_A'=PM(A)_A'}
\end{figure}

We now end the proof of
Theorem~\ref{PropInheritanceOfConvexityForPminOnGIffCofGWithWeightwEitherCompleteOrAllVerticesLinkedToTheSameEndVertexOfe1}.
By Claim~\ref{lemAssumptionOnCyclesWithConstantWeightW2ImpliesPM(B)_A'=PM(A)_A'},
we have $\mathcal{P}_{M}(A') = \mathcal{P}_{M}(B')_{|A'}$.
Then,
Lemma~\ref{lemPNABNPAPBAvNvB-vB>=vA-vA}
applied to $\mathcal{P}_{M}$
and the family $\mathcal{F}$
of connected subsets of $N$
implies
$v(B') - \sum_{j=1}^{s} v(B_{j})
\geq
v(A') - \sum_{j=1}^{r} v(A_{j})$.
The convexity of $(N,v)$ also implies
$v(B' \cup \lbrace i \rbrace) - v(B')
\geq v(A' \cup \lbrace i \rbrace) - v(A')$.
Adding these last inequalities
we obtain
(\ref{eqProofPropInheritanceOfConvexityForPminOnGIffCofGWithWeightw2EitherCompleteOrAllVerticesLinkedToTheSameEndVertexOfe1vB'Ui-SumvBj>=vA'Ui-SumvAj}).
\end{proof}

\begin{theorem}
\label{PropInheritanceOfConvexityForPminOnGIffCofGWithWeightw2EitherCompleteOrAllVerticesLinkedToTheSameEndVertexOfe12}
Let $G=(N,E,w)$ be a connected weighted graph.
Let us assume that the edge-weights have three different values
$\sigma_{1} < \sigma_{2}<\sigma_3$.
Then,
there is inheritance of convexity for $\mathcal{P}_{\min}$
if and only if
\begin{enumerate}
\item
\label{ThereExistsOnlyOneEdgeInE1}
There is only one edge
$e_{1} = \lbrace 1, 2 \rbrace$ in $E_1$.
\item
\label{EveryEdgeOfWeightSigma2IsIncidentToTheSameEnd-Vertex2Ofe1}
Every edge in $E_{2}$ is incident to the same end-vertex $2$ of $e_1$.
\item
\label{EveryEdgeOfWeightSigma3IsConnectedTo2Bye1OrByAnEdgeOfWeightSigma2}
Every edge
in $E_3$
is linked to $2$ by $e_1$ or by an edge in $E_2$.
\item
\label{ThereExistsAtMostOneChordlessCycleCmWithm=3or4Containinge1}
There exists at most one chordless cycle $\tilde{C}_m$ with $m=3$ or $4$
containing $e_1$.
\item
\label{EveryCycleWhichDoesNotContainse1IsComplete}
$G_1=(N, E\setminus E_1)$ is cycle-complete.
\end{enumerate}
Moreover,
these conditions imply:
\begin{enumerate}
\setcounter{enumi}{5}
\item
\label{itemThIfACycleCmDoesNotContaine1AndIf1inV(Cm)Thenm=3}
If a cycle $C_m$ does not contain $e_1$
and if $1 \in V(C_m)$,
then $m=3$ and such a cycle is unique,
has constant weight $\sigma_3$,
and is adjacent to a unique triangle $\tilde{C}_{3}$ containing $e_{1}$.
Moreover,
$\tilde{C}_{3} = \lbrace 1, e_{1}, 2, e_{2}, 3, e_{3}, 1 \rbrace$
with
$w_{i} = \sigma_{i}$ for $i \in \lbrace 1, 2, 3 \rbrace$,
and
$E(C_{3}) \cap E(\tilde{C}_{3}) = \lbrace e_{3} \rbrace$.
\end{enumerate}
\end{theorem}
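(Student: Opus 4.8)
The plan is to obtain necessity directly from the structural results of Section~\ref{SectionInheritanceOfConvexityWithPmin}, and sufficiency by first giving an explicit description of $\overline{v}$ in terms of two Myerson games and then running a short case analysis.

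\textbf{Necessity.} Assume there is inheritance of convexity for $\mathcal{P}_{\min}$. Since each $u_{S}$ is convex, $(N,\overline{u_{S}})$ is convex for every $\emptyset\neq S\subseteq N$, so all hypotheses of Section~\ref{SectionInheritanceOfConvexityWithPmin} are met. Conditions~\ref{ThereExistsOnlyOneEdgeInE1}, \ref{EveryEdgeOfWeightSigma2IsIncidentToTheSameEnd-Vertex2Ofe1} and \ref{EveryEdgeOfWeightSigma3IsConnectedTo2Bye1OrByAnEdgeOfWeightSigma2} are precisely the content of Propositions~\ref{propAtMost3DifferentEdgeWeights} and \ref{lemAllEdgesOfWeightw2AreIncidentToSameEndVertexofe1}. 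Condition~\ref{ThereExistsAtMostOneChordlessCycleCmWithm=3or4Containinge1} follows from the remark after Proposition~\ref{lemAllEdgesOfWeightw2AreIncidentToSameEndVertexofe1}: any cycle through $e_{1}$ has length at most $4$, only $\tilde{C}_{3}$ and $\tilde{C}_{4}$ can be chordless, and the Adjacent cycles condition forbids the coexistence of two of them. For Condition~\ref{EveryCycleWhichDoesNotContainse1IsComplete}, Claim~\ref{itemLemmew_1<w_2<w_3e1=1,2w_1w_22C_me_11V(C_m)m=3C_3e_1} of Proposition~\ref{lemStrongPanConditionForCyclesWithConstantWeights} shows that every cycle $C$ of $G$ with $e_{1}\notin E(C)$ is complete and satisfies $e_{1}\notin\hat{E}(C)$; the cycles of $G_{1}=(N,E\setminus E_{1})$ are exactly these cycles, and since $e_{1}$ is never a chord of them their chords in $G_{1}$ are their chords in $G$, so $G_{1}$ is cycle-complete.

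\textbf{The ``Moreover'' part.} Once sufficiency is proved, Conditions~\ref{ThereExistsOnlyOneEdgeInE1}--\ref{EveryCycleWhichDoesNotContainse1IsComplete} imply inheritance of convexity, hence $(N,\overline{u_{S}})$ is convex for all $S$, so Claim~\ref{itemLemmew_1<w_2<w_3e1=1,2w_1w_22C_me_11V(C_m)m=3C_3e_1} of Proposition~\ref{lemStrongPanConditionForCyclesWithConstantWeights} applies: any cycle $C_{m}$ avoiding $e_{1}$ is complete with $e_{1}\notin\hat{E}(C_{m})$, and if $1\in V(C_{m})$ its proof forces $m=3$ and produces the triangle $\tilde{C}_{3}$ together with the stated edge-weights and the identity $E(C_{3})\cap E(\tilde{C}_{3})=\{e_{3}\}$. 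That $\tilde{C}_{3}$ is the unique chordless cycle through $e_{1}$ is Condition~\ref{ThereExistsAtMostOneChordlessCycleCmWithm=3or4Containinge1}; uniqueness of $C_{3}$ follows since a second such triangle would have to share with $\tilde{C}_{3}$ its only $\sigma_{3}$-edge $e_{3}$, producing a $4$-cycle through $1$ not containing $e_{1}$, contradicting ``$m=3$''.

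\textbf{Sufficiency.} Assume \ref{ThereExistsOnlyOneEdgeInE1}--\ref{EveryCycleWhichDoesNotContainse1IsComplete}, write $e_{1}=\{1,2\}$, and let $v^{M}$ (resp. $v^{M}_{3}$) be the Myerson restricted game on $G_{1}=(N,E\setminus E_{1})$ (resp. on $G_{3}=(N,E_{3})$). Using Conditions~\ref{ThereExistsOnlyOneEdgeInE1}--\ref{EveryEdgeOfWeightSigma3IsConnectedTo2Bye1OrByAnEdgeOfWeightSigma2} and the fact that $e_{1}$ is the unique minimum-weight edge, one checks for every $A\subseteq N$ that $\overline{v}(A)=v^{M}(A)$ if $\{1,2\}\subseteq A$; that $\overline{v}(A)=v^{M}_{3}(A)$ if $2\in A$, $1\notin A$ and $E(A)\cap E_{2}\neq\emptyset$; and $\overline{v}(A)=0$ otherwise, noting moreover that if $2\in A$, $1\notin A$ and $E(A)\cap E_{2}=\emptyset$ then Condition~\ref{EveryEdgeOfWeightSigma3IsConnectedTo2Bye1OrByAnEdgeOfWeightSigma2} forces $E(A)=\emptyset$ (an edge of $E_{3}$ in $E(A)$ would be linked to $2$ by an edge of $E_{2}$ having both ends in $A$). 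The relevant facts are: $v^{M}$ is convex by Condition~\ref{EveryCycleWhichDoesNotContainse1IsComplete} and Theorem~\ref{NouwelandandBorm1991}; a cycle of $G_{3}$ avoids vertex $2$ and all its $G$-chords have weight $\sigma_{3}$, so $G_{3}$ is cycle-complete and $v^{M}_{3}$ is convex; both games, being zero-normalized, are non-negative and monotone; on every $S\subseteq N\setminus\{1,2\}$ the graphs $G_{1}$ and $G_{3}$ coincide, so $v^{M}(S)=v^{M}_{3}(S)$; and $\overline{v}$ is superadditive by Corollary~\ref{corIGNEiafttsihfv}.

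For $A\subseteq B\subseteq N\setminus\{i\}$ the inequality $\overline{v}(B\cup\{i\})-\overline{v}(B)\ge\overline{v}(A\cup\{i\})-\overline{v}(A)$ is then proved by cases according to the membership of $1$ and $2$ in $A$ and $B$ and whether $i\in\{1,2\}$; in each case it reduces, through the above description of $\overline{v}$, to monotonicity of $v^{M}$ or $v^{M}_{3}$, to superadditivity of $\overline{v}$, or to the supermodularity of $v^{M}$ (or $v^{M}_{3}$), occasionally after adding two such inequalities (typically a marginal in $1$ and a marginal in $2$) and using the identity $v^{M}=v^{M}_{3}$ off $\{1,2\}$ together with $v^{M}(A\setminus\{2\})=0$ when $2\in A$, $1\notin A$, $E(A)\cap E_{2}=\emptyset$. \emph{The main obstacle} is exactly this sufficiency step: establishing the three-regime formula for $\overline{v}$ correctly --- in particular the second/third regimes, where adjoining a vertex incident to $2$ changes the minimum edge-weight and ``reactivates'' the $\sigma_{3}$-edges --- and verifying that every transition between regimes still collapses to properties of $v^{M}$ and $v^{M}_{3}$; it is here that Condition~\ref{EveryEdgeOfWeightSigma3IsConnectedTo2Bye1OrByAnEdgeOfWeightSigma2} (no ``orphan'' $\sigma_{3}$-edge) and the coincidence of $G_{1}$ and $G_{3}$ away from $\{1,2\}$ are essential.
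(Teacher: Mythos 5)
Your necessity argument is sound and follows the paper's own route (Propositions~\ref{propAtMost3DifferentEdgeWeights} and~\ref{lemAllEdgesOfWeightw2AreIncidentToSameEndVertexofe1}, the Adjacent cycles condition, and Claim~\ref{itemLemmew_1<w_2<w_3e1=1,2w_1w_22C_me_11V(C_m)m=3C_3e_1} of Proposition~\ref{lemStrongPanConditionForCyclesWithConstantWeights}), and your three-regime description of $\overline{v}$ (equal to $v^{M}$ on $G_{1}$ when $\{1,2\}\subseteq A$, to $v^{M}_{3}$ on $\tilde G_{3}=(N,E_{3})$ when $2\in A$, $1\notin A$, $E(A)\cap E_{2}\neq\emptyset$, and to $0$ otherwise) is correct and is exactly the bookkeeping the paper uses implicitly. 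The gap is in the sufficiency step, which you yourself flag as ``the main obstacle'': the case analysis is announced but not carried out, and the claimed reductions do not cover the genuinely hard regimes. When $2\in A$ and $i=1$ (the paper's Case~5), the inequality becomes
\[
v^{M}(B\cup\{1\})-v^{M}_{3}(B)\;\geq\; v^{M}(A\cup\{1\})-v^{M}_{3}(A),
\]
which mixes the two Myerson games and does not follow from monotonicity, superadditivity, or supermodularity of either one, nor from ``adding a marginal in $1$ and a marginal in $2$.'' The paper handles it by proving two combinatorial claims: that the $\tilde G_{3}$-components of $A_{1}$ (the $G_{1}$-component of $A$ containing $2$) sit inside distinct $\tilde G_{3}$-components of $B_{1}$ (Claim~\ref{Claim1-2}), and that at most one $G_{1}$-component of $A$ linked to $i$ is contained in $B_{1}$ (Claim~\ref{Claim5}); both are proved by constructing a cycle in $G_{1}$ from two attaching edges plus a connecting path and invoking cycle-completeness (Condition~\ref{EveryCycleWhichDoesNotContainse1IsComplete}) to produce a chord that merges the two components, a contradiction. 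Only then can Lemma~\ref{lemPNABNPAPBAvNvB-vB>=vA-vA} be applied to $\mathcal{P}_{M}$ on $\tilde G_{3}$ to get the key intermediate inequality. The analogous refinement argument (showing $r=1$) is also needed in the case $2\in A$, $1\in B\setminus A$. None of these appear in your proposal, so the sufficiency direction remains unproven.

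Two smaller points. First, your derivation of the ``Moreover'' part by first establishing sufficiency and then re-invoking Claim~\ref{itemLemmew_1<w_2<w_3e1=1,2w_1w_22C_me_11V(C_m)m=3C_3e_1} is logically admissible (the paper instead derives it directly from Conditions~\ref{ThereExistsOnlyOneEdgeInE1}--\ref{EveryCycleWhichDoesNotContainse1IsComplete}), but it inherits the gap above. Second, your uniqueness argument for the triangle $C_{3}$ only treats the case where a second triangle shares the $\sigma_{3}$-edge with the first; the paper must also exclude a non-adjacent second triangle, which it does by exhibiting two $E_{2}$-edges into vertex $2$ and hence two chordless triangles through $e_{1}$, contradicting Condition~\ref{ThereExistsAtMostOneChordlessCycleCmWithm=3or4Containinge1}.
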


\begin{remark}
The Star, Path, Cycle, Pan 
and Adjacent cycles  conditions are straightforward consequences
of Conditions~\ref{ThereExistsOnlyOneEdgeInE1} to \ref{EveryCycleWhichDoesNotContainse1IsComplete}
in Theorem~\ref{PropInheritanceOfConvexityForPminOnGIffCofGWithWeightw2EitherCompleteOrAllVerticesLinkedToTheSameEndVertexOfe12}.
\end{remark}

\begin{proof}[Proof of theorem~\ref{PropInheritanceOfConvexityForPminOnGIffCofGWithWeightw2EitherCompleteOrAllVerticesLinkedToTheSameEndVertexOfe12}]
By Proposition~\ref{lemAllEdgesOfWeightw2AreIncidentToSameEndVertexofe1}
and the Adjacent cycles condition,
Conditions~\ref{ThereExistsOnlyOneEdgeInE1}
to~\ref{ThereExistsAtMostOneChordlessCycleCmWithm=3or4Containinge1}
are necessary.
By Proposition~\ref{lemStrongPanConditionForCyclesWithConstantWeights}
(Claim~\ref{itemLemmew_1<w_2<w_3e1=1,2w_1w_22C_me_11V(C_m)m=3C_3e_1}),
Condition~\ref{EveryCycleWhichDoesNotContainse1IsComplete}
is necessary.
We now prove their sufficiency.
Let us consider a convex game $(N,v)$,
$i \in N$ and subsets $A \subseteq B \subseteq N \setminus \lbrace i \rbrace$.
We have to prove that the following inequality is satisfied:
\begin{equation}
\label{eqProofPropInheritanceOfConvexityForPminOnGIffCofGWithWeightw2EitherCompleteOrAllVerticesLinkedToTheSameEndVertexOfe1vB'Ui-vB'>=vA'Ui-vA'1}
\overline{v}(B \cup \lbrace i \rbrace) - \overline{v}(B)
\geq \overline{v}(A \cup \lbrace i \rbrace) -\overline{v}(A).
\end{equation}
Let us note that
if $i$ is not linked to $A$,
then
(\ref{eqProofPropInheritanceOfConvexityForPminOnGIffCofGWithWeightw2EitherCompleteOrAllVerticesLinkedToTheSameEndVertexOfe1vB'Ui-vB'>=vA'Ui-vA'1})
is trivially satisfied
as $(N, \overline{v})$ is superadditive (cf. Corollary~\ref{corIGNEiafttsihfv}).
If $i = 2$,
Conditions~\ref{ThereExistsOnlyOneEdgeInE1}
and~\ref{EveryEdgeOfWeightSigma2IsIncidentToTheSameEnd-Vertex2Ofe1}
imply $E(A) \subseteq E(B) \subseteq E_{3}$.
Then,
$\mathcal{P}_{\min}(A)$ and $\mathcal{P}_{\min}(B)$ are singletons partitions
and (\ref{eqProofPropInheritanceOfConvexityForPminOnGIffCofGWithWeightw2EitherCompleteOrAllVerticesLinkedToTheSameEndVertexOfe1vB'Ui-vB'>=vA'Ui-vA'1}) 
is equivalent to
$\overline{v}(B\cup\lbrace i\rbrace)\ge \overline{v}(A\cup\lbrace i\rbrace)$.
This last inequality is satisfied as $(N, \overline{v})$ is superadditive.

We thereafter assume $i$ linked to $A$ by at least one edge
and $i \not = 2$,
and consider several cases.\\

\noindent
\textbf{Case 1}
Let us assume
$2 \notin A$.
Conditions~\ref{ThereExistsOnlyOneEdgeInE1}
and~\ref{EveryEdgeOfWeightSigma2IsIncidentToTheSameEnd-Vertex2Ofe1}
imply $E(A) \subseteq E(A \cup \lbrace i \rbrace) \subseteq E_3$.
Then,
$\mathcal{P}_{\min}(A)$
and $\mathcal{P}_{\min}(A \cup \lbrace i \rbrace)$ are singleton partitions
and
$\overline{v}(A \cup\lbrace i\rbrace) - \overline{v}(A) = v(\lbrace i \rbrace) = 0$.
As $(N, \overline{v})$ is superadditive
(cf. Corollary~\ref{corIGNEiafttsihfv}),
(\ref{eqProofPropInheritanceOfConvexityForPminOnGIffCofGWithWeightw2EitherCompleteOrAllVerticesLinkedToTheSameEndVertexOfe1vB'Ui-vB'>=vA'Ui-vA'1}) is satisfied.\\

\noindent
\textbf{Case 2}
Let us assume $2\in A$ and $1 \in A$.
Then,
$e_{1}$ belongs to $E(A)$,
$E(A\cup \lbrace i\rbrace)$, $E(B)$, $E(B\cup \lbrace i\rbrace)$,
and (\ref{eqProofPropInheritanceOfConvexityForPminOnGIffCofGWithWeightw2EitherCompleteOrAllVerticesLinkedToTheSameEndVertexOfe1vB'Ui-vB'>=vA'Ui-vA'1})
is equivalent to
$v^M(B\cup\lbrace i\rbrace) - v^M(B) \geq
v^M(A\cup\lbrace i\rbrace) - v^M(A)$
where $(N,v^{M})$ is the Myerson restricted game associated
with $G_1 = ( N, E\setminus \lbrace e_1 \rbrace)$.
By Condition~\ref{EveryCycleWhichDoesNotContainse1IsComplete},
$G_1$ is cycle-complete.
Then,
$(N, v^M)$ is convex by Theorem~\ref{NouwelandandBorm1991}
and therefore
(\ref{eqProofPropInheritanceOfConvexityForPminOnGIffCofGWithWeightw2EitherCompleteOrAllVerticesLinkedToTheSameEndVertexOfe1vB'Ui-vB'>=vA'Ui-vA'1}) is satisfied.\\

\noindent
\textbf{Case 3}
Let us assume $2 \in A$ and $1 \in B \setminus A$.
Then,
$e_{1} \in E(B) \setminus E(A)$,
$i \notin \lbrace 1,2 \rbrace$,
$e_{1} \notin E(A \cup \lbrace i \rbrace)$,
and $e_1\in E(B \cup \lbrace i \rbrace)$.
As in Case~2,
$\overline{v}(B \cup \lbrace i \rbrace) - \overline{v}(B) =
v^{M}(B \cup \lbrace i \rbrace) - v^{M}(B)$
where $(N,v^{M})$ is
associated with~$G_1 = ( N, E\setminus \lbrace e_1 \rbrace)$.
As $i$ is linked to $A$ in $G$,
and as $i \notin \lbrace 1, 2 \rbrace$, 
$i$ is also linked to $A$ in $G_{1}$.
Let $\hat{A} = \lbrace A_{1}, \ldots, A_p \rbrace$
with $p \geq 1$
be the set of connected components of $A$ in $G_{1}$
linked to $i$.
By Conditions~\ref{EveryEdgeOfWeightSigma2IsIncidentToTheSameEnd-Vertex2Ofe1}
and \ref{EveryEdgeOfWeightSigma3IsConnectedTo2Bye1OrByAnEdgeOfWeightSigma2},
any edge in $G_{1}$ is either incident to $1$ or $2$
or linked to $2$ by an edge in $E_{2}$.
Therefore,
$\hat{A}$
is only made up of one component containing~$2$,
and possibly singleton components
(the component containing $2$ may be reduced to a singleton).
Note that if $\hat{A}$ contains a singleton different from $2$,
then the edge $\lbrace i, 2 \rbrace$ exists in $G_{1}$.
Hence,
there necessarily is an element in $\hat{A}$
containing $2$.
Let us assume $2 \in A_{1}$ after renumbering if necessary.
Then,
as $A_{2}, \ldots, A_{p}$ are singletons,
we get
$v^{M}(A \cup \lbrace i \rbrace) - v^{M}(A)=
v(\bigcup_{j=1}^{p} A_j \cup \lbrace i \rbrace)
-  v(A_1)$.
By Condition~\ref{EveryCycleWhichDoesNotContainse1IsComplete},
$G_1$ is cycle-complete.
Then,
$(N, v^M)$ is convex by Theorem~\ref{NouwelandandBorm1991}
and this implies
\begin{equation}
\label{eqvBi-vB>=vj=1qAji-vA1}
\overline{v}(B \cup \lbrace i \rbrace) - \overline{v}(B) \geq
v\left(\bigcup_{j=1}^{p} A_j \cup \lbrace i \rbrace \right)
-  v(A_1).
\end{equation}
As $e_{1} \notin E(A)$,
if $E(A)\neq\emptyset$,
then by Condition~\ref{EveryEdgeOfWeightSigma3IsConnectedTo2Bye1OrByAnEdgeOfWeightSigma2}
any edge in $E(A) \cap E_3$ is linked to $2$ by an edge in $E(A)\cap E_2$.
Hence,
$E(A)\cap E_2\neq\emptyset$ and
$\overline{v}(A) = v^M(A)$
where $(N,v^{M})$ is associated with $\tilde G_3:= (N, E_{3})$.
If $E(A) = \emptyset$,
we trivially have $\overline{v}(A) = v^M(A)$.
As $e_{1} \notin E(A \cup \lbrace i \rbrace)$,
we have by the same reasoning
$\overline{v}(A \cup \lbrace i \rbrace) = v^M(A \cup \lbrace i \rbrace)$
where $(N, v^{M})$ is associated with $\tilde G_3$.
Let $\tilde{A}$ be
the set of connected components of $A$ in $\tilde{G}_{3}$
linked to~$i$.
Note that,
by Conditions~\ref{EveryEdgeOfWeightSigma2IsIncidentToTheSameEnd-Vertex2Ofe1}
and \ref{EveryEdgeOfWeightSigma3IsConnectedTo2Bye1OrByAnEdgeOfWeightSigma2},
$\lbrace 2 \rbrace$ is a singleton component in $\tilde{G}_{3}$
and cannot belong to $\tilde{A}$.
If $i$ is linked to $A_1 \setminus \lbrace 2 \rbrace$,
then $\tilde{A} =
\lbrace \tilde{A}_{1,1}, \tilde{A}_{1,2}, \ldots, \tilde{A}_{1,r},
A_{2}, \ldots, A_{p} \rbrace$
with $r \geq 1$
and $\emptyset \not= \tilde{A}_{1,j} \subset A_{1}$ for all $j$, $1\leq j \leq r$.
Let us assume $r \geq 2$.
There exists $k_{1} \in \tilde{A}_{1,1}$
(resp. $k_{2} \in \tilde{A}_{1,2}$)
such that $\lbrace i, k_{1} \rbrace \in E$
(resp. $\lbrace i, k_{2} \rbrace \in E$).
As $\tilde{A}_{1,1} \subseteq A_{1}$
and $\tilde{A}_{1,2} \subseteq A_{1}$,
there is a path $\gamma$ in $A_{1}$ linking $k_1$ to $k_2$.
Then,
$\lbrace i,k_{1} \rbrace \cup \gamma \cup \lbrace k_{2}, i \rbrace$
induces a cycle $C$ in $G_{1}$
as represented in Figure~\ref{figCase3-2}.
\begin{figure}[!h]
\centering
\begin{pspicture}(0,-.3)(0,1.9)
\tiny
\begin{psmatrix}[mnode=circle,colsep=0.4,rowsep=0.1]
& {$k_1$}	&	  &  {} \\
{$i$}	& & 	&	& 	  {}\\
& {$k_2$}	& 	& {}
\psset{arrows=-, shortput=nab,labelsep={0.05}}
\tiny
\ncline{1,2}{1,4}^{$\sigma_{3}$}
\ncline{2,1}{1,2}^{$\sigma_{3}$}
\ncline{2,1}{3,2}_{$\sigma_{3}$}
\ncline{3,2}{3,4}_{$\sigma_{3}$}
\ncline{3,4}{2,5}_{$\sigma_{3}$}
\ncline{2,5}{1,4}_{$\sigma_{3}$}
\end{psmatrix}
\normalsize
\pspolygon[framearc=1,linestyle=dashed,linecolor=blue,linearc=.3]
(-2.5,-.5)(-2.5,1.8)(.3,1.8)(.3,-.5)
\uput[0](.2,1){\textcolor{blue}{$A_{1}$}}
\pspolygon[framearc=1,linestyle=dashed,linecolor=blue,linearc=.3]
(-2.4,.9)(-2.4,1.7)(-.6,1.7)(-.6,.9)
\uput[0](-.7,1.5){\textcolor{blue}{$\tilde{A}_{1,1}$}}
\pspolygon[framearc=1,linestyle=dashed,linecolor=blue,linearc=.3]
(-2.4,-.4)(-2.4,.4)(-.6,.4)(-.6,-.4)
\uput[0](-.7,-.2){\textcolor{blue}{$\tilde{A}_{1,2}$}}
\end{pspicture}
\caption{Cycle $C$.}
\label{figCase3-2}
\end{figure}
By Condition~\ref{EveryCycleWhichDoesNotContainse1IsComplete},
$C$ is complete in $G$.
As $1 \notin A$ and $2 \notin \tilde{A}_{1,1}$ and $2 \notin \tilde{A}_{1,2}$,
$\lbrace k_{1}, k_{2}\rbrace \in E_{3}$ 
and links $\tilde{A}_{1,1}$ to $\tilde{A}_{1,2}$ in $\tilde{G}_{3}$,
a contradiction.
Hence,
$r=1$ and we have
\begin{equation}
\label{eqv(Ai)-v(A)=v(A_11j=2qAji)-v(A_11)}
\overline{v}(A \cup \lbrace i \rbrace) - \overline{v}(A)=
v\left(\tilde{A}_{1,1} \cup \bigcup_{j=2}^{p} A_j \cup \lbrace i \rbrace \right)
- v(\tilde{A}_{1,1}).
\end{equation}
If $i$ is not linked to $A_1 \setminus \lbrace 2 \rbrace$,
then
(\ref{eqv(Ai)-v(A)=v(A_11j=2qAji)-v(A_11)}) is still satisfied
setting $\tilde{A}_{1,1} = \emptyset$.
As $\tilde{A}_{1,1} \subseteq A_{1}$,
we have
$( \tilde{A}_{1,1} \cup \bigcup_{j=2}^{p} A_j \cup \lbrace i \rbrace)
\cap A_1 = \tilde{A}_{1,1}$
and
$(\tilde{A}_{1,1} \cup \bigcup_{j=2}^{p} A_j \cup \lbrace i \rbrace)
\cup A_{1} =\bigcup_{j=1}^{p} A_j \cup \lbrace i \rbrace$.
Therefore,
the convexity of $(N,v)$ implies
\begin{equation}
\label{eqvj=1qAji-vA1>=vA11j=2qAji-vA11}
v \left(\bigcup_{j=1}^{p} A_j \cup \lbrace i \rbrace\right) - v(A_1) \geq
v\left(
\tilde{A}_{1,1} \cup \bigcup_{j=2}^{p} A_j \cup \lbrace i \rbrace
\right)
- v(\tilde{A}_{1,1}).
\end{equation}
Finally,
(\ref{eqvBi-vB>=vj=1qAji-vA1}), (\ref{eqvj=1qAji-vA1>=vA11j=2qAji-vA11}),
and (\ref{eqv(Ai)-v(A)=v(A_11j=2qAji)-v(A_11)})
imply
(\ref{eqProofPropInheritanceOfConvexityForPminOnGIffCofGWithWeightw2EitherCompleteOrAllVerticesLinkedToTheSameEndVertexOfe1vB'Ui-vB'>=vA'Ui-vA'1}).\\

\noindent
\textbf{Case 4}
Let us assume
$2\in A$, $1 \notin B$, and $i \not= 1$.
Then,
$e_{1} \notin E(A)$,
$e_1 \notin E(A \cup \lbrace i \rbrace)$,
$e_{1} \notin E(B)$,
and $e_1 \notin E(B \cup \lbrace i \rbrace)$.
By the same reasoning as in Case~2,
we have
$\overline{v}(A) = v^M(A)$,
$\overline{v}(A \cup \lbrace i \rbrace) = v^M(A \cup \lbrace i \rbrace)$,
$\overline{v}(B) = v^M(B)$,
and
$\overline{v}(B \cup \lbrace i \rbrace) = v^M(B \cup \lbrace i \rbrace)$
where $(N,v^{M})$ is associated with $\tilde G_3:= (N, E_{3})$.
Then,
(\ref{eqProofPropInheritanceOfConvexityForPminOnGIffCofGWithWeightw2EitherCompleteOrAllVerticesLinkedToTheSameEndVertexOfe1vB'Ui-vB'>=vA'Ui-vA'1})
is equivalent to
$v^M(B \cup \lbrace i \rbrace) - v^M(B) \geq
v^M(A \cup \lbrace i \rbrace) - v^M(A)$.
Let $C_m$ be a cycle in $\tilde G_3$.
Then,
$E(C_m) \subseteq E_{3}$ and
by Condition~\ref{EveryCycleWhichDoesNotContainse1IsComplete},
$C_m$ is a complete cycle in $G$.
By Conditions~\ref{EveryEdgeOfWeightSigma2IsIncidentToTheSameEnd-Vertex2Ofe1}
and~\ref{EveryEdgeOfWeightSigma3IsConnectedTo2Bye1OrByAnEdgeOfWeightSigma2},
an edge is in $E_{2}$ if and only if it is incident to $2$.
Therefore,
$2 \notin V(C_m)$
and any chord of $C_{m}$ is in $E_{3}$.
Then,
$C_m$ is also a complete cycle in $\tilde G_3$.
Hence,
$\tilde G_3$ is cycle-complete.
Then,
$(N, v^M)$ is convex by Theorem~\ref{NouwelandandBorm1991}
and
(\ref{eqProofPropInheritanceOfConvexityForPminOnGIffCofGWithWeightw2EitherCompleteOrAllVerticesLinkedToTheSameEndVertexOfe1vB'Ui-vB'>=vA'Ui-vA'1}) is satisfied.\\

\noindent
\textbf{Case 5}
Let us assume $2 \in A$ and $i=1$.
Then,
$e_1 \notin E(A)$, $e_1 \notin E(B)$
but $e_1 \in E(A \cup \lbrace i \rbrace)$
and $e_1 \in E(B \cup \lbrace i \rbrace)$.
Let $\hat{A} = \lbrace A_{1}, \ldots, A_{p} \rbrace$
(resp. $\hat{B} = \lbrace B_{1}, \ldots, B_{q} \rbrace$)
be the set of connected components of $A$
(resp. $B$)
in $G_1 = ( N, E\setminus \lbrace e_1 \rbrace)$.
By Conditions~\ref{EveryEdgeOfWeightSigma2IsIncidentToTheSameEnd-Vertex2Ofe1}
and \ref{EveryEdgeOfWeightSigma3IsConnectedTo2Bye1OrByAnEdgeOfWeightSigma2},
any edge in $G_{1}$ is either incident to $1$ or $2$
or linked to $2$ by an edge in $E_{2}$.
Therefore,
$\hat{A}$
(resp. $\hat{B}$)
is only made up of one component containing~$2$,
and possibly singleton components
(the component containing $2$ may be reduced to a singleton).
Let $A_{1}$
(resp. $B_{1}$)
be the component containing $2$.
Then,
we have $A_{1} \subseteq B_{1}$
and $A_{2}, \ldots, A_{p}$
(resp. $B_{2}, \ldots, B_{q}$)
are singletons.
Let $\tilde{A}$
(resp. $\tilde{B}$)
be the set of connected components of $A$
(resp. $B$)
in $\tilde G_3:= (N, E_{3})$.
By Conditions~\ref{EveryEdgeOfWeightSigma2IsIncidentToTheSameEnd-Vertex2Ofe1}
and \ref{EveryEdgeOfWeightSigma3IsConnectedTo2Bye1OrByAnEdgeOfWeightSigma2},
$\lbrace 2 \rbrace$ is a singleton component in $\tilde{G}_{3}$.
Therefore,
we have
$\tilde{A} = \lbrace \tilde{A}_{1,1}, \tilde{A}_{1,2}, \ldots,
\tilde{A}_{1,r}, A_{2}, \ldots, A_{p} \rbrace$
(resp. 
$\tilde{B} = \lbrace \tilde{B}_{1,1}, \tilde{B}_{1,2}, \ldots,
\tilde{B}_{1,s}, B_{2}, \ldots, B_{q} \rbrace$)
where
$\lbrace \tilde{A}_{1,1}, \tilde{A}_{1,2}, \ldots, \tilde{A}_{1,r} \rbrace$
(resp. $\lbrace \tilde{B}_{1,1}, \tilde{B}_{1,2}, \ldots, \tilde{B}_{1,s} \rbrace$)
with $r \geq 1$
(resp. $s \geq 1$)
is the partition of $A_{1}$
(resp. $B_{1}$)
in $\tilde{G}_{3}$
and $\tilde{A}_{1,1} = \tilde{B}_{1,1} = \lbrace 2 \rbrace$.
Note that
$\tilde{A}_{1,j}$
(resp. $\tilde{B}_{1,j}$)
is linked to $2$ in $G$ (and $G_{1}$)
\emph{i.e.},
there exists $k_j \in \tilde{A}_{1,j}$
(resp. $l_j \in \tilde{B}_{1,j}$)
such that $\lbrace 2, k_j \rbrace \in E$
(resp. $\lbrace 2, l_j \rbrace \in E$)
for all $j$, $2 \leq j \leq r$
(resp. $2 \leq j \leq s$).

\begin{claim}
\label{Claim1-2}
We can assume
$\tilde{A}_{1,j} \subseteq \tilde{B}_{1,j}$ for all $j$, $1 \leq j \leq r$,
after renumbering if necessary.
\end{claim}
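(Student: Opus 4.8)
The plan is to recognize that the claim is really an \emph{injectivity} statement about components. Since $\tilde{G}_3=(N,E_3)$ is a subgraph of $G_1=(N,E\setminus\{e_1\})$ and $A_1\subseteq B_1$, each component $\tilde{A}_{1,j}$ of $A_1$ in $\tilde{G}_3$ is a $\tilde{G}_3$-connected subset of $B_1$, hence is contained in exactly one component $\tilde{B}_{1,k}$ of $B_1$ in $\tilde{G}_3$; write $k=k(j)$ for this index. By Conditions~\ref{EveryEdgeOfWeightSigma2IsIncidentToTheSameEnd-Vertex2Ofe1} and~\ref{EveryEdgeOfWeightSigma3IsConnectedTo2Bye1OrByAnEdgeOfWeightSigma2} together with Proposition~\ref{lemAllEdgesOfWeightw2AreIncidentToSameEndVertexofe1}, no edge incident to $2$ has weight $\sigma_3$, so $\{2\}$ is a singleton component of $\tilde{G}_3$; hence $k(1)=1$, the block $\tilde{B}_{1,1}=\{2\}$ can contain no $\tilde{A}_{1,j}$ other than $\tilde{A}_{1,1}$, and $k(j)\ge 2$ for $j\ge 2$. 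After renumbering the $\tilde{B}_{1,k}$, the claim $\tilde{A}_{1,j}\subseteq\tilde{B}_{1,j}$ for $1\le j\le r$ follows once $k$ is shown injective, which amounts to showing that two distinct components $\tilde{A}_{1,j},\tilde{A}_{1,j'}$ with $j,j'\ge 2$ cannot both lie in the same $\tilde{B}_{1,k}$.

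I would prove this by contradiction, reusing the cycle-construction of Claim~\ref{lemAssumptionOnCyclesWithConstantWeightW2ImpliesPM(B)_A'=PM(A)_A'}. Assume $\tilde{A}_{1,j},\tilde{A}_{1,j'}\subseteq\tilde{B}_{1,k}$ with $j\neq j'$. Each of $\tilde{A}_{1,j},\tilde{A}_{1,j'}$ is linked to $2$ in $G$, so pick $k_j\in\tilde{A}_{1,j}$ and $k_{j'}\in\tilde{A}_{1,j'}$ with $\{2,k_j\},\{2,k_{j'}\}\in E$; by Condition~\ref{EveryEdgeOfWeightSigma2IsIncidentToTheSameEnd-Vertex2Ofe1} these edges have weight $\sigma_2$, and $k_j,k_{j'}\neq 2$. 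Since $\tilde{B}_{1,k}$ is connected in $\tilde{G}_3$, join $k_j$ to $k_{j'}$ by an elementary path $\gamma$ inside $\tilde{B}_{1,k}$ (all of whose edges have weight $\sigma_3$) and close up through $2$, obtaining a simple cycle $C$. The two points to verify are: (i) $C$ is a cycle of $G_1$ — its edges have weight $\sigma_2$ or $\sigma_3$, never $\sigma_1$, and $1\notin V(C)$ because $1\notin B$ in Case~5 — so Condition~\ref{EveryCycleWhichDoesNotContainse1IsComplete} forces $V(C)$ to induce a complete subgraph, in particular $\{k_j,k_{j'}\}\in E$; and (ii) the chord $\{k_j,k_{j'}\}$ has weight $\sigma_3$ — it is not $e_1$ (since $1\notin\{k_j,k_{j'}\}$) and not in $E_2$ (neither endpoint is $2$), hence lies in $E_3$.

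To finish, observe $k_j\in\tilde{A}_{1,j}\subseteq A_1\subseteq A$ and $k_{j'}\in\tilde{A}_{1,j'}\subseteq A_1\subseteq A$, so $\{k_j,k_{j'}\}$ is an $E_3$-edge with both endpoints in $A$; thus $\tilde{A}_{1,j}$ and $\tilde{A}_{1,j'}$ become joined in the subgraph of $\tilde{G}_3$ induced by $A$. But since $\tilde{G}_3$ is a subgraph of $G_1$ and $A_1$ is a connected component of $A$ in $G_1$, the components of $A$ in $\tilde{G}_3$ that lie inside $A_1$ are precisely $\tilde{A}_{1,1},\dots,\tilde{A}_{1,r}$; so their being joined contradicts their being distinct, which yields the injectivity of $k$ and hence the claim.

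The main obstacle here is bookkeeping rather than a genuine difficulty: three nested partitions are simultaneously in play (components in $G$, in $G_1$, and in $\tilde{G}_3$), and one must check carefully that the auxiliary cycle $C$ avoids $e_1$ — this is exactly what lets Condition~\ref{EveryCycleWhichDoesNotContainse1IsComplete} apply and what guarantees the resulting chord has weight $\sigma_3$, so that the constant cycle truly merges the two $\tilde{G}_3$-pieces inside $A$. Cycle-completeness of $G_1$ plays the role that Condition~\ref{itThCycleCOfGWithConstantEdge-Weightw2EitherCompleteOrAllVerticesLinkedToSameEndVertexOfe1} played in Claim~\ref{lemAssumptionOnCyclesWithConstantWeightW2ImpliesPM(B)_A'=PM(A)_A'}, which is why the three-value case comes out cleaner.
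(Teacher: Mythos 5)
Your proof is correct and follows essentially the same route as the paper: pick vertices $k_j,k_{j'}$ in the two components linked to $2$, close up through $2$ with the connecting path inside $\tilde{B}_{1,k}$ to get a cycle of $G_1$ avoiding $e_1$, invoke cycle-completeness of $G_1$ (Condition~\ref{EveryCycleWhichDoesNotContainse1IsComplete}) to produce the chord $\lbrace k_j,k_{j'}\rbrace\in E_3$ inside $A_1$, contradicting that the $\tilde{A}_{1,j}$ are distinct $\tilde{G}_3$-components. The extra bookkeeping you add (well-definedness and injectivity of $j\mapsto k(j)$, the singleton block $\lbrace 2\rbrace$, and the check that $1\notin V(C)$) only makes explicit what the paper leaves implicit.
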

\begin{proof}[Proof of Claim~\ref{Claim1-2}]
We have $\tilde{A}_{1,1}=\tilde{B}_{1,1} = \lbrace 2 \rbrace$.
By contradiction,
let us assume $\tilde{A}_{1,2} \subseteq \tilde{B}_{1,2}$
and $\tilde{A}_{1,3} \subseteq \tilde{B}_{1,2}$,
after renumbering if necessary.
Let $\gamma$ be a simple path in $\tilde{B}_{1,2}$
linking $k_2 \in \tilde{A}_{1,2}$ to $k_3 \in \tilde{A}_{1,3}$.
Then,
$\lbrace 2, k_{2} \rbrace \cup \gamma \cup \lbrace k_3, 2 \rbrace$ 
induces a cycle $C$ in $G_1$
as represented in Figure~\ref{figCase6-1-2}.
\begin{figure}[!h]
\centering
\begin{pspicture}(0,-.3)(0,2)
\tiny
\begin{psmatrix}[mnode=circle,colsep=0.4,rowsep=0.1]
& {}	&	  &  {$k_{2}$} \\
{}	& & 	&	& 	  {$2$}\\
& {}	& 	& {$k_{3}$}
\psset{arrows=-, shortput=nab,labelsep={0.05}}
\tiny
\ncline{1,2}{1,4}^{$\sigma_{3}$}
\ncline{2,1}{1,2}^{$\sigma_{3}$}
\ncline{2,1}{3,2}_{$\sigma_{3}$}
\ncline{3,2}{3,4}_{$\sigma_{3}$}
\ncline{3,4}{2,5}_{$\sigma_{2}$}
\ncline{2,5}{1,4}_{$\sigma_{2}$}
\end{psmatrix}
\normalsize
\pspolygon[framearc=1,linestyle=dashed,linecolor=blue,linearc=.3]
(-2.6,.9)(-2.6,1.7)(-.8,1.7)(-.8,.9)
\uput[0](-3.5,1.5){\textcolor{blue}{$\tilde{A}_{1,2}$}}
\pspolygon[framearc=1,linestyle=dashed,linecolor=blue,linearc=.3]
(-2.6,-.4)(-2.6,.4)(-.8,.4)(-.8,-.4)
\uput[0](-3.5,-.2){\textcolor{blue}{$\tilde{A}_{1,3}$}}
\pspolygon[framearc=1,linestyle=dashed,linecolor=blue,linearc=.4]
(-3.5,-.5)(-3.5,1.8)(-.7,1.8)(-.7,-.5)
\uput[0](-4.5,.65){\textcolor{blue}{$\tilde{B}_{1,2}$}}
\end{pspicture}
\caption{Cycle $C$.}
\label{figCase6-1-2}
\end{figure}
By Condition~\ref{EveryCycleWhichDoesNotContainse1IsComplete},
$C$ is complete in $G$.
Then,
$\lbrace k_2, k_3 \rbrace \in E(A) \cap E_{3}$
and links $\tilde{A}_{1,2}$ to $\tilde{A}_{1,3}$ in $\tilde{G}_{3}$,
a contradiction.
\end{proof}

The partition of $A_{1}$
(resp. $B_{1}$)
in $\tilde{G}_{3}$
is $\mathcal{P}_{M}(A_{1})=
\lbrace \lbrace 2 \rbrace, \tilde{A}_{1,2}, \ldots, \tilde{A}_{1,r} \rbrace$
(resp. $\mathcal{P}_{M}(B_{1})=
\lbrace \lbrace 2 \rbrace, \tilde{B}_{1,2}, \ldots, \tilde{B}_{1,s} \rbrace$).
Claim~\ref{Claim1-2} implies
$\mathcal{P}_{M}(B_{1})_{|A_{1}}=\mathcal{P}_{M}(A_{1})$.
Then,
as $(N,v)$ is convex,
Lemma~\ref{lemPNABNPAPBAvNvB-vB>=vA-vA} implies
\begin{equation}
\label{v(B_1)-sum_{j=1}^{s}v(tilde{B}_{1,j})geqv(A_1)-sum_{j=1}^{r}v(tilde{A}_{1,j})}
v(B_1) - \sum_{j=1}^{s}v(\tilde{B}_{1,j}) \geq
v(A_1) - \sum_{j=1}^{r}v(\tilde{A}_{1,j}).
\end{equation}
As $e_{1} \notin E(A)$
(resp. $e_{1} \notin E(B)$)
we have $\overline{v}(A) = v^{M}(A)$
(resp. $\overline{v}(B) = v^{M}(B)$)
where $(N,v^{M})$ is associated with $\tilde{G}_{3}$.
We get $\overline{v}(A) = \sum_{j=1}^{r} v(\tilde{A}_{1,j})$
(resp. $\overline{v}(B) = \sum_{j=1}^{s} v(\tilde{B}_{1,j})$).
As $e_{1} \in E(A \cup \lbrace i \rbrace)$
(resp. $e_{1} \in E(B \cup \lbrace i \rbrace)$),
we have $\overline{v}(A \cup \lbrace i \rbrace) = v^{M}(A \cup \lbrace i \rbrace)$
(resp. $\overline{v}(B \cup \lbrace i \rbrace) = v^{M}(B \cup \lbrace i \rbrace)$)
where $(N, v^{M})$ is now associated with $G_{1}$.\\

\noindent
\textbf{Case 5.1}
Let us first assume $i$ is not linked to $A$ in $G_{1}$.
Then,
$\overline{v}(A \cup \lbrace i \rbrace) =
v(\lbrace i \rbrace) + \sum_{j=1}^{p} v(A_j) = v(A_1)$.
If $i$ is not linked to $B$ in $G_{1}$,
then we also have
$\overline{v}(B \cup \lbrace i \rbrace) = v(B_1)$
and
(\ref{eqProofPropInheritanceOfConvexityForPminOnGIffCofGWithWeightw2EitherCompleteOrAllVerticesLinkedToTheSameEndVertexOfe1vB'Ui-vB'>=vA'Ui-vA'1})
is satisfied
as it is equivalent to (\ref{v(B_1)-sum_{j=1}^{s}v(tilde{B}_{1,j})geqv(A_1)-sum_{j=1}^{r}v(tilde{A}_{1,j})}).
If $i$ is linked to $B$ in $G_{1}$,
let $\hat{B}'$
be the set of connected components of $B$ linked to $i$ in $G_{1}$.
We have either
$\hat{B}'=\lbrace B_{1}, \ldots, B_{q'} \rbrace$
with $1 \leq q' \leq q$
or $\hat{B}'=\lbrace B_{2}, \ldots, B_{q'} \rbrace$
with $2 \leq q' \leq q$,
after renumbering if necessary
and therefore,
either
$\overline{v}(B \cup \lbrace i \rbrace) - \overline{v}(B)=
v(\bigcup_{j=1}^{q'} B_{j} \cup \lbrace i \rbrace) - \sum_{j=1}^{s}v(\tilde{B}_{1,j})$
or
$\overline{v}(B \cup \lbrace i \rbrace) - \overline{v}(B)=
v(B_{1})+
v(\bigcup_{j=2}^{q'} B_{j} \cup \lbrace i \rbrace) - \sum_{j=1}^{s}v(\tilde{B}_{1,j})$.
As $(N,v)$ is superadditive,
we have in any case
\begin{equation}
\label{eqoverline{v}(Bcupi)-overline{v}(B)geqv(B_{1})+v(cup_{j=2}^{q'}B_{j}cupi)-sum_{j=1}^{s}v(tilde{B}_{1,j})}
\overline{v}(B \cup \lbrace i \rbrace) - \overline{v}(B) \geq
v(B_{1})+
v\left(\bigcup_{j=2}^{q'} B_{j} \cup \lbrace i \rbrace\right)
- \sum_{j=1}^{s}v(\tilde{B}_{1,j}).
\end{equation}
As $(N,v)$ is superadditive and zero-normalized,
we have $v\left(\bigcup_{j=2}^{q'} B_{j} \cup \lbrace i \rbrace\right) \geq 0$.
Then,
(\ref{eqoverline{v}(Bcupi)-overline{v}(B)geqv(B_{1})+v(cup_{j=2}^{q'}B_{j}cupi)-sum_{j=1}^{s}v(tilde{B}_{1,j})})
and (\ref{v(B_1)-sum_{j=1}^{s}v(tilde{B}_{1,j})geqv(A_1)-sum_{j=1}^{r}v(tilde{A}_{1,j})})
imply (\ref{eqProofPropInheritanceOfConvexityForPminOnGIffCofGWithWeightw2EitherCompleteOrAllVerticesLinkedToTheSameEndVertexOfe1vB'Ui-vB'>=vA'Ui-vA'1}).\\

\noindent
\textbf{Case 5.2}
Let us now assume $i$ is linked to $A$ in $G_{1}$.
Let $\hat{A}'$ be the set of connected components of $A$ linked to $i$ in $G_{1}$.
We have either
$\hat{A}'=\lbrace A_{1}, \ldots, A_{p'} \rbrace$
with $1 \leq p' \leq p$
or $\hat{A}'=\lbrace A_{2}, \ldots, A_{p'} \rbrace$
with $2 \leq p' \leq p$,
after renumbering if necessary.

\begin{claim}
\label{Claim5}
There is at most one element of $\hat{A}'$ included in $B_{1}$.
\end{claim}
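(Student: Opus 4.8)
The plan is to argue by contradiction. Suppose $A'$ and $A''$ are two distinct elements of $\hat{A}'$ with $A' \subseteq B_1$ and $A'' \subseteq B_1$. Since both are linked to $i = 1$ in $G_1$, I would first choose vertices $c' \in A'$ and $c'' \in A''$ together with edges $f' = \{1, c'\}$ and $f'' = \{1, c''\}$ in $E \setminus \{e_1\}$. The key elementary observation is that, by Conditions~\ref{ThereExistsOnlyOneEdgeInE1} and~\ref{EveryEdgeOfWeightSigma2IsIncidentToTheSameEnd-Vertex2Ofe1}, every edge incident to $1$ other than $e_1 = \{1,2\}$ has weight $\sigma_3$: it cannot have weight $\sigma_1$ by uniqueness of $e_1$, and it cannot have weight $\sigma_2$, since such an edge is incident to $2$ and would therefore equal $e_1$. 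In particular $c' \neq 2$, $c'' \neq 2$, and $f', f'' \in E_3$. Moreover $c' \neq c''$ because $A'$ and $A''$ are distinct connected components of $A$ in $G_1$ (so $f' \neq f''$ as well).

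Next, using that $A', A'' \subseteq B_1$ and that $B_1$ is connected in $G_1$, I would take a simple path $\gamma$ in $B_1$ joining $c'$ to $c''$. Because $1 \notin B \supseteq B_1$, the vertex $1$ does not lie on $\gamma$, so $C := \{1, f', c'\} \cup \gamma \cup \{c'', f'', 1\}$ is a simple cycle in $G$ of length at least $3$. Its two edges $f', f''$ incident to $1$ have weight $\sigma_3 \neq \sigma_1$, hence $e_1 \notin E(C)$; thus $C$ is a cycle of $G_1 = (N, E \setminus \{e_1\})$.

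Finally I would invoke Condition~\ref{EveryCycleWhichDoesNotContainse1IsComplete}: $G_1$ is cycle-complete, so the vertex set of $C$ induces a complete subgraph of $G_1$, and in particular $\{c', c''\} \in E \setminus \{e_1\}$. Since $c', c'' \in A$, this edge lies in the subgraph of $G_1$ induced by $A$, so $c'$ and $c''$ belong to the same connected component of $A$ in $G_1$, i.e. $A' = A''$, a contradiction. Hence at most one element of $\hat{A}'$ is contained in $B_1$.

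I do not expect a serious obstacle here, since the structural facts about $\hat{A}$, $\hat{B}$ and the inclusion $A_1 \subseteq B_1$ have already been established; the points that require care are only (i) pinning the weight of every edge at $1$ other than $e_1$ to $\sigma_3$, so that the constructed cycle $C$ provably avoids $e_1$ and Condition~\ref{EveryCycleWhichDoesNotContainse1IsComplete} applies, and (ii) checking $c' \neq c''$ (hence $f' \neq f''$) so that $C$ is a genuine simple cycle rather than a degenerate back-and-forth walk. Once these are in place, cycle-completeness of $G_1$ closes the argument at once.
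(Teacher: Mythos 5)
Your proof is correct and follows essentially the same route as the paper's: assume two components of $\hat{A}'$ lie in $B_1$, join the two linking edges at $i=1$ by a path inside $B_1$ to form a cycle of $G_1$, and use cycle-completeness (Condition~\ref{EveryCycleWhichDoesNotContainse1IsComplete}) to produce a chord $\lbrace c',c''\rbrace$ that merges the two components, a contradiction. Your extra checks (that the cycle is simple because $1\notin B_1$, and that it avoids $e_1$) are sound refinements of the same argument.
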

\begin{proof}[Proof of Claim~\ref{Claim5}]
Let us assume w.l.o.g. $A_{2} \subseteq B_{1}$
and  $A_{3} \subseteq B_{1}$.
Let $\tilde{e}_2 = \lbrace i, k_{2} \rbrace$
(resp. $\tilde{e}_3 = \lbrace i, k_{3} \rbrace$)
be an edge linking $i$ to $A_{2}$
(resp. $A_{3}$) in $G_{1}$.
As $B_{1}$ is connected
there is a path $P$ connecting $k_{2}$ to $k_{3}$.
Then,
$\tilde{e}_2$,
$\tilde{e}_3$,
and $P$ induce a cycle $C$ in $G_{1}$.
By Condition~\ref{EveryCycleWhichDoesNotContainse1IsComplete},
$C$ is complete in $G$. 
Then,
$\lbrace k_{2}, k_{3}\rbrace$
links $A_{2}$ and $A_{3}$ in $G_{1}$,
a contradiction.
\end{proof}

Let us first assume $A_1 \in \hat{A}'$.
Then,
$B_{1} \in \hat{B}'$,
and therefore
(\ref{eqProofPropInheritanceOfConvexityForPminOnGIffCofGWithWeightw2EitherCompleteOrAllVerticesLinkedToTheSameEndVertexOfe1vB'Ui-vB'>=vA'Ui-vA'1})
is equivalent to
\begin{equation}
\label{eqv(cup_{j=1}^{q'}B_jcupi)-sum_{j=1}^{s}v(tilde{B}_{1,j})>=v(cup_{j=1}^{p'}A_jcupi)-sum_{j=1}^{r}v(tilde{A}_{1,j})}
v\left(\bigcup_{j=1}^{q'} B_j \cup \lbrace i \rbrace\right) - \sum_{j=1}^{s}v(\tilde{B}_{1,j}) \geq
v\left(\bigcup_{j=1}^{p'} A_j \cup \lbrace i \rbrace\right) - \sum_{j=1}^{r}v(\tilde{A}_{1,j}).
\end{equation}
The partition of $\bigcup_{j=1}^{p'} A_{j}$
(resp. $\bigcup_{j=1}^{q'} B_{j}$)
in $\tilde{G}_{3}$
is $\mathcal{P}_{M}(\bigcup_{j=1}^{p'} A_{j})=
\lbrace \lbrace 2 \rbrace, \tilde{A}_{1,2},\\ \ldots, \tilde{A}_{1,r}, A_{2}, \ldots, A_{p'} \rbrace$
(resp. $\mathcal{P}_{M}(\bigcup_{j=1}^{q'} B_{j})=
\lbrace \lbrace 2 \rbrace, \tilde{B}_{1,2}, \ldots, \tilde{B}_{1,s}, B_{2}, \ldots,\\ B_{q'} \rbrace$).
By Claim~\ref{Claim1-2},
we have
$\tilde{A}_{1,j} \subseteq \tilde{B}_{1,j}$ for all $j$, $1 \leq j \leq r$.
As $A_{1} \subseteq B_{1}$,
Claim~\ref{Claim5} implies $A_{j} \not \subseteq \tilde{B}_{1,k}$
for all $j$,
$2 \leq j \leq p'$,
and all $k$,
$2 \leq k \leq s$.
Therefore,
we have
$\mathcal{P}_{M}(\bigcup_{j=1}^{q'} B_{j})_{|\bigcup_{j=1}^{p'} A_{j}}=\mathcal{P}_{M}(\bigcup_{j=1}^{p'} A_{j})$,
and as $(N,v)$ is convex
Lemma~\ref{lemPNABNPAPBAvNvB-vB>=vA-vA} implies
\begin{equation}
\label{eqv(cup_{j=1}^{q'} B_{j})-sum_{j=1}^{s}v(tilde{B}_{1,j})>=v(cup_{j=1}^{p'}A_{j})-sum_{j=1}^{r}v(tilde{A}_{1,j})}
v\left(\bigcup_{j=1}^{q'} B_{j}\right) - \sum_{j=1}^{s}v(\tilde{B}_{1,j}) \geq
v\left(\bigcup_{j=1}^{p'} A_{j}\right) - \sum_{j=1}^{r}v(\tilde{A}_{1,j}).
\end{equation}
Moreover,
the convexity of $(N,v)$ implies
\begin{equation}
\label{eqv(cup_{j=1}^{q'}B_{j}cuplbraceirbrace)-v(cup_{j=1}^{q'}B_{j})>=v(cup_{j=1}^{p'}A_{j}cuplbraceirbrace)-v(cup_{j=1}^{p'}A_{j})}
v\left(\bigcup_{j=1}^{q'} B_{j} \cup \lbrace i \rbrace \right) - v\left(\bigcup_{j=1}^{q'} B_{j}\right) \geq
v\left(\bigcup_{j=1}^{p'} A_{j} \cup \lbrace i \rbrace \right) - v\left(\bigcup_{j=1}^{p'} A_{j}\right).
\end{equation}
(\ref{eqv(cup_{j=1}^{q'} B_{j})-sum_{j=1}^{s}v(tilde{B}_{1,j})>=v(cup_{j=1}^{p'}A_{j})-sum_{j=1}^{r}v(tilde{A}_{1,j})})
and
(\ref{eqv(cup_{j=1}^{q'}B_{j}cuplbraceirbrace)-v(cup_{j=1}^{q'}B_{j})>=v(cup_{j=1}^{p'}A_{j}cuplbraceirbrace)-v(cup_{j=1}^{p'}A_{j})}) imply (\ref{eqv(cup_{j=1}^{q'}B_jcupi)-sum_{j=1}^{s}v(tilde{B}_{1,j})>=v(cup_{j=1}^{p'}A_jcupi)-sum_{j=1}^{r}v(tilde{A}_{1,j})}).
Let us now assume $A_{1} \notin \hat{A}'$.
If $B_{1} \notin \hat{B}'$,
then (\ref{eqProofPropInheritanceOfConvexityForPminOnGIffCofGWithWeightw2EitherCompleteOrAllVerticesLinkedToTheSameEndVertexOfe1vB'Ui-vB'>=vA'Ui-vA'1})
is equivalent to
\begin{equation}
\label{v(B_{1})+v(cup_{j=2}^{q'}B_jcupi)-sum_{j=1}^{s}v(tilde{B}_{1,j})>=v(A_{1})+v(cup_{j=2}^{p'}A_jcupi)-sum_{j=1}^{r}v(tilde{A}_{1,j})}
v(B_{1})+v\left(\bigcup_{j=2}^{q'} B_j \cup \lbrace i \rbrace\right)
-\sum_{j=1}^{s}v(\tilde{B}_{1,j}) \geq
v(A_{1})+v\left(\bigcup_{j=2}^{p'} A_j \cup \lbrace i \rbrace\right)
-\sum_{j=1}^{r}v(\tilde{A}_{1,j}).
\end{equation}
We have $A_{j} \not \subseteq B_{1}$ for all $j$, $2 \leq j \leq p'$
(otherwise $B_{1}$ is linked to $i$,
a contradiction).
Hence,
$\bigcup_{j=2}^{p'} A_j \subseteq \bigcup_{j=2}^{q'} B_j$
and the superadditivity of $(N,v)$ implies 
$v(\bigcup_{j=2}^{q'} B_j \cup \lbrace i \rbrace) \geq
v(\bigcup_{j=2}^{p'} A_j \cup \lbrace i \rbrace)$.
This last inequality
and (\ref{v(B_1)-sum_{j=1}^{s}v(tilde{B}_{1,j})geqv(A_1)-sum_{j=1}^{r}v(tilde{A}_{1,j})})
imply (\ref{v(B_{1})+v(cup_{j=2}^{q'}B_jcupi)-sum_{j=1}^{s}v(tilde{B}_{1,j})>=v(A_{1})+v(cup_{j=2}^{p'}A_jcupi)-sum_{j=1}^{r}v(tilde{A}_{1,j})}).
Finally,
if $B_{1} \in \hat{B}'$,
then (\ref{eqProofPropInheritanceOfConvexityForPminOnGIffCofGWithWeightw2EitherCompleteOrAllVerticesLinkedToTheSameEndVertexOfe1vB'Ui-vB'>=vA'Ui-vA'1})
is equivalent to
\begin{equation}
\label{eqv(cup_{j=1}^{q'}B_jcupi)-sum_{j=1}^{s}v(tilde{B}_{1,j})>=v(A_{1})+v(cup_{j=2}^{p'}A_jcupi)-sum_{j=1}^{r}v(tilde{A}_{1,j})}
v\left(\bigcup_{j=1}^{q'} B_j \cup \lbrace i \rbrace\right)
-\sum_{j=1}^{s}v(\tilde{B}_{1,j}) \geq
v(A_{1})+v\left(\bigcup_{j=2}^{p'} A_j \cup \lbrace i \rbrace\right)
-\sum_{j=1}^{r}v(\tilde{A}_{1,j}).
\end{equation}
If $\bigcup_{j=2}^{p'} A_j \subseteq \bigcup_{j=2}^{q'} B_j$,
then (\ref{v(B_{1})+v(cup_{j=2}^{q'}B_jcupi)-sum_{j=1}^{s}v(tilde{B}_{1,j})>=v(A_{1})+v(cup_{j=2}^{p'}A_jcupi)-sum_{j=1}^{r}v(tilde{A}_{1,j})})
is satisfied
and it implies (\ref{eqv(cup_{j=1}^{q'}B_jcupi)-sum_{j=1}^{s}v(tilde{B}_{1,j})>=v(A_{1})+v(cup_{j=2}^{p'}A_jcupi)-sum_{j=1}^{r}v(tilde{A}_{1,j})})
as $(N,v)$ is superadditive.
Otherwise,
we can assume w.l.o.g. $A_{2} \subseteq B_{1}$.
Claim~\ref{Claim5} implies
$\bigcup_{j=3}^{p'} A_{j} \subseteq \bigcup_{j=2}^{q'} B_{j}$.
Then,
we have
$(\bigcup_{j=2}^{p'} A_j \cup \lbrace i \rbrace)\cap B_{1} = A_{2}$
and the convexity of $(N,v)$ implies
\begin{equation}
\label{eqv(cup_{j=2}^{p'}A_jcupicupB_{1})+v(A_{2})>=v(cup_{j=2}^{p'}A_jcupi)+v(B_{1})}
v\left(\bigcup_{j=2}^{p'} A_j \cup \lbrace i \rbrace \cup B_{1}\right)
+v(A_{2})
\geq v\left(\bigcup_{j=2}^{p'} A_j \cup \lbrace i \rbrace\right)
+v(B_{1}).
\end{equation}
Moreover,
we have
$\bigcup_{j=2}^{p'} A_j \cup B_{1} \subseteq \bigcup_{j=1}^{q'} B_{j}$
and $A_{2}$ is a singleton.
As $(N,v)$ is superadditive and zero-normalized,
(\ref{eqv(cup_{j=2}^{p'}A_jcupicupB_{1})+v(A_{2})>=v(cup_{j=2}^{p'}A_jcupi)+v(B_{1})}) implies
$v(\bigcup_{j=1}^{q'} B_{j} \cup \lbrace i \rbrace)
\geq v(\bigcup_{j=2}^{p'} A_j \cup \lbrace i \rbrace)
+v(B_{1})$,
and therefore
\begin{equation}
\label{eqv(cup_{j=1}^{q'}B_{j}cupi)-sum_{j=1}^{s}v(tilde{B}_{1,j})>=v(cup_{j=2}^{p'}A_jcupi)+v(B_{1})-sum_{j=1}^{s}v(tilde{B}_{1,j})}
v\left(\bigcup_{j=1}^{q'} B_{j} \cup \lbrace i \rbrace \right)
-\sum_{j=1}^{s}v(\tilde{B}_{1,j})
\geq v\left(\bigcup_{j=2}^{p'} A_j \cup \lbrace i \rbrace\right)
+v(B_{1})
-\sum_{j=1}^{s}v(\tilde{B}_{1,j}).
\end{equation}
Finally,
(\ref{eqv(cup_{j=1}^{q'}B_{j}cupi)-sum_{j=1}^{s}v(tilde{B}_{1,j})>=v(cup_{j=2}^{p'}A_jcupi)+v(B_{1})-sum_{j=1}^{s}v(tilde{B}_{1,j})})
and
(\ref{v(B_1)-sum_{j=1}^{s}v(tilde{B}_{1,j})geqv(A_1)-sum_{j=1}^{r}v(tilde{A}_{1,j})})
imply
(\ref{eqv(cup_{j=1}^{q'}B_jcupi)-sum_{j=1}^{s}v(tilde{B}_{1,j})>=v(A_{1})+v(cup_{j=2}^{p'}A_jcupi)-sum_{j=1}^{r}v(tilde{A}_{1,j})}).\\

We now prove
that Conditions~\ref{ThereExistsOnlyOneEdgeInE1} to~\ref{EveryCycleWhichDoesNotContainse1IsComplete}
imply
Condition~\ref{itemThIfACycleCmDoesNotContaine1AndIf1inV(Cm)Thenm=3}.
By Conditions~\ref{ThereExistsOnlyOneEdgeInE1} to~\ref{EveryEdgeOfWeightSigma3IsConnectedTo2Bye1OrByAnEdgeOfWeightSigma2},
$e_{1}$ is the unique edge in $E_{1}$,
all edges in $E_{2}$ are incident to $2$,
and all edges in $E_{3}$ are linked to $2$
by $e_{1}$
or by an edge in $E_{2}$.
By Condition~\ref{EveryCycleWhichDoesNotContainse1IsComplete},
$C_{m}$ is complete in $G_1$,
so that $e_1$ cannot be a chord of $C_m$.
Moreover,
as $1 \in V(C_m)$ and  $e_{1} \notin \hat{E}(C_m)$,
we have $2 \notin V(C_m)$.
Then,
$C_m$ has constant weight $\sigma_3$
and any chord of $C_{m}$ has weight $\sigma_{3}$.
If there exist two edges $\lbrace 2, i \rbrace$ and $\lbrace 2, j \rbrace$
with $i$ and $j$ in $V(C_{m}) \setminus \lbrace 1 \rbrace$
as represented
in Figure~\ref{FigCmWithm=5CompleteCyclea} with $m = 5$,
then
$e_{1}$ is a common edge
for the triangles defined by $\lbrace 1,2,i \rbrace$ and $\lbrace 1,2,j \rbrace$,
contradicting
Condition~\ref{ThereExistsAtMostOneChordlessCycleCmWithm=3or4Containinge1}.
Hence,
there is at most one edge linking $2$ to $V(C_{m}) \setminus \lbrace 1 \rbrace$.
This implies $m = 3$
and there is exactly one edge linking $2$ to $V(C_{m}) \setminus \lbrace 1 \rbrace$
as represented in Figure~\ref{figC3LinkedTo2b}
(otherwise there is at least one edge in $\hat{E}(C_m)$ neither incident to $1$
nor linked to $2$ by an edge in $E_2$).
\begin{figure}[!h]
\centering
\subfloat[]{
\begin{pspicture}(-.5,-.5)(.6,1.5)
\tiny
\begin{psmatrix}[mnode=circle,colsep=0.4,rowsep=0.2]
{}	&	  &  {$i$} \\
	& 	&	& 	  {$1$} & & {$2$}\\
{$j$}	& 	&  {}
\psset{arrows=-, shortput=nab,labelsep={0.05}}
\tiny
\ncline{2,4}{2,6}_{$e_{1}$}^{$\sigma_{1}$}
\ncline[linecolor=blue]{3,1}{3,3}_{$\sigma_{3}$}
\ncline[linecolor=blue]{3,3}{2,4}_{$\sigma_{3}$}
\ncline[linecolor=blue]{2,4}{1,3}_{$\sigma_{3}$}
\ncline[linecolor=blue]{1,3}{1,1}_{$\sigma_{3}$}
\ncline[linecolor=blue]{1,1}{3,1}_{$\sigma_{3}$}

\ncline{1,1}{3,3}_{$\sigma_{3}$}
\ncline{3,1}{1,3}^{$\sigma_{3}$}
\ncline{1,3}{3,3}^{$\sigma_{3}$}

\ncline{1,1}{2,4}_{$\sigma_{3}$}
\ncline{3,1}{2,4}^{$\sigma_{3}$}
\ncarc[arcangle=20]{1,3}{2,6}^{$\sigma_{2}$}
\ncarc[arcangle=-38]{3,1}{2,6}_{$\sigma_{2}$}
\end{psmatrix}
\normalsize
\uput[0](-4.7,.7){\textcolor{blue}{$C_{m}$}}
\end{pspicture}
\label{FigCmWithm=5CompleteCyclea}
}
\subfloat[]{
\begin{pspicture}(-.5,-.5)(.5,1.5)
\tiny
\begin{psmatrix}[mnode=circle,colsep=.5,rowsep=0.1]
{$3$}\\
					&	{$1$} & & {$2$}\\
{$3'$}
\psset{arrows=-, shortput=nab,labelsep={0.05}}
\tiny
\ncline[linecolor=blue]{1,1}{3,1}_{$\sigma_{3}$}
\ncline[linecolor=blue]{1,1}{2,2}^{$\sigma_{3}$}_{$e_3$}
\ncline[linecolor=blue]{3,1}{2,2}_{$\sigma_{3}$}
\ncline{2,2}{2,4}_{$e_{1}$}^{$\sigma_{1}$}
\ncarc[arcangle=30]{1,1}{2,4}^{$\sigma_{2}$}_{$e_2$}
\end{psmatrix}
\normalsize
\uput[0](-3.8,.8){\textcolor{blue}{$C_{3}$}}
\end{pspicture}
\label{figC3LinkedTo2b}
}
\caption{$C_{5}$ and $C_3$ linked to $2$ by an edge in $E_2$.}
\end{figure}
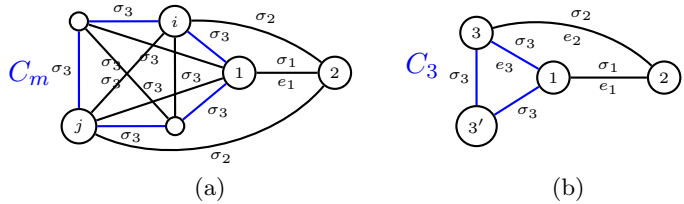
Finally,
let us assume that there is a second triangle $C_3'$ with $1 \in V(C_3')$
and $e_1 \notin E(C_3')$.
If $C_3'$ is adjacent to $C_3$,
then $C_3$ and $C_3'$ induce a cycle of size $4$ incident to $1$ and not containing $e_1$, a contradiction.
Otherwise,
there exists $e \in E_2$
(resp. $e' \in E_2$)
linking $2$ to $V(C_3) \setminus \lbrace 1 \rbrace$ 
(resp. $V(C_3') \setminus \lbrace 1 \rbrace$).
Then,
there are two triangles containing $e_1$,
contradicting Condition~\ref{ThereExistsAtMostOneChordlessCycleCmWithm=3or4Containinge1}.
\end{proof}

\begin{remark}
Let us note that
Conditions~\ref{itemLemInheritanceOfConvexityForPMOnG1=(N,E2)}
and~\ref{itemLemInheritanceOfConvexityForPMOnG1=(N,E3)}
of
Theorem~\ref{lemInheritanceOfConvexityForPminInheritanceOfConvexityForPMG1Cycle-Complete},
Conditions~\ref{itThThereExistsAtMostOneCycleCWithoutChordContaininge1}
and~\ref{itThCycleCOfGWithConstantEdge-Weightw2EitherCompleteOrAllVerticesLinkedToSameEndVertexOfe1}
of Theorem~\ref{PropInheritanceOfConvexityForPminOnGIffCofGWithWeightwEitherCompleteOrAllVerticesLinkedToTheSameEndVertexOfe1},
and
Conditions~\ref{ThereExistsOnlyOneEdgeInE1}
to~\ref{EveryCycleWhichDoesNotContainse1IsComplete}
of Theorem~\ref{PropInheritanceOfConvexityForPminOnGIffCofGWithWeightw2EitherCompleteOrAllVerticesLinkedToTheSameEndVertexOfe12}
are consequences of the necessary conditions established in Section~\ref{SectionInheritanceOfConvexityWithPmin}.
To obtain these last conditions we only needed to 
assume inheritance of convexity with $\mathcal{P}_{\min}$
for the family of unanimity games.
Therefore,
Theorems~\ref{lemInheritanceOfConvexityForPminInheritanceOfConvexityForPMG1Cycle-Complete},
\ref{PropInheritanceOfConvexityForPminOnGIffCofGWithWeightwEitherCompleteOrAllVerticesLinkedToTheSameEndVertexOfe1},
and~\ref{PropInheritanceOfConvexityForPminOnGIffCofGWithWeightw2EitherCompleteOrAllVerticesLinkedToTheSameEndVertexOfe12}
imply that
for the correspondence $\mathcal{P}_{\min}$
there is inheritance of convexity if and only if
there is
inheritance of convexity for the family of unanimity games.
This result was already observed
in~\citep{Skoda201702}.
\end{remark}

\subsection{Disconnected graphs}
We now consider the case of disconnected weighted graphs.
A connected component is said to be \emph{constant}
if all its edges have the same weight.
We prove that if there is inheritance of convexity for $\mathcal{P}_{\min}$,
then the underlying graph $G$ has to be connected or has only one component with non-constant weight. 

\begin{proposition}
\label{corPathCond (Path Condition)nonconnected}
Let $G=(N,E,w)$ be a weighted graph.
Let us assume that for all $\emptyset \not= S \subseteq N$
the $\mathcal{P}_{\min}$-restricted game $(N,\overline{u_{S}})$ is convex.
\begin{enumerate}
\item
\label{itemLemmeE12w2=w3w2component1}
If $G$ has a connected component with three different weights $\sigma_1<\sigma_2<\sigma_3$,
then all other connected components of $G$ are singletons.
\item
\label{itemLemmeE12w2=w3w2component2}
Suppose that $G$ has a connected component with two different weights $\sigma_1<\sigma_2$
and that the component has two distinct edges $e_1$ and $e_2$
with $w_1 = w_2 = \sigma_1$.
Then all other connected components of $G$ are singletons.
\item
\label{itemLemmeE12w2=w3w2component3}
If $G$ has a connected component with two different weights $\sigma_1<\sigma_2$
and if $|E_{1}|=1$,
then all other connected components of $G$ have constant weight $\sigma_2$ or are singletons.
\end{enumerate}
\end{proposition}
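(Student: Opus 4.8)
The plan is to reduce the three statements to two uniform facts about a weighted graph $G$ possessing a connected component $C'$ and a distinct, \emph{non-singleton} connected component $C''$. First observe that for any $S\subseteq V(C')$ the game $\overline{u_S}$ coincides, on subsets of $V(C')$, with the $\mathcal{P}_{\min}$-restricted game of the component $C'$, since $\mathcal{P}_{\min}$ of a subset only depends on the edges inside that subset. Hence convexity of all the $\overline{u_S}$ on $2^N$ implies convexity of all the unanimity restricted games of $C'$, so $C'$ satisfies the Star, Path, Cycle, Pan and Adjacent cycles conditions (Proposition~\ref{corPathCond}) and every structural result of the preceding sections applies inside $C'$. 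We use repeatedly that an edge of $C''$ can never be linked by an edge of $E$ to a vertex of $C'$.

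\emph{Fact~1 (weight ceiling).} If $C'$ contains two adjacent edges $e,e'$ with $\max(w(e),w(e'))\le M$, then every edge of $C''$ has weight $\le M$: an edge $g$ of $C''$ with $w(g)>M\ge\max(w(e),w(e'))$ would, by Lemma~\ref{LemP_MinMax(w1,w2)<w(e)e'InELinkingeTo2} applied to the adjacent pair $\{e,e'\}$ and to $g$, force an edge of $E$ linking $g$ to the common endpoint of $e,e'$, which is impossible. \emph{Fact~2 (heavy-edge clash).} If $C'$ contains an edge $e_h=\{x,y\}$ of weight $W$ adjacent to a strictly lighter edge $e_\ell=\{x,z\}$ of $C'$, and $C''$ contains an edge $\{a,b\}$ with $w(\{a,b\})<W$, then some $\overline{u_S}$ is not convex. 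To see this set $i=z$, $A=\{x,y\}$, $B=\{x,y,a,b\}$ (so $z\notin B$, $A\subseteq B$; here $z\ne y$ because $e_\ell\ne e_h$). Then $E(A)=\{e_h\}$, whence $\mathcal{P}_{\min}(A)=\{\{x\},\{y\}\}$. Since $e_\ell\in E(A\cup\{i\})$ and $w(e_\ell)<W=w(e_h)$, the edge $e_h$ is not a minimum-weight edge of $E(A\cup\{i\})$, so $x$ and $y$ stay in one block of $\mathcal{P}_{\min}(A\cup\{i\})$ and $A':=\{x,y\}$ is a block of $\mathcal{P}_{\min}(A\cup\{i\})_{|A}$. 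Because $C'$ and $C''$ are distinct components, $E(B)=\{e_h,\{a,b\}\}$, so $\Sigma(B)=\{\{a,b\}\}$ and $x,y$ again lie in one block of $\mathcal{P}_{\min}(B)$; thus $\mathcal{P}_{\min}(B)_{|A'}=\{\{x,y\}\}\ne\{\{x\},\{y\}\}=\mathcal{P}_{\min}(A)_{|A'}$. As $A,B,A\cup\{i\}$ are nonempty, this contradicts Theorem~\ref{thLGNuPNFNFSNuSSNNuSFABFABF} with $\mathcal{F}=2^N\setminus\{\emptyset\}$ (the required superadditivity of the $\overline{u_S}$ being Corollary~\ref{corIGNEiafttsihfv}).

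Granting Facts~1 and~2, the three claims follow immediately. For~\ref{itemLemmeE12w2=w3w2component1}, $C'$ has weights $\sigma_1<\sigma_2<\sigma_3$; Lemma~\ref{lem1)w1=sigma1(G)w2=Sigma2(G)2)GN1Connected} gives in $C'$ adjacent edges of weights $\sigma_1,\sigma_2$, so Fact~1 with $M=\sigma_2$ bounds every edge of $C''$ by $\sigma_2<\sigma_3$, while Lemma~\ref{lemw1<w2<w3e_{1}e_{2}IncidentTove3Adjacenttoe1ore2ButNotIncidentTov} provides in $C'$ an edge of weight $\sigma_3$ adjacent to a lighter edge; Fact~2 with $W=\sigma_3$ then forbids any edge in $C''$, so $C''$ is a singleton. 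For~\ref{itemLemmeE12w2=w3w2component2}, since $|E_1(C')|\ge 2$, Proposition~\ref{propEitherOnlyOneEdgeOfWeightw1OrAllEdgesOFWeightw1AreIncidentToSameVertex} puts all weight-$\sigma_1$ edges of $C'$ at a single vertex, so two of them are adjacent and Fact~1 with $M=\sigma_1$ bounds every edge of $C''$ by $\sigma_1<\sigma_2$; since some weight-$\sigma_2$ edge of $C'$ is adjacent to a weight-$\sigma_1$ edge (Lemma~\ref{lem1)w1=sigma1(G)w2=Sigma2(G)2)GN1Connected}), Fact~2 with $W=\sigma_2$ again forbids any edge in $C''$. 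For~\ref{itemLemmeE12w2=w3w2component3}, Fact~1 with $M=\sigma_2$ (the adjacent $\sigma_1$--$\sigma_2$ pair from Lemma~\ref{lem1)w1=sigma1(G)w2=Sigma2(G)2)GN1Connected}) shows every edge of $C''$ has weight $\le\sigma_2$; if $C''$ had an edge of weight $<\sigma_2$, Fact~2 with $W=\sigma_2$ (the $\sigma_2$-edge of $C'$ being adjacent to the unique $\sigma_1$-edge $e_1$) would give a contradiction, so either $C''$ is a singleton or all its edges have weight exactly $\sigma_2$, i.e.\ $C''$ is constant of weight $\sigma_2$.

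The delicate point is the verification in Fact~2: one must check that $\mathcal{P}_{\min}(A)$, $\mathcal{P}_{\min}(A\cup\{i\})$ and $\mathcal{P}_{\min}(B)$ are as claimed even if $C'$ carries further edges among $x,y,z$ (chords of the little triangle $xyz$), which is true since the only property invoked is that $e_h$ is never a minimum-weight edge of $E(A\cup\{i\})$ and hence is never deleted; and one must use that no edge of $E$ joins $\{x,y\}$ to $\{a,b\}$ inside $B$, which is exactly where the disconnectedness of $G$ enters. Everything else is routine once the structure of $C'$ coming from Section~\ref{SectionInheritanceOfConvexityWithPmin} is available.
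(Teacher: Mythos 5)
Your proposal is correct and follows essentially the same route as the paper: your Fact~1 and Fact~2 are precisely the two halves of the auxiliary Lemma~\ref{LemP_MinMax(w1,w2)<w(e)e'InELinkingeTo21} that the paper proves for this purpose (the upper bound via Lemma~\ref{LemP_MinMax(w1,w2)<w(e)e'InELinkingeTo2}, and the lower bound via the $i=z$, $A=\lbrace x,y\rbrace$, $B=A\cup\lbrace a,b\rbrace$ partition clash against Theorem~\ref{thLGNuPNFNFSNuSSNNuSFABFABF}), and the three cases are then dispatched with the same structural lemmas. The only differences are organizational (you apply the ceiling before the clash in Claims~1 and~2, the paper does it in the opposite order), and your verification of the partitions in Fact~2, including the localization of the Section~\ref{SectionInheritanceOfConvexityWithPmin} results to the component $C'$, is sound.
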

To prove Proposition~\ref{corPathCond (Path Condition)nonconnected} we need the following lemma.

\begin{lemma}
\label{LemP_MinMax(w1,w2)<w(e)e'InELinkingeTo21}
Let us assume that for all $\emptyset \not= S \subseteq N$
the $\mathcal{P}_{\min}$-restricted game $(N,\overline{u_{S}})$ is convex.
Let $e_{1} = \lbrace 1,2 \rbrace$ and $e_{2} = \lbrace 2,3 \rbrace$
be two adjacent edges with $w_1<w_2$ and $e$ be an edge non-incident to~$1$.
Then,
we have $w(e)\geq w_2$ and if moreover $e$ is not linked to~$2$ by an edge,
then $w(e)=w_2$.
\end{lemma}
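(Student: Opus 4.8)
The plan is to reduce both assertions to the characterization of inheritance of convexity for unanimity games, \emph{i.e.} to Theorem~\ref{thLGNuPNFNFSNuSSNNuSFABFABF} applied with $\mathcal{F} = 2^{N}\setminus\lbrace\emptyset\rbrace$, together with the Star, Path and Cycle conditions (which are necessary by Proposition~\ref{corPathCond}, since convexity implies $\mathcal{F}$-convexity) and the already available Lemma~\ref{LemP_MinMax(w1,w2)<w(e)e'InELinkingeTo2}. We may assume $e\not=e_{2}$, since otherwise $w(e)=w_{2}$ and there is nothing to prove; thus the common vertex of $e_{1}$ and $e_{2}$ is $2$, the other end-vertex of $e_{1}$ is $1$, and the other end-vertex of $e_{2}$ is $3$.

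I would first establish $w(e)\geq w_{2}$ by contradiction. Suppose $w(e)<w_{2}$, write $e=\lbrace j,k\rbrace$, and apply Theorem~\ref{thLGNuPNFNFSNuSSNNuSFABFABF} with $i=1$, $A=\lbrace 2,3\rbrace$ and $B=\lbrace 2,3\rbrace\cup\lbrace j,k\rbrace$. As $e$ is non-incident to $1$ we have $1\notin B$, so $A\subseteq B\subseteq N\setminus\lbrace i\rbrace$. The only edge of $G_{A}$ is $e_{2}$, hence $\mathcal{P}_{\min}(A)=\lbrace\lbrace 2\rbrace,\lbrace 3\rbrace\rbrace$. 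A short case analysis shows that every edge of $G_{B}$ other than $e$ has weight $w_{2}$: an edge incident to $2$ has weight $w_{2}$ by the Star condition applied to $e_{1}$, $e_{2}$ and this edge (recalling $w_{1}<w_{2}$); an edge $\lbrace 3,l\rbrace$ with $l\in\lbrace j,k\rbrace$ has weight $w_{2}$ by the Path condition applied to the path running $1,2,3,l$ and then along $e$ (or by the Cycle condition when $G_{B}$ reduces to a triangle). Since $w(e)<w_{2}$, this gives $\Sigma(B)=\lbrace e\rbrace$, so after deleting $e$ the edge $e_{2}$ still joins $2$ and $3$ and $\mathcal{P}_{\min}(B)_{|\lbrace 2,3\rbrace}=\lbrace\lbrace 2,3\rbrace\rbrace$. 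Moreover, in $G_{\lbrace 1,2,3\rbrace}$ the unique minimum-weight edge is $e_{1}$ (a chord $\lbrace 1,3\rbrace$, if present, has weight $w_{2}$ by the Cycle condition), so deleting it leaves $2$ and $3$ connected through $e_{2}$; hence $A':=\lbrace 2,3\rbrace\in\mathcal{P}_{\min}(A\cup\lbrace i\rbrace)_{|A}$. Then $\mathcal{P}_{\min}(A)_{|A'}=\lbrace\lbrace 2\rbrace,\lbrace 3\rbrace\rbrace\not=\lbrace\lbrace 2,3\rbrace\rbrace=\mathcal{P}_{\min}(B)_{|A'}$, contradicting Theorem~\ref{thLGNuPNFNFSNuSSNNuSFABFABF}. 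Therefore $w(e)\geq w_{2}$.

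For the second assertion, assume in addition that $e$ is not linked to $2$ by an edge. If $e$ is incident to $2$, then the Star condition applied to $\lbrace e_{1},e_{2},e\rbrace$ already forces $w(e)=w_{2}$. Otherwise, by the first part $w(e)\geq w_{2}=\max(w_{1},w_{2})$; if this inequality were strict, then $\max(w_{1},w_{2})<w(e)$ and Lemma~\ref{LemP_MinMax(w1,w2)<w(e)e'InELinkingeTo2} would produce an edge of $E$ linking $e$ to $2$, contradicting the hypothesis. Hence $w(e)=w_{2}$.

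The delicate point is the middle paragraph: one has to be sure that no edge of $G_{B}$ is strictly lighter than $e$ (so that $e\in\Sigma(B)$) and that $e_{2}$ is not deleted together with $e$, which requires treating with care the sub-cases where an end-vertex of $e$ coincides with $3$ or is adjacent to $2$, so that $B$ has only three vertices and $G_{B}$ may be a triangle. Each such sub-case is closed by the Star, Path or Cycle condition, but the enumeration must be done precisely.
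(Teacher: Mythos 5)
Your proof follows exactly the paper's route: the same choice $i=1$, $A=\lbrace 2,3\rbrace$, $B=A\cup\lbrace j,k\rbrace$, the same contradiction of Theorem~\ref{thLGNuPNFNFSNuSSNNuSFABFABF} with $\mathcal{F}=2^{N}\setminus\lbrace\emptyset\rbrace$, and the same use of Lemma~\ref{LemP_MinMax(w1,w2)<w(e)e'InELinkingeTo2} for the second assertion. The skeleton is sound, but the step you single out as delicate rests on a misidentification of what must be proved, and the one incorrect justification in your write-up sits precisely there. You do not need $\Sigma(B)=\lbrace e\rbrace$, nor that no edge of $G_{B}$ is lighter than $e$: to get $\mathcal{P}_{\min}(B)_{|\lbrace 2,3\rbrace}=\lbrace\lbrace 2,3\rbrace\rbrace$ it suffices that $e_{2}\notin\Sigma(B)$, which is immediate since $\sigma(B)\leq w(e)<w_{2}=w(e_{2})$. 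Likewise, to get $\lbrace 2,3\rbrace\in\mathcal{P}_{\min}(A\cup\lbrace 1\rbrace)_{|A}$ you only need $e_{2}\notin\Sigma(\lbrace 1,2,3\rbrace)$, immediate from $\sigma(\lbrace 1,2,3\rbrace)\leq w_{1}<w_{2}$; your claim that $e_{1}$ is the \emph{unique} minimum-weight edge of $G_{\lbrace 1,2,3\rbrace}$ is both unnecessary and unjustified --- if $\lbrace 1,3\rbrace\in E$ it is an edge of the triangle, not a chord, and the Cycle condition imposes no constraint on a $3$-cycle without chords, so nothing rules out $w(\lbrace 1,3\rbrace)\leq w_{1}$ a priori (the conclusion you need survives anyway, for the reason just given). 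With these two one-line observations the entire case analysis of your middle paragraph evaporates, and your uniform argument then covers every $e$ non-incident to $1$, including $e$ incident to $2$ or $3$; the paper instead dispatches those two cases upfront by the Star and Path conditions and runs the partition argument only when $e$ is disjoint from $\lbrace 1,2,3\rbrace$.
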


\begin{proof}
We set $e = \lbrace j, k \rbrace$.
If $e$ is incident to 2,
Star condition implies $w(e)=w_2$.
If $e$ is incident to 3,
Path condition implies $w(e)\ge w_2$.
Hence,
we can assume $e$ non-incident to the vertices $1$, $2$, and $3$.
By contradiction,
let us assume $w(e)<w_{2}$.
Let us consider $i=1$ and the subsets
$A= \lbrace 2,3 \rbrace$ and $B = A \cup \lbrace j,k \rbrace$,
as represented in Figure~\ref{figLemmaProofwMax(w1,w2)<w(e)-2}.
\begin{figure}[!h]
\centering
\begin{pspicture}(0,-.3)(0,1.6)
\tiny
\begin{psmatrix}[mnode=circle,colsep=.7,rowsep=.4]
{$j$}	& {$k$}\\
{$1$} 	& {$2$}	& {$3$}
\psset{arrows=-, shortput=nab,labelsep={.1}}
\tiny
\ncline{1,1}{1,2}^{$e$}
\ncline[linestyle=dotted]{1,1}{2,1}
\ncline[linestyle=dotted]{1,2}{2,1}
\ncline[linestyle=dotted]{1,1}{2,3}
\ncline[linestyle=dotted]{1,2}{2,3}
\ncline{2,1}{2,2}_{$e_{1}$}
\ncline{2,2}{2,3}_{$e_{2}$}
\end{psmatrix}
\normalsize
\uput[0](-2.2,.3){\textcolor{cyan}{$A$}}
\pspolygon[linecolor=cyan,linearc=.4,linewidth=.02](-1.7,-.4)(-1.7,.4)(.1,.4)(.1,-.4)
\uput[0](-3.35,0){\textcolor{blue}{$\scriptstyle{i=}$}}
\uput[0](-3.5,1){\textcolor{cyan}{$B$}}
\pspolygon[linecolor=cyan,linearc=.5,linewidth=.02](-1.5,-.5)(-3.9,1.3)(-1.1,1.3)(1.1,-.5)
\end{pspicture}
\caption{$w_{1} < w_{2}$ and $w(e) < w_{2}$.}
\label{figLemmaProofwMax(w1,w2)<w(e)-2}
\end{figure}
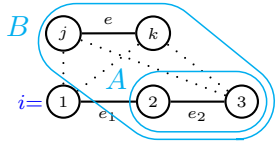
As $w(e) < w_2$,
there is a block $B'$ of $\mathcal{P}_{\min}(B)$ such that $A\subseteq B'$.
As $w_{1} < w_2$,
we have $\sigma(A\cup\lbrace i\rbrace)<w_2$
and
$\mathcal{P}_{\min}(A\cup\lbrace i\rbrace)=
\lbrace A\cup\lbrace i\rbrace\rbrace$ or $\lbrace A,\lbrace i\rbrace\rbrace$.
Then,
taking $A'=A$
we get $\mathcal{P}_{\min}(B)_{|A'}=\lbrace A\rbrace \not=
\lbrace \lbrace 2 \rbrace, \lbrace 3 \rbrace\rbrace = \mathcal{P}_{\min}(A)_{|A'}$
and it contradicts Theorem~\ref{thLGNuPNFNFSNuSSNNuSFABFABF}
applied with $\mathcal{F} = 2^N \setminus \lbrace \emptyset \rbrace$.
If $e$ is not linked to 2 by an edge,
Lemma~\ref{LemP_MinMax(w1,w2)<w(e)e'InELinkingeTo2} implies $w(e)\le \max(w_1,w_2)=w_2$ and therefore $w(e)=w_2$.
\end{proof} 
We can now prove Proposition~\ref{corPathCond (Path Condition)nonconnected}.
\begin{proof}
[Proof of Proposition~\ref{corPathCond (Path Condition)nonconnected}]
\begin{enumerate}
\item
\label{itemLemmeE12w2=w3w2component11}
Let us consider edges $e_1$, $e_2$, $e_3$ in the same connected component of weights $\sigma_1<\sigma_2<\sigma_3$.
By Theorem~\ref{PropInheritanceOfConvexityForPminOnGIffCofGWithWeightw2EitherCompleteOrAllVerticesLinkedToTheSameEndVertexOfe12}
we can assume $e_1 = \lbrace 1, 2 \rbrace$, $e_2 = \lbrace 2, 3 \rbrace$,
and $e_3$ incident to $1$ or~$3$. 
Let $e$ be an edge that is not connected to $e_1$.
Applying Lemma~\ref{LemP_MinMax(w1,w2)<w(e)e'InELinkingeTo21}
to the pair of edges $\lbrace e_1,e_2 \rbrace$
we get $w(e)=\sigma_2$.
If $e_{3}$ is incident to $1$
(resp. $3$),
then
Lemma~\ref{LemP_MinMax(w1,w2)<w(e)e'InELinkingeTo21}
applied to $\lbrace e_{1}, e_{3} \rbrace$
(resp. $\lbrace e_2,e_3 \rbrace$)
implies $w(e)=\sigma_3$,
a contradiction.
\item
\label{itemLemmeE12w2=w3w2component12}
Let us now consider edges $e_1$, $e_2$, $e_3$ in the same connected component
of weights $w_{1}=w_{2}=\sigma_{1}$
and $w_3=\sigma_2$.
By Theorem~\ref{lemInheritanceOfConvexityForPminInheritanceOfConvexityForPMG1Cycle-Complete}
we can assume $e_1 = \lbrace 1, 2 \rbrace$, $e_2= \lbrace 1, 3 \rbrace$,
and $e_3$ incident to $3$.
By contradiction,
let $e$ be an edge that is not connected to $e_1$.
Lemma~\ref{LemP_MinMax(w1,w2)<w(e)e'InELinkingeTo21} applied to $\lbrace e_2,e_3 \rbrace$
implies $w(e)=w_3=\sigma_2$.
Then,
by Lemma~\ref{LemP_MinMax(w1,w2)<w(e)e'InELinkingeTo2}
applied to $\lbrace e_1,e_2 \rbrace$
(we have $w(e)=\sigma_2>\sigma_1=w_1=w_2$),
$e$ has to be linked to $1$,
a contradiction.
\item
Let $e_1 = \lbrace 1, 2 \rbrace$ be the unique edge with weight $\sigma_1$
and let us consider an edge $e_2 = \lbrace 2, 3 \rbrace$ of weight $w_2=\sigma_2>\sigma_1$ adjacent to $e_1$.
Let $e$ be an edge that is not connected to $e_1$.
Applying Lemma~\ref{LemP_MinMax(w1,w2)<w(e)e'InELinkingeTo21} to $\lbrace e_1,e_2 \rbrace$,
we get $w(e) = \sigma_2$.
\qedhere
\end{enumerate}
\end{proof}

\subsection{Complexity analysis}
Using the characterizations previously obtained,
we finally investigate the complexity of the following decision problem:
``Given a weighted graph $G=(N,E,w)$,
is there inheritance of convexity for $\mathcal{P}_{\min}$?''.
Throughout this section
we assume that $G$ is represented by its adjacency matrix
$A = (a_{ij})$
defined by
\[
a_{ij}=
\left
\lbrace
\begin{array}{cl}
w_{ij} &\textrm{if } \lbrace i, j \rbrace \in E,\\
0 & \textrm{otherwise.}
\end{array}
\right.
\]
We first show that cycle-completeness of $G_1=(N, E \setminus E_1)$
can be verified in polynomial time.
Although $G_1$ corresponds to an unweighted subgraph of $G$,
it can be represented by the matrix obtained from $A$
by assigning value~$0$ to all entries associated with edges in $E_1$.
We recall that
a connected graph is \textit{biconnected} if it remains connected after the removal of
any vertex and its incident edges.
A \textit{biconnected component} of a graph is a maximal biconnected subgraph.
We say that 
a biconnected component is complete
if it corresponds to a complete subgraph.
Noting that a graph is cycle-complete if and only if
all its biconnected components are complete,
we can easily check the cycle-completeness of a given graph.
\citet{Tarjan1972} proposed a polynomial algorithm based on a depth-first search procedure
for finding all the biconnected components of an undirected graph.
With the adjacency matrix representation,
Tarjan's algorithm would compute all biconnected components of $G_1$ in $O(n^2)$ time.
Then,
verifying completeness of a given component only requires to check the entries
of the corresponding submatrix.
As two biconnected components cannot have any edge in common,
we can check if all biconnected components are complete
in $O(n^2)$ time.
This implies the following result.

\begin{lemma}
\label{Cycle-CompletenessCanBeVerifiedInO(n^2)Time}
Cycle-completeness of $G_1 = (N, E \setminus E_1)$
can be verified in $O(n^2)$ time.
\end{lemma}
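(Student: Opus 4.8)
The plan is to reduce cycle-completeness of $G_1$ to the completeness of its biconnected components and then exploit a known linear-time block-decomposition algorithm, exactly as outlined in the paragraph preceding the statement. First I would record the elementary graph-theoretic fact that an unweighted graph is cycle-complete if and only if every one of its biconnected components induces a complete subgraph. The non-trivial direction is the ``only if'': any two distinct edges of a common biconnected component lie on a common simple cycle, so cycle-completeness forces the vertex set of that block to induce a clique; conversely, every simple cycle of $G_1$ is contained in a single biconnected component, whose vertices form a clique by hypothesis, so all its chords are present in $G_1$.

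Next, $G_1 = (N, E \setminus E_1)$ is represented by the matrix obtained from the adjacency matrix $A$ of $G$ by replacing every entry associated with an edge of $E_1$ by $0$; since $E_1$ consists exactly of the edges whose weight equals $\sigma_1 = \min_{e \in E} w(e)$, this matrix is produced in a single $O(n^2)$ scan. I would then apply Tarjan's depth-first-search procedure \citep{Tarjan1972} to enumerate all biconnected components of $G_1$. With the adjacency-matrix representation each vertex contributes an $O(n)$ row scan during the search, so the whole decomposition costs $O(n^2)$ time.

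Finally, for each biconnected component $B$ I would test completeness by inspecting the off-diagonal entries of the principal submatrix of $A$ indexed by $V(B)$, stopping as soon as a zero entry (a missing edge) is found. The key accounting step --- and essentially the only point requiring care --- is that biconnected components are pairwise edge-disjoint, so the total number of matrix entries read across all these scans is $O(n^2)$: a component either passes the test, in which case its $\binom{|V(B)|}{2}$ inspected entries are distinct genuine edges of $G_1$, hence counted once in the edge-disjoint tally, or it fails after reading at most $\binom{|V(B)|}{2}$ entries and then is discarded. Summing, $G_1$ is cycle-complete if and only if every block passes the test, and the whole verification runs in $O(n^2)$ time, which is the claimed bound.

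The main obstacle I anticipate is precisely this last complexity bookkeeping: one must argue that scanning the clique-submatrices of possibly many, possibly large, biconnected components still fits in $O(n^2)$. This is handled by combining the edge-disjointness of the blocks with the early-termination rule for the completeness test, so that no block contributes more than $\binom{|V(B)|}{2}$ to the total and these bounds aggregate to $O(n^2)$. Everything else is routine.
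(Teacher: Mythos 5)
Your proposal is correct and follows essentially the same route as the paper: reduce cycle-completeness to completeness of all biconnected components, compute the block decomposition with Tarjan's depth-first-search algorithm in $O(n^2)$ time on the adjacency matrix, and check completeness of each block by scanning its submatrix, with the total cost controlled by the edge-disjointness of the blocks. The only addition is your explicit early-termination bookkeeping, which the paper leaves implicit but which changes nothing of substance.
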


We now consider the remaining conditions on cycles required
in Theorems~\ref{PropInheritanceOfConvexityForPminOnGIffCofGWithWeightwEitherCompleteOrAllVerticesLinkedToTheSameEndVertexOfe1}
and~\ref{PropInheritanceOfConvexityForPminOnGIffCofGWithWeightw2EitherCompleteOrAllVerticesLinkedToTheSameEndVertexOfe12}.

\begin{lemma}
\label{LemmaCondition1(resp.Condition4OtTheorem21(resp.Theorem22)CanBeVerifiedInO(n^2)Time}
Let $G=(N,E,w)$ be a connected weighted graph.
Let us assume $|E_{1}| = 1$
and let $e_{1} = \lbrace 1, 2 \rbrace$ be the unique edge in $E_{1}$.
The existence and uniqueness of a chordless cycle containing $e_1$ can be verified in $O(n^2)$ time.
\end{lemma}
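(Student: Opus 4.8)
The plan is to reduce the question to a connectivity-and-block analysis of the graph $H$ obtained from $G$ by deleting the single edge $e_{1}=\{1,2\}$ (equivalently $H=G_{1}$, since $E_{1}=\{e_{1}\}$). First I would note that any cycle $C$ containing $e_{1}$ has $1$ and $2$ consecutive on it, so $C\setminus\{e_{1}\}$ is a path $P$ from $1$ to $2$ with at least one internal vertex, and $C$ is chordless exactly when $P$ is an induced path of $H$: the chords of $C$ are the edges of $G$ joining non-consecutive vertices of $V(C)$, and as $e_{1}$ joins the consecutive pair $\{1,2\}$ these are precisely the edges of $H$ joining non-consecutive vertices of $P$. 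So chordless cycles through $e_{1}$ correspond bijectively to induced $1$--$2$ paths of $H$. Existence then follows at once: such a path exists if and only if $1$ and $2$ lie in the same connected component of $H$, since a shortest $1$--$2$ path of $H$ is automatically induced (a chord would be a shortcut) and has length at least $2$ because $\{1,2\}\notin E(H)$; this is an $O(n^{2})$ search on the adjacency matrix.

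For uniqueness I would decompose the induced $1$--$2$ paths of $H$ along its block structure. Assuming $1$ and $2$ are joined in $H$, let $B_{1},\dots,B_{k}$ be the blocks (maximal biconnected subgraphs) and $x_{1},\dots,x_{k-1}$ the cut vertices met in order on the path of the block--cut tree of $H$ from $1$ to $2$, and put $x_{0}=1$, $x_{k}=2$. Every $1$--$2$ path of $H$ runs through $B_{1},\dots,B_{k}$ in order via $x_{1},\dots,x_{k-1}$, and it is induced if and only if its trace on each $B_{i}$ is an induced $x_{i-1}$--$x_{i}$ path (the concatenation stays induced because distinct blocks share only cut vertices). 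Hence the number of induced $1$--$2$ paths of $H$ equals $\prod_{i=1}^{k}\nu_{i}$, where $\nu_{i}$ counts the induced $x_{i-1}$--$x_{i}$ paths of $B_{i}$. The crucial structural fact, which I expect to be the step requiring the most care, is the dichotomy for $\nu_{i}$: if $\{x_{i-1},x_{i}\}\in E(H)$ then $\nu_{i}=1$ (the edge is such a path, and any longer induced $x_{i-1}$--$x_{i}$ path would have $\{x_{i-1},x_{i}\}$ as a chord), whereas if $\{x_{i-1},x_{i}\}\notin E(H)$ then $B_{i}$ is biconnected with at least three vertices, contains a cycle through $x_{i-1}$ and $x_{i}$, and shortcutting each of its two arcs to an induced path yields two distinct induced $x_{i-1}$--$x_{i}$ paths, so $\nu_{i}\ge 2$. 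Therefore there is at most one chordless cycle containing $e_{1}$ if and only if either $1$ and $2$ are disconnected in $H$, or $\{x_{i-1},x_{i}\}\in E(H)$ for every $i$; in the latter case the unique such cycle is $e_{1}$ together with the path $1,x_{1},\dots,x_{k-1},2$.

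Finally I would check the running time. The connected and biconnected components of $H$, and hence its block--cut tree, can be computed by Tarjan's depth-first-search algorithm, which on the adjacency-matrix representation runs in $O(n^{2})$; the block--cut tree has $O(n)$ nodes, so extracting the path between the nodes of $1$ and $2$ and reading off $x_{1},\dots,x_{k-1}$ costs $O(n^{2})$, and testing the $O(n)$ consecutive pairs $\{x_{i-1},x_{i}\}$ for adjacency in $H$ costs $O(n)$. Altogether the verification is $O(n^{2})$, as claimed.
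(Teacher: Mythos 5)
Your proof is correct, and it takes a genuinely different route from the paper's. Both arguments begin the same way: delete $e_{1}$, observe that chordless cycles through $e_{1}$ correspond to induced $1$--$2$ paths in $G_{1}=(N,E\setminus E_{1})$, and settle existence with a BFS/shortest-path search. But the uniqueness certificates differ. The paper keeps the particular shortest path $P$ returned by BFS and checks that every internal vertex of $P$ is an articulation point of $G_{1}$; you instead compute the block--cut tree of $G_{1}$, extract the cut vertices $x_{1},\dots,x_{k-1}$ on the tree path from $1$ to $2$, and test whether each consecutive pair $\lbrace x_{i-1},x_{i}\rbrace$ is an edge. Your factorization of the number of induced $1$--$2$ paths over the blocks, with the dichotomy $\nu_{i}=1$ when $\lbrace x_{i-1},x_{i}\rbrace\in E(G_{1})$ and $\nu_{i}\geq 2$ otherwise, is exactly equivalent to uniqueness, so your test is a genuine characterization. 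It is in fact more robust than the paper's: an internal vertex of $P$ can be an articulation point of $G_{1}$ merely because it cuts off a pendant block that has nothing to do with separating $1$ from $2$. For instance, if $G_{1}$ consists of the $4$-cycle on $1,a,2,d$ together with a triangle $a,b,c$ attached at $a$, then both $\lbrace 1,a,2\rbrace$ and $\lbrace 1,d,2\rbrace$ span chordless triangles through $e_{1}$, yet BFS may return the path $1,a,2$ and $a$ is an articulation point, so the paper's check as written would report uniqueness incorrectly; your consecutive-cut-vertex condition (here $k=1$ and $\lbrace 1,2\rbrace\notin E(G_{1})$) correctly reports two cycles. What the paper's version buys is brevity; what yours buys is a criterion that does not depend on which shortest path BFS happens to return and whose correctness proof is self-contained. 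The complexity analysis is identical in both: Tarjan's algorithm on the adjacency matrix in $O(n^{2})$ plus $O(n^{2})$ bookkeeping.
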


\begin{proof}
We can check the existence of a path
linking $1$ and $2$ in $G_1 = (N, E \setminus E_1)$
with a Breadth First Search (BFS) algorithm in $O(n^2)$ time.
If it exists,
then the BFS algorithm returns a shortest path $P$
and
$P \cup \lbrace e_1 \rbrace$ corresponds to a chordless cycle.
Then,
it remains to check that
there is no path $P' \not= P$ linking $1$ and $2$ in $G_1$
such that $P' \cup \lbrace e_1 \rbrace$ corresponds to a chordless cycle.
Let us note that if a path $P'\not=P$ linking $1$ and $2$ in $G_1$ contains all vertices of $P$,
then  $P' \cup \lbrace e_1 \rbrace$ cannot correspond to a chordless cycle
as at least one edge of $P$ is necessarily a chord of $P' \cup \lbrace e_1 \rbrace$.
Moreover,
if there is a path $P'\not=P$ linking $1$ and $2$ in $G_1$
which does not contain all vertices of $P$,
then
at least one vertex in $V(P) \setminus \lbrace 1, 2 \rbrace$
is not an articulation point\footnote{
A vertex in a graph is an articulation point if its removal disconnects the graph.} in $G_1$.
Tarjan's algorithm returns the articulation points (and the biconnected components)
of $G_1$ in $O(n^2)$ time.
Then,
it is sufficient to check 
that each vertex in $V(P) \setminus \lbrace 1, 2 \rbrace$
belongs to the set of articulation points of $G_1$.
\end{proof}

\begin{lemma}
\label{LemmaForEveryBiconnectedComponentCOfG_1EitherCIsCompleteOrAllVerticesOfCAreLinked}
Let $G=(N,E,w)$ be a connected weighted graph.
Let us assume $|E_{1}| = 1$.
Let $e_{1} = \lbrace 1, 2 \rbrace$ be the unique edge in $E_{1}$,
and let $G_1 = (N, E \setminus E_1)$.
Let us assume that the following condition is satisfied:
\begin{enumerate}
\item
\label{itemLemmaThereExistsAtMostOneCycleContaininge1}
There exists at most one chordless cycle containing $e_{1}$.
\end{enumerate}
Then the following conditions are equivalent:
\begin{enumerate}
\setcounter{enumi}{1}
\item
\label{itemLemmaForEveryCycleCInG_1EitherCIsCompleteOrAllVerticesOfCAreLinkedToTheSameEnd-VertexOfe_1-2}
For every cycle $C$ in $G_1$
either $C$ is complete or
all vertices of $C$ are linked to the same end-vertex of $e_{1}$.
\item
\label{itemLemmaForEvery2-ConnectedComponentCInG_1EitherCIsCompleteOrAllVerticesOfCAreLinkedToTheSameEnd-VertexOfe_1}
For every biconnected component $\tilde{C}$ of $G_1$
with at least three vertices
either $\tilde{C}$ is complete or
all vertices of $\tilde{C}$ are linked to the same end-vertex of $e_{1}$.
\end{enumerate}
Moreover,
these conditions can be verified in $O(n^2)$ time.
\end{lemma}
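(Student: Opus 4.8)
The plan is to establish the two implications between Conditions~\ref{itemLemmaForEveryCycleCInG_1EitherCIsCompleteOrAllVerticesOfCAreLinkedToTheSameEnd-VertexOfe_1-2} and~\ref{itemLemmaForEvery2-ConnectedComponentCInG_1EitherCIsCompleteOrAllVerticesOfCAreLinkedToTheSameEnd-VertexOfe_1} and then to reduce the verification of Condition~\ref{itemLemmaForEveryCycleCInG_1EitherCIsCompleteOrAllVerticesOfCAreLinkedToTheSameEnd-VertexOfe_1-2} to that of Condition~\ref{itemLemmaForEvery2-ConnectedComponentCInG_1EitherCIsCompleteOrAllVerticesOfCAreLinkedToTheSameEnd-VertexOfe_1}. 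The implication $(\ref{itemLemmaForEvery2-ConnectedComponentCInG_1EitherCIsCompleteOrAllVerticesOfCAreLinkedToTheSameEnd-VertexOfe_1})\Rightarrow(\ref{itemLemmaForEveryCycleCInG_1EitherCIsCompleteOrAllVerticesOfCAreLinkedToTheSameEnd-VertexOfe_1-2})$ is immediate and does not use Condition~\ref{itemLemmaThereExistsAtMostOneCycleContaininge1}: any cycle $C$ of $G_1$ is $2$-connected, hence lies inside a single biconnected component $\tilde{C}$ of $G_1$, with $|V(\tilde{C})|\ge|V(C)|\ge 3$; if $\tilde{C}$ is complete then $V(C)\subseteq V(\tilde{C})$ induces a complete subgraph of $G$, so $C$ is complete, and if every vertex of $\tilde{C}$ is linked to the same end-vertex of $e_1$ then a fortiori every vertex of $C$ is. For the converse I would fix a biconnected component $\tilde{C}$ of $G_1$ with $|V(\tilde{C})|\ge 3$ that is not complete and show that all its vertices are linked to one end-vertex of $e_1$. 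Since $\tilde{C}$ is not complete it has two non-adjacent vertices $x,y$, and by $2$-connectivity $x$ and $y$ lie on a common cycle $C_0\subseteq\tilde{C}$; as $\{x,y\}$ is a missing chord of $C_0$, the cycle $C_0$ is non-complete and $|V(C_0)|\ge 4$, so Condition~\ref{itemLemmaForEveryCycleCInG_1EitherCIsCompleteOrAllVerticesOfCAreLinkedToTheSameEnd-VertexOfe_1-2} provides an end-vertex $v$ of $e_1$ to which every vertex of $C_0$ is linked. Because $e_1$ is the unique minimum weight edge, every other edge lies in $G_1$, so $v$ is joined in $G_1$ to at least two vertices of the $2$-connected graph $\tilde{C}$; attaching $v$ to $\tilde{C}$ by these edges preserves $2$-connectedness, which by maximality of $\tilde{C}$ forces $v\in V(\tilde{C})$.

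For an arbitrary $z\in V(\tilde{C})$ I would then use $2$-connectivity of $\tilde{C}$ (a fan from $z$ to $V(C_0)$, re-routed through edges $\{z,a\}$ with $a\in V(C_0)$ whenever the naive cycle obtained from the fan turns out complete or meets $C_0$ only at vertices of $\{1,2\}$) to build a non-complete cycle $C_z\subseteq\tilde{C}$ through $z$ sharing with $C_0$ at least two vertices lying outside $\{1,2\}$. Condition~\ref{itemLemmaForEveryCycleCInG_1EitherCIsCompleteOrAllVerticesOfCAreLinkedToTheSameEnd-VertexOfe_1-2} applied to $C_z$ links all its vertices to a common end-vertex $v'$ of $e_1$; if $v'\neq v$, then each of the two shared vertices would be joined to both $1$ and $2$, yielding two distinct chordless triangles through $e_1$ and contradicting Condition~\ref{itemLemmaThereExistsAtMostOneCycleContaininge1}. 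Hence $v'=v$, so $z$ is linked to $v$, and since $z$ was arbitrary this establishes Condition~\ref{itemLemmaForEvery2-ConnectedComponentCInG_1EitherCIsCompleteOrAllVerticesOfCAreLinkedToTheSameEnd-VertexOfe_1} for $\tilde{C}$. The step I expect to be the main obstacle is precisely the construction of $C_z$: a fan from $z$ to $C_0$ yields one cycle per arc of $C_0$, but either cycle can be complete, or can intersect $C_0$ only in $\{1,2\}$ (for instance when $C_0$ is a short cycle through $1$ and $2$), and then the triangle argument is not directly available. Resolving these degenerate situations requires an elementary but somewhat lengthy case analysis on the lengths of the two arcs of $C_0$ cut out by the fan and on which of the vertices involved coincide with $1$ or $2$, using repeatedly that $\tilde{C}$ is not complete and that $v\in V(\tilde{C})$.

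Finally, for the complexity statement, by the equivalence just established it suffices to decide Condition~\ref{itemLemmaForEvery2-ConnectedComponentCInG_1EitherCIsCompleteOrAllVerticesOfCAreLinkedToTheSameEnd-VertexOfe_1}. As recalled in the proof of Lemma~\ref{Cycle-CompletenessCanBeVerifiedInO(n^2)Time}, Tarjan's algorithm \citep{Tarjan1972} computes all biconnected components of $G_1$ from the adjacency matrix in $O(n^2)$ time. Biconnected components are pairwise edge-disjoint and the total number of vertex-occurrences over all of them is $O(n)$; hence, ranging over the biconnected components $\tilde{C}$ with $|V(\tilde{C})|\ge 3$, testing completeness of each one (comparing its number of edges with $\binom{|V(\tilde{C})|}{2}$) costs $O(n^2)$ in total, and for every non-complete $\tilde{C}$ checking whether all its vertices are joined to vertex $1$, or all to vertex $2$ --- that is, inspecting the relevant entries of rows $1$ and $2$ of the adjacency matrix --- costs $O(|V(\tilde{C})|)$, hence $O(n)$ altogether. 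Therefore Condition~\ref{itemLemmaForEvery2-ConnectedComponentCInG_1EitherCIsCompleteOrAllVerticesOfCAreLinkedToTheSameEnd-VertexOfe_1} can be decided in $O(n^2)$ time.
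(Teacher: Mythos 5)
Your reduction of the complexity claim to checking the biconnected components, and the easy implication from the biconnected-component condition (3) to the cycle condition (2), are both correct and match the paper. The problem is in the converse, which is the substance of the lemma, and you have flagged it yourself: the entire argument hinges on producing, for each vertex $z$ of a non-complete biconnected component $\tilde{C}$, a \emph{non-complete} cycle $C_z$ through $z$ meeting $C_0$ in two vertices outside $\lbrace 1,2\rbrace$, and this construction is never carried out. As you observe, the fan from $z$ to $C_0$ can yield cycles that are complete (in which case Condition 2 says nothing about $z$) or that meet $C_0$ only in $\lbrace 1,2\rbrace$ (in which case the two-triangles argument against Condition 1 is unavailable); announcing that these degeneracies can be resolved by ``an elementary but somewhat lengthy case analysis'' without performing it leaves the key step unproven. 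The difficulty is structural: by anchoring non-completeness only in the initial cycle $C_0$, you must then propagate the property to every other vertex through cycles whose non-completeness you cannot control.

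The paper avoids this entirely by choosing the cycles differently. It fixes a single non-adjacent pair $i,j$ in $\tilde{C}$ and, for each further vertex $k$, takes a cycle through $i$, $j$ \emph{and} $k$; since $\lbrace i,j\rbrace$ is a missing chord of any such cycle, every cycle used in the argument is automatically non-complete, so Condition 2 immediately links $k$ (together with $i$ and $j$) to an end-vertex of $e_1$, and Condition 1 is invoked only once at the end to force all these end-vertices to coincide (two distinct ones would give two chordless triangles through $e_1$ via $i$ and $j$). If you want to salvage your route, the cleanest fix is to adopt this device: always run the cycle through the non-adjacent pair $x,y$ rather than merely through two vertices of $C_0$, so that non-completeness comes for free; the intermediate step showing $v\in V(\tilde{C})$ then becomes unnecessary. (You might also note that the existence of a cycle through three prescribed vertices of a $2$-connected graph is itself not automatic in general, so whichever route you take, that existence step deserves an explicit justification rather than a bare appeal to biconnectivity.)
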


\begin{proof}
As any cycle belongs to a biconnected component,
Condition~\ref{itemLemmaForEvery2-ConnectedComponentCInG_1EitherCIsCompleteOrAllVerticesOfCAreLinkedToTheSameEnd-VertexOfe_1}
obviously implies Condition~\ref{itemLemmaForEveryCycleCInG_1EitherCIsCompleteOrAllVerticesOfCAreLinkedToTheSameEnd-VertexOfe_1-2}.
Let us assume Condition~\ref{itemLemmaForEveryCycleCInG_1EitherCIsCompleteOrAllVerticesOfCAreLinkedToTheSameEnd-VertexOfe_1-2}
satisfied
and let $\tilde{C}$ be a non-complete biconnected component in $G_1$.
Then
there exist $i$ and $j$ in $V(\tilde{C})$
with $\lbrace i,j \rbrace \notin E \setminus E_1$.
Let $k$ be a vertex in $V(\tilde{C}) \setminus \lbrace i, j \rbrace$.
As $\tilde{C}$ is a biconnected component,
there exists a simple cycle $C$
containing $i$, $j$, and $k$.
As $\lbrace i,j \rbrace \notin E \setminus E_1$,
Condition~\ref{itemLemmaForEveryCycleCInG_1EitherCIsCompleteOrAllVerticesOfCAreLinkedToTheSameEnd-VertexOfe_1-2}
implies that $i$, $j$, and $k$
are linked to the same end-vertex of $e_1$.
We can repeat this reasoning for any $k$ in $V(\tilde{C}) \setminus \lbrace i, j \rbrace$.
Finally,
all vertices in $V(\tilde{C})$
are either all linked to $1$
or all linked to $2$,
otherwise we get a contradiction
to Condition~\ref{itemLemmaThereExistsAtMostOneCycleContaininge1}.
Let us now investigate the complexity.
By Lemma~\ref{LemmaCondition1(resp.Condition4OtTheorem21(resp.Theorem22)CanBeVerifiedInO(n^2)Time},
Condition~\ref{itemLemmaThereExistsAtMostOneCycleContaininge1}
can be verified in $O(n^2)$ time.
Then,
Condition~\ref{itemLemmaForEvery2-ConnectedComponentCInG_1EitherCIsCompleteOrAllVerticesOfCAreLinkedToTheSameEnd-VertexOfe_1}
(which is equivalent to Condition~\ref{itThCycleCOfGWithConstantEdge-Weightw2EitherCompleteOrAllVerticesLinkedToSameEndVertexOfe1})
can be checked in $O(n^2)$ time as follows.
We can obtain all biconnected components of $G_1$
with Tarjan's algorithm in $O(n^2)$ time.
Then,
for any biconnected component $\tilde{C}$,
we check if $a_{1i} \not= 0$ for all $i$ in $V(\tilde{C})$
or $a_{2i} \not= 0$ for all $i$ in $V(\tilde{C})$.
If these conditions are not satisfied
then we check the completeness of $\tilde{C}$.
\end{proof}

\begin{proposition}
\label{PropInheritanceConvexityForPminPolynomialTime}
Inheritance of convexity for $\mathcal{P}_{\min}$ can be decided in $O(n^2)$ time.
\end{proposition}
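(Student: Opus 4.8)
The plan is to combine the characterization theorems of Section~\ref{SectionWeighthedgraphsforwhichInheritanceofconvexity} with the complexity lemmas just proved, and to show that every remaining ingredient of those characterizations is checkable in $O(n^2)$ time from the adjacency matrix $A$. First I would compute, by a single pass over the entries of $A$, the sorted list of distinct edge-weights $\sigma_1 < \sigma_2 < \cdots < \sigma_k$, the sets $E_i$, and the number of connected components of $G$ (a BFS/DFS in $O(n^2)$). If $k \geq 4$, by Proposition~\ref{propAtMost3DifferentEdgeWeights} there is no inheritance of convexity, so we reject. If $G$ is disconnected, Proposition~\ref{corPathCond (Path Condition)nonconnected} tells us exactly which configurations can survive (at most one non-constant component, the others being singletons or, in the $|E_1|=1$ two-weight case, constant of weight $\sigma_2$); each of these structural facts is a direct scan of the components and their edge-weights, again $O(n^2)$. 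So it remains to treat a connected $G$ with $k \in \{1,2,3\}$.

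Next I would dispatch on $k$ and $|E_1|$ according to Theorems~\ref{lemInheritanceOfConvexityForPminInheritanceOfConvexityForPMG1Cycle-Complete}, \ref{PropInheritanceOfConvexityForPminOnGIffCofGWithWeightwEitherCompleteOrAllVerticesLinkedToTheSameEndVertexOfe1} and~\ref{PropInheritanceOfConvexityForPminOnGIffCofGWithWeightw2EitherCompleteOrAllVerticesLinkedToTheSameEndVertexOfe12}. The case $k=1$ (all weights equal) is immediate: $\mathcal{P}_{\min}$ deletes all edges, so inheritance of convexity always holds; accept. For $k=2$ with $|E_1| \geq 2$ we must verify Conditions~\ref{itemLemInheritanceOfConvexityForPMOnG1=(N,E2)} and~\ref{itemLemInheritanceOfConvexityForPMOnG1=(N,E3)} of Theorem~\ref{lemInheritanceOfConvexityForPminInheritanceOfConvexityForPMG1Cycle-Complete}: ``all edges of $E_1$ incident to a common vertex $1$ and all edges of $E_2$ linked to $1$ by an edge of $E_1$'' is a scan of $A$ ($O(n^2)$: find the common endpoint, then for each endpoint of an $E_2$-edge check adjacency via $E_1$ to vertex $1$), and cycle-completeness of $G_1 = (N, E\setminus E_1)$ is $O(n^2)$ by Lemma~\ref{Cycle-CompletenessCanBeVerifiedInO(n^2)Time}. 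For $k=2$ with $|E_1|=1$ we check Conditions~\ref{itThThereExistsAtMostOneCycleCWithoutChordContaininge1} and~\ref{itThCycleCOfGWithConstantEdge-Weightw2EitherCompleteOrAllVerticesLinkedToSameEndVertexOfe1} of Theorem~\ref{PropInheritanceOfConvexityForPminOnGIffCofGWithWeightwEitherCompleteOrAllVerticesLinkedToTheSameEndVertexOfe1}; these are exactly the two conditions handled jointly by Lemma~\ref{LemmaCondition1(resp.Condition4OtTheorem21(resp.Theorem22)CanBeVerifiedInO(n^2)Time} (existence/uniqueness of the chordless cycle through $e_1$) and Lemma~\ref{LemmaForEveryBiconnectedComponentCOfG_1EitherCIsCompleteOrAllVerticesOfCAreLinked} (the ``complete-or-all-linked-to-one-endpoint'' condition reduced to biconnected components of $G_1$), both $O(n^2)$. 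For $k=3$ we verify Conditions~\ref{ThereExistsOnlyOneEdgeInE1}--\ref{EveryCycleWhichDoesNotContainse1IsComplete} of Theorem~\ref{PropInheritanceOfConvexityForPminOnGIffCofGWithWeightw2EitherCompleteOrAllVerticesLinkedToTheSameEndVertexOfe12}: $|E_1|=1$ and the incidence/linking structure of $E_2, E_3$ with respect to the endpoint $2$ of $e_1$ are direct $O(n^2)$ scans of $A$; uniqueness of a chordless cycle containing $e_1$ (necessarily of length $3$ or $4$) is again covered by Lemma~\ref{LemmaCondition1(resp.Condition4OtTheorem21(resp.Theorem22)CanBeVerifiedInO(n^2)Time}; and cycle-completeness of $G_1$ is Lemma~\ref{Cycle-CompletenessCanBeVerifiedInO(n^2)Time}.

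Since each branch performs a constant number of $O(n^2)$ operations, the overall procedure runs in $O(n^2)$ time, and by the characterization theorems it accepts precisely the weighted graphs for which there is inheritance of convexity for $\mathcal{P}_{\min}$. The main point requiring care — rather than any deep obstacle — is bookkeeping: one must make sure that the various ``structural'' sub-conditions (common incidence vertex, linking by an edge of a prescribed weight, membership of $P$'s interior vertices in the articulation-point set) are phrased as pure adjacency-matrix queries so that no implicit enumeration of cycles or paths sneaks in; the already-established Lemmas~\ref{Cycle-CompletenessCanBeVerifiedInO(n^2)Time}--\ref{LemmaForEveryBiconnectedComponentCOfG_1EitherCIsCompleteOrAllVerticesOfCAreLinked} are exactly what make this possible. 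I would conclude with a short enumeration of the branches as above, citing Theorems~\ref{lemInheritanceOfConvexityForPminInheritanceOfConvexityForPMG1Cycle-Complete}, \ref{PropInheritanceOfConvexityForPminOnGIffCofGWithWeightwEitherCompleteOrAllVerticesLinkedToTheSameEndVertexOfe1}, \ref{PropInheritanceOfConvexityForPminOnGIffCofGWithWeightw2EitherCompleteOrAllVerticesLinkedToTheSameEndVertexOfe12}, Propositions~\ref{propAtMost3DifferentEdgeWeights} and~\ref{corPathCond (Path Condition)nonconnected}, and Lemmas~\ref{Cycle-CompletenessCanBeVerifiedInO(n^2)Time}, \ref{LemmaCondition1(resp.Condition4OtTheorem21(resp.Theorem22)CanBeVerifiedInO(n^2)Time}, \ref{LemmaForEveryBiconnectedComponentCOfG_1EitherCIsCompleteOrAllVerticesOfCAreLinked} for correctness and running time of each step.
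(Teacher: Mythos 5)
Your proposal follows essentially the same route as the paper's proof: count the weight values and $|E_1|$ from the adjacency matrix, reject immediately when Proposition~\ref{propAtMost3DifferentEdgeWeights} is violated, and otherwise dispatch on $(k,|E_1|)$ to the conditions of Theorems~\ref{lemInheritanceOfConvexityForPminInheritanceOfConvexityForPMG1Cycle-Complete}, \ref{PropInheritanceOfConvexityForPminOnGIffCofGWithWeightwEitherCompleteOrAllVerticesLinkedToTheSameEndVertexOfe1} and~\ref{PropInheritanceOfConvexityForPminOnGIffCofGWithWeightw2EitherCompleteOrAllVerticesLinkedToTheSameEndVertexOfe12}, each verified in $O(n^2)$ by direct matrix scans together with Lemmas~\ref{Cycle-CompletenessCanBeVerifiedInO(n^2)Time}--\ref{LemmaForEveryBiconnectedComponentCOfG_1EitherCIsCompleteOrAllVerticesOfCAreLinked}. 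Your explicit treatment of the trivial case $k=1$ and of disconnected graphs goes slightly beyond what the paper's proof spells out (the paper leaves these implicit, and note that Proposition~\ref{corPathCond (Path Condition)nonconnected} as stated only gives necessary conditions), but this does not change the argument or the $O(n^2)$ bound.
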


\begin{proof}
Let $\lbrace \sigma_{1}, \sigma_{2}, \ldots, \sigma_{k} \rbrace$,
with $k \leq |E|$,
be the set of edge-weights in $G$ 
with $\sigma_{1} < \sigma_{2} < \ldots < \sigma_{k}$.
We apply the following procedure.
We first count in $A$
the number $k$ of edge-weights
and the number $n_{i}$ of occurences of $\sigma_{i}$,
\emph{i.e.},
$n_{i} = |E_{i}|$,
for all $i$, $1 \leq i \leq k$.
If $k > 3$ or if $k =3$ and $n_{1} >1$,
then we stop
as there is no inheritance of convexity for $\mathcal{P}_{\min}$
by Proposition~\ref{propAtMost3DifferentEdgeWeights}.
Otherwise,
we use the characterizations given
in Theorems~\ref{lemInheritanceOfConvexityForPminInheritanceOfConvexityForPMG1Cycle-Complete},
\ref{PropInheritanceOfConvexityForPminOnGIffCofGWithWeightwEitherCompleteOrAllVerticesLinkedToTheSameEndVertexOfe1},
and
\ref{PropInheritanceOfConvexityForPminOnGIffCofGWithWeightw2EitherCompleteOrAllVerticesLinkedToTheSameEndVertexOfe12}
to solve the decision problem in the remaining cases
described below.
If a contradiction is found for a given case,
then we stop the associated procedure
as it implies there is no inheritance of convexity.

Let us assume $k=2$ and $n_{1} \geq 2$.
We have to check Conditions~\ref{itemLemInheritanceOfConvexityForPMOnG1=(N,E2)}
and~\ref{itemLemInheritanceOfConvexityForPMOnG1=(N,E3)b}
of Theorem~\ref{lemInheritanceOfConvexityForPminInheritanceOfConvexityForPMG1Cycle-Complete}.
Let $i^*$ be the smallest index $i \in \lbrace 1, \ldots, n \rbrace$
such that row $i$ of $A$ contains at least two values $\sigma_1$.
If there are indices $i \not= i^*$ and $j \not= i^*$
with $a_{ij} = \sigma_1$,
then we stop
as it contradicts Condition~\ref{itemLemInheritanceOfConvexityForPMOnG1=(N,E2)}.
If row~$i^*$ contains a value $\sigma_{2}$,
we stop.
If there are indices $i \not= i^*$
and $j \not= i^*$
such that $a_{ij} = \sigma_{2}$,
$a_{i^{*}j} \not= \sigma_{1}$
and $a_{ii^{*}} \not= \sigma_{1}$,
we stop as
it still contradicts Condition~\ref{itemLemInheritanceOfConvexityForPMOnG1=(N,E2)}.
Otherwise,
it only remains to check Condition~\ref{itemLemInheritanceOfConvexityForPMOnG1=(N,E3)b},
\emph{i.e.},
whether $G_{1}$ is cycle-complete.
By Lemma~\ref{Cycle-CompletenessCanBeVerifiedInO(n^2)Time},
it can be done in $O(n^2)$ time.

Let us now assume $k=2$ and $n_{1} = 1$.
By Lemmas~\ref{LemmaCondition1(resp.Condition4OtTheorem21(resp.Theorem22)CanBeVerifiedInO(n^2)Time}
and~\ref{LemmaForEveryBiconnectedComponentCOfG_1EitherCIsCompleteOrAllVerticesOfCAreLinked},
Conditions~\ref{itThThereExistsAtMostOneCycleCWithoutChordContaininge1} 
and~\ref{itThCycleCOfGWithConstantEdge-Weightw2EitherCompleteOrAllVerticesLinkedToSameEndVertexOfe1}
of Theorem~\ref{PropInheritanceOfConvexityForPminOnGIffCofGWithWeightwEitherCompleteOrAllVerticesLinkedToTheSameEndVertexOfe1}
can be verified in $O(n^2)$ time.


Let us finally assume $k=3$.
We have to verify Conditions~\ref{ThereExistsOnlyOneEdgeInE1}
to~\ref{EveryCycleWhichDoesNotContainse1IsComplete}
of Theorem~\ref{PropInheritanceOfConvexityForPminOnGIffCofGWithWeightw2EitherCompleteOrAllVerticesLinkedToTheSameEndVertexOfe12}.
As $n_{1} = 1$,
Condition~\ref{ThereExistsOnlyOneEdgeInE1}
is satisfied.
Let us assume w.l.o.g. $E_{1} = \lbrace e_1 \rbrace$
with
$e_{1} = \lbrace 1, 2 \rbrace$.
Let $i^*$ be the smallest index $i \in \lbrace 1, \ldots, n \rbrace$
such that row $i$ of $A$ contains the value $\sigma_2$.
If $i^* \notin \lbrace 1, 2 \rbrace$,
we stop as it contradicts Condition~\ref{EveryEdgeOfWeightSigma2IsIncidentToTheSameEnd-Vertex2Ofe1}.
Otherwise,
if there are indices $i>i^*$ and $j>i$ such that $a_{ij} = \sigma_2$,
then we stop as
it still contradicts Condition~\ref{EveryEdgeOfWeightSigma2IsIncidentToTheSameEnd-Vertex2Ofe1}.
If row~$i^*$ contains $\sigma_{3}$
or if there are indices $i \in \lbrace 3, \ldots, n \rbrace$
and $j \in \lbrace 3, \ldots, n  \rbrace$
such that $a_{ij} = \sigma_{3}$,
$a_{ii^{*}} \not= \sigma_{2}$
and $a_{i^{*}j} \not= \sigma_{2}$,
we stop as
it contradicts Condition~\ref{EveryEdgeOfWeightSigma3IsConnectedTo2Bye1OrByAnEdgeOfWeightSigma2}.
Otherwise,
it only remains to check Conditions~\ref{ThereExistsAtMostOneChordlessCycleCmWithm=3or4Containinge1}
and \ref{EveryCycleWhichDoesNotContainse1IsComplete}.
By Lemmas~\ref{Cycle-CompletenessCanBeVerifiedInO(n^2)Time}
and~\ref{LemmaCondition1(resp.Condition4OtTheorem21(resp.Theorem22)CanBeVerifiedInO(n^2)Time},
these last conditions
can be verified in $O(n^2)$ time.
\end{proof}

\bibliographystyle{apalike}
\bibliography{biblio}

\end{document}